\definecolor{webgreen}{rgb}{0,.5,0}
\definecolor{webbrown}{rgb}{.6,0,0}
\newcommand{\mycaption}[2]{\caption[#1]{\, \textbf{#1.} #2}}
\newlist{enumthm}{enumerate}{1}
\setlist[enumthm]{label=(\alph*)}
\theoremstyle{plain}
\newtheorem{theorem}{Theorem}
\newtheorem{corollary}[theorem]{Corollary}
\newtheorem{lemma}[theorem]{Lemma}
\newtheorem{proposition}[theorem]{Proposition}
\theoremstyle{definition}
\newtheorem{definition}[theorem]{Definition}
\newtheorem{example}[theorem]{Example}
\theoremstyle{definition}
\newtheorem{notation}[theorem]{Notation}
\theoremstyle{remark}
\newtheorem{remark}[theorem]{Remark}
\newcommand{\floor}[1]{\left\lfloor#1\right\rfloor}
\newcommand{\ceil}[1]{\left\lceil#1\right\rceil}
\newcommand{\seqnum}[1]{\href{https://oeis.org/#1}{\underline{#1}}}
\begin{document}

\begin{center}
\vskip 1cm{\LARGE\bf 
A Class of Trees Having Near-Best Balance}
\vskip 1cm
\large
Laura Monroe
\\
Ultrascale Systems Research Center\\
Los Alamos National Laboratory\\
Los Alamos, NM 87501\\
USA \\
\href{mailto:lmonroe@lanl.gov}{\tt lmonroe@lanl.gov} 
\end{center}

\vskip .2 in

\begin{abstract}
	Full binary trees naturally represent commutative non-associative products. There are many important examples of these products: finite-precision floating-point addition and NAND gates, among others. 
Balance in such a tree is highly desirable for efficiency in calculation. 

The best balance is attained with a divide-and-conquer approach. However, this may not be the optimal solution, since the success of many calculations is dependent on the grouping and ordering of the calculation, for reasons ranging from the avoidance of rounding error, to calculating with varying precision, to the placement of calculation within a heterogeneous system. 

We introduce a new class of computational trees having near-best balance in terms of the Colless index  from mathematical phylogenetics. These trees are easily constructed from the binary decomposition of the number of terms in the problem. 
They also permit much more flexibility than the optimally balanced divide-and-conquer trees.  This gives needed freedom in the grouping and ordering of calculation, and allows intelligent efficiency trade-offs.

\bigskip

\noindent {\it Mathematics Subject Classification}:
05A10, 05C05 (Primary); 65G50, 92D10 (Secondary).

\noindent \emph{Keywords: }
binary trees, summations, numerical error, mathematical phylogenetics.

\end{abstract}

\renewcommand{\listfigurename}{Figures}
\renewcommand{\listtablename}{Tables}

% comment the next paragraph for double-blind review
\let\thefootnote\relax\footnote{This work was supported by Triad National Security, LLC, operator of the Los Alamos National Laboratory under Contract No.89233218CNA000001 with the U.S. Department of Energy, and by LANL's Ultrascale Systems Research Center at the New Mexico Consortium (Contract No. DE-FC02-06ER25750). 
The United States Government retains and the publisher, by accepting this work for publication, acknowledges that the United States Government retains a nonexclusive, paid-up, irrevocable, world-wide license to publish or reproduce this work, or allow others to do so for United States Government purposes. 
\newline
This paper is unclassified, and is assigned LANL identification number LA-UR-21-26451. 
}
%\tableofcontents
%\listoffigures
%\listoftables

% uncomment the next two lines for double-blind review
%\thispagestyle{empty}
%\newpage
\setcounter{page}{1}
\section{Introduction}
\subsection{Statement of the problem }
In this paper, we look at pairwise commutative non-associative products on $n$ terms and their binary tree representations, which express the grouping and ordering, or partitioning, of such calculations. While our motivation was in terms of balanced and correct computation, there is a rich literature in this area in mathematical phylogenetics, and we elucidate a connection between these areas.

Our motivating problem was that of finite-precision floating-point summation. It is well-known that this is prone to rounding error, due to its non-associativity under finite-precision constraints, and it has been shown that a poor choice of ordering and grouping of the terms can adversely affect the correctness of results. 
There are many other important computational examples of commutative non-associative products: 
%that may be vulnerable to sub-optimal ordering and grouping
 variable-precision calculations, calculations taking place on different processors on a heterogeneous machine, NAND gates, and others. 
 Like finite-precision floating-point summation, the implementations of any of these might have specific requirements in terms of grouping and ordering of terms; in other words, their required tree forms.
 
 Divide-and-conquer is the most balanced partitioning method, but is not always the best for the partitioning requirements of a given problem. For example, it has been shown in \cite{job20} and elsewhere that divide-and-conquer can lead to significant error in summations on real problems when a  small number of large values and a large number of small are added.
 On the other hand, divide-and-conquer is efficient. Computational efficiency is always a concern, and reasonable balance is important to attain computational efficiency on a problem. A calculation that is well-balanced will be optimal in some sense, and if run in parallel, will make efficient use of the processors. %Unfortunately, grouping or  partitioning to suit the needs of the problem may not give a divide-and-conquer tree, and thus can lead to loss of efficiency.
 
The problem, then, is to find a way to partition a commutative non-associative product that is suited to the problem at hand, in terms of correctness or other considerations, while at the same time retains much of the balance and efficiency of the divide-and-conquer partition.
 
\subsection{Approach to the problem }
To better understand commutative non-associative products, we use their natural correspondence with binary trees. We introduce the \emph{SD-tree} in Definition~\ref{def_SDtree}, a binary tree in which internal nodes are labeled $S$ if their children have the same number of leaf descendants, and $D$ if not, as shown in the example in Fig.~\ref{fig:sdtrees6}. This data structure provides a natural way of considering computational balance, because it specifically addresses the number of terms in each branch of the calculation. \input{figs/fig_sdtrees_6}

We use the SD-tree to classify several special cases of commutative non-associative products that occur often in computation. We first look at divide-and-conquer calculations, which are known to be efficient. We also consider the related full complete binary tree, and the ladder tree, which is essentially a serial calculation. 

It seemed reasonable that  a tree on $n$ leaves having a minimal number of $D$-nodes should have relatively good balance, since $D$-nodes themselves express a local lack of balance on a tree. We call this a \emph{MinD tree}. We have not seen an analysis of such trees in the literature, so we completely characterized these trees,
and look at their balance in terms of the Colless index, a widely used measure of tree balance from mathematical phylogenetics. 

It turns out that they are very easily constructed. 
If the binary decomposition of $n$ is $\sum_{i=1}^{\omega(n)}2^{\rho_i}$, then a MinD tree is formed by constructing any convenient base tree on $\omega(n)$ leaves, then attaching perfect trees having $2^{\rho_i}$ leaves in place of the leaves of the base tree. The interior nodes of the base tree are then the $D$-nodes of the constructed tree.
We show in Theorem~\ref{max_tree_s} that all MinD trees are of this form.
An example of the ease of construction of MinD trees is given here in Fig.~\ref{fig:minD_27}, where we construct all fifteen MinD trees with $27$ leaves. 
\input{figs/fig_compare27_min}

The best-case and worst-case MinD trees in terms of the Colless index are likewise easy to construct. These both have ladder base trees. 
The worst-case MinD tree arranges the perfect tree ``leaves'' in ascending order, as seen in Fig.~\ref{fig:minD_27}(\subref{fig:minD_27-a}), and shown in Theorem~\ref{th_bounds_MinD}, and the best-case arranges the perfect tree ``leaves'' in descending order, as seen in Fig.~\ref{fig:minD_27}(\subref{fig:minD_27-l}), shown in Theorem~\ref{th_bounds_MinD}.

We show in Theorem~\ref{theorem_threebounds} that all MinD trees do in fact have very good balance compared to the optimal divide-and-conquer tree, in terms of the Colless index, so these should have good performance compared to the best performance of a divide-and-conquer approach. An upper bound of the Colless index of the set of MinD trees with $n$ leaves, normalized against the best (divide-and-conquer tree) and worst (ladder tree) Colless indices of any tree with $n$ leaves, is  
$$\frac{2\floor{\log_2(n)}}{n}.$$
This upper bound is not sharp, so the MinD trees are even more balanced than this bound indicates. 

As an example, we calculate in Fig.~\ref{fig:colless_minD_27} the Colless indices of the trees from Fig.~\ref{fig:minD_27}, and compare these to the maximum and minimum Colless indices on a $27$-leaf tree. A visual comparison of the balance of MinD trees to the best- and worst-case trees for various $n$ is shown in Fig.~\ref{fig:colless_minD}.

Since all of the MinD trees have relatively good balance, the choice of base tree on $\omega(n)$ leaves is arbitrary, as is the ordering of the $\omega(n)$ perfect subtrees. Both the base tree and the ordering of the perfect subtrees may be freely chosen to address problem requirements.

Thus, trees having a minimal number of $D$-nodes are easily constructed and allow flexibility that is lacking on the most efficient divide-and-conquer trees. At the same time, they have balance approaching that of a divide-and-conquer tree, so are efficient. They therefore meet the goal of grouping a calculation in a manner suited to problem needs, while still retaining near-best efficiency. 

\subsection{Application of MinD trees to commutative non-associative calculations}
The original motivating example for this work was that of floating-point summations, which are commutative and non-associative on finite-precision floating point machines. It is well known that these are prone to rounding error \cite{goldberg91,higham93}. Methods exist to mitigate such error, in particular Kahan's method \cite{kahan_1971_ASO,kahan_1973_implementation} and methods derived from this. However, these add extra complexity, and are not always used, leading to possible error in calculation. It is therefore desirable to develop methods to minimize this error that are both easy to implement and efficient.

We use the problem of floating-point summations on an example  taken from \cite{job20}, demonstrating the application of our method, with known correctness heuristics as the problem constraint on grouping and ordering. We discuss here how the MinD algorithm addresses this constraint.

In \cite{job20}, the effects of such error were demonstrated on two summation calculations used in two real applications CLAMR~\cite{Nicholaeff2012} and the Oregonator~\cite{becker1985stationary}. The implementations of these codes used did not originally use any mitigation strategy, and did in fact show appreciable error when compared to a mitigated calculation, as seen in Fig.~\ref{fig:clamr}. This suggests that one can hope to mitigate floating-point error substantially by proper grouping and ordering. One might even approach the correctness of a Kahan-mitigated calculation, in some cases.
\begin{figure}[H]
    \centering
    \includegraphics[width=0.5\textwidth]{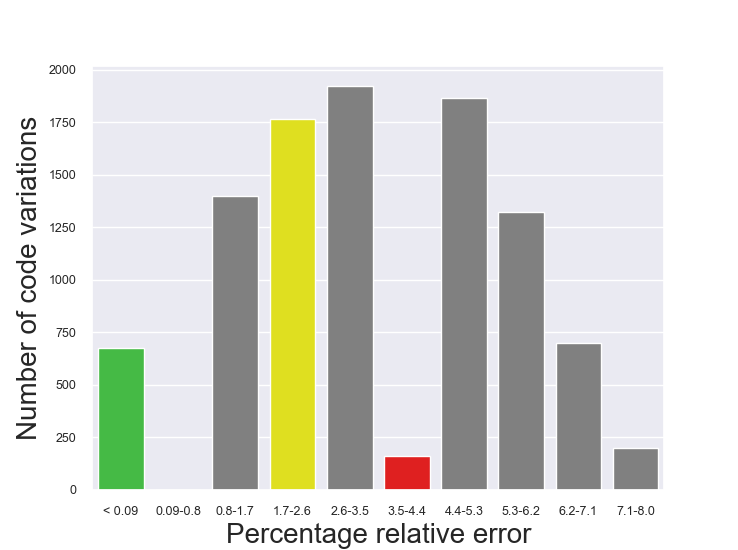}
    \mycaption{Error rate for different groupings and orderings on the CLAMR calculation}{The original ordering and grouping of the summation in the CLAMR code gave an error rate of $2.109$\%, and the divide-and-conquer ordering and grouping gave error rates of $2.109-3.517$\%. The mitigated Kahan calculation gave error rates of $0.083$\%, and the best ordering and grouping, with no mitigation, gave $0.082$\% error rates, a $25-43$X improvement in error rates, which is as good as the mitigated Kahan. The chart and error rates in this figure are taken from \cite{job20}.}
  \label{fig:clamr}
\end{figure}

The authors suggested some heuristics for ordering and grouping, including the most important heuristic: add from smallest to largest, where the largest are added last \cite{job20}. Unfortunately, standard methods such as divide-and-conquer do not permit complete application of this heuristic, as may be observed in Fig. \ref{fig:om_compare}(\subref{fig:om_compare-c}), so may lead to poor results, as shown in Fig.~\ref{fig:clamr}. 
The authors sampled from all orderings and experimentally found two most correct orderings for the problem, and established heuristics but not an algorithm for generating the most correct trees. They also did not address the problem of efficiency on relatively correct trees.
\begin{figure}[h!tb]
	\tikzstyle{mynode}=[circle, draw]
	\tikzstyle{Dnode}=[regular polygon sides=4, minimum size=0.6cm, draw]
	\tikzstyle{small_leaf}=[fill=gray!25]
	\tikzstyle{big_leaf}=[fill=gray!180]
	\tikzstyle{ghost_leaf}=[fill=gray!0,draw=none]
	\tikzstyle{myarrow}=[]
	\centering
	\begin{subfigure}[b]{.3\textwidth}
	    \centering
	    \resizebox{0.9\textwidth}{!}{%
		\begin{tikzpicture}	
			\centering
			\node[Dnode] (L1) at (0, 3) {\footnotesize\texttt{D}};
			
			\node[mynode] (L2a) at (-2, 2) {\footnotesize\texttt{S}};
			\node[mynode, big_leaf] (L2b) at (2, 2) {\footnotesize\texttt{}};
			
			\node[mynode] (L3a) at (-3, 1) {\footnotesize\texttt{S}};
			\node[mynode] (L3b) at (-1, 1) {\footnotesize\texttt{S}};
			\node[mynode, ghost_leaf] (L3c) at (1, 1) {\footnotesize\texttt{}};
			\node[mynode, ghost_leaf] (L3d) at (3, 1) {\footnotesize\texttt{}};

			\node[mynode] (L4a) at (-3.5, 0) {\footnotesize\texttt{S}};
			\node[mynode] (L4b) at (-2.5, 0) {\footnotesize\texttt{S}};
			\node[mynode] (L4c) at (-1.5, 0) {\footnotesize\texttt{S}};
			\node[mynode] (L4d) at (-0.5, 0) {\footnotesize\texttt{S}};
			\node[mynode, ghost_leaf] (L4e) at (0.5, 1) {\footnotesize\texttt{}};
			\node[mynode, ghost_leaf] (L4f) at (1.5, 1) {\footnotesize\texttt{}};
			\node[mynode, ghost_leaf] (L4g) at (2.5, 1) {\footnotesize\texttt{}};
			\node[mynode, ghost_leaf] (L4h) at (3.5, 1) {\footnotesize\texttt{}};

			\node[mynode, small_leaf] (L5a) at (-3.75, -1) {\footnotesize\texttt{}};
			\node[mynode, small_leaf] (L5b) at (-3.25, -1) {\footnotesize\texttt{}};
			\node[mynode, small_leaf] (L5c) at (-2.75, -1) {\footnotesize\texttt{}};
			\node[mynode, small_leaf] (L5d) at (-2.25, -1) {\footnotesize\texttt{}};
			\node[mynode, small_leaf] (L5e) at (-1.75, -1) {\footnotesize\texttt{}};
			\node[mynode, small_leaf] (L5f) at (-1.25, -1) {\footnotesize\texttt{}};
			\node[mynode, small_leaf] (L5g) at (-0.75, -1) {\footnotesize\texttt{}};
			\node[mynode, small_leaf] (L5h) at (-0.25, -1) {\footnotesize\texttt{}};
			
			\draw[myarrow] (L1) to node[midway,above=0pt, left=7pt]{} (L2a);
			\draw[myarrow] (L1) to node[midway,above=0pt, left=-2.5pt]{} (L2b);
			
			\draw[myarrow] (L2a) to node[midway,above=0pt, left=4pt]{} (L3a);
			\draw[myarrow] (L2a) to node[midway,above=0pt, right=-1pt]{} (L3b);

			\draw[myarrow] (L3a) to node[midway,above=0pt, left=0mm]{} (L4a);
			\draw[myarrow] (L3a) to node[midway,above=0pt, right=-2pt]{} (L4b);
			\draw[myarrow] (L3b) to node[midway,above=0pt, left=0mm]{} (L4c);
			\draw[myarrow] (L3b) to node[midway,above=0pt, right=-2pt]{} (L4d);

			\draw[myarrow] (L4a) to node[midway,above=0pt, left=0mm]{} (L5a);
			\draw[myarrow] (L4a) to node[midway,above=0pt, right=-2pt]{} (L5b);
			\draw[myarrow] (L4b) to node[midway,above=0pt, left=0mm]{} (L5c);
			\draw[myarrow] (L4b) to node[midway,above=0pt, right=-2pt]{} (L5d);
			\draw[myarrow] (L4c) to node[midway,above=0pt, left=0mm]{} (L5e);
			\draw[myarrow] (L4c) to node[midway,above=0pt, right=-2pt]{} (L5f);
			\draw[myarrow] (L4d) to node[midway,above=0pt, left=0mm]{} (L5g);
			\draw[myarrow] (L4d) to node[midway,above=0pt, right=-2pt]{} (L5h);
		\end{tikzpicture}
		} 
		\caption{A MinD tree produced by the method in this paper. It is isomorphic to the two most correct groupings and orderings in \cite{job20}, which gave rounding error in only $0.082$\% of the cells.}
		\label{fig:om_compare-a}
	\end{subfigure}\hfill%
	\begin{subfigure}[b]{.3\textwidth}
	    \centering
	    \resizebox{0.9\textwidth}{!}{%
		\begin{tikzpicture}	
			\centering
			\node[Dnode] (L1) at (0, 3) {\footnotesize\texttt{D}};
			
			\node[Dnode] (L2a) at (-2, 2) {\footnotesize\texttt{D}};
			\node[mynode] (L2b) at (2, 2) {\footnotesize\texttt{S}};
			
			\node[mynode,big_leaf] (L3a) at (-3, 1) {\footnotesize\texttt{}};
			\node[mynode] (L3b) at (-1, 1) {\footnotesize\texttt{S}};
			\node[mynode] (L3c) at (1, 1) {\footnotesize\texttt{S}};
			\node[mynode] (L3d) at (3, 1) {\footnotesize\texttt{S}};
			
			\node[mynode, ghost_leaf] (L4a) at (-3.5, 0) {\footnotesize\texttt{}};
			\node[mynode, ghost_leaf] (L4b) at (-2.5, 0) {\footnotesize\texttt{}};
			\node[mynode] (L4c) at (-1.5, 0) {\footnotesize\texttt{S}};
			\node[mynode] (L4d) at (-0.5, 0) {\footnotesize\texttt{S}};
			%\node[mynode] (L4e, small_leaf) at (0.5, 0) {\footnotesize\texttt{}};
			\node[mynode, small_leaf] (L4e) at (0.5, 0) {\footnotesize\texttt{}};
			\node[mynode, small_leaf] (L4f) at (1.5, 0) {\footnotesize\texttt{}};
			\node[mynode, small_leaf] (L4g) at (2.5, 0) {\footnotesize\texttt{}};
			\node[mynode, small_leaf] (L4h) at (3.5, 0) {\footnotesize\texttt{}};
			
			\node[mynode, ghost_leaf] (L5a) at (-3.75, -1) {\footnotesize\texttt{}};
			\node[mynode, ghost_leaf] (L5b) at (-3.25, -1) {\footnotesize\texttt{}};
			\node[mynode, small_leaf] (L5e) at (-1.75, -1) {\footnotesize\texttt{}};
			\node[mynode, small_leaf] (L5f) at (-1.25, -1) {\footnotesize\texttt{}};
			\node[mynode, small_leaf] (L5g) at (-0.75, -1) {\footnotesize\texttt{}};
			\node[mynode, small_leaf] (L5h) at (-0.25, -1) {\footnotesize\texttt{}};
			
			\draw[myarrow] (L1) to node[midway,above=0pt, left=7pt]{} (L2a);
			\draw[myarrow] (L1) to node[midway,above=0pt, left=-2.5pt]{} (L2b);
			
			\draw[myarrow] (L2a) to node[midway,above=0pt, left=4pt]{} (L3a);
			\draw[myarrow] (L2a) to node[midway,above=0pt, right=-1pt]{} (L3b);
			\draw[myarrow] (L2b) to node[midway,above=0pt, left=4pt]{} (L3c);
			\draw[myarrow] (L2b) to node[midway,above=0pt, right=-1pt]{} (L3d);
			
			\draw[myarrow] (L3b) to node[midway,above=0pt, left=0mm]{} (L4c);
			\draw[myarrow] (L3b) to node[midway,above=0pt, right=-2pt]{} (L4d);
			\draw[myarrow] (L3c) to node[midway,above=0pt, left=0mm]{} (L4e);
			\draw[myarrow] (L3c) to node[midway,above=0pt, right=-2pt]{} (L4f);
			\draw[myarrow] (L3d) to node[midway,above=0pt, left=0mm]{} (L4g);
			\draw[myarrow] (L3d) to node[midway,above=0pt, right=-2pt]{} (L4h);
			
			\draw[myarrow] (L4c) to node[midway,above=0pt, left=0mm]{} (L5e);
			\draw[myarrow] (L4c) to node[midway,above=0pt, right=-2pt]{} (L5f);
			\draw[myarrow] (L4d) to node[midway,above=0pt, left=0mm]{} (L5g);
			\draw[myarrow] (L4d) to node[midway,above=0pt, right=-2pt]{} (L5h);
		\end{tikzpicture}
		} 
		\caption{The non-MinD method used in the original CLAMR code, which gave rounding error in $2.109$\% of the cells \cite{job20}.\newline\newline}
		\label{fig:om_compare-b}
	\end{subfigure}\hfill%
	\begin{subfigure}[b]{.3\textwidth}
	    \centering
	    \resizebox{0.9\textwidth}{!}{%
		\begin{tikzpicture}	
			\centering
			\node[Dnode] (L1) at (0, 3) {\footnotesize\texttt{D}};
			
			\node[mynode] (L2a) at (-2, 2) {\footnotesize\texttt{S}};
			\node[Dnode] (L2b) at (2, 2) {\footnotesize\texttt{D}};
			
			\node[mynode] (L3a) at (-3, 1) {\footnotesize\texttt{S}};
			\node[mynode] (L3b) at (-1, 1) {\footnotesize\texttt{S}};
			\node[mynode] (L3c) at (1, 1) {\footnotesize\texttt{S}};
			\node[Dnode] (L3d) at (3, 1) {\footnotesize\texttt{D}};
			
			\node[mynode, small_leaf] (L4a) at (-3.5, 0) {\footnotesize\texttt{}};
			\node[mynode, small_leaf] (L4b) at (-2.5, 0) {\footnotesize\texttt{}};
			\node[mynode, small_leaf] (L4c) at (-1.5, 0) {\footnotesize\texttt{}};
			\node[mynode, small_leaf] (L4d) at (-0.5, 0) {\footnotesize\texttt{}};
			\node[mynode, small_leaf] (L4e) at (0.5, 0) {\footnotesize\texttt{}};
			\node[mynode, small_leaf] (L4f) at (1.5, 0) {\footnotesize\texttt{}};
			\node[mynode, big_leaf] (L4g) at (2.5, 0) {\footnotesize\texttt{}};
			\node[mynode] (L4h) at (3.5, 0) {\footnotesize\texttt{S}};
			
			\node[mynode, small_leaf] (L5o) at (3.25, -1) {\footnotesize\texttt{}};
			\node[mynode, small_leaf] (L5p) at (3.75, -1) {\footnotesize\texttt{}};

			\draw[myarrow] (L1) to node[midway,above=0pt, left=7pt]{} (L2a);
			\draw[myarrow] (L1) to node[midway,above=0pt, left=-2.5pt]{} (L2b);
			
			\draw[myarrow] (L2a) to node[midway,above=0pt, left=4pt]{} (L3a);
			\draw[myarrow] (L2a) to node[midway,above=0pt, right=-1pt]{} (L3b);
			\draw[myarrow] (L2b) to node[midway,above=0pt, left=4pt]{} (L3c);
			\draw[myarrow] (L2b) to node[midway,above=0pt, right=-1pt]{} (L3d);
			
			\draw[myarrow] (L3a) to node[midway,above=0pt, left=0mm]{} (L4a);
			\draw[myarrow] (L3a) to node[midway,above=0pt, right=-2pt]{} (L4b);
			\draw[myarrow] (L3b) to node[midway,above=0pt, left=0mm]{} (L4c);
			\draw[myarrow] (L3b) to node[midway,above=0pt, right=-2pt]{} (L4d);
			\draw[myarrow] (L3c) to node[midway,above=0pt, left=0mm]{} (L4e);
			\draw[myarrow] (L3c) to node[midway,above=0pt, right=-2pt]{} (L4f);
			\draw[myarrow] (L3d) to node[midway,above=0pt, left=0mm]{} (L4g);
			\draw[myarrow] (L3d) to node[midway,above=0pt, right=-2pt]{} (L4h);

			\draw[myarrow] (L4h) to node[midway,above=0pt, left=0mm]{} (L5o);
			\draw[myarrow] (L4h) to node[midway,above=0pt, right=-2pt]{} (L5p);
		\end{tikzpicture}
		} 
		\caption{The non-MinD pairwise tree shown in \cite{job20}, which gave rounding error in $2.109-3.517$\% of the cells, depending on the ordering of the small values.\newline}
		\label{fig:om_compare-c}
	\end{subfigure}\hfill%
	
	\mycaption{A comparison of a MinD tree to other trees shown in \cite{job20}} {This figure compares the MinD ordering from this paper in (\subref{fig:om_compare-a}), to the original ordering used in CLAMR from \cite{job20} in (\subref{fig:om_compare-b}) and the pairwise ordering from \cite{job20} in (\subref{fig:om_compare-c}). The MinD ordering is the same as the best ordering found in \cite{job20}. The single large value is represented by the black leaf node, and the eight small values are represented by grey leaf nodes.}
	\label{fig:om_compare}	
\end{figure}

The summation discussed in \cite{job20} consisted of $8$ different small values and $1$ large value. A MinD tree for a summation on $9$ terms consists of a root (the single $D$-node) and two subtrees, one a perfect tree with $8$ leaves, and the other a perfect tree with $1$ leaf. According to the heuristic above, all $8$ small values should be added first, and then the single large value. The MinD tree permits this partition. The tree shown in Fig. \ref{fig:om_compare}(\subref{fig:om_compare-a}) is the result, equivalent to the best ordering found by experiment in \cite{job20}.

It is worth noting that not every partition may be attained through this method; for example, if there had been $2$ large values rather than $1$ in this example, one of the large values would have been included in the divide-and-conquer perfect tree with $8$ elements. This method thus provides more flexibility, and likely improved correctness, but not perfect flexibility. Future work will analyze the application of the MinD method to subtrees, with the comparative efficiency that results.

This leaves aside the work that must occur to identify the larger values. In \cite{job20}, domain knowledge provided the relative values; one of the variables was known to be very much larger than the other eight very small values. On other problems, domain knowledge might suffice, or perhaps a filter would help. We do not address this issue here, as the intent is to provide a flexible and efficient method of partitioning, not to address individual use cases. 

\subsection{Contributions}
The main contributions of this paper are: 
\begin{itemize}
    \item Introduction in Section~\ref{max_S_section} of the \emph{MinD trees}, a class of trees for grouping non-associative products. These trees are easily constructed (Theorem~\ref{max_tree_s}) (which has also recently been shown independently in  \cite{kersting21}),  efficient (Theorem~\ref{theorem_threebounds}), and at the same time  allow flexibility in partitioning.
    \item The introduction of the \emph{SD-tree} structure in Section~\ref{sdtrees} and its exploitation throughout the paper; in particular, its use in characterizing tree  forms and in calculating the Colless index. 
    \item The linkage of computational commutative non-associative products to the extensive literature on the subject of ancestry trees from the mathematical phylogenetics community, and in particular, the discussion of measures of tree balance from that community in terms of the requirements of computational science.
\end{itemize}
Other contributions include:
\renewcommand\labelitemii{$\bullet$}
\begin{itemize}
    \item Counting the number of full binary trees in terms of the number of their $S$-nodes or $D$-nodes, in Theorem \ref{propkSnodes} (with illustrative tables of examples in Tables \ref{paren_nk} and \ref{table_dnodes} in Appendix B). This enables the detailed analysis of full binary trees throughout the paper.
    \item A new formula for counting $S$-nodes of divide-and-conquer (or maximally balanced) trees. A new recursive formula for the number of $S$-nodes  in a divide-and-conquer tree is given in Theorem \ref{exp_recursive}, and a new closed form in Theorem \ref{exp_closed}.
    \item Several new formulas for counting $D$-nodes of divide-and-conquer (or maximally balanced) trees, including both recursive and closed forms. Since the sequence of the number of $D$-nodes in divide-and-conquer trees with $n$ leaves is a series of dilations of the Takagi function on the dyadic rationals, the new formulas for $D$-nodes immediately apply to and are new for the Takagi function. 
    \begin{itemize}
        \item New recursive formulas for the number of $D$-nodes in a divide-and-conquer tree are given in  Corollary \ref{yet_another_dnodes_prop} and Theorem \ref{dnodes_another_recurrence}, and new closed forms in Theorems \ref{exp_D_closed} and \ref{another_explicit_D}. 
        \item A new expression for the Takagi function is given in Theorem \ref{another_takagiX}.
    \end{itemize}
    \item Several contributions to the Online Encyclopedia of Integer Sequences \cite{OEIS}, including a new sequence and new interpretations and formulas for existing sequences. These make up a family of sequences that were not formerly related in the OEIS. These sequences are listed in the Table \ref{summary_table} of Appendix A, Section \ref{appendixA}, along with their interpretations in terms of commutative, non-associative operations.
    \item An analysis of $S$- and $D$-nodes on complete full binary trees, in Section \ref{cfb}, and brief discussion of their Colless index in Section \ref{colless_cfb}.
\end{itemize}

We became aware of the rich body of literature from mathematical phylogenetics in this area late in the process of this work. Several of the theorems we discuss have been published earlier in this field in the sections discussing divide-and-conquer trees. We have identified in the text those theorems which have been previously published; however, we include our alternative proofs here, as our methods are somewhat different, and perhaps may give another approach to problems in mathematical phylogenetics. We also include the proof of the construction of MinD trees, independently shown in the recent \cite{kersting21}.

\subsection{Organization of this paper}

\hspace{\parindent}Section \ref{examples} presents some motivating examples of commutative non-associative products.

Section \ref{background} gives background on commutative non-associative products.

Section \ref{sdtrees} introduces and motivates the concept of \emph{SD-trees}, which we use for counting trees of a given form, and also in assessing balance of computation. The remainder of the paper expands upon and exploits this construct.

Section~\ref{prelims} gives some preliminary definitions, and some simple lemmas on weights of binary vectors that are used throughout the paper.

Section \ref{s-nodes-counting} counts commutative non-associative forms and products in terms of their $S$-nodes and gives lower and upper bounds on the number of these.

Section \ref{special} discusses several common tree forms that occur in a computational context in terms of SD-trees. We discuss ladder (sequential) products, parenthetic forms having $2$ $S$-nodes, parenthetic forms having a minimal number of $D$-nodes, divide-and-conquer forms (and their relationship with the Takagi function), and forms based on complete full binary trees. 

Section~\ref{max_S_section} introduces and shows the easy construction of the \emph{MinD tree}, having  minimal $D$-nodes. These trees give greater flexibility than divide-and-conquer, and hence correctness in calculation.

Section \ref{colless_dnodes} discusses the Colless index and calculates it for various trees. 
In particular, upper and lower bounds for the Colless index on MinD trees are derived, and we construct MinD trees which attain each of those bounds. We find that all MinD trees attain near-best Colless index among all trees of their size. This is the main result of the paper.

Appendix A (Section~\ref{appendixA}) is a table of a family of OEIS sequences shown in this paper to be related to SD-trees.

Appendix B  (Section~\ref{appendixB}) gives two tables showing the number of trees with $n$ leaves having a given number of $S$- and $D$-nodes.
\section{Motivating examples of commutative non-associative products} \label{examples}

\subsection{Computer and computational science}
\subsubsection{Finite-precision floating-point summations} \label{fp_sum}

Floating-point summation on a finite-precision system is commutative, but is not generally associative. 
The non-associativity is because of the rounding that may occur in finite-precision floating-point addition. For example, on a binary machine with two bits of precision, $$((10+0.1)+-0.1)=(10.1+-0.1)\approx (11+-0.1)=10.1\approx 11,$$ 
since both additions (on binary floats) result in three-bit numbers and are rounded up in the $\approx$ approximation, whereas on the same binary machine (and also mathematically), $$(10+(0.1+-0.1)) = (10+0) = 10.$$ 
Two floating-point summations are \emph{computationally  equivalent} under the IEEE 754 standard for computer arithmetic  \cite{IEEE754} if they differ only by some series of pairwise transpositions. Associativity is not guaranteed and should not be  assumed on a finite-precision machine.
In contrast to computational floating-point addition, the mathematical definition of addition is both commutative and associative. Thus, not all mathematically equivalent summations are computationally equivalent \cite{goldberg91,higham93}. 

Two computationally equivalent summations will always produce the same results on an IEEE-754-compliant machine. On the other hand, two mathematically equivalent summations that are not computationally equivalent may produce different results, as  above.
%This issue is discussed in detail in many references going back decades. 

The assumption that all mathematically equivalent summations are also computationally equivalent is easy to make. However, this assumption can lead to rounding error \cite{goldberg91,higham93} and affect the accuracy of the summation result to the detriment of the overall calculation. The  need to find groupings and orderings that are accurate as well as efficient was the motivation for this study.  

\subsubsection{General computational examples}
In general, anything that has a natural representation as a rooted full binary tree (where non-sibling leaves may not be interchanged at will) is an example of a commutative non-associative product. 
NAND  gates are an example of a commutative, non-associative operation. 
A composable mean operation $\mu(x_1,x_2) = (x_1+x_2)/2$ is another example. 

Some operations are computationally non-associative but mathematically associative, like the summation example, and some operations are non-associative both computationally and mathematically. In the first case, the grouping and ordering does not matter mathematically, so one is free to choose the grouping and ordering that gives the most correct result with the most efficient execution. In the second case, there are not only computational but also mathematical constraints on the grouping choice.

\subsubsection{Some implications of non-associativity for computation}
In some sense,  selecting a grouping for a commutative non-associative product answers the question: ``In what manner should the product calculation take place?'', or ``How shall the calculation be partitioned?'', whereas selecting an ordering  answers: ``How shall we instantiate the chosen partition?"

For example, in a parallel calculation, a balanced tree is preferred for good load-balancing. However, there may not be control over the order in which parallel sub-products are calculated, and since the product is non-associative, this may affect the accuracy of the result.  

In the summation example, an obvious implication of non-associativity is rounding error and incorrectness. However, there may be a trade-off, in that the best forms of parenthesization and ordering for correctness may impair efficiency: in a perfectly load-balanced scheme, one may find oneself obliged to group very large summands with very small, thus greatly enhancing the likelihood of computational error.
In this paper, we provide mechanisms for addressing the trade-off between best grouping and best efficiency. 

\subsection{Mathematical phylogenetics}
Although our motivation was computational, these trees have also been studied in much detail in the field of mathematical phylogenetics.

Phylogenetics is the study of the appearance and development of evolutionary traits among species or individuals. These are expressed in ancestry trees, where a tree branches when a new trait appears in the evolutionary record. These trees are full binary trees, and indeed, they do express a certain kind of commutative non-associative binary product, where the product of a pair of organisms is their child. 

The shape of an ancestry tree gives important information about the divergence rates of species. In particular, the balance of a tree indicates the stability of a given genetic line, and likewise, its imbalance indicates the differences in the number of divergences that have occurred in branches of the tree; in other words, how likely is it that two genetic lines have diverged. Balance can be measured by looking at the number of leaf descendants of the two children of a node in the tree. 

This is essentially the same as the computational problem of finding balance in a calculation. However, just as in phylogenetics, perfect balance in calculation is not always attainable. Divide-and-conquer algorithms are known to be efficient in this sense, and are one of the classical methods of addressing problems that afford this kind of approach.

There are many references for mathematical work in this area, which date back many decades \cite{colless80, heard92, kirkpatrick93, colless95, mooers97, rosenberg19, coronado20, kersting21}. We refer to this work throughout this paper, and observe an interesting overlap between two distant scientific areas.

\subsection{Other examples}
There are many other examples of commutative non-associative products from many different fields. The children's game Rock-Paper-Scissors is an example, as is any tournament. Hanging mobiles could be considered a real-world example, with the balancing points as the products.

\section{Background} \label{background}
\subsection{Commutative non-associative products and binary trees}
\begin{definition}
	A \emph{commutative non-associative product} on $n$ terms is an fully parenthesized product with $n$ operands that employs some commutative non-associative operation. 
\end{definition}
\begin{definition}
Two commutative non-associative products are \emph{equivalent} if they differ only by some series of pairwise commutations within  a set of parentheses. The equivalence relation in use throughout this paper is thus pairwise commutativity, and all isomorphisms we discuss in this paper are in terms of this relation.
\end{definition}
    A commutative non-associative product is represented by a rooted full binary tree whose leaves are labeled. In this representation, leaves may be labeled by the terms in the product, and an internal (operator) node with two children represents the product of its children. 
    In this representation, two commutative non-associative products are \emph{equivalent} if one can be obtained from the other by a sequence of reversing the children of some set of nodes in their tree representations.
\begin{definition}
	The \emph{parenthetic form} of a fully parenthesized commutative non-associative product is the form the parentheses and the operators take, leaving the operands themselves undefined.
\end{definition}
\begin{definition}
    Two parenthetic forms are \emph{isomorphic} if one can be obtained from the other by some sequence of commuting abstract operands within parentheses. 
\end{definition}
    A parenthetic form is represented by a rooted full binary tree whose leaves are not labeled. 
    Two parenthetic forms are \emph{isomorphic} if one can be obtained from the other by a sequence of reversing the children of some set of nodes in their tree representations. This is the usual definition of tree isomorphism.
    Two equivalent commutative non-associative products must have isomorphic parenthetic forms. However, two commutative non-associative products with isomorphic parenthetic forms need not themselves be equivalent. 
    
    \begin{definition}
	\emph{Full binary trees} are trees in which every node has either zero or two children \cite{knuth97art1}. 
\end{definition}
    Full binary trees are equivalent to commutative non-associative products, where the internal nodes represent the product of their children. They can be used to represent any such product. 
    We treat commutative, non-associative products and leaf-labeled tree representations interchangeably, and also treat parenthetic forms and and their unlabeled tree representations interchangeably.

\section{SD-Trees} \label{sdtrees}
\subsection{SD-trees, S-nodes, and D-nodes}
\begin{definition}\label{def_SDtree}
	An \emph{SD-tree} is a rooted full binary tree in which an internal node is labeled $S$ if its two children have the \textbf{S}ame number of descendant leaf nodes, and $D$ if its two children have \textbf{D}ifferent numbers of descendant leaf nodes. 
\end{definition}

$S$-nodes are similar to \emph{symmetry vertices} \cite{coronado20, kersting21}, vertices whose two children subtrees have the same shape. However, the subtrees of $S$-nodes are not required to have the same shape, only the same number of leaf descendants. In the recent \cite{kersting21}, Kersting and Fischer discuss full binary trees in terms of symmetry vertices, which is closely related to this work on $S$-nodes. They differ in that the subtrees of $S$-nodes are not isomorphic, so these are different classes of trees. In \cite{kersting21}, the authors discuss certain connections between the two types of trees.

$D$-nodes are the same as the $J$-nodes with non-zero balance value discussed in \cite{colless80, rogers96} and in other references from mathematical phylogenetics. 

An SD-tree is a variant of a binary PQ-tree \cite{booth76}. The leaves of an SD-tree may be labeled or unlabeled, depending on whether an instantiated product or a parenthetic form is represented. 
The definition of isomorphism remains the same.  
\begin{lemma}\label{iso_SD}
	Two equivalent/isomorphic SD-trees with $n$ leaf nodes have the  same number of $S$-nodes. Both trees have the same number of $D$-nodes.
\end{lemma}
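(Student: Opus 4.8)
The plan is to show that the property of a node being an $S$-node or a $D$-node is invariant under the tree isomorphism used in this paper, namely reversing the children of some set of nodes. Since isomorphism is generated by the single operation ``swap the two children of one node,'' it suffices to check that a single such swap preserves, for every internal node, the number of descendant leaves and hence its $S$/$D$ label; the general case then follows by induction on the number of swaps. This reduces the lemma to a purely local observation.

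First I would fix an internal node $v$ of an SD-tree $T$ with $n$ leaves, and let $T'$ be the tree obtained from $T$ by reversing the children of $v$. The key observation is that reversing the children of $v$ does not change the underlying multiset of subtrees hanging below any node; in particular, for every node $u$ of $T$ (identified with the corresponding node of $T'$ in the obvious way), the set of leaves descending from $u$ is unchanged, so the number of descendant leaves of $u$ is the same in $T$ and in $T'$. This is immediate because the child-reversal at $v$ only permutes the two subtrees rooted at the children of $v$ and leaves the ancestor/descendant relation between every pair consisting of a node and a leaf otherwise untouched — a leaf is a descendant of $u$ in $T$ iff it is a descendant of $u$ in $T'$.

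Next I would conclude that the $S$/$D$ label of every node is preserved: for each internal node $u$ with children $a$ and $b$, the label of $u$ depends only on whether the number of leaf descendants of $a$ equals that of $b$, and both of those counts are unchanged by the previous step (note that for $u = v$ itself the two counts are merely swapped, which obviously does not affect their equality). Hence $T$ and $T'$ have exactly the same set of $S$-nodes and the same set of $D$-nodes, in particular the same counts of each. Finally, since any isomorphism between two SD-trees on $n$ leaves is a composition of finitely many single child-reversals, an easy induction on the number of reversals gives that equivalent/isomorphic SD-trees have the same number of $S$-nodes, and therefore — since the internal nodes number $n-1$ in any full binary tree on $n$ leaves — the same number of $D$-nodes as well.

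This argument is essentially bookkeeping, so I do not expect a real obstacle; the only point requiring any care is being precise about the identification of nodes of $T$ with nodes of $T'$ under a child-reversal, and about the fact that the isomorphism relation in this paper is generated by such reversals (which is spelled out in the definitions preceding the lemma). The complementary count of $D$-nodes uses only that a full binary tree on $n$ leaves has $n-1$ internal nodes, which is standard and could be invoked without proof.
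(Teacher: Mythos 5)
Your argument is correct and is essentially the expanded version of the paper's own one-line proof, which simply states that the result follows from the definition of equivalence and pairwise commutativity; you make explicit the fact that each child-reversal preserves every node's leaf-descendant count and hence its $S$/$D$ label. No gap, and no genuinely different route.
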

\begin{proof}
	Follows from the definition of equivalence, and pairwise commutativity.
\end{proof}
	The converse of Lemma~\ref{iso_SD} is not true. Two SD-trees with $n$ leaf nodes and with the same number  of $S$-nodes and $D$-nodes need not be isomorphic. For example, there are two non-isomorphic trees with $6$ leaves, $2$ $S$-nodes and $3$ $D$-nodes, and two non-isomorphic trees with $6$ leaves, $3$ $S$-nodes and $2$ $D$-nodes, as can be seen in Fig. \ref{fig:sdtrees6}. These are the smallest such examples.
\begin{lemma} \label{dnodes}
    Let $s(n)$ be the number of $S$-nodes in a binary tree having $n$ leaves. Then the number of $D$-nodes is $((n-1)-s(n))$. 
\end{lemma}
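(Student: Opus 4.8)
The plan is to reduce the statement to the standard fact that a full binary tree on $n$ leaves has exactly $n-1$ internal nodes, and then observe that the $S$/$D$ labeling partitions the internal nodes into two classes.

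First I would establish the count of internal nodes. Let $T$ be a full binary tree with $n$ leaves and $i$ internal nodes. Since $T$ is a tree on $n+i$ vertices, it has $n+i-1$ edges. On the other hand, every internal node has exactly two children and hence contributes exactly two edges going downward, so the number of edges is $2i$. Equating, $2i = n+i-1$, whence $i = n-1$. (Alternatively, this is an immediate induction on $n$: the claim is trivial for $n=1$, and for $n>1$ the root splits $T$ into two full binary subtrees with $n_1$ and $n_2$ leaves, $n_1+n_2=n$, each with $n_1-1$ and $n_2-1$ internal nodes by hypothesis, giving $(n_1-1)+(n_2-1)+1 = n-1$ internal nodes of $T$.)

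Next I would invoke Definition~\ref{def_SDtree}: every internal node of an SD-tree receives exactly one of the two labels $S$ or $D$, according to whether its two children have the same number of descendant leaves or not. These two conditions are mutually exclusive and jointly exhaustive, so the set of internal nodes is the disjoint union of the $S$-nodes and the $D$-nodes. Hence (number of $D$-nodes) $=$ (number of internal nodes) $-$ (number of $S$-nodes) $= (n-1) - s(n)$, which is the claim.

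There is no real obstacle here; the only thing to be careful about is making sure the edge-counting (or inductive) argument for the $n-1$ internal-node count is stated cleanly, since everything else is just the observation that the $S$/$D$ dichotomy is a partition. If the ambient conventions already take "a full binary tree on $n$ leaves has $n-1$ internal nodes" as known, the proof collapses to the second paragraph alone.
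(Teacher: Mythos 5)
Your proof is correct and follows the same route as the paper: both rest on the fact that a full binary tree on $n$ leaves has $n-1$ internal nodes (which the paper cites as a well-known lemma) and on the observation that the $S$/$D$ labeling partitions the internal nodes. Your edge-counting justification of the $n-1$ count is a fine, if optional, addition.
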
 
\begin{proof}
    The number of interior nodes in a binary tree with $n$ leaves is $(n-1)$. $D$ nodes are all those interior nodes that are not $S$-nodes.
\end{proof}
\subsection{Motivation for the SD-tree data structure}
There are several reasons to consider this type of structure. In general, $S$-nodes are useful in considering automorphisms of these trees, whereas $D$-nodes are useful in assessing tree balance.
\subsubsection{Combinatorics}
When we count trees of a particular parenthetic form, we may start by counting permutations of the leaves. However, when a node has children with the same number of descendant leaf nodes, equivalent trees are counted twice, since the same set of leaf nodes may be assigned to either of the two children and produce equivalent trees. 
It therefore makes sense to differentiate between nodes with children having the same number of descendant leaf nodes and nodes with children having a different number of descendant leaf nodes. This motivates the SD-tree data structure. 

We formalize this in Section \ref{method}, and exploit it throughout the rest of the paper, using the SD-tree structure to count and classify commutative non-associative products. 

\subsubsection{Computational efficiency}
When processing tree-based data structures, efficiency is enhanced by the use of a balanced tree; for example, when executing a parallel algorithm or a logarithmic algorithm. The SD-tree structure explicitly identifies those sub-branches of a tree that are balanced in terms of the number of descendant leaves, and therefore can be useful in assessing overall efficiency. Note that a balanced $S$-node may have descendant nodes that are imbalanced, as in Fig. \ref{fig:treetypes}(\subref{fig:treetypes-b}).

Although a balanced tree may be most efficient, under certain circumstances use of such a structure can lead to increased error in computation. For example, rounding error may increase in those floating-point summations on a finite-precision machine under a pairwise divide-and-conquer (balanced) calculation, in which one element is very much larger than the others \cite{job20}. Use of the SD-tree data structure helps in the analysis of such calculations.

\subsubsection{Colless index of a tree}
The Colless index is widely studied in phylogenetics as a measure of tree balance but is less well known in general computational theory. $D$-nodes may be used to calculate the Colless index, and we do so in Section \ref{colless_dnodes}.

\subsubsection{Automorphism group of a SD-tree}
Tree automorphisms act upon a tree by transposing the left and right subtrees of some set of nodes, where the result is the same tree. We may choose to define SD-trees so that the left child of a node always has at least as many descendant leaf nodes as the right child. It may happen that the left child and the right child have the same number of descendants. If $s$ nodes have that characteristic, there will then be $2^s$ representations of a tree having $s$ $S$-nodes. In other words, if $s$ is the number of $S$ nodes in an SD-tree on $n$ leaves defined as above, then the order of the automorphism group of that tree is $2^s$.

\subsubsection{Connections to other mathematical areas}
The Takagi function is a nowhere-differentiable function from differential calculus \cite{takagi01, lagarias11, allaart11}. The sequence of $D$-nodes of divide-and-conquer trees on $n$ leaves is a series of dilations of the Takagi curve restricted to the dyadic rationals. 
\section{Preliminary definitions, notations  and lemmas}\label{prelims} \label{back_def_not_lem}
\subsection{A few useful definitions and notations}
\begin{definition}
    The $(k+1)$-bit \emph{binary representation} of $n=\sum\limits_{i=0}^{k}n_i \cdot 2^i$ is the sequence $n_{k} \dots n_1 n_0$, where $n_i \in \{0, 1\}$.
\end{definition}
\begin{notation}
    An integer $n$ may be expressed as 
    \begin{itemize}
        \item $n=(n_k \dots n_1 n_0)$, where $n_k=1$, the \emph{reduced binary representation} of $n$, used when discussing individual entries in the binary representation of $n$.
        \item $n=2^k+r$, where $0 \le r < 2^k$, used when discussing $k=\floor{\log_2(n)}$ (one less than the number of bits in the reduced binary representation of $n$) or $r=n\bmod 2^k$.
        \item $n=2^\ell \cdot d$, where $d$ is odd, used when discussing the highest power of $2$ dividing $n$.
        \item $n=2m$ or $2m+1$, used when discussing the parity of $n$.
    \end{itemize}
    Each of these forms of $n$ is used in this paper. We try to adhere to these variables and notation, for consistency and ease of reading.
\end{notation}
\begin{notation}
     $R(n)=\{\rho_i \mid n_{\rho_i} = 1\}$, the set of the positions where the entries in the binary expansion of $n$ are $1$. Thus, $n=\sum_{\rho_i \in R(n)}2^{\rho_i}$. 
     By convention, we set $\rho_{\omega(n)} >  \dots > \rho_2 > \rho_1$.
\end{notation}
\begin{notation}
    The number of $S$-nodes in a divide-and-conquer tree is denoted by $\sigma(n)$. The number of $D$-nodes in a divide-and-conquer tree is denoted by $\delta(n)$.
\end{notation}
\subsection{Some useful lemmas}
The lemmas in this section are simple and appear in many papers, but are used throughout this paper, so we state them here for reference. They appear in particular in Section \ref{danc_dnodes_section} on $D$-nodes of divide-and-conquer trees, in Section \ref{max_S_section} on trees having minimal $D$-nodes, and in Section \ref{colless_ladder_section} on trees having minimal $D$-nodes with base ladder trees.
\begin{definition}
    The \emph{weight} of $n$ is  $\omega(n)=\sum\limits_{i=0}^{k-1}n_i$, the number of $1$s in the binary expansion of $n$.
\end{definition}
Lemmas \ref{weightlemma_count} through \ref{weightlemma_addtwo} can be seen to be true through inspection.
\begin{lemma}\label{weightlemma_count}
      Let  $R(n)=\{\rho_i \mid n_{\rho_i} = 1\}$. Then $\lvert R(n)\rvert = \omega(n)$.
\end{lemma}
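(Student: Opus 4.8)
The plan is to prove Lemma~\ref{weightlemma_count} directly by unwinding the two relevant definitions and observing that they count the same thing. Since each binary digit $n_i$ lies in $\{0,1\}$, the quantity $\omega(n)=\sum_{i=0}^{k-1} n_i$ is precisely the number of indices $i$ for which $n_i=1$; I would state this observation explicitly, perhaps noting that $\sum_i n_i = \sum_i [\,n_i=1\,] = \lvert\{i : n_i = 1\}\rvert$ because a $0/1$-valued digit contributes $1$ to the sum exactly when it equals $1$.

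Next I would match this index set against the definition of $R(n)$. By definition $R(n)=\{\rho_i \mid n_{\rho_i}=1\}$, i.e.\ $R(n)$ is exactly the set $\{i : n_i = 1\}$ of positions carrying a $1$ in the binary expansion of $n$. Combining the two identifications gives $\lvert R(n)\rvert = \lvert\{i : n_i = 1\}\rvert = \omega(n)$, which is the claim. One minor point worth a sentence: the range of summation in the definition of $\omega(n)$ is $i=0,\dots,k-1$ with $k=\floor{\log_2(n)}$ or the reduced-representation length, but since $n_i=0$ for all $i$ outside the support of $n$, enlarging or shrinking the index range does not change either side, so there is no issue with the bound on the sum.

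I do not anticipate any genuine obstacle here; the statement is essentially a restatement of the definition of weight in set-theoretic language, and the only thing to be careful about is that the indexing conventions for $\omega(n)$ (a sum over a fixed range of $0/1$ digits) and for $R(n)$ (a set of positions with digit $1$) are consistent, which they are. The proof will therefore be two or three lines, and its role in the paper is purely to license later arguments that freely pass between counting $1$-digits of $n$ and counting elements of $R(n)$.
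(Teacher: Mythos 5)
Your proof is correct and matches the paper, which simply asserts this lemma (along with the other weight lemmas) as true "through inspection"; your two-line unwinding of the definitions is exactly the inspection the paper has in mind. The remark about the summation range being harmless since $n_i=0$ outside the support is a reasonable extra precaution but not needed.
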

\begin{lemma}\label{weightlemma_odd}
    If $n$ is odd, then $\omega(n-1)=\omega(n)-1$. 
\end{lemma}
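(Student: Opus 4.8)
The plan is to exploit the fact that subtracting $1$ from an odd number touches only the lowest-order bit. Write $n = (n_k \dots n_1 n_0)$ in its reduced binary representation. Since $n$ is odd, its lowest-order bit is $n_0 = 1$, so we may write $n = \sum_{i=1}^{k} n_i 2^i + 1$.

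First I would observe that $n - 1 = \sum_{i=1}^{k} n_i 2^i$, which is exactly the integer whose binary representation is $(n_k \dots n_1 0)$; that is, $n-1$ agrees with $n$ in every bit position $i \ge 1$ and has a $0$ in position $0$. No borrow propagates past position $0$ precisely because $n_0 = 1$. Then, applying the definition of weight, $\omega(n) = \sum_{i=0}^{k} n_i = \left(\sum_{i=1}^{k} n_i\right) + 1$ while $\omega(n-1) = \sum_{i=1}^{k} n_i$, so $\omega(n-1) = \omega(n) - 1$.

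There is no real obstacle here: the only point requiring a word of care is justifying that the subtraction does not alter any bit beyond position $0$, which follows immediately from $n_0 = 1$. This lemma is essentially the $k=0$ special case of the more general borrow-chain behaviour, and the author likely intends it to be dispatched by the remark preceding it that Lemmas \ref{weightlemma_count} through \ref{weightlemma_addtwo} "can be seen to be true through inspection"; I would nonetheless include the one-line bit-level argument above for completeness.
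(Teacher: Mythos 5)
Your argument is correct and is exactly the bit-level observation the paper relies on: the paper dispatches Lemmas \ref{weightlemma_count} through \ref{weightlemma_addtwo} with the single remark that they ``can be seen to be true through inspection,'' and your one-line computation (that subtracting $1$ from an odd $n$ only clears the bit $n_0=1$, leaving all higher bits untouched) is precisely that inspection made explicit. No gap, and no difference in approach beyond your added rigor.
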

\begin{lemma}\label{weightlemma_even}
    If $n$ is even, then $\omega(n+1) = \omega(n)+1$.
\end{lemma}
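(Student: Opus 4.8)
The plan is to argue directly from the binary representation of $n$. Since $n$ is even, its least significant bit is $0$, i.e., writing $n = (n_k \dots n_1 n_0)$ we have $n_0 = 0$. First I would observe that adding $1$ to such an $n$ produces no carry: the addition only changes the $2^0$ column, turning $n_0 = 0$ into a $1$ and leaving every higher bit $n_i$ ($i \ge 1$) untouched. Hence the reduced binary representation of $n+1$ is exactly $(n_k \dots n_1 1)$.

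Next I would simply count the ones. By the definition of weight, $\omega(n) = \sum_{i=0}^{k} n_i$, and since the digits $n_1,\dots,n_k$ of $n+1$ agree with those of $n$ while the $0$-th digit goes from $0$ to $1$, we get $\omega(n+1) = \omega(n) - n_0 + 1 = \omega(n) + 1$, as claimed. (Alternatively, one could deduce this from Lemma~\ref{weightlemma_odd}: since $n$ is even, $m := n+1$ is odd, so that lemma gives $\omega(m-1) = \omega(m) - 1$, i.e.\ $\omega(n) = \omega(n+1) - 1$; but the bit-level argument is self-contained and matches the ``by inspection'' spirit of this group of lemmas.)

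There is essentially no obstacle here; the only thing to be careful about is phrasing the no-carry claim precisely, which is why I would spell out that $n_0 = 0$ is exactly the condition that prevents a carry from propagating. This is also consistent with the contrasting Lemma~\ref{weightlemma_odd}, where $n$ odd forces a cascade of carries and the weight can drop by more than one in absolute terms — but here evenness rules that out entirely, so the bound is tight and exact.
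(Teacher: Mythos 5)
Your proof is correct and matches the paper, which dispatches this lemma (along with its neighbors) simply as "can be seen to be true through inspection"; your bit-level no-carry argument is exactly that inspection made explicit. One tiny quibble with your aside: Lemma~\ref{weightlemma_odd} concerns $\omega(n-1)$ for odd $n$ (clearing the trailing $1$, no cascade), not the carry-propagating case of adding $1$ to an odd number, but this does not affect your argument.
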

\begin{lemma}\label{weightlemma_mult2}
    $\omega(2m) = \omega(m)$.
\end{lemma}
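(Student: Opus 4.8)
The final statement to prove is Lemma~\ref{weightlemma_mult2}: $\omega(2m) = \omega(m)$.

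\medskip

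The plan is to argue directly from the binary representation, which is the cleanest route and matches the paper's stated intent that ``Lemmas \ref{weightlemma_count} through \ref{weightlemma_addtwo} can be seen to be true through inspection.'' First I would write $m$ in its binary expansion, $m = \sum_{i=0}^{k} m_i 2^i$ with $m_i \in \{0,1\}$. Then multiplication by $2$ simply shifts every bit up by one position: $2m = \sum_{i=0}^{k} m_i 2^{i+1} = \sum_{j=1}^{k+1} m_{j-1} 2^{j}$, so the binary representation of $2m$ is $m_k m_{k-1} \cdots m_0 0$, i.e.\ the digit string of $m$ with a $0$ appended in the units place. Appending a $0$ does not change the number of $1$s, so $\omega(2m) = \sum_{j} (2m)_j = \sum_{i} m_i = \omega(m)$.

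\medskip

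Alternatively, one can phrase this in terms of the notation $R(n)$ already introduced: the map $\rho \mapsto \rho + 1$ is a bijection from $R(m)$ onto $R(2m)$, since $n_{\rho} = 1$ for $n = m$ exactly when $(2m)_{\rho+1} = 1$. Then $\omega(2m) = |R(2m)| = |R(m)| = \omega(m)$ by Lemma~\ref{weightlemma_count}. A third, even more mechanical option is induction on $m$ combined with Lemmas~\ref{weightlemma_odd} and~\ref{weightlemma_even}, but that is heavier machinery than needed and somewhat circular given how those lemmas are themselves justified.

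\medskip

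There is essentially no obstacle here: the result is immediate once one observes that doubling is a left-shift in binary. The only thing to be slightly careful about is the degenerate case $m = 0$, where both sides are $0$, and the edge convention on the number of bits (the reduced binary representation requires a leading $1$, so one should either exclude $m=0$ as vacuous or note that $\omega$ counts $1$s regardless of how many leading positions one writes). I would present the bit-shift argument as the main proof, mention the degenerate case in a half-sentence, and keep the whole thing to two or three lines, consistent with the surrounding lemmas.
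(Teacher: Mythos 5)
Your bit-shift argument is correct and is precisely the ``inspection'' that the paper invokes for Lemmas~\ref{weightlemma_count} through~\ref{weightlemma_addtwo}: doubling appends a trailing $0$ to the binary string, leaving the count of $1$s unchanged. The paper offers no further detail, so your proof matches its intended route, just written out explicitly.
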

\begin{lemma} \label{weightlemma_2kless1}
    $\omega(2^k-1)=k$
\end{lemma}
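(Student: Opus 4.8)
The statement to prove is Lemma~\ref{weightlemma_2kless1}: $\omega(2^k-1)=k$.

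The plan is to give a one-line argument from the binary representation. First I would recall that $2^k - 1$ has reduced binary representation consisting of exactly $k$ ones, namely $\underbrace{11\cdots1}_{k}$, since $\sum_{i=0}^{k-1} 2^i = 2^k - 1$. By the definition of weight, $\omega(n)$ counts the number of $1$s in the binary expansion of $n$, so $\omega(2^k-1)$ is exactly the number of terms in that sum, which is $k$.

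Alternatively, and perhaps more in the spirit of the surrounding lemmas (which chain together small facts), I could give an inductive proof using Lemmas~\ref{weightlemma_odd} and~\ref{weightlemma_mult2}. The base case $k=0$ gives $\omega(2^0-1)=\omega(0)=0$. For the inductive step, observe that $2^{k}-1 = 2(2^{k-1}-1) + 1$ is odd, so by Lemma~\ref{weightlemma_odd}, $\omega(2^k-1) = \omega(2^k-2) + 1 = \omega(2(2^{k-1}-1)) + 1$, and then by Lemma~\ref{weightlemma_mult2} this equals $\omega(2^{k-1}-1)+1 = (k-1)+1 = k$ by the inductive hypothesis.

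There is essentially no obstacle here; the only thing to be a little careful about is the edge case $k=0$ (where $2^k-1=0$ and the weight is the empty sum $0$), and the convention that the "reduced binary representation" requires a leading $1$, which technically fails for $n=0$ — so for a fully rigorous writeup I would either restrict to $k\ge 1$ or explicitly handle $n=0$ as having weight $0$. I expect the author simply writes the direct one-line binary-expansion argument, which is what I would do in the main text.
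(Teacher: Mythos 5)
Your proof is correct and matches the paper's treatment: the paper simply declares Lemmas~\ref{weightlemma_count} through~\ref{weightlemma_addtwo} ``true through inspection,'' and your one-line binary-expansion argument ($2^k-1 = \sum_{i=0}^{k-1}2^i$ has exactly $k$ ones) is precisely the inspection intended. The inductive alternative and the $k=0$ caveat are fine but unnecessary extras.
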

\begin{lemma}\label{weightlemma_addone}
    If $n=2^k+r$, with $0 \le r < 2^k$, then $\omega(n) = \omega(r)+1$.
\end{lemma}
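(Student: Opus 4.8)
The plan is to unwind the definition of weight via the uniqueness of binary representations. Since $0 \le r < 2^{k}$, the reduced binary representation of $r$ involves only the positions $0, 1, \dots, k-1$; that is, I would write $r = \sum_{i=0}^{k-1} r_i 2^i$ with each $r_i \in \{0,1\}$ and (with the convention that higher bits are zero) $\omega(r) = \sum_{i=0}^{k-1} r_i$.

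Next I would simply add $2^k$ to both sides: $n = 2^k + r = 2^k + \sum_{i=0}^{k-1} r_i 2^i$. Because the coefficient of $2^k$ here is $1$ and all the coefficients of lower powers are the $r_i \in \{0,1\}$, this expression is itself a valid binary expansion of $n$, and by uniqueness of binary representation it \emph{is} the binary representation of $n$: $n_k = 1$, $n_i = r_i$ for $i < k$, and $n_i = 0$ for $i > k$. Summing the bits gives $\omega(n) = 1 + \sum_{i=0}^{k-1} r_i = \omega(r) + 1$.

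I expect there to be no real obstacle here; the only point requiring a word of care is the appeal to uniqueness of binary representation (equivalently, that $r < 2^k$ forces the bit of $r$ at position $k$ to be $0$, so adding $2^k$ cannot trigger a carry). Alternatively, one could observe that this is exactly a special case of the claim already stated as self-evident in the text (``Lemmas \ref{weightlemma_count} through \ref{weightlemma_addtwo} can be seen to be true through inspection''), and the proof above is just a formalization of that inspection.
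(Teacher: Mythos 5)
Your proof is correct, and it matches the paper, which offers no separate argument for this lemma but declares it (along with its neighbors) ``true through inspection'': your carry-free addition of $2^k$ to the bits of $r$, justified by $r<2^k$ and uniqueness of binary representation, is exactly the formalization of that inspection. No gap here.
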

\begin{lemma} \label{weightlemma_addtwo}
    $\omega(a)+\omega(b)=\omega(a+b)$ if and only if no index $i$ is such that both $a_i$ and $b_i$ are $1$.
\end{lemma}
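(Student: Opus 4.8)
I would prove the statement as an ``iff,'' with the reverse direction (disjoint binary supports $\Rightarrow$ additivity) being essentially immediate and the forward direction carrying the real content. For the reverse direction: if no index has $a_i = b_i = 1$, then $a + b = \sum_i a_i 2^i + \sum_i b_i 2^i = \sum_i (a_i + b_i) 2^i$ with every coefficient $a_i + b_i \in \{0,1\}$, so this expression is already the binary representation of $a+b$; hence $\omega(a+b) = \sum_i (a_i + b_i) = \omega(a) + \omega(b)$.

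For the forward direction I would prove a slightly stronger statement by strong induction on $a+b$, from which the lemma is read off immediately:
\[
\omega(a) + \omega(b) \ \ge\ \omega(a+b), \quad \text{with equality iff } R(a) \cap R(b) = \emptyset .
\]
Write $a = 2a' + \alpha$ and $b = 2b' + \beta$ with $\alpha,\beta \in \{0,1\}$, and split into cases on $(\alpha,\beta)$ (up to symmetry: $(0,0)$, $(1,0)$, $(1,1)$), checking the base cases $a+b \le 1$ directly. In each case I reduce to $a', b'$ using Lemmas~\ref{weightlemma_mult2} and \ref{weightlemma_even} to relate $\omega(2m)$ and $\omega(2m+1)$ to $\omega(m)$, and using that $R(2m) = \{i+1 : i \in R(m)\}$, so that support-disjointness of $a,b$ transfers to $a',b'$ whenever $(\alpha,\beta)\neq(1,1)$. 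In the cases $(0,0)$ and $(1,0)$ the claimed equivalence is exactly the inductive hypothesis for $(a',b')$ after cancelling equal terms, and the inequality follows the same way. In the case $(1,1)$, index $0$ lies in $R(a)\cap R(b)$, so the right-hand condition fails, and I must show the inequality is strict: here $a+b = 2(a'+b'+1)$, whence
\[
\omega(a+b) = \omega(a'+b'+1) \ \le\ \omega(a'+b') + 1 \ \le\ \omega(a') + \omega(b') + 1 \ <\ \omega(a') + \omega(b') + 2 = \omega(a) + \omega(b),
\]
where the first inequality is the induction hypothesis applied to the pair $(a'+b',\,1)$ and the second to $(a',b')$; both pairs have sum strictly less than $a+b = 2(a'+b')+2$.

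The only mildly delicate point is this carry case $(\alpha,\beta)=(1,1)$: it needs the bound $\omega(m+1)\le \omega(m)+1$, which I get for free by invoking the inductive hypothesis on the pair $(m,1)$ with $m = a'+b'$ (legitimate since $m+1 < 2m+2 = a+b$) rather than proving it as a separate lemma. Everything else is routine bookkeeping with the elementary weight lemmas already in hand. The lemma itself then follows from the strengthened claim: $\omega(a)+\omega(b) = \omega(a+b)$ holds exactly when the inequality is tight, i.e.\ exactly when $R(a)\cap R(b)=\emptyset$, which is precisely the condition that no index $i$ has both $a_i$ and $b_i$ equal to $1$.
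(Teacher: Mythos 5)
Your proof is correct, and it is considerably more than the paper provides: the paper groups this lemma with Lemmas~\ref{weightlemma_count}--\ref{weightlemma_addtwo} and dismisses them all as ``true through inspection,'' offering no argument. Your strengthened claim --- $\omega(a)+\omega(b)\ge\omega(a+b)$ with equality precisely when the binary supports are disjoint --- is the right thing to induct on, and the case analysis on the low-order bits is sound. In particular you handle the one genuinely non-trivial point correctly: in the carry case $(\alpha,\beta)=(1,1)$ you need $\omega(m+1)\le\omega(m)+1$, and obtaining it by applying the inductive hypothesis to the pair $(a'+b',1)$ (whose sum is indeed strictly less than $a+b$) is a clean way to avoid a separate lemma. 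The strict inequality you derive there is exactly what rules out equality when a common index exists, so the ``only if'' direction is fully justified rather than merely asserted. The underlying fact your induction is tracking is the classical one that $\omega(a)+\omega(b)-\omega(a+b)$ counts the carries in the binary addition of $a$ and $b$; the paper's ``inspection'' presumably appeals to this intuition, while your argument makes it rigorous. The only cosmetic remark is that your reverse direction is subsumed by the strengthened inductive claim, so it could be omitted, but stating it separately does no harm.
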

\begin{lemma} \label{weightlemma}
    Let $r$ be odd, where $\floor{\log_2(r)} <  k$. Then $\omega(2^{k+1}-r)=2+k-\omega(r)$.
\end{lemma}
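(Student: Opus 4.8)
The plan is to prove the identity $\omega(2^{k+1} - r) = 2 + k - \omega(r)$ for odd $r$ with $\floor{\log_2(r)} < k$ by relating $2^{k+1}-r$ to the one's complement of $r$ within a $(k+1)$-bit window. First I would write $2^{k+1} - r = \left(2^{k+1} - 1 - r\right) + 1$. Since $r < 2^k < 2^{k+1}$, the number $2^{k+1}-1 = (\underbrace{1\cdots1}_{k+1})$ in binary, and subtracting $r$ from it is a bitwise complement: the $i$-th bit of $2^{k+1}-1-r$ is $1 - r_i$ for $0 \le i \le k$. Hence $\omega\left(2^{k+1}-1-r\right) = (k+1) - \omega(r)$, using that $\omega(r)$ counts the ones among exactly those $k+1$ bit positions (valid because $\floor{\log_2(r)} < k$, so $r$ has no bits at position $k$ or above).

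Next I would account for the ``$+1$''. Because $r$ is odd, $r_0 = 1$, so the complement $2^{k+1}-1-r$ has $0$ in its lowest bit, i.e.\ it is even. Adding $1$ to an even number flips only the last bit from $0$ to $1$ and causes no carries, so by Lemma~\ref{weightlemma_even} (or simply by inspection) $\omega\left(2^{k+1}-1-r+1\right) = \omega\left(2^{k+1}-1-r\right) + 1 = (k+1) - \omega(r) + 1 = 2 + k - \omega(r)$. This yields the claim.

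The main obstacle, such as it is, is bookkeeping rather than depth: one must be careful that the bitwise-complement description of $2^{k+1}-1-r$ is applied over the correct fixed window of $k+1$ bits, which is exactly where the hypothesis $\floor{\log_2(r)} < k$ is used (it guarantees $r$ fits in bits $0,\dots,k-1$, so $2^{k+1}-1-r$ genuinely has its top bit at position $k$ and $\omega$ is computed cleanly), and that the oddness of $r$ is what prevents a carry cascade when adding $1$. Both points are routine given Lemmas~\ref{weightlemma_even} and~\ref{weightlemma_2kless1}, so no serious difficulty is anticipated.
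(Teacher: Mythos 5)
Your proof is correct and follows essentially the same route as the paper: both arguments reduce $2^{k+1}-r$ to the bitwise complement of $r$ over the $(k+1)$-bit window (the paper performs the subtraction directly and reads off the pattern $1\,!r_{k-1}\cdots!r_1\,1$, which is exactly your complement-plus-one decomposition), and both use oddness of $r$ to rule out a carry. No gap; the bookkeeping about the bit window and the final count $1+(k-\omega(r))+1$ both check out.
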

\begin{proof}
    The binary representation of $2^{k+1}$ is $1\ 0...0$, where there are $k$ $0$s following the leading $1$. Let the binary representation of $r$ be $0\ 0\ r_{k-1}\dots\ r_1\ 1$, since $r$ is odd and $\floor{\log_2(r)} <  k$. Subtracting $r$ from $2^{k+1}$ gives $0\ 1\ !{r_{k-1}}\dots\ !{r_1}\ 1$, where $!r_i$ is the negation of $r_i$. The lemma follows.
\end{proof}
    The following lemma is well-known and may be found in many references.
    \begin{lemma}
        A full binary tree on $n$ leaves has $n-1$ internal nodes.
    \end{lemma}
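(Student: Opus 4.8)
The plan is to argue by induction on the number of leaves $n$, exploiting the recursive structure of full binary trees. The base case is $n = 1$: a full binary tree with a single leaf consists of that leaf alone, which has zero children and hence is not internal, so it has $0 = n - 1$ internal nodes. For the inductive step, suppose the claim holds for all full binary trees with fewer than $n$ leaves, and let $T$ be a full binary tree with $n \ge 2$ leaves. Since $n \ge 2$, the root of $T$ cannot be a leaf, so by the definition of full binary tree it has exactly two children; deleting the root splits $T$ into two full binary trees $T_1$ and $T_2$ rooted at those children, with $n_1$ and $n_2$ leaves respectively, where $n_1 + n_2 = n$ and $1 \le n_i < n$. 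By the inductive hypothesis $T_i$ has $n_i - 1$ internal nodes, so $T$ has $(n_1 - 1) + (n_2 - 1) + 1 = n - 1$ internal nodes, which completes the induction.

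An alternative that avoids the base-case bookkeeping is a double count of edges. Write $I$ for the number of internal nodes; since there are $n$ leaves, the tree has $I + n$ vertices and therefore, being a tree, $I + n - 1$ edges. On the other hand every internal node accounts for exactly two edges to its children and every edge arises this way, so the number of edges is $2I$. Equating $2I = I + n - 1$ yields $I = n - 1$.

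There is essentially no obstacle in this proof: the only points requiring a moment's care are checking the base case and noting that in a full binary tree a non-leaf has exactly two children, so that the root of a tree with $n \ge 2$ leaves is genuinely internal — both immediate from the definition of full binary tree recalled above. I would present the induction as the main argument and mention the edge-counting identity as a one-line alternative.
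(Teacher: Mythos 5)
Your proof is correct; both the induction and the edge-counting argument are standard and complete, and the only delicate points (the base case $n=1$ and the fact that a non-leaf in a full binary tree has exactly two children) are handled. The paper itself gives no proof of this lemma --- it is stated as well-known and left uncited in detail --- so there is no argument to compare against; either of your two proofs would serve as a self-contained justification.
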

    Because of pairwise commutativity, we may consider as a canonical form only trees in which the left child always has at least as many descendant leaf nodes as the right child. The trees in this paper will often be of that form. 
    \begin{notation}
        We use the symbol $\odot$ to represent the composition of trees, so if $T_L$ and $T_R$ are joined to become the left and right subtrees of a new tree, we denote this tree by $T_L \odot T_R$.
    \end{notation}
\subsection{Special tree forms} \label{specialforms}
\begin{figure}[h!tb]
	\tikzstyle{mynode}=[circle, draw]
	\tikzstyle{leaf}=[fill=gray!45]
	\tikzstyle{myarrow}=[]
	\centering
	\begin{subfigure}[b]{.22\textwidth}
	    \centering
	    \resizebox{0.9\textwidth}{!}{%
		\begin{tikzpicture}	
			\centering
			% ladder tree :6 elements
			\node[mynode] (L1) at (0, 3) {\footnotesize\texttt{}};
			\node[mynode] (L2a) at (-1, 2) {\footnotesize\texttt{}};
			\node[mynode,leaf] (L2b) at (1, 2) {\footnotesize\texttt{}};
			\node[mynode] (L3a) at (-2, 1) {\footnotesize\texttt{}};
			\node[mynode,leaf] (L3b) at (0, 1) 
			{\footnotesize\texttt{}};
			\node[mynode] (L4a) at (-3, 0) {\footnotesize\texttt{}};
			\node[mynode,leaf] (L4b) at (-1, 0) 			{\footnotesize\texttt{}};
			\node[mynode] (L5a) at (-4, -1) {\footnotesize\texttt{}};
			\node[mynode,leaf] (L5b) at (-2, -1) 			{\footnotesize\texttt{}};
			\node[mynode, leaf] (L6a) at (-5, -2) {\footnotesize\texttt{}};
			\node[mynode,leaf] (L6b) at (-3, -2) 			{\footnotesize\texttt{}};
			\draw[myarrow] (L1) to node[midway,above=0pt, left=7pt]{} (L2a);
			\draw[myarrow] (L1) to node[midway,above=0pt, left=-2.5pt]{} (L2b);
			\draw[myarrow] (L2a) to node[midway,above=0pt, left=4pt]{} (L3a);
			\draw[myarrow] (L2a) to node[midway,above=0pt, right=-1pt]{} (L3b);
			\draw[myarrow] (L3a) to node[midway,above=0pt, left=0mm]{} (L4a);
			\draw[myarrow] (L3a) to node[midway,above=0pt, right=-2pt]{} (L4b);
			\draw[myarrow] (L4a) to node[midway,above=0pt, left=0mm]{} (L5a);
			\draw[myarrow] (L4a) to node[midway,above=0pt, right=-2pt]{} (L5b);
			\draw[myarrow] (L5a) to node[midway,above=0pt, left=0mm]{} (L6a);
			\draw[myarrow] (L5a) to node[midway,above=0pt, right=-2pt]{} (L6b);
		\end{tikzpicture}
		} 
		\caption{A ladder tree on $6$ leaves.}
		\label{fig:treetypes-a}
	\end{subfigure}\hfill%
	\begin{subfigure}[b]{.22\textwidth}
	    \centering
	    \resizebox{0.8\textwidth}{!}{%
		\begin{tikzpicture}	
			\centering
			% d&c tree :6 elements
			\node[mynode] (L1) at (0, 3) {\footnotesize\texttt{}};
			\node[mynode] (L2a) at (-1, 2) {\footnotesize\texttt{}};
			\node[mynode] (L2b) at (1, 2) {\footnotesize\texttt{}};
			\node[mynode] (L3a) at (-2, 1) {\footnotesize\texttt{}};
			\node[mynode,leaf] (L3b) at (-1, 1) 
			{\footnotesize\texttt{}};
			\node[mynode] (L3c) at (1, 1) {\footnotesize\texttt{}};
			\node[mynode,leaf] (L3d) at (2, 1) {\footnotesize\texttt{}};
			\node[mynode, leaf] (L4a) at (-3, 0) {\footnotesize\texttt{}};
			\node[mynode,leaf] (L4b) at (-2, 0) {\footnotesize\texttt{}};
			\node[mynode,leaf] (L4e) at (1, 0) {\footnotesize\texttt{}};
			\node[mynode,leaf] (L4f) at (2, 0) {\footnotesize\texttt{}};
			\draw[myarrow] (L1) to node[midway,above=0pt, left=7pt]{} (L2a);
			\draw[myarrow] (L1) to node[midway,above=0pt, left=-2.5pt]{} (L2b);
			\draw[myarrow] (L2a) to node[midway,above=0pt, left=4pt]{} (L3a);
			\draw[myarrow] (L2a) to node[midway,above=0pt, right=-1pt]{} (L3b);
			\draw[myarrow] (L2b) to node[midway,above=0pt, left=-0.5pt]{} (L3c);
			\draw[myarrow] (L2b) to node[midway,above=0pt, left=-1pt]{} (L3d);
			\draw[myarrow] (L3a) to node[midway,above=0pt, left=0mm]{} (L4a);
			\draw[myarrow] (L3a) to node[midway,above=0pt, right=-2pt]{} (L4b);
			\draw[myarrow] (L3c) to node[midway,above=0pt, left=0mm]{} (L4e);
			\draw[myarrow] (L3c) to node[midway,above=0pt, right=-2pt]{} (L4f);
		\end{tikzpicture}
		} 
		\caption{A divide and conquer tree on $6$ leaves.}
		\label{fig:treetypes-b}
	\end{subfigure}\hfill%
	\begin{subfigure}[b]{.22\textwidth}
	    \centering
	    \resizebox{0.8\textwidth}{!}{%
		\begin{tikzpicture}	
			\centering
			% cfb tree :6 elements
			\node[mynode] (L1) at (0, 3) {\footnotesize\texttt{}};
			\node[mynode] (L2a) at (-1, 2) {\footnotesize\texttt{}};
			\node[mynode] (L2b) at (1, 2) {\footnotesize\texttt{}};
			\node[mynode] (L3a) at (-2, 1) {\footnotesize\texttt{}};
			\node[mynode] (L3b) at (-1, 1)
			{\footnotesize\texttt{}};
			\node[mynode,leaf] (L3c) at (1, 1) {\footnotesize\texttt{}};
			\node[mynode,leaf] (L3d) at (2, 1) {\footnotesize\texttt{}};
			\node[mynode, leaf] (L4a) at (-3, 0) {\footnotesize\texttt{}};
			\node[mynode,leaf] (L4b) at (-2, 0) {\footnotesize\texttt{}};
			\node[mynode,leaf] (L4c) at (-1, 0) {\footnotesize\texttt{}};
			\node[mynode,leaf] (L4d) at (-0, 0) {\footnotesize\texttt{}};
			\draw[myarrow] (L1) to node[midway,above=0pt, left=7pt]{} (L2a);
			\draw[myarrow] (L1) to node[midway,above=0pt, left=-2.5pt]{} (L2b);
			\draw[myarrow] (L2a) to node[midway,above=0pt, left=4pt]{} (L3a);
			\draw[myarrow] (L2a) to node[midway,above=0pt, right=-1pt]{} (L3b);
			\draw[myarrow] (L2b) to node[midway,above=0pt, left=-0.5pt]{} (L3c);
			\draw[myarrow] (L2b) to node[midway,above=0pt, left=-1pt]{} (L3d);
			\draw[myarrow] (L3a) to node[midway,above=0pt, left=0mm]{} (L4a);
			\draw[myarrow] (L3a) to node[midway,above=0pt, right=-2pt]{} (L4b);
			\draw[myarrow] (L3b) to node[midway,above=0pt, left=0mm]{} (L4c);
			\draw[myarrow] (L3b) to node[midway,above=0pt, right=-2pt]{} (L4d);
		\end{tikzpicture}
		} 
		\caption{A complete full binary tree on 6 leaves.}
		\label{fig:treetypes-c}
	\end{subfigure}\hfill%
	\begin{subfigure}[b]{.22\textwidth}
	    \centering
	    \resizebox{.9\textwidth}{!}{%
		\begin{tikzpicture}	
			\centering
			% perfect tree :8 elements
			\node[mynode] (L1) at (0, 3) {\footnotesize\texttt{}};
			\node[mynode] (L2a) at (-1, 2) {\footnotesize\texttt{}};
			\node[mynode] (L2b) at (1, 2) {\footnotesize\texttt{}};
			\node[mynode] (L3a) at (-2, 1) {\footnotesize\texttt{}};
			\node[mynode] (L3b) at (-1, 1)
			{\footnotesize\texttt{}};
			\node[mynode] (L3c) at (1, 1) {\footnotesize\texttt{}};
			\node[mynode] (L3d) at (2, 1) {\footnotesize\texttt{}};
			\node[mynode, leaf] (L4a) at (-3, 0) {\footnotesize\texttt{}};
			\node[mynode,leaf] (L4b) at (-2, 0) {\footnotesize\texttt{}};
			\node[mynode,leaf] (L4c) at (-1.25, 0) {\footnotesize\texttt{}};
			\node[mynode,leaf] (L4d) at (-.5, 0) {\footnotesize\texttt{}};
			\node[mynode,leaf] (L4e) at (0.5, 0) {\footnotesize\texttt{}};
			\node[mynode,leaf] (L4f) at (1.25, 0) {\footnotesize\texttt{}};
			\node[mynode,leaf] (L4g) at (2, 0) {\footnotesize\texttt{}};
			\node[mynode,leaf] (L4h) at (3, 0) {\footnotesize\texttt{}};
			\draw[myarrow] (L1) to node[midway,above=0pt, left=7pt]{} (L2a);
			\draw[myarrow] (L1) to node[midway,above=0pt, left=-2.5pt]{} (L2b);
			\draw[myarrow] (L2a) to node[midway,above=0pt, left=4pt]{} (L3a);
			\draw[myarrow] (L2a) to node[midway,above=0pt, right=-1pt]{} (L3b);
			\draw[myarrow] (L2b) to node[midway,above=0pt, left=-0.5pt]{} (L3c);
			\draw[myarrow] (L2b) to node[midway,above=0pt, left=-1pt]{} (L3d);
			\draw[myarrow] (L3a) to node[midway,above=0pt, left=0mm]{} (L4a);
			\draw[myarrow] (L3a) to node[midway,above=0pt, right=-2pt]{} (L4b);
			\draw[myarrow] (L3b) to node[midway,above=0pt, left=0mm]{} (L4c);
			\draw[myarrow] (L3b) to node[midway,above=0pt, right=-2pt]{} (L4d);
			\draw[myarrow] (L3c) to node[midway,above=0pt, left=0mm]{} (L4e);
			\draw[myarrow] (L3c) to node[midway,above=0pt, right=-2pt]{} (L4f);
			\draw[myarrow] (L3d) to node[midway,above=0pt, left=0mm]{} (L4g);
			\draw[myarrow] (L3d) to node[midway,above=0pt, right=-2pt]{} (L4h);
		\end{tikzpicture}
		}
		\caption{A perfect tree on $2^3=8$ leaves.}
		\label{fig:treetypes-d}
	\end{subfigure} %
	\mycaption{Examples of tree types: ladder, divide-and-conquer, complete full binary, perfect} {The divide-and-conquer tree on $6$ leaves shown in subfigure (\subref{fig:treetypes-b}) is very  similar to the complete full binary tree on $6$ leaves in subfigure (\subref{fig:treetypes-c}), differing in the placement of their lowest-level leaves and in the tree topology that results from this.}
	\label{fig:treetypes}		
\end{figure}
\begin{definition}
	A \emph{ladder tree} is a full binary tree in which every interior node has a leaf child and a child that is another ladder tree, with the exception of the lowest interior node, which has two leaf children. These correspond to serial calculations. An example is shown in Fig. \ref{fig:treetypes}(\subref{fig:treetypes-a}).
\end{definition}
\begin{definition}
    A \emph{divide-and-conquer tree} is a full binary tree in which the number of leaf descendants of the left and right children of any node differ by at most $1$. These are as nearly balanced as possible. These correspond to divide-and-conquer calculations. An example is shown in Fig. \ref{fig:treetypes}(\subref{fig:treetypes-b}).
\end{definition}
\begin{definition}
    A \emph{complete full binary tree} is a tree in which every node has either two or zero children, every level except the last is completely filled, and all nodes are as far to the left as possible. An example is shown in Fig. \ref{fig:treetypes}(\subref{fig:treetypes-c}).
\end{definition}
\begin{definition}
	A \emph{perfect binary tree} is a full binary tree in which all levels $i$ have the maximum number $2^i$ nodes. A perfect tree has $2^k$ leaves and $2^k-1$ interior nodes \cite{knuth97art1}. These are perfectly balanced trees. Perfect trees are divide-and-conquer trees, and are also complete full binary trees. An example is shown in Fig. \ref{fig:treetypes}(\subref{fig:treetypes-d}).
\end{definition}

\section{Characterizing trees in terms of S- and D-nodes} \label{s-nodes-counting}

All members of a class, and its parenthetic form as well, have the same number of $S$-nodes, by Lemma \ref{iso_SD}.
%, so the number of $S$-nodes is a class invariant.
If this number is known or can be quantified, there is a simple formula to count the products that have that form.

This is especially useful when discussing sets of commutative non-associative products that share a particular parenthetic form, or else that have a known number of $S$-nodes. We discuss several cases of that nature in this section.
\subsection{Two known counting formulas} \label{known}
We start with two known formulas counting all parenthetic forms and commutative non-associative products on $n$ terms, without reference to $S$- or $D$-nodes. Forms are defined by the tree structure, and products are defined not only on the tree structure but also the named terms (or leaves).
The number of parenthetic forms increases much more  slowly than the total number of products. 

An equivalent observation to Proposition \ref{halfcat} is made by Bohl \cite{bohl06}. Proposition \ref{ineq_sum} is known to apply to leaf-labeled binary trees, and is shown by Stanley \cite{stanley97},  Callan \cite{callan09}  and Dale \cite{dale93}. Walters makes this observation on commutative, non-associative multiplication in OEIS entry \seqnum{A001147}  \cite{OEIS}.

\begin{proposition}\label{halfcat} \cite{bohl06}
	The number of non-isomorphic parenthetic forms with $n$ undefined terms is 
	$$
	\alpha(n)=\sum_{i=1}^{\floor{\frac{n}{2}}}\alpha(i)\alpha(n-i)
	$$
\end{proposition}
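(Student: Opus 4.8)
The plan is to establish the recurrence by a direct decomposition of a parenthetic form at its root. By the correspondence recalled above, $\alpha(n)$ is the number of full binary trees on $n$ unlabeled leaves, counted up to the tree isomorphism that allows reversing the two children of any internal node; the base value is $\alpha(1)=1$, the single leaf representing the trivial one-term ``product.'' The identity to be shown is then a statement that two ways of enumerating these trees agree.

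For $n\ge 2$, a parenthetic form $T$ on $n$ terms has a root whose two principal subtrees are themselves parenthetic forms, on $i$ and $n-i$ terms for some $1\le i\le n-1$. Because $T$ is only distinguished up to reversing children at \emph{every} node, and in particular at the root, $T$ is determined precisely by the \emph{unordered} pair $\{T_1,T_2\}$ of its two principal subforms. I would therefore classify by $j=\min(i,n-i)\in\{1,\dots,\floor{n/2}\}$ and, for each $j$, count the unordered pairs of subforms whose term-counts are $j$ and $n-j$.

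When $j<n-j$ the two subforms have different numbers of terms, so no reversal at the root can interchange them: the unordered pair carries exactly the same data as the ordered pair (smaller first), whose two coordinates range independently over the $\alpha(j)$ forms on $j$ terms and the $\alpha(n-j)$ forms on $n-j$ terms, contributing $\alpha(j)\,\alpha(n-j)$. Summing these over $j=1,\dots,\floor{(n-1)/2}$, and adjoining the contribution of the symmetric split $j=n/2$ when $n$ is even, recovers $\sum_{i=1}^{\floor{n/2}}\alpha(i)\,\alpha(n-i)$.

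The independence argument for unequal splits is routine. The step that needs real care — and the one I would carry out most carefully, checking it against small cases such as the six $6$-leaf forms drawn in Figure~\ref{fig:sdtrees6} — is the symmetric split $j=n/2$ for even $n$: there both principal subforms have $n/2$ terms, the two ordered pairs $(T_1,T_2)$ and $(T_2,T_1)$ now yield isomorphic trees, and one must count size-two multisets drawn from the $\alpha(n/2)$ forms on $n/2$ terms, being precise about both the off-diagonal pairs $T_1\ne T_2$ and the diagonal pairs $T_1=T_2$ so that this last summand is given its correct value. This is the only place where the unordered (commutative) nature of the product, rather than a naive Catalan-style count of ordered binary trees, genuinely intervenes, so it is the natural candidate for the main obstacle.
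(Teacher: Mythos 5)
Your root decomposition is the right instinct, but the step you defer as ``the natural candidate for the main obstacle'' is not merely delicate --- carried out honestly, it does not produce the stated formula. For the symmetric split of even $n$, the isomorphism classes of forms whose root joins two $\tfrac{n}{2}$-leaf subforms are in bijection with size-two \emph{multisets} of forms on $\tfrac{n}{2}$ terms, of which there are $\binom{\alpha(n/2)+1}{2}=\tfrac{1}{2}\alpha(n/2)\bigl(\alpha(n/2)+1\bigr)$, whereas the recurrence in Proposition~\ref{halfcat} contributes $\alpha(n/2)^2$ for the term $i=n/2$. These agree only while $\alpha(n/2)\le 1$, which is why your proposed check against the six $6$-leaf forms of Fig.~\ref{fig:sdtrees6} would pass. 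The first divergence is at $n=8$: the split $(4,4)$ admits exactly three unordered pairs drawn from the two $4$-leaf forms (ladder/ladder, ladder/balanced, balanced/balanced), while the formula counts four, because it treats the ordered pairs (ladder, balanced) and (balanced, ladder) as distinct even though reversing the children of the root identifies them. Your argument therefore establishes the Wedderburn--Etherington recurrence $\alpha(2m)=\sum_{i=1}^{m-1}\alpha(i)\alpha(2m-i)+\binom{\alpha(m)+1}{2}$, whose values $1,1,1,2,3,6,11,23,46,\dots$ (OEIS \seqnum{A001190}) genuinely differ from the half-Catalan values $1,1,1,2,3,6,11,24,47,\dots$ (OEIS \seqnum{A000992}) of the statement from $n=8$ onward; so the proof as proposed cannot close.

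There is no proof in the paper to compare against --- the proposition is imported from the cited reference --- but it is worth understanding what the half-Catalan recurrence does count: full binary trees written in the normal form where every node's left subtree has at least as many leaf descendants as its right subtree. That normal form is not a complete system of isomorphism-class representatives, since two sibling subtrees with equal leaf counts but different shapes satisfy the constraint in either order, and the same isomorphism class is then counted twice. To salvage your approach you must either reinterpret ``non-isomorphic parenthetic forms'' as that normal-form count (at which point the ordered-pair term $\alpha(n/2)^2$ is the right one and your multiset analysis should be deleted), or keep the isomorphism-class reading and replace the $i=n/2$ summand by $\binom{\alpha(n/2)+1}{2}$.
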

\begin{proposition}\label{ineq_sum}\cite{stanley97, callan09, dale93}
	The number of  computationally inequivalent commutative non-associative products on $n$ identified terms is
	$(2n-3)!!$
\end{proposition}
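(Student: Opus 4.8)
The plan is to prove by induction on $n$ that the quantity in question, call it $f(n)$, satisfies $f(n)=(2n-3)!!$, by establishing the recurrence $f(n+1)=(2n-1)\,f(n)$ for $n\ge 2$ together with the base case $f(2)=1=1!!=(2\cdot 2-3)!!$ (and, if one wishes to include it, $f(1)=1=(-1)!!$ under the usual convention). Here $f(n)$ counts leaf-labeled rooted full binary trees on the labels $\{x_1,\dots,x_n\}$ up to reversing the children of any set of nodes, as explained in Section~\ref{background}.

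For the recurrence I would exhibit a bijection between products on $\{x_1,\dots,x_{n+1}\}$ and pairs $(P,\epsilon)$, where $P$ is a product on $\{x_1,\dots,x_n\}$ and $\epsilon$ is an \emph{insertion site} of the tree representing $P$ — that is, either one of its edges or the distinguished position directly above its root. Given a product on $n+1$ terms, drawn as a leaf-labeled full binary tree $T$, delete the leaf $x_{n+1}$; its parent $v$ then has a single remaining child, so suppress $v$ by merging its two incident edges into one (if $v$ was the root, its surviving child becomes the new root). This returns a full binary tree $T'$ on $\{x_1,\dots,x_n\}$, hence a product $P$, together with the site $\epsilon$ where the deleted material had been attached. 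Conversely, given $(P,\epsilon)$, subdivide the edge or root-position $\epsilon$ with a new internal node and attach $x_{n+1}$ as its other child; because products are identified under transposing the children of a node, the side on which $x_{n+1}$ is attached is irrelevant. One then checks that these two operations are mutually inverse, and in particular that distinct sites $\epsilon\ne\epsilon'$ produce inequivalent products, since $\epsilon$ is recovered from the reconstructed tree as the edge lying above the parent of $x_{n+1}$.

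It remains to count insertion sites. A full binary tree on $n$ leaves has $n-1$ internal nodes, hence $2n-1$ nodes and $2n-2$ edges; adjoining the one position above the root gives $2n-1$ sites. Thus $f(n+1)=(2n-1)\,f(n)$, and unwinding the recurrence from $f(2)=1$ yields $f(n)=(2n-3)(2n-5)\cdots 3\cdot 1=(2n-3)!!$, as claimed.

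The main obstacle is purely a matter of careful bookkeeping in the bijection: confirming that suppressing a degree-two node always yields a valid full binary tree, treating the boundary case in which $x_{n+1}$ is a child of the root (so that the "position above the root" really is a genuine extra insertion site), and verifying that the equivalence relation of pairwise commutativity within parentheses is exactly preserved by both directions of the correspondence, so that nothing is over- or under-counted. Once that is in place, the induction itself is a single line.
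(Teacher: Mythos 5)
Your argument is correct, and it is the standard leaf-insertion proof of the double-factorial count: a rooted full binary tree on $n$ labeled leaves has $2n-2$ edges plus the position above the root, giving $2n-1$ insertion sites, whence $f(n+1)=(2n-1)f(n)$ and $f(n)=(2n-3)!!$. Note, however, that the paper does not prove Proposition~\ref{ineq_sum} at all --- it is stated as a known result with citations to Stanley, Callan, and Dale --- so there is no internal proof to compare against; your proposal supplies an argument the paper omits. The one point worth making fully explicit in your ``bookkeeping'' paragraph is why distinct insertion sites on the same tree $P$ never yield equivalent products: since every internal node of a leaf-labeled tree is determined by its set of descendant leaf labels, a leaf-labeled tree admits no nontrivial automorphism by child reversals, so no two edges of $P$ lie in a common orbit and the $2n-1$ sites are genuinely distinct up to the paper's equivalence. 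Together with the observation that deleting $x_{n+1}$ and suppressing its parent commutes with child reversals (so the recovery map is well defined on equivalence classes), this closes the bijection, and the induction goes through exactly as you describe.
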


The sequence in Proposition \ref{halfcat}  is OEIS \seqnum{A000992}, the $n^\text{th}$ half-Catalan number.  
The sequence in Proposition \ref{ineq_sum} is OEIS \seqnum{A001147} \cite{OEIS}. There are many interpretations of these sequences. 

\subsection{Counting parenthetic forms} 
All isomorphic parenthetic forms have the same number of $S$-nodes. However, it is possible for two forms to have the same number of $S$ nodes and yet be non-isomorphic. 

In this section, we break out the complete set of parenthetic forms on $n$ terms into sets of parenthetic forms having the same number of $S$-nodes. 

\begin{theorem}\label{propkSnodes}
	The number of non-isomorphic leaf-unlabeled full binary trees (or parenthetic forms) with $n$ unlabeled leaf nodes (or summands) and $s$ $S$-nodes is  
	\begin{equation} \label{numforms_s}
	\theta(n,s) = 
	\begin{cases}
	\sum\limits_{j=1}^{\floor{\frac{n-1}{2}}}\sum\limits_{i=0}^s \theta(j,i)\theta(n-j,s-i), & \text{if $n$ is odd}; \\
	\sum\limits_{j=1}^{\floor{\frac{n-1}{2}}}\sum\limits_{i=0}^s \theta(j,i)\theta(n-j,s-i) + \sum\limits_{i=0}^{s-1} \theta(\frac{n}{2},i)\theta(\frac{n}{2},s-1-i), & \text{if $n$ is even}. \\
	\end{cases}
	\end{equation}
	\begin{equation*}
	\text{where } \theta(n,s)=
	\begin{cases}
	1, & \text{if } n=0 \text{ and } s=0 \text{ (the unique empty tree)};  \\
	1, & \text{if } n=1 \text{ and } s=0 \text{ (the unique single-node tree)};\\
	0, & \text{if } s\ge n>0.\\
	\end{cases}
	\end{equation*}
\end{theorem}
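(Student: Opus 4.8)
The plan is to prove the recursion by a standard "decompose at the root" argument, keeping careful track of $S$-nodes. Any non-isomorphic full binary tree $T$ on $n\ge 2$ leaves decomposes uniquely (under the canonical convention that the left subtree has at least as many leaves as the right) into an ordered pair $(T_L, T_R)$ where $T_L$ has $j$ leaves, $T_R$ has $n-j$ leaves, and $1\le n-j\le j$, i.e. $\lceil n/2\rceil \le j \le n-1$, equivalently $1\le n-j\le \lfloor n/2\rfloor$. Reindexing the summation variable to range over the size of the \emph{smaller} subtree (call it $j$ again, running from $1$ to $\lfloor (n-1)/2\rfloor$ when the two subtree sizes are forced to differ, plus a separate term for $j=n/2$ when $n$ is even) is what produces the two cases in the statement. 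The $S$/$D$ status of the root is determined entirely by whether the two subtree sizes are equal: the root is a $D$-node precisely when $j\ne n-j$, and an $S$-node precisely when $j=n-j=n/2$ (which can only happen for even $n$). First I would state this decomposition as the backbone of the proof.

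Next I would count, for each fixed pair of subtree sizes, the number of trees with exactly $s$ $S$-nodes. If the root is a $D$-node (unequal subtree sizes, sizes $j$ and $n-j$ with $j\ne n-j$), then the $s$ $S$-nodes are distributed among the two subtrees: $i$ of them in $T_L$ and $s-i$ in $T_R$, giving $\sum_{i=0}^{s}\theta(j,i)\,\theta(n-j,s-i)$ choices, and summing over the admissible $j$ gives the first double sum. Crucially, because the two subtree sizes are distinct, the ordered pair $(T_L,T_R)$ and the unordered pair represent the same isomorphism class — there is no overcounting, so this sum is exactly right. If instead $n$ is even and the root is an $S$-node, both subtrees have $n/2$ leaves, the root itself contributes one $S$-node, and the remaining $s-1$ $S$-nodes split as $i$ and $s-1-i$ between the two equal-sized subtrees; here an ordered pair $(T_L,T_R)$ and $(T_R,T_L)$ give isomorphic trees, so naively $\sum_{i=0}^{s-1}\theta(n/2,i)\,\theta(n/2,s-1-i)$ double-counts each tree with non-isomorphic halves by a factor of $2$ and counts each tree with isomorphic halves once — and I need to check that the stated formula nevertheless equals the true count. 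Finally I would confirm the base cases ($n=0$, $n=1$, and $s\ge n>0$) directly from the definitions: the empty and single-leaf trees have no internal nodes hence $0$ $S$-nodes, and a tree on $n\ge 1$ leaves has only $n-1$ internal nodes so cannot have $\ge n$ of them be $S$-nodes.

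The main obstacle is the even-$n$, $S$-root term and the double-counting issue I just flagged. The formula as written, $\sum_{i=0}^{s-1}\theta(n/2,i)\,\theta(n/2,s-1-i)$, is the number of \emph{ordered} pairs of half-subtrees with the right $S$-counts; but two ordered pairs that are reverses of each other give isomorphic trees, so this overcounts. The resolution must be that the intended count of "non-isomorphic" trees here is actually being handled so that this ordered count is correct — most likely because the paper, having fixed the canonical orientation (left subtree $\ge$ right subtree in leaf count, with ties broken somehow) is implicitly treating the two equal-size halves as an ordered pair, or else is content with the ordered count as the definition of the quantity being recursed. I would need to pin down exactly which convention Theorem~\ref{propkSnodes} intends and verify the recursion is consistent with it; checking it numerically against the $n=6$ data in Fig.~\ref{fig:sdtrees6} (which records how many non-isomorphic $6$-leaf trees have $2$ versus $3$ $S$-nodes) is the quickest sanity check and would settle which reading is correct. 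Everything else — the root decomposition, the convolution over $S$-counts in each subtree, and the base cases — is routine once that convention is fixed.
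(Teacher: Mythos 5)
Your root decomposition, the convolution of $S$-node counts over the two subtrees, and the treatment of the base cases are exactly the paper's argument. Where you differ is that you explicitly flag the overcounting in the even-$n$, $S$-root term, while the paper's proof passes over it in silence. You should not treat this as a convention to be pinned down: it is a genuine defect in the statement. The term $\sum_{i=0}^{s-1}\theta(\frac{n}{2},i)\,\theta(\frac{n}{2},s-1-i)$ counts ordered pairs of $\frac{n}{2}$-leaf subtrees, so every $S$-rooted tree whose two halves are non-isomorphic is counted twice, and no canonical orientation can repair this, because when the two halves have equal leaf counts both orders satisfy any ``left subtree at least as large'' rule. Hence $\theta(n,s)$ as defined by the recursion stops counting isomorphism classes as soon as there exist two non-isomorphic trees on $\frac{n}{2}$ leaves.

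Your proposed sanity check at $n=6$ will not detect this, because the equal split there is $(3,3)$ and there is only one $3$-leaf tree, so ordered and unordered counts coincide; the first failure is at $n=8$. There the split $(4,4)$ admits the pair consisting of the $4$-leaf ladder ($1$ $S$-node) and the $4$-leaf perfect tree ($3$ $S$-nodes), a single isomorphism class that contributes $\theta(4,1)\theta(4,3)+\theta(4,3)\theta(4,1)=2$ to the equal-split sum. Direct enumeration gives exactly two non-isomorphic $8$-leaf trees with $5$ $S$-nodes (that $S$-rooted ladder--perfect pair, and the $D$-rooted join of the $2$-leaf cherry with the unique $6$-leaf tree having $4$ $S$-nodes), whereas the recursion and Table~\ref{paren_nk} give $\theta(8,5)=3$. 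The same discrepancy appears in the row sums: the recursion yields $\alpha(8)=24$ (the half-Catalan number), but the number of non-isomorphic full binary trees on $8$ leaves is the Wedderburn--Etherington number $23$. To count isomorphism classes the equal-split term must be replaced by the unordered count $\sum_{i<(s-1)/2}\theta(\frac{n}{2},i)\theta(\frac{n}{2},s-1-i)+\binom{\theta(\frac{n}{2},(s-1)/2)+1}{2}$, the last term present only when $s-1$ is even. So your instinct was correct, but the resolution is not a matter of convention: either the theorem's claim of ``non-isomorphic'' must be weakened to the ordered-pair quantity the recursion actually computes, or the even case must be corrected as above.
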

\begin{proof}
	At every stage, the first term $n$ in the calculation is reduced by half, so this recursion terminates, according to the rules for termination of $\theta(n,s)$ above.
	
	If $n$ is odd, the children of the top node cannot both have the same number of leaf descendants, so the top node cannot be an $S$-node. The formula follows by counting the sub-trees on the 2-partitions of $n$, where the left sub-tree has $i$ $S$-nodes  and the right has $s-i$ $S$-nodes, for  $0 \le i \le  s$, and the two counts are multiplied together.
	
	If $n$ is even,  the formula is as above, but we must also include the case where the top node is an S-node. In that case, both children have $\frac{n}{2}$ leaf nodes, and the summation is on sub-trees where there are  $\frac{n}{2}$ leaf nodes in each sub-tree. In this case, the root is one of the $S$-nodes, so the left sub-tree has $i$ $S$-nodes  and the right has $s-1-i$ $S$-nodes, for  $0 \le i \le  s-1$. 
\end{proof}
\begin{corollary} \label{rowsum}
	$\sum\limits_{i=1}^{n-1} \theta(n,i) = \alpha(n)$, the number of parenthetic forms on $n$ terms.
\end{corollary}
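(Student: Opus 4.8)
The plan is to observe that summing $\theta(n,i)$ over all admissible $i$ merely re-groups the full set of parenthetic forms on $n$ terms, whose size is $\alpha(n)$ by Proposition~\ref{halfcat}. I would present this as a short combinatorial argument. Fix $n\ge 2$. A full binary tree on $n$ leaves has exactly $n-1$ internal nodes, so its number of $S$-nodes lies in $\{0,1,\dots,n-1\}$; in fact it is at least $1$, because an internal node of maximal depth must have two leaf children and hence is an $S$-node, so $\theta(n,0)=0$ for $n\ge 2$. By the counting content of Theorem~\ref{propkSnodes}, $\theta(n,i)$ is exactly the number of parenthetic forms on $n$ terms with $i$ $S$-nodes, so the classes indexed by $i=1,\dots,n-1$ partition all such forms; summing their sizes gives the claim. (For $n=1$ the displayed sum is empty although $\alpha(1)=1$, so the statement is read for $n\ge 2$, or one starts the sum at $i=0$.)

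As a cross-check, or as an alternative if a recursion-based proof is preferred for consistency with the rest of the paper, I would set $\Theta(n):=\sum_{i\ge 0}\theta(n,i)$, a finite sum since $\theta(n,i)=0$ for $i\ge n$, and verify that $\Theta$ satisfies the recursion of Proposition~\ref{halfcat}. Summing the recursion of Theorem~\ref{propkSnodes} over all $s$ and interchanging the order of summation, the inner Cauchy-type product $\sum_s\sum_{i=0}^{s}\theta(j,i)\theta(n-j,s-i)$ factors as $\Theta(j)\,\Theta(n-j)$, which already handles the odd case. In the even case the additional term $\sum_s\sum_{i=0}^{s-1}\theta(n/2,i)\theta(n/2,s-1-i)$, after the substitution $t=s-1$, collapses to $\Theta(n/2)^2$, which is precisely the $j=n/2$ summand missing from $\sum_{j=1}^{\lfloor(n-1)/2\rfloor}$ (note $\lfloor(n-1)/2\rfloor=n/2-1$ here). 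Either way $\Theta(n)=\sum_{j=1}^{\floor{n/2}}\Theta(j)\Theta(n-j)$, and with $\Theta(1)=\theta(1,0)=1=\alpha(1)$ an induction gives $\Theta\equiv\alpha$; finally $\Theta(n)=\sum_{i=1}^{n-1}\theta(n,i)$ for $n\ge2$ since the $i=0$ and $i\ge n$ terms vanish.

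The argument has no real obstacle; the only thing to be careful about is the index bookkeeping in the even case of the recursive version — confirming that the ``root is an $S$-node'' contribution reindexes to $\Theta(n/2)^2$ and that this exactly restores the $j=n/2$ term absent from the truncated convolution. The combinatorial route avoids this bookkeeping altogether, so I would lead with it, making the remark that $\theta(n,0)=0$ for $n\ge 2$ explicit so that the summation range $1\le i\le n-1$ is visibly the full range of possible $S$-node counts.
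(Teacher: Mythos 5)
Your lead argument is correct and is essentially the paper's proof, which simply cites Proposition~\ref{halfcat} and Theorem~\ref{propkSnodes} and relies on the same partition-by-$S$-node-count observation you spell out (the bounds $1\le s\le n-1$ being the paper's Propositions~\ref{lower_bound_s} and~\ref{upper_bound_s}). The recursion-based cross-check is a nice extra but not needed; the combinatorial route you lead with matches the paper.
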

\begin{proof}
	Follows from Proposition \ref{halfcat} and Theorem \ref{propkSnodes}.
\end{proof}
	We calculate the number of non-isomorphic parenthetic forms up to $n=16$ in Table \ref{paren_nk}, as per Equation \ref{numforms_s}, in Theorem  \ref{propkSnodes}.
Equation \ref{numforms_s} is recursive and the proof is constructive, so one can more or less laboriously calculate all of the actual parenthetic forms on $n$ summands.

\subsection{Counting products using S-nodes} \label{method}
\begin{proposition} \label{sd_tree_enum}
	Let $T$ be an unlabeled SD-tree with $n$ leaf nodes, and let $s$ be the number of $S$-nodes of $T$. Then the number of inequivalent leaf-labeled  SD-trees that have form isomorphic to T is  $\frac{n!}{2^s}$. 
\end{proposition}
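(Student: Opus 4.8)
The plan is to compute, by recursion on the structure of $T$, the number $N(T)$ of inequivalent leaf-labellings of $T$ on a fixed $n$-element label set, and to verify $N(T)=n!/2^{s}$ by induction; this parallels the recursive bookkeeping already used in the proof of Theorem~\ref{propkSnodes}. The organizing observation is that an inequivalent leaf-labelling of $T=T_{L}\odot T_{R}$ is exactly a partition of the $n$ labels between the two subtrees together with an \emph{unordered} pair consisting of an inequivalent labelling of $T_{L}$ and one of $T_{R}$. Equivalently, one can run an orbit count: the automorphism group of $T$ acts freely on the $n!$ orderings of a fixed drawing of $T$, the orbits are precisely the equivalence classes, and it suffices to show $|\mathrm{Aut}(T)|=2^{s}$.

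For the induction, the base case $n=1$ is immediate ($T$ a single leaf, $s=0$, and $N=1=1!/2^{0}$). Write $T=T_{L}\odot T_{R}$ with $T_{L},T_{R}$ having $n_{L},n_{R}$ leaves and $s_{L},s_{R}$ $S$-nodes. If $n_{L}\neq n_{R}$ the root is a $D$-node and $s=s_{L}+s_{R}$; choosing the $n_{L}$ labels for the $T_{L}$-side and then labelling the two sides gives, by the inductive hypothesis, $N(T)=\binom{n}{n_{L}}\cdot\tfrac{n_{L}!}{2^{s_{L}}}\cdot\tfrac{n_{R}!}{2^{s_{R}}}=\tfrac{n!}{2^{s_{L}+s_{R}}}=\tfrac{n!}{2^{s}}$. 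If $n_{L}=n_{R}=n/2$ the root is an $S$-node and $s=s_{L}+s_{R}+1$; now the two sides carry complementary $(n/2)$-element label sets, and reversing the two sides yields an equivalent tree, so one divides the count of ordered pairs by an extra factor $2$, obtaining $N(T)=\tfrac12\binom{n}{n/2}\cdot\tfrac{(n/2)!}{2^{s_{L}}}\cdot\tfrac{(n/2)!}{2^{s_{R}}}=\tfrac{n!}{2^{s_{L}+s_{R}+1}}=\tfrac{n!}{2^{s}}$. The same recursion applied to $|\mathrm{Aut}(T)|$ (multiplicative at a $D$-node, an extra factor $2$ at a balanced $S$-node) gives the orbit-count version of the argument.

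The step requiring genuine care is the balanced-root case: one must check that the root contributes \emph{exactly} one factor of $2$. This reduces to verifying that the involution on ordered pairs (labelled $T_{L}$-tree on $X$, labelled $T_{R}$-tree on the complementary set) given by swapping the two sides is fixed-point-free — immediate, since a set and its complement are distinct — and that this swap really does produce an equivalent labelled tree, which holds because the two top subtrees are interchangeable; for an $S$-node this is precisely the symmetry-vertex condition, and it is satisfied by the perfect, ladder, divide-and-conquer, and MinD trees to which the proposition is applied later. Everything else is routine: the binomial coefficient $\binom{n}{n_{L}}$ is exactly what lets one invoke the inductive hypothesis on a fixed label set for each side, and the independence of the contributions of distinct nodes is built into the recursion.
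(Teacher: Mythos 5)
Your structural induction is a genuinely different route from the paper's proof, which is a one-shot orbit count (fix a drawing of $T$, let the $n!$ labelings fall into classes, claim each class is hit $2^s$ times), and your closing reduction "it suffices to show $\lvert\mathrm{Aut}(T)\rvert=2^s$" states honestly what both arguments need. The unbalanced-root case of your induction is airtight. But the balanced-root case contains a real gap, and you have put your finger on it without resolving it: you divide by an extra factor of $2$ on the grounds that swapping the two sides yields an equivalent tree, and you justify the swap by saying that for an $S$-node the two top subtrees are interchangeable, "precisely the symmetry-vertex condition." That identification is exactly what Section~\ref{sdtrees} warns against: an $S$-node only requires its two children to have the \emph{same number} of leaf descendants, not isomorphic subtrees. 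Reversing the children of the root always gives an equivalent labeled tree, but when $T_L\not\cong T_R$ that reversal does not identify the configuration $(X\text{ on }T_L,\ X^c\text{ on }T_R)$ with $(X^c\text{ on }T_L,\ X\text{ on }T_R)$ --- which label set sits on the $T_L$-shaped side is an invariant of the equivalence class --- so there is no double count and no factor of $2$ to remove. Concretely, take $n=8$ and $T=L\odot P$ with $L$ the $4$-leaf ladder and $P$ the $4$-leaf perfect tree: here $s=5$, but the root is not a symmetry vertex, and the correct count is $\binom{8}{4}\cdot\frac{4!}{2}\cdot\frac{4!}{2^3}=2520=\frac{8!}{2^4}$, not $\frac{8!}{2^5}=1260$. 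In general the count is $n!/2^{t}$ where $t$ is the number of symmetry vertices, and $t=s$ only when every $S$-node has isomorphic children subtrees.

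In fairness, the paper's own proof commits the same sin: the sentence "an ordering transposing $d_\ell$ and $d_r$ gives a tree that is the same as the original tree, up to transposition of the two subtrees" is true only when the two subtrees are isomorphic. Every later application of the proposition (ladder, divide-and-conquer, complete full binary, and MinD trees) is to a tree all of whose $S$-nodes are symmetry vertices, so the downstream results survive. Your hedge that the needed condition "is satisfied by the perfect, ladder, divide-and-conquer, and MinD trees to which the proposition is applied later" is the right instinct, but it proves a weaker statement than the one asserted; as written, neither your argument nor the paper's establishes the claim for an arbitrary SD-tree, because for an arbitrary SD-tree the claim is false.
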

\begin{proof}
	There are $n!$ ways to order the $n$ leaf nodes of the tree $T$.  As above, we consider only SD-trees where the number of leaf descendants of the left child of a node is always less than or equal to the number of leaf descendants of the right child. 
	
	Call the set of left descendant leaf nodes of a node $d_\ell$, and the set of right descendant leaf nodes $d_r$. $|d_\ell|=|d_r|$ if and only if an ordering transposing $d_\ell$ and $d_r$ gives a tree that is the same as the original tree, up  to transposition of  the two subtrees. So  such subtrees are combinatorially counted twice. The number of trees that are the same as another given ordering is $2^s$, so the total number of non-equivalent trees isomorphic to T is $\frac{n!}{2^s}$.
\end{proof}
\begin{corollary}\label{equiv_sum}
	Let $P$ be a parenthetic form on $n$ operands, let $T_P$ be the unlabeled SD-tree representing $P$, and let $s$ be the number of $S$-nodes of $T_P$. Then the number of inequivalent commutative non-associative products having form represented by $P$ is $\frac{n!}{2^s}$. 
\end{corollary}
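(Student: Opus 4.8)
The plan is to derive Corollary~\ref{equiv_sum} directly from Proposition~\ref{sd_tree_enum} by translating through the dictionary between combinatorial objects and trees set up in Section~\ref{background}. First I would recall that a parenthetic form $P$ on $n$ operands corresponds to an unlabeled full binary tree on $n$ leaves, which receives a canonical $S/D$ labeling by Definition~\ref{def_SDtree}; this labeled object is exactly the unlabeled SD-tree $T_P$. Similarly, a commutative non-associative product on $n$ named terms corresponds to a leaf-labeled full binary tree on $n$ leaves, again carrying an induced $S/D$ labeling, and two such products are equivalent precisely when their leaf-labeled SD-trees are equivalent in the sense of Lemma~\ref{iso_SD}.

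Next I would observe that a product has parenthetic form represented by $P$ if and only if deleting the leaf labels from the product's SD-tree yields a tree isomorphic to $T_P$. Consequently, the set of commutative non-associative products with form $P$, counted up to equivalence, is in bijection with the set of inequivalent leaf-labeled SD-trees whose underlying unlabeled shape is isomorphic to $T_P$. Finally I would apply Proposition~\ref{sd_tree_enum} to $T = T_P$: since $T_P$ has $n$ leaves and $s$ $S$-nodes, the number of such leaf-labeled SD-trees is $\frac{n!}{2^s}$, which is exactly the claimed count.

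The only point needing care, and the closest thing to an obstacle, is checking that the quantity $s$ named in the statement of the corollary is genuinely the invariant used in Proposition~\ref{sd_tree_enum}: that is, that the number of $S$-nodes is constant on the isomorphism class of $T_P$ and hence is a well-defined attribute of the parenthetic form $P$. This is precisely the content of Lemma~\ref{iso_SD}, together with the earlier remarks identifying parenthetic forms with their unlabeled tree representations. No additional computation is required; the corollary is a bookkeeping consequence of the proposition.
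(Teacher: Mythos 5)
Your proposal is correct and matches the paper's intent: the paper gives no separate proof for this corollary, treating it as an immediate translation of Proposition~\ref{sd_tree_enum} through the identification of parenthetic forms with unlabeled SD-trees and products with leaf-labeled SD-trees, exactly as you do. Your added remark that the invariance of $s$ under isomorphism (Lemma~\ref{iso_SD}) is what makes the statement well posed is a correct and worthwhile piece of bookkeeping, not a deviation.
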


Corollary \ref{equiv_sum} means that when counting a set of commutative non-associative products sharing a parenthetic form, it suffices to have a formula for the number of $S$ nodes in the representative SD tree. We will follow this method throughout this paper to count commutative non-associative operations of certain interesting forms.

\subsection{Upper and lower bounds} \label{lu_bounds}
We present here upper and lower bounds for parenthetic forms and for the number of commutative non-associative products having a given form. All of these bounds are met, as will be discussed in Sections \ref{ladder_pf} and \ref{max_S_section}.
\begin{proposition} \label{upper_bound_s}
    The upper bound on $S$-nodes in a parenthetic form having $n$ leaves is $s(n)$, where $2^{s(n)}$ is the highest  power of  $2$  that  divides $n!$.
\end{proposition}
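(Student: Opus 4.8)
The plan is to read the bound off directly from the enumeration already established in Proposition~\ref{sd_tree_enum}. Let $P$ be any parenthetic form on $n$ leaves, let $T_P$ be its unlabeled SD-tree, and let $s$ denote the number of $S$-nodes of $T_P$. First I would invoke Corollary~\ref{equiv_sum}: the number of inequivalent commutative non-associative products whose form is $P$ equals $\frac{n!}{2^s}$. Since there is at least one way to label the leaves of $T_P$, this quantity is a positive integer, so $2^s \mid n!$.

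Next I would spell out why the count is genuinely $\frac{n!}{2^s}$ with nothing rounded away. In the proof of Proposition~\ref{sd_tree_enum} the $n!$ orderings of the leaves are grouped into equivalence classes under transposing the two subtrees of a node, and such a transposition produces an equivalent labeling exactly at the $s$ nodes whose children have equally many leaf descendants, namely the $S$-nodes. These $s$ transpositions act independently, so every equivalence class has size exactly $2^s$; hence the classes partition a set of size $n!$ into blocks of size $2^s$, forcing $2^s \mid n!$. In particular $2^s$ cannot exceed the largest power of $2$ dividing $n!$, which by definition is $2^{s(n)}$, so $s \le s(n)$. As $P$ was arbitrary, this shows $s(n)$ is an upper bound for the number of $S$-nodes of any parenthetic form on $n$ leaves.

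There is essentially no obstacle here; the entire content is the orbit-size computation already carried out in Proposition~\ref{sd_tree_enum}. The only point requiring care is that $2^s$ is the \emph{exact} common class size (each $S$-node contributing an independent factor of $2$), since it is this exactness, not a mere upper bound on class size, that yields the divisibility $2^s \mid n!$. For orientation I would also remark that Legendre's formula identifies $s(n)$ with $n-\omega(n)$, though this identification is not needed for the bound itself; and I would note that the sharpness of the bound — exhibiting a form on $n$ leaves with exactly $n-\omega(n)$ $S$-nodes — is established separately in Sections~\ref{ladder_pf} and~\ref{max_S_section}, and so lies outside the scope of this proof.
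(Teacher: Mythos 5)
Your proposal follows the paper's own proof of Proposition~\ref{upper_bound_s} exactly: the paper's argument is the one-line observation that $\frac{n!}{2^s}$ is an integer count by Proposition~\ref{sd_tree_enum}, hence $2^s \mid n!$ and $s \le s(n)$. As a reading of the paper you are on target, and you are right to isolate the exactness of the class size $2^s$ as the load-bearing step.

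However, that is precisely the step that fails for a general SD-tree, so the argument (yours and, implicitly, the paper's) has a genuine gap. Transposing the two leaf-sets at an $S$-node produces an equivalent labeled tree only when the two subtrees below that node are isomorphic as unlabeled trees, i.e., when the $S$-node is a \emph{symmetry vertex} in the sense the paper itself distinguishes in Section~\ref{sdtrees}; an $S$-node need only have subtrees with equal leaf counts, not equal shapes. Concretely, let $n=8$ and let $T$ join, at the root, the ladder tree on $4$ leaves to the perfect tree on $4$ leaves. Then $T$ has $s=5$ $S$-nodes (one in the ladder, three in the perfect subtree, plus the root), but the number of inequivalent labelings is $\binom{8}{4}\cdot \frac{4!}{2}\cdot\frac{4!}{2^3}=2520=\frac{8!}{2^4}$ rather than $\frac{8!}{2^5}=1260$: swapping the two label sets across the root turns a ladder product on one $4$-set into a ladder product on the other, which no sequence of pairwise commutations achieves. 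In general the equivalence classes have size $2^a$ with $a$ the number of symmetry vertices, and $a\le s$ can be strict, so integrality yields only $2^a\mid n!$, a bound on $a$ rather than on $s$. The proposition itself remains true --- the maximum number of $S$-nodes is $n-\omega(n)$, which equals $s(n)$ by Legendre's formula --- but that is secured by the direct construction in Corollary~\ref{max_s_nodes} and Theorem~\ref{max_tree_s}, not by the divisibility argument you (and the paper) give here.
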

\begin{proof}
    This is true since $2^{s(n)}$ must divide $n!$, by Proposition \ref{sd_tree_enum}.
\end{proof}
\begin{corollary}\label{lower_bound}
	The lower bound for the number of inequivalent commutative non-associative products on $n$ variables having parenthetic form represented by a set of isomorphic SD-trees is
	$\frac{n!}{2^{s(n)}}$, where $2^{s(n)}$ is the highest  power of  $2$  that  divides $n!$. 
\end{corollary}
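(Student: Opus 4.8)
The plan is to combine the exact count from Corollary~\ref{equiv_sum} with the bound on $S$-nodes from Proposition~\ref{upper_bound_s}. First I would recall that, by Corollary~\ref{equiv_sum}, if $P$ is any parenthetic form on $n$ operands whose representative unlabeled SD-tree has $s$ $S$-nodes, then the number of inequivalent commutative non-associative products having form $P$ is exactly $\frac{n!}{2^{s}}$. Hence minimizing this quantity over the admissible forms $P$ is the same as maximizing the $S$-node count $s$ over all parenthetic forms on $n$ leaves.

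Next I would invoke Proposition~\ref{upper_bound_s}: for every parenthetic form on $n$ leaves the number of $S$-nodes is at most $s(n)$, where $2^{s(n)}$ is the exact power of $2$ dividing $n!$ (this is forced because $\frac{n!}{2^{s}}$ must be an integer, by Proposition~\ref{sd_tree_enum}). Thus for any form $P$ we have $s(P)\le s(n)$, so $2^{s(P)}\le 2^{s(n)}$, and therefore
\[
\frac{n!}{2^{s(P)}} \;\ge\; \frac{n!}{2^{s(n)}}.
\]
Since this holds for every form $P$, the quantity $\frac{n!}{2^{s(n)}}$ is a lower bound for the number of inequivalent commutative non-associative products having a fixed parenthetic form, which is the claim.

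There is no real obstacle here: the corollary is an immediate consequence of the two results it cites, all the substantive work having been done in establishing Proposition~\ref{sd_tree_enum} and Proposition~\ref{upper_bound_s}. The only point worth flagging is sharpness — whether some form actually attains $s(n)$ $S$-nodes, so that $\frac{n!}{2^{s(n)}}$ is realized as a genuine minimum rather than merely a valid lower bound. I would not prove that here; as indicated in the text, attainment is taken up in Sections~\ref{ladder_pf} and~\ref{max_S_section}, where explicit constructions (the ladder and MinD trees in the relevant cases) are shown to achieve the maximal $S$-node count.
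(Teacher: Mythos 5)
Your argument is correct and is exactly the paper's proof, merely written out in full: the paper's proof of this corollary is the one-line citation of Corollary~\ref{equiv_sum} together with Proposition~\ref{upper_bound_s}, and your chain $s(P)\le s(n)\Rightarrow n!/2^{s(P)}\ge n!/2^{s(n)}$ is the intended reasoning. Your remark about sharpness being deferred to Sections~\ref{ladder_pf} and~\ref{max_S_section} also matches what the paper says in Section~\ref{lu_bounds}.
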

\begin{proof}
	Follows from Corollary \ref{equiv_sum} and Proposition \ref{upper_bound_s}.
\end{proof}
\begin{proposition} \label{lower_bound_s}
    The lower bound on $S$-nodes in a parenthetic form is $1$.
\end{proposition}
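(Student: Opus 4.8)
The plan is to show that every parenthetic form --- equivalently, every full binary tree on $n \ge 2$ leaves --- contains at least one $S$-node, and then to note that a form attaining exactly one $S$-node exists, so that $1$ is indeed the lower bound. Recall that a parenthetic form represents a product with at least two operands, so we may assume $n \ge 2$; the degenerate empty and single-leaf trees have no internal nodes and are correctly excluded.

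First I would argue that any full binary tree $T$ with $n \ge 2$ leaves contains an internal node both of whose children are leaves (a \emph{cherry}). This follows by a well-founded descent: pick any internal node $v_0$ (the root works, since $n \ge 2$ forces the root to be internal); if $v_0$ is not a cherry then, being a full binary tree node, it has two children at least one of which is internal, so pick an internal child $v_1$ and repeat. Since $T$ is finite and each $v_{i+1}$ is a proper descendant of $v_i$, the chain must terminate, and it can terminate only at an internal node with no internal child --- that is, at a cherry. (Equivalently, one can run an easy induction on $n$, splitting at the root.)

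Next, observe that a cherry is an $S$-node: each of its two children is a single leaf and hence has exactly one descendant leaf, so the two children have the same number of descendant leaves, and by Definition~\ref{def_SDtree} the node is labeled $S$. Therefore $T$ has at least one $S$-node, and every parenthetic form on $n \ge 2$ operands has at least one $S$-node.

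Finally, for sharpness I would point to the ladder tree on $n$ leaves, which has exactly one $S$-node: its unique lowest internal node is a cherry and hence an $S$-node, while every other internal node has one leaf child and one non-leaf child whose subtree has strictly more than one leaf, making it a $D$-node. Hence the value $1$ is attained, confirming that the lower bound on $S$-nodes is exactly $1$; this tightness is revisited in Section~\ref{ladder_pf}. The only point that needs care in the whole argument is phrasing the cherry-existence step as a genuine terminating descent (or induction) rather than asserting it, so I expect that to be the ``hard'' part, though it is entirely routine.
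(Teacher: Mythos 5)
Your argument is correct and is essentially the paper's own: the paper also observes that the deepest leaves must come in sibling pairs (a cherry), whose parent is necessarily an $S$-node, so every form has at least one. Your descent argument just makes the cherry-existence step more explicit, and the sharpness remark about the ladder tree matches what the paper defers to Section~\ref{ladder_pf}.
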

\begin{proof}
    This can be seen by considering that all nodes in an SD-tree must have zero or two children, so leaf nodes at the lowest level must come in pairs. Any tree must have at least one of these pairs. 
\end{proof}
\begin{corollary}\label{upper_bound}
	The upper bound for the number of inequivalent commutative non-associative products on $n$ variables on a class of isomorphic SD-trees is $\frac{n!}{2}$.  
\end{corollary}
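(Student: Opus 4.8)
The plan is to combine the exact count from Corollary~\ref{equiv_sum} with the lower bound on the number of $S$-nodes from Proposition~\ref{lower_bound_s}. First I would recall that, by Corollary~\ref{equiv_sum}, any parenthetic form $P$ on $n$ operands whose representative SD-tree has exactly $s$ $S$-nodes gives rise to precisely $\frac{n!}{2^s}$ inequivalent commutative non-associative products. Thus maximizing this count over all parenthetic forms on $n$ leaves amounts to minimizing $s$, i.e.\ finding the smallest possible number of $S$-nodes in any SD-tree on $n$ leaves.

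Next I would invoke Proposition~\ref{lower_bound_s}, which asserts that every parenthetic form has at least one $S$-node: since every internal node of a full binary tree has exactly two children, the leaves at the deepest level occur in sibling pairs, and the parent of such a pair is necessarily an $S$-node (its two children are leaves, hence each has a single leaf descendant). Hence $s \ge 1$ for every SD-tree on $n \ge 2$ leaves, so $\frac{n!}{2^s} \le \frac{n!}{2}$.

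Finally, to see that the bound $\frac{n!}{2}$ is actually the upper bound (and not merely an upper bound), I would note that it is attained: any parenthetic form realizing $s=1$—for instance the ladder tree, in which the unique $S$-node is the bottom node with two leaf children—produces exactly $\frac{n!}{2}$ inequivalent products by Corollary~\ref{equiv_sum}. This matches the remark in the text that these bounds are met, with the relevant construction discussed in Section~\ref{ladder_pf}. I do not anticipate a genuine obstacle here: the statement is a short deduction from two already-established results, and the only point requiring any care is confirming that some tree genuinely achieves $s=1$ so that the word ``upper bound'' is justified rather than ``an upper bound''—and the ladder tree supplies that witness immediately.
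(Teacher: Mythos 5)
Your proposal is correct and follows the same route as the paper: the paper's proof of this corollary is exactly the one-line deduction from Corollary~\ref{equiv_sum} and Proposition~\ref{lower_bound_s}. Your extra observation that the ladder tree attains the bound matches the paper's remark in Section~\ref{lu_bounds} that the bounds are met, with the details deferred to Section~\ref{ladder_pf}.
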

\begin{proof}
Follows from Corollary \ref{equiv_sum} and Proposition \ref{lower_bound_s}. 
\end{proof}
	The sequence $2^{s(n)}$ from Proposition~\ref{upper_bound_s} is OEIS \seqnum{A060818}
	\cite{OEIS}, the largest power of $2$ that divides $n!$.
	The sequence $\frac{n!}{2^{s(n)}}$ from Proposition~\ref{lower_bound} is OEIS \seqnum{A049606}
	\cite{OEIS}, the largest odd divisor of $n!$.
	The upper bound sequence from Proposition~\ref{upper_bound} is OEIS \seqnum{A001710} 
	\cite{OEIS}
	%\cite{OEIS_A001710}
	, the number of even permutations on $n$ letters.

\section{A few special tree forms} \label{special}
In this section, we discuss special forms,  including the ladder forms that have exactly 1 $S$-node, the forms having exactly 2 $S$-nodes, and the forms having a maximal number of $S$-nodes. We emphasize the  important case of divide-and-conquer products, which  are widely used, and which are related to other mathematical constructs.

\subsection {Trees with one S-node: ladder products} \label{ladder_pf}
\begin{lemma}
	There is a unique parenthetic form on $n>1$ leaf nodes with exactly one $S$-node.
\end{lemma}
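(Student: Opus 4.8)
The plan is to prove both existence and uniqueness by induction on $n$. First I would establish existence: define the ladder tree on $n$ leaves recursively, with the lowest interior node carrying two leaf children (an $S$-node, since both children have one leaf descendant) and every higher interior node having one leaf child and one child that is the ladder tree on the remaining leaves. I claim every interior node of this tree other than the bottom one is a $D$-node. Indeed, at such a node one child is a leaf (one descendant leaf) while the other child is a ladder subtree on $k \ge 2$ leaves, so the two descendant-leaf counts are $1$ and $k$, which differ; hence the node is a $D$-node. Thus the ladder tree has exactly one $S$-node, proving existence for every $n > 1$.

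For uniqueness, I would argue by strong induction on $n$. The base case $n = 2$ is immediate: the only full binary tree on two leaves has its single interior node an $S$-node. For the inductive step, suppose $T$ is a full binary tree on $n > 2$ leaves with exactly one $S$-node, and consider its root with left and right subtrees $T_L, T_R$ on $a$ and $b$ leaves respectively, $a + b = n$, say $a \ge b \ge 1$. Each of $T_L$ and $T_R$ is itself an SD-tree, and between them they contain either all of the $S$-nodes of $T$ (if the root is a $D$-node, i.e.\ $a \ne b$) or all but one (if the root is an $S$-node, i.e.\ $a = b$). I would split into these two cases.

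If the root is an $S$-node, then $a = b = n/2 \ge 2$, and $T_L$, $T_R$ between them have $0$ $S$-nodes; but every full binary tree on $\ge 2$ leaves has at least one $S$-node by Proposition~\ref{lower_bound_s}, so $T_L$ and $T_R$ would each need $\ge 1$ $S$-node, giving a total of at least $3$ — contradiction. So the root is a $D$-node, $a > b$, and $T_L, T_R$ together have exactly one $S$-node. Again by Proposition~\ref{lower_bound_s}, each of $T_L, T_R$ (if it has $\ge 2$ leaves) has at least one $S$-node, so exactly one of them can have $\ge 2$ leaves; since $a > b \ge 1$ forces $a \ge 2$, we must have $b = 1$ and $a = n - 1$. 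Then $T_R$ is a single leaf, and $T_L$ is a full binary tree on $n - 1 > 1$ leaves with exactly one $S$-node. By the inductive hypothesis $T_L$ is the ladder tree on $n-1$ leaves, and appending a leaf at the root reproduces exactly the ladder tree on $n$ leaves. Hence $T$ is the ladder tree, proving uniqueness.

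The main obstacle is making sure the case analysis at the root is exhaustive and that I correctly invoke the ``$\ge 1$ $S$-node per nontrivial subtree'' fact from Proposition~\ref{lower_bound_s} — the whole argument hinges on ruling out any split other than $(n-1, 1)$, and that rests on counting $S$-nodes additively across the root together with that lower bound. One should also double-check the degenerate case where a subtree has a single leaf (it contributes no interior nodes and no $S$-nodes), which is exactly what makes the $(n-1,1)$ split consistent; this is routine but worth stating explicitly.
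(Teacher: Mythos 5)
Your proof is correct and follows essentially the same route as the paper: existence via the explicit ladder construction, and uniqueness by observing that every subtree with at least two leaves contributes at least one $S$-node, which forces one child of the root to be a single leaf and allows induction on $n$. Your version simply spells out the root case analysis and the invocation of the lower bound more explicitly than the paper does.
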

\begin{proof}
	Such a tree exists: if $n=2$, it is the tree with a root and two leaves; if not, it is the tree in which every internal node has a left child that is the unique parenthetic form with one $S$-node on $n-1$ leaf nodes and a right leaf child.
	
	It is unique: 
	Any subtree of an SD-tree must have at least one $S$-node, except when it consists of a single leaf node. This means one of the top branches of an SD-tree with exactly one $S$-node is a single leaf node. The conclusion follows by induction on the number of leaf nodes.
\end{proof}

This type of tree is called a \emph{ladder} or \emph{sequential}  or \emph{comb left} tree. The \emph{ladder product} instantiates the ladder tree, and is one in which the operation proceeds pairwise in the order in which the terms appear. For example, the ladder product on four operands is $(((a*b)*c)*d)$.

\begin{lemma} \label{laddercor}
	Ladder SD-trees are exactly those with one $S$-node, up to isomorphism.
\end{lemma}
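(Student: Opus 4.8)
The plan is to combine the uniqueness statement just proved with a direct count of the $S$-nodes of a ladder tree. The previous lemma already establishes that, up to isomorphism, there is exactly one parenthetic form on $n>1$ leaves with a single $S$-node; in fact its existence proof built that form recursively as ``left child $=$ the one-$S$-node form on $n-1$ leaves, right child $=$ a leaf,'' which is precisely the recursive description of the ladder tree in Definition~\ref{...} (the ladder-tree definition). So it suffices to verify that a ladder tree really does have exactly one $S$-node, and the lemma follows.

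First I would argue by induction on $n$ that the ladder tree on $n$ leaves has exactly one $S$-node. The base case $n=2$ is the tree with a root and two leaf children; its unique interior node is an $S$-node, so there is exactly one. For the inductive step, write the ladder tree on $n$ leaves as $L_{n-1}\odot(\text{leaf})$, where $L_{n-1}$ is the ladder tree on $n-1$ leaves. The root has a child with $n-1\ge 2$ descendant leaves and a child with $1$ descendant leaf, so the root is a $D$-node; hence the ladder tree's $S$-nodes are exactly those of $L_{n-1}$, of which there is one by the inductive hypothesis. This also matches the example tree in Fig.~\ref{fig:treetypes}(\subref{fig:treetypes-a}), whose lone $S$-node is the lowest interior node.

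With that in hand, the proof concludes quickly: any SD-tree on $n$ leaves with exactly one $S$-node is isomorphic to the unique such form from the previous lemma, and the ladder tree is a tree with exactly one $S$-node, so every one-$S$-node tree is isomorphic to the ladder tree; conversely the ladder tree has exactly one $S$-node. Thus ladder SD-trees are exactly those with one $S$-node, up to isomorphism.

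I do not anticipate a real obstacle here — the only point requiring care is making sure the recursive description used in the existence half of the previous lemma is recognized as literally the definition of a ladder tree, so that no independent uniqueness argument is needed. If one preferred a self-contained proof, the same induction that counts the $S$-nodes can be run in reverse: in any one-$S$-node SD-tree, no internal node above the bottom can have two non-leaf children (each non-leaf subtree contributes its own $S$-node, forcing at least two), and a node with two leaf children is the $S$-node, so peeling off leaf children from the top forces the ladder shape.
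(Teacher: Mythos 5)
Your proposal is correct and follows essentially the same route as the paper, which states this lemma without a separate proof precisely because it is the identification of the recursively constructed unique one-$S$-node form from the preceding lemma with the ladder tree. Your added induction verifying that a ladder tree has exactly one $S$-node (root is a $D$-node for $n\ge 3$ since its children have $n-1$ and $1$ leaf descendants) is a sound and welcome explicit check of the step the paper leaves implicit.
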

\begin{proposition}\label{numladder}
	The number of computationally inequivalent commutative non-associative sequential products on $n$ variables is 
	$$
	\frac{n!}{2}
	$$
\end{proposition}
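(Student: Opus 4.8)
The plan is to invoke the counting machinery already assembled rather than to argue from scratch. First I would recall from Lemma~\ref{laddercor} that the ladder parenthetic form is, up to isomorphism, the unique parenthetic form on $n$ leaves with exactly one $S$-node; hence $s=1$ for the SD-tree representing a sequential product on $n$ terms.

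Next I would apply Corollary~\ref{equiv_sum}: if $P$ is a parenthetic form on $n$ operands whose representing SD-tree has $s$ $S$-nodes, then the number of inequivalent commutative non-associative products with form $P$ is $\frac{n!}{2^s}$. Substituting $s=1$ gives exactly $\frac{n!}{2}$, which is the claimed count.

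There is essentially no obstacle here; the statement is a corollary of the preceding results, and the only thing to be careful about is the bookkeeping that a sequential (ladder) product is precisely a commutative non-associative product whose parenthetic form is the one-$S$-node form, so that Corollary~\ref{equiv_sum} applies verbatim with the single ladder form as $P$. (One may also note as a sanity check that this matches the general upper bound $\frac{n!}{2}$ from Corollary~\ref{upper_bound}, so the ladder form attains that bound.)
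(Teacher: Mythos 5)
Your proposal is correct and follows exactly the paper's own route: the paper proves this proposition by citing Lemma~\ref{laddercor} (ladder trees are precisely those with one $S$-node) and then applying Corollary~\ref{equiv_sum} with $s=1$. Your additional sanity check against the upper bound of Corollary~\ref{upper_bound} is a nice touch but not needed.
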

\begin{proof}
	Follows immediately from Corollary \ref{equiv_sum} and Lemma \ref{laddercor}.
\end{proof}
    Ladder trees meet the lower bound for $S$-nodes discussed in Section \ref{lu_bounds}. The number of ladder products meets the lower bound for the number of inequivalent commutative non-associative products on $n$ variables.  
\subsubsection{Computational applications of ladder products.}
	Ladder summation corresponds to the C language default of left-to-right associativity on summations with ungrouped summands \cite{C18}. 	
	By pairwise commutativity, the same result is guaranteed in C on an IEEE-754-compliant system upon transposition of the first two elements of an ungrouped summation,
	but is not guaranteed after any other transposition. 

\subsection{Trees with two S-nodes } \label{2_snodes}
\begin{proposition}\label{prop2Snodes}
	The number of parenthetic forms on $n \ge 1$ with exactly $2$ $S$-nodes is
	$$
	\theta(n,2) =
	\begin{cases}
	(m-1)^2, &   \text{if } n=2m+1; \\
	(m-1)(m-2), &   \text{if } n=2m. \\
	\end{cases}
	$$
\end{proposition}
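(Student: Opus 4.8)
The plan is to count parenthetic forms on $n$ leaves with exactly two $S$-nodes by case analysis on the top node, exactly as in the proof of Theorem~\ref{propkSnodes}, but now exploiting the full classification of forms with exactly one $S$-node. The key fact, established in Lemma~\ref{laddercor}, is that the unique parenthetic form with one $S$-node on $m$ leaves is the ladder; in particular $\theta(m,1)=1$ for every $m\ge 2$ and $\theta(m,0)=0$ for $m\ge 2$ (and $\theta(1,0)=1$). Thus the recursion from Theorem~\ref{propkSnodes} for $s=2$ collapses to a simple count: every form with two $S$-nodes is a root together with two subtrees whose $S$-node counts, plus possibly the root's own contribution, sum to two.

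First I would handle the odd case $n=2m+1$. Here the root is a $D$-node (its two children cannot have equally many leaves), so the two subtrees must together carry both $S$-nodes, and since a subtree with a positive number of leaves and zero $S$-nodes must be a single leaf, we cannot put both $S$-nodes on one side unless the other side is a leaf --- but a leaf subtree together with an $(n-1)$-leaf subtree has $n-1$ even, so that $(n-1)$-leaf side would need two $S$-nodes, giving $\theta(n-1,2)$ contributions; I expect this "all on one side" term actually to be part of a cleaner bookkeeping, so let me instead split by the size $j$ of the smaller subtree. One subtree has $j$ leaves with one $S$-node, the other has $n-j$ leaves with one $S$-node, forcing both to be ladders; this requires $2\le j$ and $2\le n-j$, i.e. $2\le j\le n-2$, and since $n$ is odd the two sizes are automatically distinct, so each such $j$ with $2 \le j \le \lfloor (n-1)/2 \rfloor$ contributes exactly one form (the choice of which ladder is "left" is fixed by the canonical orientation). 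Additionally, there is the degenerate split $j=1$: a leaf paired with a form on $n-1$ leaves carrying both $S$-nodes, contributing $\theta(n-1,2)$, and one must be careful this recursion is consistent. Counting the pairs-of-ladders splits gives $\lfloor(n-1)/2\rfloor - 1 = m-1$ of them; the remaining $(m-1)^2-(m-1)=(m-1)(m-2)$ must come from the $j=1$ term, i.e. $\theta(2m,2)=(m-1)(m-2)$, which is exactly the even-case formula. So the odd and even cases are coupled, and the natural route is to prove both simultaneously by a single induction on $n$.

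Therefore the real structure of the proof is: set up the recursion $\theta(n,2)=\sum_{j=2}^{\lfloor(n-1)/2\rfloor}[\text{two ladders}] + \theta(n-1,2)$ in the odd case (the $j=1$ split feeding back a smaller even value), and in the even case $\theta(2m,2)=\sum_{j}[\cdots] + \theta(m,1)\theta(m,1)\cdot[\text{root is }S] + \theta(2m-1,2)$, where the new term $\theta(m,1)^2=1$ accounts for the root being an $S$-node with two ladder children of $m$ leaves each (valid only when $m\ge 2$). Then verify the base cases (small $n$ where $\theta(n,2)=0$, e.g. $n\le 4$), and check that the claimed closed forms $(m-1)^2$ and $(m-1)(m-2)$ satisfy these recurrences. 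The arithmetic is a routine finite-difference check: $(m-1)^2 - (m-2)^2 = 2m-3$ should match the count of new two-ladder splits gained when going from $2m-1$ to $2m+1$ leaves, and similarly for the even formula.

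The main obstacle I anticipate is getting the bookkeeping of the splits exactly right --- in particular, avoiding double-counting when the two subtrees could have equal size (which happens only in the even case, and only for the special "root is an $S$-node" configuration) and correctly identifying which splits force ladder subtrees versus which feed a smaller $\theta(\cdot,2)$. A cleaner alternative that sidesteps the coupled induction: directly enumerate. A form with two $S$-nodes, read from the root down its spine, is determined by the position of the second $S$-node relative to the first. One $S$-node is "lowest" in the sense described in the Lemma on one-$S$-node trees; the other $S$-node sits somewhere, and each $S$-node $v$ corresponds to a balanced split of $|v|$ leaves into two equal ladders. I would parametrize a two-$S$-node form by the ordered pair of leaf-sizes $(a,b)$ of the two maximal ladder subtrees it contains, with $a,b\ge 1$, plus the sizes of the leaves hung along the connecting path, then count lattice points subject to the sum being $n$ and the non-isomorphism constraint. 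This makes the $(m-1)^2$ versus $(m-1)(m-2)$ dichotomy transparent: it is exactly the number of ordered pairs $(a,b)$ with $a+b = $ (something) modulo the diagonal identification forced when $n$ is even. I would likely present whichever of these two framings comes out shortest, but flag that the coupled-recursion version is the one most in the spirit of Theorem~\ref{propkSnodes}.
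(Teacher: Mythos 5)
Your overall strategy is the same as the paper's: induct on $n$, split at the root using the fact (Lemma~\ref{laddercor}) that the unique $1$-$S$-node form is the ladder, and obtain a first-order recurrence $\theta(n,2)=\theta(n-1,2)+(\text{number of two-ladder splits})$. Your odd-case count of two-ladder splits ($m-1$ of them, plus the $j=1$ term $\theta(n-1,2)$) agrees with the paper. But there is a genuine error in your even case: the term $\theta(m,1)^2=1$ that you add for ``the root is an $S$-node with two ladder children of $m$ leaves each'' does not belong. If the root of a tree on $2m\ge 4$ leaves is an $S$-node, each child subtree has $m\ge 2$ leaves and therefore contains at least one $S$-node of its own, so the whole tree has at least $3$ $S$-nodes. (Concretely, for $n=6$ the tree $\mathrm{ladder}(3)\odot\mathrm{ladder}(3)$ is the divide-and-conquer tree of Fig.~\ref{fig:sdtrees6}(\subref{fig:sdtrees6-d}), which has $3$ $S$-nodes, not $2$.) In the notation of Theorem~\ref{propkSnodes} the root-is-$S$ contribution for $s=2$ is $\sum_{i=0}^{1}\theta(m,i)\theta(m,1-i)=2\,\theta(m,0)\theta(m,1)=0$ for $m\ge 2$; your term $\theta(m,1)\theta(m,1)$ is the $s=3$ term. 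With your recurrence one gets $\theta(2m,2)=(m-2)^2+(m-2)+1=(m-1)(m-2)+1$, which is off by one (e.g., it gives $\theta(6,2)=3$ whereas Table~\ref{paren_nk} and the claimed formula give $2$).

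The fix is exactly the observation the paper makes at the outset: \emph{no} form with exactly $2$ $S$-nodes can have an $S$-node root, so in both parities the only cases are (a) a leaf sibling paired with an $(n-1)$-leaf form carrying both $S$-nodes, contributing $\theta(n-1,2)$, and (b) two ladders of unequal sizes, contributing $\lfloor n/2\rfloor-1$ forms when $n$ is odd and $n/2-2$ when $n$ is even (the equal split being excluded precisely because it would make the root an $S$-node). A secondary issue: in your first paragraph you infer the even-case value $(m-1)(m-2)$ by subtracting the two-ladder count from the \emph{assumed} answer $(m-1)^2$, which is circular; the clean route is to state both recurrences first and then verify the closed forms by the coupled induction, as the paper does.
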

\begin{proof}
	We proceed by induction, and observe that the proposition is  true by inspection for $n=1,2,3,4$. We also observe that  no tree having exactly 2 $S$-nodes can have an $S$-node root. This means that either 
	(a) one of the subtrees below the root has two $S$-nodes and the other has none, or else 
	(b) both subtrees have exactly one $S$-node.  
	\newline
	In case (a), the subtree with no $S$-nodes consists of only one leaf node. The other subtree has $n-1$ leaf nodes and $2$ $S$-nodes. So the number of trees  with such subtrees is $\theta(n-1,2)$. 
	\newline
	In case (b), each subtree has one $S$-node, so by Corollary \ref{laddercor}, it is the unique ladder tree of its size. Each subtree with $\ell$ leaf nodes has a sibling with $n-\ell$ leaf nodes, and there is one ladder tree at each of these sizes, so the number of such double ladder trees is $\floor{\frac{n}{2}}-1$ if $n$ is odd, and ${\frac{n}{2}}-2$ if $n$ is even (excluding when the siblings have an equal number of leaf nodes).
	\newline
	Putting (a) and (b) together, $\theta(n,2)=\theta(n-1,2)+\floor{\frac{n}{2}}-1$. Using the induction hypothesis,
	\begin{equation*}
	\theta(n,2) =
	\begin{cases}
	(m-1)(m-2)+(m-1) = (m-1)^2, &  \text{if } n=2m+1; \\
	(m-2)^2+(m-2)= (m-1)(m-2), &   \text{if } n=2m. \\
	\end{cases}
	\end{equation*}
\end{proof}

\subsection{Complete full binary trees} \label{cfb}

\begin{definition}
    A \emph{complete full binary tree} is a tree in which every node has either two or zero children, every level except the last is completely filled, and all nodes are as far to the left as possible. 
\end{definition}
    The complete full binary tree is similar to a divide-and-conquer tree, except in the spacing of its lowest level \cite{knuth97art1}. The lowest level in a complete full binary tree is filled from left to right, in contrast to the even spacing of lowest level leaf pairs in the divide-and-conquer tree.  

\begin{lemma} \label{cft_mod}
    Let $n=2^k+r$, with $0 \le r < 2^k$. The complete full binary tree on $n$ leaves has a perfect tree as one of its subtrees, and a tree with $(2^{k-1}+(r \bmod 2^{k-1}))$ leaves as the other subtree. In particular,
    \begin{itemize}
        \item if $r<2^{k-1}$, the left subtree has $(2^{k-1}+r)$ leaves, and the right subtree is a perfect tree with $2^{k-1} $ leaves.
        \item if $r=2^{k-1}$, the left subtree is a perfect tree with $2^k$ leaves, and the right subtree is a perfect tree with $2^{k-1} $ leaves.
        \item if $r>2^{k-1}$ the left subtree is a perfect tree with $2^k$ leaves, and the right subtree has $(2^{k-1}+(r \bmod 2^{k-1}))$ leaves.
    \end{itemize}
\end{lemma}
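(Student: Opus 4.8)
The plan is to work directly from the defining properties of a complete full binary tree — every level but the last completely filled, and all vertices as far to the left as possible — and first pin down its global shape. Write $n = 2^k + r$ with $0 \le r < 2^k$. If $r = 0$ the tree is perfect of height $k$ and the claim holds trivially (it is covered by the first bullet, since then $r \bmod 2^{k-1} = 0$), so assume $0 < r < 2^k$. I would argue the tree has height $k+1$: levels $0, \dots, k-1$ each carry their full complement $2^i$ of vertices, and because in a full binary tree the number of vertices on level $i+1$ equals twice the number of \emph{internal} vertices on level $i$, each of those levels consists entirely of internal vertices. Level $k$ carries $2^k$ vertices, of which say $j$ are internal and $2^k - j$ are leaves, and level $k+1$ carries the remaining $2j$ vertices, all leaves. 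Counting the $n$ leaves gives $n = (2^k - j) + 2j = 2^k + j$, forcing $j = r$; and ``as far left as possible'' makes the $r$ internal vertices on level $k$ the leftmost $r$.

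Next I would split at the root, whose left subtree comprises the left halves of levels $1, \dots, k$ together with their level-$(k+1)$ descendants, and likewise on the right; of the $r$ leftmost internal vertices on level $k$, exactly $\min(r, 2^{k-1})$ lie in the left half (which has $2^{k-1}$ slots), the rest on the right. Then I would treat the three cases:
\begin{itemize}
\item If $r < 2^{k-1}$: the left subtree has $r$ internal vertices on level $k$, hence $2r$ leaves on level $k+1$ and $2^{k-1}-r$ leaves on level $k$, totalling $2^{k-1}+r$; the right half of level $k$ is all leaves, so the right subtree is perfect of height $k-1$ with $2^{k-1}$ leaves.
\item If $r = 2^{k-1}$: the left half of level $k$ is entirely internal, so the left subtree is perfect of height $k$ with $2^k$ leaves, and the right subtree is perfect of height $k-1$ with $2^{k-1}$ leaves.
\item If $r > 2^{k-1}$: the left subtree is perfect of height $k$ with $2^k$ leaves, and the right subtree has $r - 2^{k-1}$ internal vertices on level $k$, hence $2(r-2^{k-1})$ leaves on level $k+1$ and $2^{k-1}-(r-2^{k-1})$ leaves on level $k$, totalling $2^{k-1}+(r-2^{k-1})$.
\end{itemize}
To close, I would note $r \bmod 2^{k-1} = r$ when $r < 2^{k-1}$ and $r \bmod 2^{k-1} = r - 2^{k-1}$ when $2^{k-1} \le r < 2^k$, so in every case one subtree is perfect and the other has $2^{k-1} + (r \bmod 2^{k-1})$ leaves; moreover that subtree inherits the ``filled left to right'' property, so it is itself a complete full binary tree (useful downstream, though the statement only asserts the leaf count).

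The only place errors can creep in is the two-level leaf bookkeeping across the three cases; the conceptual content is entirely the observation that level $k$ holds exactly $r$ internal vertices with everything else forced, after which the split is mechanical. One boundary worth a sentence is $k = 1$ (so $n \in \{2,3\}$), where ``perfect of height $k-1 = 0$'' is a single leaf — the formulas remain consistent there.
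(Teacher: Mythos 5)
Your proof is correct and follows the same route as the paper's: split at the root into one perfect subtree and one subtree with $2^{k-1}+(r\bmod 2^{k-1})$ leaves, verified in the same three cases $r<2^{k-1}$, $r=2^{k-1}$, $r>2^{k-1}$. The only difference is that you first derive the shape of the split from the level structure of a complete full binary tree (exactly $r$ internal vertices on level $k$, packed leftmost, so the left half of level $k$ absorbs $\min(r,2^{k-1})$ of them), whereas the paper simply asserts which subtree is perfect and computes the other's leaf count by subtraction; your version is the more self-contained of the two.
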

\begin{proof}
    If $r \le 2^{k-1}$, then the right subtree has $2^{k-1}$ leaves and is perfect, and the left has $n-2^{k-1}=2^k+r-2^{k-1}=2^{k-1}+r=2^{k-1}+(r\bmod 2^{k-1})$ leaves.  
    
    If $r > 2^{k-1}$, then the left subtree has $2^k$ leaves and is perfect,  and the right subtree has $n=(2^k+r)-2^k = r=2^{k-1}+(r \bmod 2^{k-1})$ leaves.
\end{proof}

\begin{theorem} \label{delta_cft}
    Let $n=2^\ell \cdot d$, where $d$ is odd. Then the number of $D$-nodes in a complete full binary tree with $n$ leaves is $\floor{\log_2(d)}$.
\end{theorem}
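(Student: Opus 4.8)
The plan is to prove this by strong induction on $n$, peeling off the root using Lemma~\ref{cft_mod}. Write $\delta_{\mathrm{cft}}(n)$ for the number of $D$-nodes of the complete full binary tree on $n$ leaves. The base case is $d=1$, i.e.\ $n=2^\ell$: here the complete full binary tree is the perfect tree on $2^\ell$ leaves, in which every internal node has two children that are perfect trees with equally many leaves, so every internal node is an $S$-node and $\delta_{\mathrm{cft}}(2^\ell)=0=\floor{\log_2 1}$.

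In the inductive step one assumes $d>1$, so $n$ is not a power of $2$; put $t=\floor{\log_2 d}\ge 1$ and $k=\floor{\log_2 n}$, and write $n=2^k+r$ with $0<r<2^k$. By Lemma~\ref{cft_mod}, one child of the root is a perfect tree (which, having all $S$-nodes, contributes no $D$-nodes) and the other is the complete full binary tree on $m:=2^{k-1}+(r\bmod 2^{k-1})$ leaves. Checking the three cases of Lemma~\ref{cft_mod}, the children's leaf counts are $\{2^{k-1}+r,\,2^{k-1}\}$, $\{2^k,\,2^{k-1}\}$, or $\{2^k,\,r\}$ respectively, and since $0<r<2^k$ these are always distinct, so the root is a $D$-node. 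Hence $\delta_{\mathrm{cft}}(n)=1+\delta_{\mathrm{cft}}(m)$, and since $1\le m\le 2^k-1<n$ the induction hypothesis applies to $m$.

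It then remains to compute the $2$-adic valuation and binary length of $m$. From $2^t\le d<2^{t+1}$ one gets $k=\ell+t$, and since $d$ is odd with $d>1$ (so $d\neq 2^t$) one has $0<d-2^t<2^t$, with $r=n-2^k=2^\ell(d-2^t)$. Because $t\ge 1$, $\ell\le k-1$, so $2^\ell$ divides both $2^{k-1}$ and $r$; hence $2^\ell\mid m$ and $m=2^\ell q$ with $q=2^{t-1}+\bigl((d-2^t)\bmod 2^{t-1}\bigr)$. Then $2^{t-1}\le q<2^t$, so $\floor{\log_2 q}=t-1$, and $q$ is odd: when $t\ge 2$, reduction modulo the even number $2^{t-1}$ preserves the odd parity of $d-2^t$, and when $t=1$ one simply has $q=1$. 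So $m=2^\ell q$ with $q$ odd and $\floor{\log_2 q}=t-1$; by the induction hypothesis $\delta_{\mathrm{cft}}(m)=t-1$, and therefore $\delta_{\mathrm{cft}}(n)=1+(t-1)=t=\floor{\log_2 d}$.

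The work here is really just careful bookkeeping, and I expect the only delicate points to be (i) confirming that the ``other subtree'' produced by Lemma~\ref{cft_mod} is genuinely the complete full binary tree on $m$ leaves, so that the induction hypothesis is applicable to it, and (ii) the degenerate case $r=2^{k-1}$, where that ``other subtree'' is itself perfect; then $m=2^{k-1}$, $q=1$, $t=1$, and $\delta_{\mathrm{cft}}(n)=1$, which is still consistent with the formula. Equivalently, the whole statement can be read off from binary digits: the map $n\mapsto m$ simply deletes the digit immediately below the leading $1$, so $\delta_{\mathrm{cft}}(n)$ counts how many such deletions occur before a power of $2$ is reached, namely the position of the lowest set bit of $d$, which is $\floor{\log_2 d}$.
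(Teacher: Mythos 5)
Your proof is correct and follows essentially the same route as the paper's: induct on $n$, use Lemma~\ref{cft_mod} to split off a perfect subtree (contributing no $D$-nodes) and a root $D$-node, and recurse on the remaining complete full binary tree while tracking the $2$-adic valuation — the paper merely organizes the bookkeeping slightly differently (separate case $r=2^{k-1}$ plus a ``WLOG $r<2^{k-1}$'' reduction, writing $r=2^\ell b$ instead of your $d-2^t$). The only blemish is in your closing ``equivalently'' aside: the number of digit-deletions is the position of the \emph{highest} set bit of $d$ (equivalently, $k-\ell$), not the lowest, which for odd $d$ sits at position $0$; this does not affect the main argument.
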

\begin{proof}
    Let $n= 2^k+r$, where $0 \le  r < 2^k$. 
    
    \emph{Case 1: $r=2^{k-1}$.} Then $n=2^{k-1}\cdot 3$, $d=3$ and $\floor{\log_2(d)}=1$. There is exactly $1$ $D$-node, which is the root, since the two subtrees are perfect, and contribute $0$ $D$-nodes.
    
    \emph{Case 2: $r\ne 2^{k-1}$.} By Lemma \ref{cft_mod}, one subtree is perfect, and the non-perfect subtree has $(2^{k-1}+(r\bmod 2^{k-1}))$ leaves. $(r \bmod 2^{k-1}) = ((2^{k+1}+r)) \bmod 2^{k-1}$, so without loss of generality, we may assume that $r < 2^{k-1}$. So $(r\bmod 2^{k-1})=r$. 
    
    $2^\ell$ divides $n$ and $2^k$, so must divide $r$. Let $r$ be $2^\ell \cdot b$, with $b$ odd. The non-perfect subtree has $(2^{k-1}+2^\ell b) = 2^\ell \cdot (2^{k-1-\ell}+ b)$ leaves, with $b<2^{k-1-\ell}$, so by induction, it contributes $\floor{\log_2(2^{k-1-\ell}+ b)}=(k-1-\ell)$ $D$-nodes.
    The root contributes $1$ as a $D$-node, so the total number of $D$ nodes is $(k-1-\ell)+1=k-\ell=\floor{\log_2(d)}$. 
\end{proof}
\begin{corollary}
         Let $n=2^\ell \cdot d$, where $d$ is odd. Then the number of $S$-nodes in a complete full binary tree with $n$ leaves is  $n-1-\floor{\log_2(d)}$.
\end{corollary}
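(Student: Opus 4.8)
The plan is to derive this immediately from Theorem~\ref{delta_cft} together with the elementary node-count identity in Lemma~\ref{dnodes}. Since a complete full binary tree on $n$ leaves is in particular a full binary tree, it has exactly $n-1$ interior nodes, and by definition of an SD-tree each interior node is either an $S$-node or a $D$-node; so the $S$-count and the $D$-count of the tree sum to $n-1$.

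The concrete steps: first I would invoke Theorem~\ref{delta_cft} to record that the complete full binary tree on $n=2^\ell\cdot d$ leaves (with $d$ odd) has exactly $\floor{\log_2(d)}$ $D$-nodes. Then, writing $s(n)$ for the number of $S$-nodes, Lemma~\ref{dnodes} gives that the number of $D$-nodes equals $(n-1)-s(n)$. Equating the two expressions for the number of $D$-nodes, $(n-1)-s(n)=\floor{\log_2(d)}$, and solving yields $s(n)=n-1-\floor{\log_2(d)}$, as claimed.

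There is essentially no obstacle here: all of the substantive work — the recursive decomposition of a complete full binary tree via Lemma~\ref{cft_mod} into a perfect subtree and a non-perfect subtree, and the induction on the non-perfect subtree — has already been carried out in the proof of Theorem~\ref{delta_cft}. This corollary is merely the translation of that $D$-node count into an $S$-node count using that the $n-1$ interior nodes partition into $S$-nodes and $D$-nodes. If a self-contained argument were preferred, one could instead mirror the induction of Theorem~\ref{delta_cft} directly for $S$-nodes, using that a perfect tree on $2^j$ leaves contributes $2^j-1$ $S$-nodes and no $D$-nodes; but the two-line deduction from the theorem and Lemma~\ref{dnodes} is cleaner and is the one I would present.
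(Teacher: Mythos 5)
Your proposal is correct and is exactly the paper's own argument: the paper proves this corollary by citing Lemma~\ref{dnodes} and Theorem~\ref{delta_cft}, which is precisely the two-line deduction you present. Nothing is missing.
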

\begin{proof}
    Follows from Lemma \ref{dnodes} and Theorem \ref{delta_cft}.
\end{proof}
\begin{proposition} \label{cfb_oeis}
    Let $n>0$ and let $n=2^\ell \cdot d$, where $d$ is odd. Then the number of $D$-nodes in a complete full binary tree is the number of binary digits that are the same in the binary expansions of $n$ and $(n-1)$. 
\end{proposition}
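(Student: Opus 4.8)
The plan is to deduce this from Theorem~\ref{delta_cft}, which already establishes that the number of $D$-nodes in a complete full binary tree on $n = 2^\ell \cdot d$ leaves (with $d$ odd) is $\floor{\log_2(d)}$. So it remains only to prove the arithmetic identity that $\floor{\log_2(d)}$ equals the number of positions at which the binary expansions of $n$ and $n-1$ agree, where both are written with the same number of digits (the shorter one padded on the left with zeros if necessary).

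First I would pin down the bit patterns of $n$ and $n-1$. Writing $d$ in reduced binary as $(d_k \dots d_1 d_0)$ with $d_k = d_0 = 1$ and $k = \floor{\log_2(d)}$, the reduced binary representation of $n = 2^\ell d$ is $(d_k \dots d_1 d_0\, 0 \dots 0)$ with $\ell$ trailing zeros, so $\floor{\log_2(n)} = \ell + k$. Subtracting $1$ forces the borrow to run through the $\ell$ trailing zeros --- turning each into a $1$ --- and clears the bit in position $\ell$ (which held $d_0 = 1$), while leaving every bit in positions $\ell+1, \dots, \ell+k$ (namely $d_1, \dots, d_k$) unchanged. Compactly: $n \oplus (n-1)$ is the block of $\ell+1$ consecutive $1$s occupying positions $0$ through $\ell$, and nothing else. (This is just the well-known description of $n-1$ in terms of the lowest set bit of $n$, which here sits in position $\ell$ because $n = 2^\ell d$ with $d$ odd; it could also be read off from Lemma~\ref{weightlemma_odd} and Lemma~\ref{weightlemma_mult2}.)

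Then the count is immediate. Taking the common length of the two expansions to be $\floor{\log_2(n)} + 1 = \ell + k + 1$ (valid since the top bit $d_k$ in position $\ell+k$ is untouched; in the degenerate case $d = 1$, where $n-1 = 2^\ell - 1$ is one digit shorter, we pad it with a single leading zero, which does not change the count), exactly the $\ell+1$ positions $0, \dots, \ell$ are positions of disagreement by the previous paragraph. Hence the number of agreeing positions is $(\ell+k+1) - (\ell+1) = k = \floor{\log_2(d)}$, which is the $D$-node count by Theorem~\ref{delta_cft}.

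I expect no genuine obstacle; the only points needing a word of care are the boundary case $\ell = 0$ (where the block of trailing zeros is empty, so $n$ and $n-1$ differ only in position $0$) and the power-of-two case $d = 1$ (where the two numbers have different bit-lengths), both of which are covered uniformly by the XOR description above.
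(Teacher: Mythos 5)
Your argument is correct and is essentially the paper's own proof: both describe the bit patterns of $n$ and $n-1$ (the borrow flips exactly positions $0$ through $\ell$, leaving positions $\ell+1,\dots,\floor{\log_2(n)}$ untouched), count $\floor{\log_2(d)}$ agreeing positions, and identify this with the $D$-node count from Theorem~\ref{delta_cft}. Your treatment of the $d=1$ and $\ell=0$ edge cases is slightly more explicit than the paper's, but the route is the same.
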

\begin{proof}
    The binary expansion of $n$ has $\ell$ trailing $0$s, with $1$ in the $\ell^\text{th}$ bit.The binary expansion of $(n-1)$ has $\ell$ trailing $1$s, with $0$ in the $\ell^\text{th}$ bit. There are then  $(k-\ell)$ binary digits that are the same in the binary expansions of $n$ and $(n-1)$. $d$ is odd, so this number is $\floor{\log_2(d)}$.
\end{proof}
    Proposition \ref{cfb_oeis} shows that the number of $D$-nodes in a complete full binary tree is OEIS \seqnum{A119387}
        \cite{OEIS}, 
        %\cite{OEIS_A119387}
        with an offset.

\subsection{Divide-and-conquer trees} \label{pairwise_cnaos}
In this section, we discuss evenly partitioned divide-and-conquer methods on commutative non-associative operations.
Divide-and-conquer methods proceed by dividing the set of $n$ terms into two subset, operating on the subsets, then operating on the two results. 
Evenly partitioned divide-and-conquer methods divide the $n$ terms into two subsets of size $\frac{n}{2}$ if $n$ is even, or into two subsets of size $\floor{\frac{n}{2}}$ and $\ceil{\frac{n}{2}}$ if $n$ is odd. Throughout this section we assume that a divide-and-conquer algorithm is evenly partitioned.

Evenly partitioned divide-and-conquer methods give rise to a full binary tree in which the number of leaf descendants of the left and right children of any node differ by at most $1$. The binary tree generated is thus as balanced as it can be. At every level, the children of a node are evenly or almost evenly divided between the left and the right branches.
%Examples are shown in Figure \ref{fig_pairwise}.

We discuss in this section three different formulas for the number of $S$-nodes in an evenly partitioned divide-and-conquer tree. We also show the corresponding three formulas for the number of evenly partitioned divide-and-conquer products on $n$ terms, using the SD-tree-based approach from Corollary \ref{equiv_sum}. In particular, we show new non-recursive closed forms for each of these. 
We also provide several new formulas for $D$-nodes in these trees that we have not found in the literature.

The analysis of the number of $S$-nodes also gives rise to interesting correspondences with the Takagi function  \cite{takagi01}.

\begin{proposition}\label{pair_iso}
	All divide-and-conquer SD-trees on commutative non-associative operations on $n$ elements are isomorphic, and therefore have the same number of  $S$-nodes.
\end{proposition}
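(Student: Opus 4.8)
The plan is a straightforward induction on $n$, using the defining property of divide-and-conquer trees together with the fact that this property is inherited by every subtree, and then invoking Lemma~\ref{iso_SD} to transfer the structural uniqueness into a statement about $S$-nodes. First I would set up the base case: for $n=1$ the tree is a single leaf, and for $n=2$ it is the root with two leaf children, so in both cases there is (up to the pairwise-commutativity isomorphism) a unique divide-and-conquer tree. This also initializes the claim that $\sigma(n)$ is well defined.

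For the inductive step, fix $n\ge 3$ and let $T$ be any divide-and-conquer SD-tree on $n$ leaves. The root has two children whose leaf-descendant counts $a$ and $b$ satisfy $a+b=n$ with $\lvert a-b\rvert\le 1$; these two constraints force $\{a,b\}=\{\lceil n/2\rceil,\lfloor n/2\rfloor\}$. Next I would observe that the ``differ by at most $1$'' condition in the definition is a condition on every internal node, so each of the two subtrees hanging off the root is itself a divide-and-conquer tree, on $\lceil n/2\rceil$ and $\lfloor n/2\rfloor$ leaves respectively. By the induction hypothesis those two subtrees are each unique up to isomorphism, so $T$ is isomorphic to $T_{\lceil n/2\rceil}\odot T_{\lfloor n/2\rfloor}$ for the (essentially unique) divide-and-conquer subtrees of those sizes; since our isomorphism permits swapping the two children of a node, the left/right placement is immaterial, and hence $T$ is determined up to isomorphism. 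This completes the induction, establishing that all divide-and-conquer SD-trees on $n$ leaves are isomorphic. Finally, Lemma~\ref{iso_SD} gives that isomorphic SD-trees have the same number of $S$-nodes (and of $D$-nodes), so $\sigma(n)$ (and $\delta(n)$) is well defined.

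I do not expect a genuine obstacle here: the two potentially delicate points are (i) that the root split of a divide-and-conquer tree on $n$ leaves is forced to be $\{\lfloor n/2\rfloor,\lceil n/2\rceil\}$, which is immediate from two positive integers summing to $n$ and differing by at most one, and (ii) that subtrees of divide-and-conquer trees are again divide-and-conquer, which is immediate because the balance condition is imposed at every node. The only thing to state carefully is that ``unique up to isomorphism'' is exactly what the induction delivers and is exactly what is needed, since two non-isomorphic trees could a priori still have the same $S$-node count (the converse direction of Lemma~\ref{iso_SD} fails), so the argument must go through structural uniqueness rather than directly through counting.
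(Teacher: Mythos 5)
Your proof is correct and takes essentially the same approach as the paper: the paper's (terser) argument likewise observes that the split at every node is forced to be $\floor{k/2}$ and $\ceil{k/2}$, that the two children may be transposed by pairwise commutativity, and then applies Lemma~\ref{iso_SD}. Your version simply makes the implicit induction explicit, which is a fine and arguably cleaner way to write the same argument.
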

\begin{proof}
	A divide-and-conquer SD-tree is constructed by extending each node by two children. If a node has   $k$  leaf  descendants,  then its children must have $\floor{\frac{k}{2}}$ and $\ceil{\frac{k}{2}}$ leaf  descendants. These  can be transposed, by pairwise  commutativity. It follows  by Lemma \ref{iso_SD} that they have the  same  number of $S$-nodes.
\end{proof}
\begin{proposition}
	Any commutative non-associative product with tree representation isomorphic to a divide-and-conquer SD-tree is itself a divide-and-conquer product.
\end{proposition}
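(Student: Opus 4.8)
The plan is to observe that being a divide-and-conquer tree is a property of the unlabeled tree shape alone, and that this property is therefore inherited across the isomorphism relation (reversing the children of some set of nodes), since such reversals never change the \emph{unordered pair} of leaf-descendant counts of the two children of any node. The statement then amounts to: if the tree $T_P$ of a product $P$ has the shape of a divide-and-conquer tree, then $T_P$ satisfies the defining balance condition, hence $P$ is a divide-and-conquer product. I would make the subtree bookkeeping rigorous by induction on $n$, the number of leaves.

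First I set up the induction. For $n=1,2$ the claim is vacuous or immediate. For $n>2$, let $P$ be a commutative non-associative product whose tree $T_P$ is isomorphic to a divide-and-conquer SD-tree $T$ on $n$ leaves (well-defined up to isomorphism by Proposition~\ref{pair_iso}). Write $T = T_L \odot T_R$ and $T_P = T_P' \odot T_P''$. Any isomorphism $T_P \to T$ either carries $T_P'$ onto $T_L$ and $T_P''$ onto $T_R$, or swaps the two; in either case the number of leaves of $T_P'$ and of $T_P''$ form the same unordered pair as those of $T_L$ and $T_R$, namely $\{\floor{n/2},\ceil{n/2}\}$, because $T$ is a divide-and-conquer tree. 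Thus the root split of $T_P$ is (almost) balanced, as required at the root.

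Next I handle the subtrees. After the possible swap, $T_P'$ is isomorphic to $T_L$ and $T_P''$ to $T_R$, and $T_L, T_R$ are themselves divide-and-conquer SD-trees on fewer than $n$ leaves --- either directly from the definition (every subtree of a divide-and-conquer tree again satisfies the balance condition at each of its nodes), or by Proposition~\ref{pair_iso} together with the recursive construction in its proof. By the induction hypothesis, the products represented by $T_P'$ and $T_P''$ are divide-and-conquer products, i.e.\ $T_P'$ and $T_P''$ are divide-and-conquer trees. Combining this with the balanced root split of the previous paragraph, every internal node of $T_P$ has children whose leaf-descendant counts differ by at most $1$; hence $T_P$ is a divide-and-conquer tree and $P$ is a divide-and-conquer product.

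The argument has no genuinely hard step; the only points needing care are the two routine structural facts used above --- that a tree isomorphism preserves, node by node, the unordered pair of leaf-descendant counts of the two children (this is precisely what transports the divide-and-conquer property across the isomorphism), and that a subtree of a divide-and-conquer tree is again a divide-and-conquer tree. I would state the conceptual one-line version (``the divide-and-conquer property is a property of the tree shape, and isomorphic trees have the same shape'') explicitly, and keep the induction only to discharge the subtree step cleanly.
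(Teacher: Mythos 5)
Your argument is correct and is essentially the paper's own proof made explicit: the paper disposes of this proposition with the single line ``follows from the definition of divide-and-conquer,'' which is precisely your observation that the balance condition is a property of the unlabeled tree shape and is preserved node-by-node under the isomorphism relation. The induction you add to discharge the subtree bookkeeping is sound but routine, and nothing in it diverges from what the paper implicitly relies on.
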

\begin{proof}
	Follows from the definition of divide-and-conquer.
\end{proof}

\subsubsection{Computational applications of divide-and-conquer trees}
Divide-and-conquer is used in many computational algorithms, due to its $\mathcal{O}(\log_2{}n)$ performance. For example, in context of floating-point summation, the evenly partitioned divide-and-conquer method is called pairwise (or cascade) summation. 
Pairwise summation is the default on ungrouped summands in NumPy \cite{numpysum} and in Julia \cite{julia}.
Pairwise summation is known to be fairly accurate, and in some cases is nearly as accurate as such gold-standard techniques as Kahan summation \cite{higham93}. 

As another example, the balanced ancestry trees seen in mathematical phylogenetics \cite{colless80, heard92, mooers97} are divide-and-conquer trees. The maximally balanced trees discussed at length in \cite{coronado20} and elsewhere are divide-and-conquer trees. 

\subsubsection{Counting divide-and-conquer S-nodes} \label{dandc_snodes}
The number of $S$-nodes in a divide-and-conquer tree is non-linear, and in fact is not even monotonic. We develop formulas for $\sigma(n)$, the number of $S$-nodes in an $n$-term divide-and-conquer tree here. One of these formulas is in non-recursive closed form.

\begin{theorem}\label{exp_recursive}
	The number of $S$-nodes in a divide-and-conquer SD-tree with $n$ leaf nodes is
	\begin{equation}\label{recursive_A268289}
		\sigma(n)=
		\begin{cases}
			2\sigma(m)+1, & \text{if}\ n=2m; \\
			\sigma(m)+\sigma(m+1), & \text{if}\ n=2m+1;\\
			0 & \text{if}\ n=1.
		\end{cases}
	\end{equation}
\end{theorem}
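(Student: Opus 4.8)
The plan is a straightforward induction on $n$ that exploits the recursive structure of a divide-and-conquer tree together with the fact that $\sigma$ is well-defined. By Proposition~\ref{pair_iso}, all divide-and-conquer SD-trees on $n$ elements are isomorphic, hence by Lemma~\ref{iso_SD} they all have the same number of $S$-nodes; so $\sigma(n)$ is an invariant of $n$ and it suffices to analyze one such tree. The base case $n=1$ is immediate: a single leaf has no internal nodes, so $\sigma(1)=0$.

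For $n\ge 2$, I would write the tree as $T=T_L\odot T_R$, where $T_L$ and $T_R$ are the subtrees rooted at the two children of the root. By the definition of a divide-and-conquer tree, the leaf counts of $T_L$ and $T_R$ differ by at most $1$ and sum to $n$, which forces them to be $\ceil{n/2}$ and $\floor{n/2}$; moreover each of $T_L,T_R$ is itself a divide-and-conquer tree, so by Proposition~\ref{pair_iso} it has $\sigma(\ceil{n/2})$, resp.\ $\sigma(\floor{n/2})$, $S$-nodes. The $S$-nodes of $T$ are exactly the $S$-nodes of $T_L$, the $S$-nodes of $T_R$, and, possibly, the root of $T$ itself; and by Definition~\ref{def_SDtree} the root is an $S$-node precisely when $T_L$ and $T_R$ have the same number of descendant leaves, i.e.\ precisely when $n$ is even. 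Hence $\sigma(n)=\sigma(\floor{n/2})+\sigma(\ceil{n/2})$ when $n$ is odd, and $\sigma(n)=\sigma(\floor{n/2})+\sigma(\ceil{n/2})+1$ when $n$ is even.

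Substituting $n=2m$ gives $\sigma(2m)=\sigma(m)+\sigma(m)+1=2\sigma(m)+1$, and substituting $n=2m+1$ gives $\sigma(2m+1)=\sigma(m)+\sigma(m+1)$, which are exactly the two nontrivial cases of Equation~\ref{recursive_A268289}. The recursion is well-founded because for $n\ge 2$ both $\floor{n/2}$ and $\ceil{n/2}$ are strictly smaller than $n$.

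There is no real obstacle here; the only points requiring care are (i) that $\sigma(n)$ is genuinely a function of $n$ alone, and (ii) that the root of a divide-and-conquer tree necessarily splits $n$ into $\floor{n/2}$ and $\ceil{n/2}$ with each side again divide-and-conquer. Both are immediate from Proposition~\ref{pair_iso} and the definition of a divide-and-conquer tree, after which the statement is just a parity case split.
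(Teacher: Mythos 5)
Your proof is correct and follows essentially the same route as the paper's: decompose the tree at the root into subtrees with $\floor{n/2}$ and $\ceil{n/2}$ leaves, observe that the root is an $S$-node exactly when $n$ is even, and apply induction. The only difference is that you make explicit the well-definedness of $\sigma(n)$ via Proposition~\ref{pair_iso}, which the paper leaves implicit.
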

\begin{proof}
	All subtrees of a divide-and-conquer SD-tree are themselves divide-and-conquer SD-trees. We proceed via induction. 
	
	If $n=2m$, then the divide-and-conquer product divides the $2m$ leaf nodes evenly between the two subtrees, and the root node is an $S$-node. So the number of $S$-nodes in the SD-tree is equal to $1+2\sigma(m)$, by the induction hypothesis.
	
	If $n=2m+1$, then the divide-and-conquer product divides the $2m + 1$ leaf nodes into two subtrees having $m$ and $m+1$ nodes, and the root node is not an $S$-node. So the number of $S$-nodes in the SD-tree is equal to $\sigma(m)+\sigma(m+1)$, by the induction hypothesis.
\end{proof}
Theorem \ref{exp_recursive} shows that $\sigma(n)$ is another interpretation of sequence OEIS \seqnum{A268289}	\cite{OEIS}, the cumulative deficient binary digit sum.
The recursive Equation
(\ref{recursive_A268289}) is the formula referenced by Sloane for OEIS \seqnum{A268289}, offset by 1.
The closed forms shown in Theorem \ref{exp_closed} are new for that sequence. 
\begin{proposition}\label{prop_snodes_half_n}
    The number of $S$-nodes in a divide-and-conquer SD-tree with $n$ leaf nodes is greater than or equal to $\floor{\frac{n}{2}}$.
\end{proposition}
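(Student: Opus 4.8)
The plan is a straightforward strong induction on $n$, built directly on the recursion for $\sigma(n)$ established in Theorem~\ref{exp_recursive}. The only base case that is actually needed is $n=1$, where $\sigma(1)=0=\floor{1/2}$; every larger value is reached through one of the two recursive cases, so no second base case is required (though $n=2$, giving $\sigma(2)=2\sigma(1)+1=1=\floor{2/2}$, is a harmless sanity check).

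For the inductive step I would assume $\sigma(k)\ge\floor{k/2}$ for all $1\le k<n$ and split on the parity of $n$. If $n=2m$ (so $m\ge1$), the recursion gives $\sigma(n)=2\sigma(m)+1\ge 2\floor{m/2}+1\ge 2\cdot\frac{m-1}{2}+1=m=\floor{n/2}$, where the middle inequality is the induction hypothesis and the last step uses $\floor{m/2}\ge\frac{m-1}{2}$. If $n=2m+1$ (so $m\ge1$ and $m+1\le n-1$), the recursion gives $\sigma(n)=\sigma(m)+\sigma(m+1)\ge\floor{m/2}+\floor{(m+1)/2}=m=\floor{n/2}$, using the elementary identity $\floor{m/2}+\floor{(m+1)/2}=m$. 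That completes the induction.

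There is essentially no obstacle, but one point deserves care: in the even case the summand $+1$, which reflects the fact that the root of a divide-and-conquer tree on $2m$ leaves is itself an $S$-node, cannot be discarded — $2\floor{m/2}$ by itself only beats $m-1$ when $m$ is odd, so the bound would fail without it (and indeed equality $\sigma(n)=\floor{n/2}$ occurs for infinitely many $n$, e.g. $n=3,5,6$). An alternative, more structural route would be to argue that in any divide-and-conquer tree the leaf counts of sibling subtrees are as equal as possible, forcing roughly half the internal nodes to be $S$-nodes; but the recursion-based induction is cleaner and entirely self-contained given Theorem~\ref{exp_recursive}, so that is the version I would write up.
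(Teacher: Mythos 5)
Your proof is correct and follows the same route the paper takes: the paper's own proof is just the one-line remark that the claim ``follows by induction from Theorem~\ref{exp_recursive},'' and your write-up is precisely that induction carried out in detail, with the parity split and the identity $\floor{m/2}+\floor{(m+1)/2}=m$ filled in correctly. Nothing to change.
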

\begin{proof}
    Follows by induction from Theorem \ref{exp_recursive}.
\end{proof}
\begin{theorem}\label{exp_closed}
	Let the binary decomposition of $n$ be $n_k \dots n_1 n_0$, where $n_k=1$. Then the number of $S$-nodes in a divide-and-conquer SD-tree with $n$ leaf nodes  is
	\begin{equation}\label{closed_clear_A268289}
			\sigma(n)=\sum_{i=0}^{\floor{\log_2(n)}} \beta_i(n) \text{, where } \beta_i(n)= \begin{cases}
			2^i-(n\bmod 2^i), & \text{ if } n_i=0;\\
			(n\bmod 2^i), & \text{ if } n_i=1.\\
	\end{cases}	  
\end{equation}  
An explicit form is
	\begin{equation}\label{closed_A268289}
		\sigma(n)=\sum_{i=0}^{\floor{\log_2(n)}}  \left[ \left(\left(\floor{\frac{n}{2^i}}+1\right)\bmod 2\right)\times 2^i+ (-1)^{\left(\left(\floor{\frac{n}{2^i}}+1\right)\bmod 2\right)}\times (n \bmod  2^i)\right]
	\end{equation}
\end{theorem}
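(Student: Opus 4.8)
The plan is to prove the explicit description (\ref{closed_clear_A268289}) by directly counting $S$-nodes of the divide-and-conquer tree $T_n$ on $n$ leaves, organized by depth, and then to obtain the second display (\ref{closed_A268289}) by a purely cosmetic rewriting. For the rewriting: since $n_i=\floor{n/2^i}\bmod 2$, the condition $n_i=0$ is the same as $c_i:=(\floor{n/2^i}+1)\bmod 2$ being $1$, and one checks at once that in both cases ($n_i=0$ and $n_i=1$) the quantity $\beta_i(n)$ equals $c_i\cdot 2^i+(-1)^{c_i}(n\bmod 2^i)$, which is exactly the $i$-th summand of (\ref{closed_A268289}). So the real content is (\ref{closed_clear_A268289}), and I would base it on two facts. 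First, a node of $T_n$ with $k$ leaves is an $S$-node precisely when $k$ is even, because its two children carry $\ceil{k/2}$ and $\floor{k/2}$ leaves, equal iff $k$ is even and differing by $1$ otherwise; hence $\sigma(n)$ counts the \emph{internal} nodes of $T_n$ whose leaf-count is even. Second, a ``level lemma'': for each $j$ with $0\le j\le\floor{\log_2 n}$, depth $j$ of $T_n$ contains exactly $2^j$ nodes, of which $n\bmod 2^j$ carry $\ceil{n/2^j}$ leaves and the remaining $2^j-(n\bmod 2^j)$ carry $\floor{n/2^j}$ leaves, and no node at depth exceeding $\floor{\log_2 n}$ is internal.

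The level lemma is proved by induction on $j$, using that the children of a node with $s$ leaves carry $\ceil{s/2}$ and $\floor{s/2}$ together with the standard identity $\ceil{\ceil{n/2}/2}=\ceil{n/4}$ and its relatives; the inductive step amounts to checking, according to the parity of $\floor{n/2^j}$, that the two child-sizes produced at depth $j+1$ consolidate with multiplicities $n\bmod 2^{j+1}$ and $2^{j+1}-(n\bmod 2^{j+1})$. At $j=\floor{\log_2 n}$ one has $\floor{n/2^j}=1$, so the depth-$j$ nodes have $1$ or $2$ leaves; the size-$2$ ones are the lowest internal nodes and their only descendants are size-$1$ leaves, which gives the last clause. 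Granting this, I would count internal even-leaf nodes level by level. Fix $j\le\floor{\log_2 n}$; since $\ceil{n/2^j}$ and $\floor{n/2^j}$ differ by at most $1$ and $\floor{n/2^j}\bmod 2=n_j$: if $n_j=1$ the $\floor{n/2^j}$-sized nodes are odd, so the even-leaf nodes at depth $j$ are exactly the $n\bmod 2^j$ nodes of size $\floor{n/2^j}+1\ge 2$ (all internal); if $n_j=0$ then (as the top bit is $1$) $j<\floor{\log_2 n}$, so $\floor{n/2^j}\ge 2$ and the even-leaf nodes are exactly the $2^j-(n\bmod 2^j)$ internal nodes of size $\floor{n/2^j}$. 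In either case this count is $\beta_j(n)$, and summing over $0\le j\le\floor{\log_2 n}$ (there being no internal even-leaf nodes below that depth) yields $\sigma(n)=\sum_{j=0}^{\floor{\log_2 n}}\beta_j(n)$, which is (\ref{closed_clear_A268289}).

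I expect the main obstacle to be the level lemma: it is intuitively clear for balanced binary trees, but the inductive step needs careful floor/ceiling bookkeeping, and the bottom level must be handled honestly since there leaves and internal nodes coexist at the same depth. An alternative route that avoids the level lemma is a straight strong induction on $n$ showing that $\sum_i\beta_i(n)$ obeys the recursion of Theorem~\ref{exp_recursive}: the even case $n=2m$ drops out immediately from $2m\bmod 2^i=2(m\bmod 2^{i-1})$ and $(2m)_i=m_{i-1}$, but the odd case $n=2m+1$ forces one to track a carry in passing from $m$ to $m+1$ (and to compare $\floor{\log_2 n}$ for $m$, $m+1$, $2m+1$), so I would prefer the level-by-level counting argument above.
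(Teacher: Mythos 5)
Your proposal is correct and follows essentially the same route as the paper's proof: both count $S$-nodes level by level using the facts that a divide-and-conquer node is an $S$-node iff its leaf-count is even and that at depth $i$ exactly $n\bmod 2^i$ of the $2^i$ nodes carry $\ceil{n/2^i}$ leaves while the rest carry $\floor{n/2^i}$, then pass to the explicit form by the same parity rewriting. The only difference is presentational — you isolate the leaf-distribution fact as a "level lemma" to be proved by induction on depth, whereas the paper asserts it directly.
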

\begin{proof}
	Because every SD-tree is a full complete binary tree, there are $2^i$ nodes at level  $i$ except for the bottom level. At each level, the descendant leaf nodes are almost evenly divided between the nodes: at the $i^\text{th}$ level, each of  the $2^i$ nodes has at least $\floor{\frac{n}{2^i}}$ leaf descendants, and $(n\bmod 2^i)$ of these nodes have an additional leaf descendant. 
	
	Consider the binary representation of the number $n$ where the bits are ordered from least significant to most significant.  
	If the $i^\text{th}$ bit of  $n$ is $0$, then $\floor{\frac{n}{2^i}}$ is even and $2^i-(n\bmod 2^i)$ nodes have an even number of  descendants. If the $i^\text{th}$ bit of  $n$ is $1$, then $\floor{\frac{n}{2^i}}$ is odd, and $(n\bmod 2^i)$ nodes have an even number of  descendants. A node in a divide-and-conquer tree is an $S$-node if it has an even number of leaf descendants. So $\beta_i(n)$, the number of $S$-nodes at level $i$, is
	\begin{equation}\label{eqA}
		\begin{cases}
			2^i-(n\bmod 2^i), & \text{ if the}\ i^\text{th} \text{ bit  of $n$ is }  0;\\
			(n\bmod 2^i), & \text{ if the}\ i^\text{th} \text{ bit of $n$ is }  1.\\
		\end{cases}	  
	\end{equation}  
	Equation \ref{closed_clear_A268289} is obtained by summing across the levels.
		
	We observe the following two equations: 
	\begin{equation}\label{eqB}
		\left(\floor{\frac{n}{2^i}}+1\right)\bmod 2=
		\begin{cases}
			$1$,& \text{ if the}\ i^\text{th} \text{ bit of $n$ is }  0;\\ 
			$0$,& \text{ if the}\ i^\text{th} \text{ bit of $n$ is }  1.\\ 
		\end{cases}
	\end{equation}	
	\begin{equation}\label{eqC}
		(-1)^{\left(\left(\floor{\frac{n}{2^i}}+1\right)\bmod 2\right)}=
		\begin{cases}
			$-1$,& \text{ if the}\ i^\text{th} \text{ bit of $n$ is }  0;\\ 
			$ 1$,& \text{ if the}\ i^\text{th} \text{ bit of $n$ is }  1.\\ 
		\end{cases}
	\end{equation}	
	Putting  Equations (\ref{eqA}), (\ref{eqB}),  and (\ref{eqC}) together, we see that the number of $S$-nodes at level $i$ is
	$$
		\left(\left(\floor{\frac{n}{2^i}}+1\right)\bmod 2\right)
		\times 2^i+ (-1)^{\left(\left(\floor{\frac{n}{2^i}}+1\right)\bmod 2\right)}\times (n \bmod  2^i)
	$$
	Equation \ref{closed_A268289} is obtained by summing across the levels.
\end{proof}
    Hwang provides an extensive analysis of solutions to divide-and-conquer recurrences in \cite{hwang17}. Equations (\ref{closed_clear_A268289}) and (\ref{closed_A268289}) perhaps might be obtained from Equation (\ref{recursive_A268289}) using the methods in that paper. Instead, in Theorem \ref{exp_closed}, we prove Equation (\ref{closed_A268289}) directly by analyzing $S$-nodes.

	Equation (\ref{closed_A268289}) is almost entirely in terms of bit operations and shifts, so is easy to calculate. $\floor{\log_2(n)}$ is one less than the number of bits in $n$. The expression $(\floor{\frac{n}{2^i}}+1)\bmod 2)$ is just the negation of the $i^\text{th}$ bit of $n$.
\subsubsection{Counting divide-and-conquer D-nodes}\label{danc_dnodes_section}

We state here formulas for the number of $D$-nodes in a divide-and-conquer form with $n$ leaves. These are similar to Theorems \ref{exp_recursive} and \ref{exp_closed} for the number of $S$-nodes. The number of $D$-nodes corresponds to sequence OEIS \seqnum{A296062}
\cite{OEIS}
%\cite{OEIS_A296062}
. 
This fact is noted in \cite{coronado20}.

\begin{theorem}\label{exp_D_recursive} \cite{coronado20}
	The number of $D$-nodes in a divide-and-conquer SD-tree with $n$ leaf nodes is
	$$
		\delta(n)=
		\begin{cases}
			2\delta(m), & \text{if}\ n=2m; \\
			\delta(m)+\delta(m+1)+1, & \text{if}\ n=2m+1;\\
			0 & \text{if}\ n=1.
		\end{cases}
	$$
\end{theorem}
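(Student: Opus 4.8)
The plan is to argue by induction on $n$, in direct parallel with the proof of Theorem~\ref{exp_recursive}, using the structural fact recorded in the proof of Proposition~\ref{pair_iso}: every subtree of a divide-and-conquer SD-tree is itself a divide-and-conquer SD-tree on its own leaf set. The base case $n=1$ is immediate, since a single leaf has no internal nodes and hence $\delta(1)=0$.

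For the inductive step I would split on the parity of $n$. If $n=2m$, the two children of the root each have exactly $m$ leaf descendants, so the root is an $S$-node and contributes nothing to the $D$-count; the two subtrees are divide-and-conquer SD-trees on $m$ leaves, so by the induction hypothesis each contributes $\delta(m)$, giving $\delta(2m)=2\delta(m)$. If $n=2m+1$, the children of the root have $m$ and $m+1$ leaf descendants; since these differ, the root is a $D$-node contributing $1$, while the two subtrees contribute $\delta(m)$ and $\delta(m+1)$ by induction, whence $\delta(2m+1)=\delta(m)+\delta(m+1)+1$. One should check that the recursion is well-founded — every recursive call in both cases has argument at least $1$ whenever $m\ge 1$ — but there is no real obstacle here; the argument is essentially bookkeeping once Proposition~\ref{pair_iso} is in hand.

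A shorter alternative, which I would mention as a remark, bypasses the induction entirely by combining Lemma~\ref{dnodes} with Theorem~\ref{exp_recursive}. Writing $\delta(n)=(n-1)-\sigma(n)$ and substituting the $S$-node recursion gives $\delta(2m)=(2m-1)-(2\sigma(m)+1)=2((m-1)-\sigma(m))=2\delta(m)$, and $\delta(2m+1)=2m-(\sigma(m)+\sigma(m+1))=((m-1)-\sigma(m))+(m-\sigma(m+1))+1=\delta(m)+\delta(m+1)+1$, with $\delta(1)=0-\sigma(1)=0$. I would present the direct inductive argument as the main proof, since it mirrors the $S$-node case and keeps this subsection self-contained, and note the one-line derivation for the reader who prefers it.
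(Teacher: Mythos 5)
Your proposal is correct, and in fact it contains the paper's proof verbatim as your ``shorter alternative'': the paper's entire argument for Theorem~\ref{exp_D_recursive} is the one-line deduction from Theorem~\ref{exp_recursive} and Lemma~\ref{dnodes}, i.e.\ $\delta(n)=(n-1)-\sigma(n)$ followed by substitution of the $\sigma$ recursion, exactly as in your closing computation. Your chosen main route --- a direct structural induction in which the root of a $2m$-leaf divide-and-conquer tree is an $S$-node with two $m$-leaf divide-and-conquer subtrees, and the root of a $(2m+1)$-leaf tree is a $D$-node over subtrees of sizes $m$ and $m+1$ --- is equally valid and is really the $D$-node mirror image of the paper's proof of Theorem~\ref{exp_recursive}; it buys self-containedness at the cost of repeating an argument the paper has already made once. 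The paper's choice avoids that duplication by leaning on the complementarity $s(n)+d(n)=n-1$, which is arguably the cleaner bookkeeping given that both recursions are stated side by side. Either way the algebra checks out, so the only substantive difference is expository economy.
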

\begin{proof}
    Follows from Theorem \ref{exp_recursive} and Lemma \ref{dnodes}.
\end{proof}
\begin{corollary}\label{cor_dnodes_half_n}
    The number of $D$-nodes in a divide-and-conquer SD-tree with $n$ leaf nodes is less than $\floor{\frac{n}{2}}$.
\end{corollary}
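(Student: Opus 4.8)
The plan is to translate the statement about $D$-nodes into a statement about $S$-nodes, where a lower bound has already been proved. By Lemma~\ref{dnodes}, a divide-and-conquer tree on $n$ leaves has $\delta(n) = (n-1) - \sigma(n)$ many $D$-nodes, and Proposition~\ref{prop_snodes_half_n} gives $\sigma(n) \ge \floor{n/2}$. Combining the two,
\[
\delta(n) \;\le\; (n-1) - \floor{n/2},
\]
so the first step is just this substitution, and the rest is unwinding the floor according to the parity of $n$.

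If $n = 2m$ is even, then $(n-1)-\floor{n/2} = m-1 < m = \floor{n/2}$, so the claimed bound holds at once, in fact with slack $1$. For odd $n = 2m+1$ the same chain only yields $\delta(n) \le m = \floor{n/2}$, so the strict inequality is not yet in hand and a little more work is needed. The clean way to get it is to argue directly on the $D$-node recurrence of Theorem~\ref{exp_D_recursive}: check the base cases by hand, then induct, using $\delta(2m) = 2\delta(m)$ and $\delta(2m+1) = \delta(m)+\delta(m+1)+1$ together with the inductive bounds on $\delta(m)$ and $\delta(m+1)$, tracking the floors carefully so that the estimate at an odd layer does not lose the needed strictness.

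The main obstacle is precisely this odd case: the crude estimate $\delta(n) \le (n-1)-\floor{n/2}$ is tight exactly on the odd values of $n$ for which $\sigma(n) = \floor{n/2}$, so the strictness cannot be extracted from Proposition~\ref{prop_snodes_half_n} alone and must instead come from the parity bookkeeping built into the recurrence (equivalently, from the level-by-level count in Theorem~\ref{exp_closed}, where one reads off how many nodes at each level have an even number of descendants). Once that odd-layer accounting is pinned down, the even case and the floor arithmetic are routine, and the bound $\delta(n) < \floor{n/2}$ follows.
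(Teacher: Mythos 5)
Your first step---converting to $S$-nodes via Lemma~\ref{dnodes} and invoking Proposition~\ref{prop_snodes_half_n}---is a genuinely different route from the paper, which simply inducts on the recurrence of Theorem~\ref{exp_D_recursive}; it cleanly yields $\delta(n) \le (n-1)-\floor{\frac{n}{2}}$, hence the strict bound for even $n$ and $\delta(n)\le\floor{\frac{n}{2}}$ for odd $n$. You correctly flag that the odd case is where strictness goes missing, but the completion you sketch (parity bookkeeping in the recurrence, or the level counts of Theorem~\ref{exp_closed}) cannot succeed, because the strict inequality is in fact false for infinitely many odd $n$: from $\delta(2m)=2\delta(m)$ and $\delta(2m+1)=\delta(m)+\delta(m+1)+1$ one computes $\delta(3)=1=\floor{\frac{3}{2}}$, $\delta(5)=\delta(2)+\delta(3)+1=2=\floor{\frac{5}{2}}$, $\delta(11)=\delta(5)+\delta(6)+1=5=\floor{\frac{11}{2}}$, and equality persists along $n=3,5,11,21,43,\dots$ (alternating binary expansions). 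These values can also be read off the bold entries of Table~\ref{table_dnodes}.

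So the gap in your proposal is real and cannot be filled: no amount of tracking the odd layers will recover strictness where equality actually occurs. The printed statement is itself too strong (the paper's own one-line proof, ``true for $n=1$'' plus induction, already stumbles at $\delta(1)=0=\floor{\frac{1}{2}}$ and at $n=3$). What your argument does establish---$\delta(n)\le\floor{\frac{n}{2}}$, with strict inequality when $n\ge 2$ is even---is the correct version, and the non-strict form is all the paper actually uses downstream (Corollary~\ref{cor_dnodes_ineq_odd_n}, Theorem~\ref{theorem_upperbound_Colless_minD}).
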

\begin{proof}
    Follows by induction from Theorem \ref{exp_D_recursive}, after observing that it is true for $n=1$.
\end{proof}
\begin{corollary}\label{cor_dnodes_ineq_odd_n}
    If $n$ is odd, then $1 \le \delta(n) \le \floor{\frac{n}{2}}$.
\end{corollary}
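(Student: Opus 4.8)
The plan is to establish the two inequalities separately, writing $n=2m+1$; since the claim $\delta(n)\ge 1$ presupposes that the divide-and-conquer tree has an internal node, the statement is to be read for $n>1$, so $m\ge 1$ (indeed $\delta(1)=0$).

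For the lower bound I would simply invoke the recursion of Theorem~\ref{exp_D_recursive}. As $n$ is odd, $\delta(n)=\delta(m)+\delta(m+1)+1$, and $\delta$ is non-negative since it counts nodes, so $\delta(n)\ge 1$. Morally, the ``$+1$'' in the odd branch of the recursion is the root itself: on an odd number of leaves the two children of the root have sizes $m$ and $m+1$, hence the root is a $D$-node. This is the only step in which oddness of $n$ is used directly.

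For the upper bound the shortest route goes through the $S$-node count. By Lemma~\ref{dnodes}, $\delta(n)=(n-1)-\sigma(n)$, and by Proposition~\ref{prop_snodes_half_n}, $\sigma(n)\ge\floor{n/2}=m$. Combining these, $\delta(n)\le (n-1)-m = 2m-m = m = \floor{n/2}$. One may instead quote Corollary~\ref{cor_dnodes_half_n} directly; it is worth noting that for odd $n$ the value $\floor{n/2}$ is actually attained (e.g.\ $\delta(3)=1$), so the upper bound in the odd case is best possible.

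I do not expect any genuine obstacle: the corollary is a bookkeeping consequence of Theorem~\ref{exp_D_recursive}, Lemma~\ref{dnodes} and Proposition~\ref{prop_snodes_half_n}. The only points that deserve a line of care are the implicit hypothesis $n>1$ needed for the lower bound, and the elementary identity $(n-1)-\floor{n/2}=\floor{n/2}$ valid precisely for odd $n$, which is why oddness also governs when the upper bound is tight.
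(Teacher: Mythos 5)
Your proof is correct and is essentially the paper's argument: the paper's proof of this corollary reads, in its entirety, ``Follows from Theorem~\ref{exp_D_recursive} and Corollary~\ref{cor_dnodes_half_n},'' which is exactly your lower bound (the $+1$ in the odd branch of the recursion, with $\delta\ge 0$) together with the route you mention as an alternative for the upper bound. The one remark worth recording is that your preferred derivation of the upper bound, via Lemma~\ref{dnodes} and $\sigma(n)\ge\floor{n/2}$ from Proposition~\ref{prop_snodes_half_n}, is actually the safer of the two: the strict inequality $\delta(n)<\floor{n/2}$ asserted in Corollary~\ref{cor_dnodes_half_n} is not literally correct at $n=3$, where $\delta(3)=1=\floor{3/2}$ (precisely your tightness example), whereas your identity $(n-1)-\floor{n/2}=\floor{n/2}$ for odd $n$ delivers the non-strict bound cleanly.
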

\begin{proof}
    Follows from Theorem  \ref{exp_D_recursive} and Corollary \ref{cor_dnodes_half_n}.
\end{proof}
\begin{corollary} \label{cor_dnodes2k}
         $\delta(2^k)=0$.
\end{corollary}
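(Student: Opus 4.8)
The plan is to induct on $k$, feeding the value $n = 2^k$ into the recurrence for $\delta(n)$ established in Theorem~\ref{exp_D_recursive}. The base case $k = 0$ is handled directly by that recurrence: since $2^0 = 1$, the third branch gives $\delta(1) = 0$. For the inductive step, assume $\delta(2^{k-1}) = 0$ for some $k \ge 1$. Because $2^k = 2\cdot 2^{k-1}$ is even, I would invoke the $n = 2m$ branch of the recurrence with $m = 2^{k-1}$, which gives $\delta(2^k) = 2\delta(2^{k-1}) = 2\cdot 0 = 0$. This closes the induction.

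There is essentially no obstacle: the only hypothesis to verify is that $2^k$ is even (so that the correct branch of Theorem~\ref{exp_D_recursive} applies), which is immediate for $k \ge 1$, and the $k = 0$ case is the stated initial condition. The whole argument is a one-line unwinding $\delta(2^k) = 2^k\,\delta(1) = 0$.

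It may be worth recording a more conceptual restatement of the same fact in the proof: a divide-and-conquer tree on $2^k$ leaves is the perfect binary tree, since at every internal node the number of descendant leaves is a power of $2$ and therefore splits into two equal halves; hence every internal node is an $S$-node and there are no $D$-nodes. Equivalently, the same one-line induction applied to Theorem~\ref{exp_recursive} gives $\sigma(2^k) = 2^k - 1$, and then Lemma~\ref{dnodes} yields $\delta(2^k) = (2^k - 1) - \sigma(2^k) = 0$.
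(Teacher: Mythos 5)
Your proof is correct and is essentially the paper's own argument: the paper's proof reads ``Follows from repeated application of Theorem~\ref{exp_D_recursive},'' which is precisely your induction $\delta(2^k)=2\delta(2^{k-1})=\cdots=2^k\delta(1)=0$. The additional conceptual remarks (perfect trees have only $S$-nodes) are a nice sanity check but not needed.
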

\begin{proof}
    Follows from repeated application of Theorem \ref{exp_D_recursive}.
\end{proof}

The following Corollary~\ref{yet_another_dnodes_prop} applies to $n$ when $n$ is not a power of $2$. 

\begin{corollary} \label{yet_another_dnodes_prop}
    Let $n=2^k+r$, where $0<r<2^k$, and let $\rho_1$ be the position of the smallest non-$0$ bit in $n$. The number of $D$-nodes in a divide-and-conquer SD-tree with $n$ leaf nodes is
    $$
    \delta(n)=\frac{1}{2}(\delta(n-1)+\delta(n+1))+1-\rho_1
    $$
\end{corollary}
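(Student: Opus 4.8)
The plan is to prove the identity by induction, using nothing more than the recurrence of Theorem~\ref{exp_D_recursive}. Observe first that $\rho_1$ is precisely the $2$-adic valuation of $n$: writing $n=2^{\rho_1}d$ with $d$ odd, the hypothesis $0<r<2^k$ says $n$ is not a power of $2$, so $d\ge 3$. I would induct on $\rho_1$ (equivalently, do strong induction on $n$), since halving $n$ drops $\rho_1$ by one and keeps us inside the hypotheses of the corollary.

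\emph{Base case $\rho_1=0$.} Here $n$ is odd, say $n=2m+1$, so $n-1=2m$ and $n+1=2(m+1)$ are both even. Theorem~\ref{exp_D_recursive} gives $\delta(n)=\delta(m)+\delta(m+1)+1$, $\delta(n-1)=2\delta(m)$, and $\delta(n+1)=2\delta(m+1)$, hence $\tfrac12\bigl(\delta(n-1)+\delta(n+1)\bigr)=\delta(m)+\delta(m+1)=\delta(n)-1$, which is the asserted formula since $\rho_1=0$.

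\emph{Inductive step $\rho_1\ge 1$.} Now $n=2m$ is even with $m=n/2=2^{\rho_1-1}d$; since $d\ge 3$, $m$ is again not a power of $2$, its smallest nonzero bit sits at position $\rho_1-1$, and $m\ge 3$ so that $\delta(m-1),\delta(m),\delta(m+1)$ are all defined. Applying Theorem~\ref{exp_D_recursive} to $n=2m$, to the odd number $n-1=2(m-1)+1$, and to the odd number $n+1=2m+1$ yields $\delta(n)=2\delta(m)$ together with
\[
\tfrac12\bigl(\delta(n-1)+\delta(n+1)\bigr)=\delta(m)+1+\tfrac12\bigl(\delta(m-1)+\delta(m+1)\bigr).
\]
The induction hypothesis for $m$ reads $\delta(m)=\tfrac12\bigl(\delta(m-1)+\delta(m+1)\bigr)+1-(\rho_1-1)$, i.e. $\tfrac12\bigl(\delta(m-1)+\delta(m+1)\bigr)=\delta(m)+\rho_1-2$. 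Substituting and collecting terms gives $\tfrac12\bigl(\delta(n-1)+\delta(n+1)\bigr)=2\delta(m)+\rho_1-1=\delta(n)+\rho_1-1$, which rearranges to $\delta(n)=\tfrac12\bigl(\delta(n-1)+\delta(n+1)\bigr)+1-\rho_1$.

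The computation is routine arithmetic; the only points that need care are the parity bookkeeping in Theorem~\ref{exp_D_recursive} — one uses the ``$n=2m$'' branch for $\delta(n)$ but the ``$2m+1$'' branch for $\delta(n\pm 1)$ — and the verification that the halved number $m$ still satisfies the corollary's hypotheses (not a power of $2$, with $\rho_1$ decremented by one), which is immediate from $d\ge 3$. I do not expect any substantive obstacle beyond this bookkeeping.
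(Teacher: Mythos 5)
Your proof is correct and follows essentially the same strategy as the paper's: the odd case is handled identically via the recurrence of Theorem~\ref{exp_D_recursive}, and the even case is an induction on the $2$-adic valuation obtained by halving $n$. Your inductive step is somewhat cleaner than the paper's, which detours through $\delta(n\pm 2)$ and an auxiliary identity before solving a linear equation for $\delta(n)$, whereas you expand $\delta(n\pm 1)$ directly via the odd branch of the recurrence; but the ingredients and the induction structure are the same.
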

\begin{proof}
    If $r=2d+1$ is odd:  
    \begin{flalign*}
        \delta(n) &= \delta(2(2^{k-1}+d)+1)\\
        &= \delta(2^{k-1}+d)+\delta(2^{k-1}+d+1)+1&\text{ by Theorem \ref{exp_D_recursive}}&\\
        &= \frac{1}{2}(2\cdot \delta(2^{k-1}+d)+2\cdot \delta(2^{k-1}+d+1)) +1\\
        &= \frac{1}{2}(\delta(2^k+2d)+\delta(2^k+2d+2)) +1 &\text{ by Theorem \ref{exp_D_recursive}}\\
        &= \frac{1}{2}(\delta(n-1)+\delta(n+1)) +1-0 \\
        &= \frac{1}{2}(\delta(n-1)+\delta(n+1)) +1 -\rho_1 &\text{ since $n$ is odd}\\
    \end{flalign*}
    Now let $r=2d$ be even. First, we use what has been proved for $n$ odd:
    \begin{equation} \label{1steq_69}
        \delta(n-1) 
        = \frac{1}{2}(\delta(n-2)+\delta(n)) +1 
        = \frac{1}{2}\delta(n-2)+\frac{1}{2}\delta(n) +1 
    \end{equation}
    \begin{equation}\label{2ndeq_69}
        \delta(n+1) 
        = \frac{1}{2}(\delta(n)+\delta(n+2)) +1 
        = \frac{1}{2}\delta(n+2) +\frac{1}{2}\delta(n)+1 
    \end{equation}
    Equations \ref{1steq_69} and \ref{2ndeq_69} give:
    $$
        \delta(n-1)+\delta(n+1) 
        = \delta(n) +\frac{1}{2}(\delta(n-2)+\delta(n+2)) +2 
    $$
    Thus
 \begin{equation} \label{n_plusminus_2}
    \delta(n-1)+\delta(n+1)-2-\delta(n) = \frac{1}{2}(\delta(n-2)+\delta(n+2)) 
 \end{equation}
 Now we apply induction on $k$, noting that the smallest non-$0$ bit in $\frac{n}{2}=(2^{k-1}+d)$ is ($\rho_1-1$):
   \begin{flalign*}
        \delta(n) &= \delta(2(2^{k-1}+d)) \\
         &= 2\delta(2^{k-1}+d)& \text{ by Theorem \ref{exp_D_recursive}}&\\
         &= 2\cdot (\frac{1}{2}[\delta(2^{k-1}+d-1)+\delta(2^{k-1}+d+1)]+1-(\rho_1-1))& \text{ by induction}\\
         &=  \frac{1}{2}[2\cdot\delta(2^{k-1}+d-1)+2\cdot\delta(2^{k-1}+d+1)]+2-(2\cdot\rho_1-2) \\
         &=  \frac{1}{2}[\delta(2^k+2d-2)+\delta(2^k+2d+2)]+2-(2\cdot\rho_1-2)& \text{ by Theorem \ref{exp_D_recursive}}\\
         &=  \frac{1}{2}(\delta(n-2)+\delta(n+2))+4-2\cdot\rho_1 \\
         &=  \delta(n-1)+\delta(n+1)-2-\delta(n)+4-2\cdot\rho_1& \text{  by Equation \ref{n_plusminus_2}}\\
         &=  \delta(n-1)+\delta(n+1)-\delta(n)+2-2\cdot\rho_1 
    \end{flalign*}
So
    $$
        2\cdot \delta(n) 
        = \delta(n-1)+\delta(n+1)+2-2\cdot\rho_1
    $$
    and
$$
    \delta(n) 
        = \frac{1}{2}(\delta(n-1)+\delta(n+1))+1-\rho_1
$$
\end{proof}
\begin{theorem}\label{exp_D_closed}
	Let the binary decomposition of $n$ be $n_k \dots n_1 n_0$, where $n_k=1$. Then the number of $D$-nodes in a divide-and-conquer SD-tree with $n$ leaf nodes  is
	$$
			\delta(n)=\sum_{i=0}^{\floor{\log_2(n)}-1} \lambda_i(n) \text{, where } \lambda_i(n)= 
			\begin{cases}
			    (n\bmod 2^i), & \text{ if } n_i=0;\\
			    2^i-(n\bmod 2^i), & \text{ if } n_i=1.\\
	        \end{cases}	  
    $$ 
An explicit form is
	$$
		\delta(n)=\sum_{i=0}^{\floor{\log_2(n)}-1}  n_i\cdot 2^i + (-1)^{n_i}\cdot (n \bmod 2^i)
	$$
\end{theorem}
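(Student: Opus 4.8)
The plan is to derive this from Theorem~\ref{exp_closed} together with Lemma~\ref{dnodes}, which says that the number of $D$-nodes in an $n$-leaf tree is $(n-1)-\sigma(n)$. First I would record the elementary identity $\lambda_i(n)=2^i-\beta_i(n)$ for every $i$, where $\beta_i(n)$ is the level-$i$ count appearing in Theorem~\ref{exp_closed}: when $n_i=0$ we have $\beta_i(n)=2^i-(n\bmod 2^i)$ and $\lambda_i(n)=(n\bmod 2^i)$, and when $n_i=1$ the two are swapped, so in both cases $\beta_i(n)+\lambda_i(n)=2^i$. Write $k=\floor{\log_2(n)}$ and $n=2^k+r$ with $0\le r<2^k$.

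Next I would use the structural fact established in the proof of Theorem~\ref{exp_closed} that a divide-and-conquer SD-tree, being a complete full binary tree, has exactly $2^i$ nodes at each level $i$ with $0\le i\le k-1$, and that all of these are internal (each has at least $\floor{n/2^i}\ge 2$ leaf descendants). Hence the internal nodes at levels $0,\dots,k-1$ number $\sum_{i=0}^{k-1}2^i=2^k-1$, so the internal nodes at the bottom level $k$ number $(n-1)-(2^k-1)=r$. Every internal node at level $k$ has two leaf children, hence exactly two leaf descendants, so it is an $S$-node; and indeed Theorem~\ref{exp_closed} gives $\beta_k(n)=(n\bmod 2^k)=r$ for the level-$k$ $S$-node count (using $n_k=1$), which matches, so level $k$ contributes no $D$-nodes.

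It then follows that the number of $D$-nodes at level $i$ is $2^i-\beta_i(n)=\lambda_i(n)$ for each $0\le i\le k-1$ and is $0$ at level $k$, and summing over levels yields $\delta(n)=\sum_{i=0}^{k-1}\lambda_i(n)$. Equivalently, one may compute $\delta(n)=(n-1)-\sigma(n)=(n-1)-\sum_{i=0}^{k}\beta_i(n)$ and telescope using $\beta_k(n)=r$ and $\sum_{i=0}^{k-1}2^i=2^k-1$ to get $\delta(n)=(2^k-1)-\sum_{i=0}^{k-1}\beta_i(n)=\sum_{i=0}^{k-1}(2^i-\beta_i(n))$. For the explicit form I would then check the two cases of the definition of $\lambda_i(n)$ against $n_i\cdot 2^i+(-1)^{n_i}(n\bmod 2^i)$: when $n_i=0$ this equals $0+(n\bmod 2^i)$, and when $n_i=1$ it equals $2^i-(n\bmod 2^i)$, so the two expressions agree.

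I do not anticipate a serious obstacle; the only point needing care is the bottom-level bookkeeping, namely confirming that the level-$k$ internal nodes are exactly the $r$ nodes with two leaf children and that these are all $S$-nodes, but this is immediate from the complete-full-binary structure of divide-and-conquer trees. As an alternative I could instead mirror the descendant-counting argument of Theorem~\ref{exp_closed} directly, replacing "even number of leaf descendants'' ($S$-node) by "odd number of leaf descendants'' ($D$-node) at each level $i\le k-1$; I would present whichever version reads more cleanly.
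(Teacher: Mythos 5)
Your proposal is correct and follows essentially the same route as the paper: the paper's proof is a one-line appeal to Theorem~\ref{exp_closed}, the fact that there are $2^i$ interior nodes at level $i$, and the fact that all level-$\floor{\log_2(n)}$ nodes are leaves or $S$-nodes, which is exactly the argument you spell out via $\lambda_i(n)=2^i-\beta_i(n)$. Your version just supplies the bookkeeping the paper leaves implicit.
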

\begin{proof}
    Follows from Theorem \ref{exp_closed}, the fact that there are $2^i$ interior nodes at the $i^\text{th}$ level of a divide-and-conquer form and the fact that all nodes at the $\floor{\log_2(n)}$ level are either leaves or $S$-nodes.
\end{proof}
\begin{lemma} \label{symmetricity}
    $\delta(2^{k+1}-r) = \delta(2^k+r)$, when $0 \le r \le 2^k$.
\end{lemma}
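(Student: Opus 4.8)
The plan is to prove the identity by induction on $k$, feeding the two sides through the recursion of Theorem~\ref{exp_D_recursive}; this symmetry is the discrete shadow of the reflection symmetry $T(x)=T(1-x)$ of the Takagi function, and the recursion is precisely the functional equation that makes it work. The base case $k=0$, where $r\in\{0,1\}$, is immediate: $\delta(2)=2\delta(1)=0=\delta(1)$ handles $r=0$, and $r=1$ is the trivial identity $\delta(1)=\delta(2)$. (More generally the endpoints $r=0$ and $r=2^k$ are trivial by Corollary~\ref{cor_dnodes2k}.)

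For the inductive step, assume $\delta(2^k-t)=\delta(2^{k-1}+t)$ for all $0\le t\le 2^{k-1}$, and fix $r$ with $0\le r\le 2^k$; I would split on the parity of $r$. If $r=2s$ is even, then $0\le s\le 2^{k-1}$, and $2^{k+1}-r=2(2^k-s)$ and $2^k+r=2(2^{k-1}+s)$ are both positive and even, so Theorem~\ref{exp_D_recursive} gives $\delta(2^{k+1}-r)=2\delta(2^k-s)$ and $\delta(2^k+r)=2\delta(2^{k-1}+s)$, and these agree by the inductive hypothesis with $t=s$. If $r=2s+1$ is odd, then $1\le r\le 2^k$ forces $0\le s\le 2^{k-1}-1$, hence also $0\le s+1\le 2^{k-1}$; now $2^{k+1}-r=2(2^k-s-1)+1$ and $2^k+r=2(2^{k-1}+s)+1$ are odd, so Theorem~\ref{exp_D_recursive} gives
$$\delta(2^{k+1}-r)=\delta\big(2^k-(s+1)\big)+\delta(2^k-s)+1,\qquad \delta(2^k+r)=\delta(2^{k-1}+s)+\delta\big(2^{k-1}+(s+1)\big)+1.$$
Applying the inductive hypothesis with $t=s$ and with $t=s+1$ rewrites the first two summands on the left as $\delta(2^{k-1}+(s+1))$ and $\delta(2^{k-1}+s)$, so the two right-hand sides coincide.

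There is no real obstacle here beyond range bookkeeping: in every appeal to the inductive hypothesis one must check that the argument $t$ lies in $[0,2^{k-1}]$ (the tight spot is $s+1\le 2^{k-1}$ in the odd case, which is why $r\le 2^k$ is needed), and that the halved arguments passed to Theorem~\ref{exp_D_recursive} are at least $1$ so the non-base clauses of the recursion apply ($2^k-s\ge 2^{k-1}\ge 1$ in the even case, similarly in the odd case). Both are routine. As an alternative one could instead combine the closed form of Theorem~\ref{exp_D_closed} with Lemma~\ref{weightlemma}, but the induction above is shorter and self-contained.
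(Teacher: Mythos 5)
Your proof is correct and is exactly the argument the paper intends: its proof of Lemma~\ref{symmetricity} reads only ``Follows from Theorem~\ref{exp_D_recursive} using induction,'' and your parity split with the inductive hypothesis at $t=s$ and $t=s+1$ is the straightforward way to carry that out. The range bookkeeping ($s+1\le 2^{k-1}$ in the odd case) is handled correctly.
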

\begin{proof}
    Follows from Theorem \ref{exp_D_recursive} using induction.
\end{proof}
\begin{lemma} \label{lambdadiff}
    Let $n=2^k+r$, with $r$ odd and $0 < r < 2^k$, and let $n$ have as its binary expansion $n_k n_{k-1} \dots n_0$. Let $0 < i<k$. Then
    $$
        \lambda_i(n)-\lambda_i(n-1)  =
        \begin{cases}
 			    1, & \text{ if }\ n_i=0;\\
			    -1, & \text{ if }\ n_i=1.\\
        \end{cases}
    $$
\end{lemma}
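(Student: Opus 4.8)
The plan is a direct computation from the definition of $\lambda_i$ together with a comparison of the binary expansions of $n$ and $n-1$. First I would record the basic structural facts. Since $0 < i < k$ we have $k \ge 2$, so $2^k$ is even; as $r$ is odd, $n = 2^k + r$ is odd, hence $n_0 = 1$ and $(n-1)_0 = 0$. Subtracting $1$ from an odd number only flips the units bit, so $(n-1)_j = n_j$ for every $j \ge 1$; in particular $(n-1)_i = n_i$ for the index $i \ge 1$ under consideration. Likewise, since $i \ge 1$, the residue $n \bmod 2^i$ has binary expansion $n_{i-1}\cdots n_1\, 1$ while $(n-1)\bmod 2^i$ has binary expansion $n_{i-1}\cdots n_1\, 0$, so these two residues differ by exactly $1$:
\[
(n \bmod 2^i) - \bigl((n-1)\bmod 2^i\bigr) = 1 .
\]

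Next I would split on the value of $n_i$ and apply the definition of $\lambda_i$, using $(n-1)_i = n_i$ so that $\lambda_i(n)$ and $\lambda_i(n-1)$ are computed by the same branch of the case distinction. If $n_i = 0$, then $\lambda_i(n) = n\bmod 2^i$ and $\lambda_i(n-1) = (n-1)\bmod 2^i$, so the displayed identity above gives $\lambda_i(n) - \lambda_i(n-1) = 1$. If $n_i = 1$, then $\lambda_i(n) = 2^i - (n\bmod 2^i)$ and $\lambda_i(n-1) = 2^i - \bigl((n-1)\bmod 2^i\bigr)$, so the $2^i$ terms cancel and $\lambda_i(n) - \lambda_i(n-1) = -\bigl[(n\bmod 2^i) - ((n-1)\bmod 2^i)\bigr] = -1$. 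This matches the claimed formula in both cases.

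There is essentially no hard step here: the only point that needs care is the hypothesis $i \ge 1$, which is exactly what guarantees that bit $0$ lies strictly below bit $i$, so that passing from $n$ to $n-1$ changes the low-order residue modulo $2^i$ by precisely $1$ and leaves bit $i$ (hence the active branch of the definition of $\lambda_i$) unchanged. The constraint $i < k$ is not really needed for the statement, but it is harmless and keeps $i$ in the range where $\lambda_i$ is defined in Theorem~\ref{exp_D_closed}.
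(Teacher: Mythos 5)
Your proof is correct and follows essentially the same route as the paper's: both arguments note that since $r$ (hence $n$) is odd, passing from $n$ to $n-1$ changes only the units bit, so bit $i$ and hence the active branch of $\lambda_i$ are unchanged while the residue modulo $2^i$ drops by exactly $1$, and then compute the two cases directly from the definition. Your write-up is slightly more explicit about why $(n-1)_i = n_i$, but the substance is identical.
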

\begin{proof}
    For $0<i<k$, all bits of $r$ and $(r-1)$ besides the $0^\text{th}$ are the same, since $r$ is odd. Thus,
    $$
        \lambda_i(n)-\lambda_i(n-1)  =
        \begin{cases}
 			    (n \bmod 2^i)-((n-1) \bmod 2^i)=1, & \text{ if }\ n_i=0;\\
			    (2^i-(n\bmod 2^i))-(2^i-((n-1)\bmod 2^i))=-1, & \text{ if }\ n_i=1.\\
        \end{cases}
    $$
\end{proof}
\begin{comment}
\begin{lemma} \label{dnodes_diff}
    Let $n=2^k+r$, with $r$ odd and $0 < r < 2^k$, and let $n$ have as its binary expansion $n_k n_{k-1} \dots n_0$. Then
    \begin{equation}
        \delta(n) = \delta(n-1) + (k-2\cdot\omega(n)+4)
    \end{equation}
\end{lemma}
\begin{proof}
    \begin{align*}
        \delta(n)-\delta(n-1)
        &= \sum_{i=0}^{k-1} \lambda_i(n)-\sum_{i=0}^{k-1} \lambda_i(n-1) \text{, by Proposition \ref{exp_D_closed}}\\
        &= \sum_{i=0}^{k-1} (\lambda_i(n)- \lambda_i(n-1))\\
        &= (\lambda_0(n)- \lambda_0(n-1)) + \sum_{i=1}^{k-1} (\lambda_i(n)- \lambda_i(n-1))\\
    \end{align*}
    $n_0=1$, so $\lambda_0(n)-\lambda_0(n-1)=(2^0-0)-0=1$.
    There are $(\omega(n)-2)$ $1$s and $((k-1)-(\omega(n)-2))$ $0$s in the $1^\text{st}$ through $(k-1)^\text{th}$ bits of $n$. So 
    $$
        \delta(n)-\delta(n-1)
        = 1 + -1\cdot (\omega(n)-2) + 1\cdot ((k-1)-(\omega(n)-2))
        = k-2\cdot\omega(n)+4
    $$
\end{proof}
\end{comment}
\begin{lemma} \label{dnodes_diff_even}
    Let $n>0$ be even. Then
    $$
        \delta(n+1) = \delta(n) + (\floor{\log_2(n)}-2\cdot\omega(n)+2)
    $$
\end{lemma}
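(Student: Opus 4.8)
The plan is to differentiate the explicit closed form for $\delta$ from Theorem~\ref{exp_D_closed} term by term, reusing the increment computation already packaged in Lemma~\ref{lambdadiff}. First I would record a small but essential fact: since $n>0$ is even, $n$ is not of the form $2^{k+1}-1$, so passing from $n$ to $n+1$ does not create a new leading bit, and hence $\floor{\log_2(n+1)}=\floor{\log_2(n)}$. Write this common value as $k$. Then both $\delta(n)=\sum_{i=0}^{k-1}\lambda_i(n)$ and $\delta(n+1)=\sum_{i=0}^{k-1}\lambda_i(n+1)$ are sums over the same index range $0,\dots,k-1$, so $\delta(n+1)-\delta(n)=\sum_{i=0}^{k-1}\bigl(\lambda_i(n+1)-\lambda_i(n)\bigr)$.

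Next I would treat the bottom index separately and the rest via Lemma~\ref{lambdadiff}. For $i=0$: $n$ even gives $n_0=0$, so $\lambda_0(n)=n\bmod 1=0$, while $(n+1)_0=1$, so $\lambda_0(n+1)=2^0-\bigl((n+1)\bmod 1\bigr)=1$; the $i=0$ term thus contributes $+1$. For $1\le i\le k-1$, note that $n+1$ is odd and, because $n$ is even and $n\ne 2^{k+1}-1$, it has the form $2^k+r'$ with $r'$ odd and $0<r'<2^k$; so Lemma~\ref{lambdadiff} (applied to $n+1$ in the role of "$n$") yields $\lambda_i(n+1)-\lambda_i(n)=+1$ if $(n+1)_i=0$ and $-1$ if $(n+1)_i=1$. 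Since $n$ is even, $(n+1)_i=n_i$ for all $i\ge 1$, so this is $+1$ when $n_i=0$ and $-1$ when $n_i=1$.

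Finally I would count bits. Among the positions $1,\dots,k-1$, the number of $1$-bits of $n$ is $\omega(n)-1$: the leading bit sits in position $k$, which is outside the summation range, and position $0$ is a $0$-bit, so it is neither a $1$-bit nor counted in $\omega(n)$. Hence there are $(k-1)-(\omega(n)-1)=k-\omega(n)$ zero-bits in positions $1,\dots,k-1$. Summing the contributions, $\delta(n+1)-\delta(n)=1+(k-\omega(n))\cdot(+1)+(\omega(n)-1)\cdot(-1)=k-2\omega(n)+2=\floor{\log_2(n)}-2\omega(n)+2$, which is the asserted identity. (Alternatively, one can derive the same thing directly from Theorem~\ref{exp_D_recursive} by induction on $k$, splitting on the parity of $n/2$, but the Lemma~\ref{lambdadiff} route is shorter.)

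There is no real obstacle here: the argument is essentially bookkeeping. The only two points that require care are (a) justifying $\floor{\log_2(n+1)}=\floor{\log_2(n)}$ so that the two sums have a common range — this is exactly where the hypothesis that $n$ is even is used — and (b) correctly accounting for the leading bit (position $k$, outside the range) and the bit in position $0$ (a zero, contributing nothing) when converting "number of $1$-bits among positions $1,\dots,k-1$" into an expression in $\omega(n)$.
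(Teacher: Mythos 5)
Your proposal is correct and follows essentially the same route as the paper: both apply Theorem~\ref{exp_D_closed} over the common index range $0,\dots,k-1$, split off the $i=0$ term, and use Lemma~\ref{lambdadiff} to turn the remaining increments into a count of $0$- and $1$-bits. The only cosmetic difference is that you count the bits of $n$ directly while the paper counts the bits of $n+1$ and converts via Lemma~\ref{weightlemma_even}; the arithmetic is identical.
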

\begin{proof}
    Let $n=2^k+r$, with $0\le r <  2^k$, and let $n$ have as its binary expansion $n_k n_{k-1} \dots n_0$. $n$ is even, so $\floor{\log_2(n)}=\floor{\log_2(n+1)}$.
    \begin{flalign*}
        \delta(n+1)-\delta(n)
        &= \sum_{i=0}^{k-1} \lambda_i(n+1)-\sum_{i=0}^{k-1} \lambda_i(n)& \text{ by Theorem \ref{exp_D_closed}}&\\
        &= \sum_{i=0}^{k-1} (\lambda_i(n+1)- \lambda_i(n))\\
        &= (\lambda_0(n+1)- \lambda_0(n)) + \sum_{i=1}^{k-1} (\lambda_i(n+1)- \lambda_i(n))\\
    \end{flalign*}
    $n_0=0$, so $\lambda_0(n+1)-\lambda_0(n)=(2^0-0)-0=1$.
    \newline
    There are $\omega(n)-1=(\omega(n+1)-2)$ $1$s  in the $1^\text{st}$ through $(k-1)^\text{th}$ bits of $(n+1)$.
    \newline
    Thus there are $((k-1)-(\omega(n+1)-2))$ $0$s in $1^\text{st}$ through $(k-1)^\text{th}$ bits of $(n+1)$. So 
    \begin{flalign*}
        \delta(n+1)-\delta(n)
        &= 1 + -1\cdot [\omega(n+1)-2] + 1\cdot  [(k-1)-(\omega(n+1)-2)]&\text{ by Lemma \ref{lambdadiff}}&\\
        &= k-2\cdot\omega(n+1)+4 \\
        &= k-2\cdot(\omega(n)+1)+4&\text{ by Lemma \ref{weightlemma_even}}&\\
        &= k-2\cdot\omega(n)+2\\
        &= \floor{\log_2(n)}-2\cdot\omega(n)+2
    \end{flalign*}
\end{proof}
\begin{lemma} \label{dnodes_diff_odd}
    Let $n$ be odd. Then
    $$
        \delta(n+1) = \delta(n)+(\floor{\log_2(n)}-2\cdot\omega(n)+2)
    $$
    where $\delta(1)=0$.
\end{lemma}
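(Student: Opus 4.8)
The plan is to prove this by induction over the odd integers $n$, reducing each instance through the recursion of Theorem~\ref{exp_D_recursive} to the companion even case, Lemma~\ref{dnodes_diff_even}, which is established independently of the present statement, so there is no circularity.

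For the base case $n=1$, I would simply note that $\delta(1)=0$ and $\delta(2)=2\delta(1)=0$ by Theorem~\ref{exp_D_recursive} (equivalently Corollary~\ref{cor_dnodes2k}), while $\floor{\log_2(1)}-2\omega(1)+2=0-2+2=0$, so the formula holds. For the inductive step, write $n=2m+1$ with $m\ge 1$. Theorem~\ref{exp_D_recursive} gives $\delta(n)=\delta(m)+\delta(m+1)+1$ and $\delta(n+1)=\delta\big(2(m+1)\big)=2\delta(m+1)$, hence
\[
\delta(n+1)-\delta(n)=\big(\delta(m+1)-\delta(m)\big)-1.
\]
Since $m<n$, the quantity $\delta(m+1)-\delta(m)$ equals $\floor{\log_2(m)}-2\omega(m)+2$ in either parity: if $m$ is even this is exactly Lemma~\ref{dnodes_diff_even}, and if $m$ is odd it is the inductive hypothesis. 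Therefore $\delta(n+1)-\delta(n)=\floor{\log_2(m)}-2\omega(m)+1$.

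The final step is to re-express the right-hand side in terms of $n$. From Lemmas~\ref{weightlemma_even} and~\ref{weightlemma_mult2} we get $\omega(n)=\omega(2m+1)=\omega(2m)+1=\omega(m)+1$, so $\omega(m)=\omega(n)-1$. Moreover $n=2m+1\ge 3$ is odd, hence not a power of $2$, so $\floor{\log_2(2m+1)}=\floor{\log_2(2m)}=1+\floor{\log_2(m)}$, which gives $\floor{\log_2(m)}=\floor{\log_2(n)}-1$. Substituting both identities yields $\delta(n+1)-\delta(n)=\big(\floor{\log_2(n)}-1\big)-2\big(\omega(n)-1\big)+1=\floor{\log_2(n)}-2\omega(n)+2$, as required.

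The argument is essentially routine bookkeeping; the points requiring a little care are handling $n=1$ separately (there $m=0$ and the substitutions above do not apply), noting that $n+1$ may itself be a power of $2$ when $n=2^{k+1}-1$ which is harmless since the statement is phrased purely in terms of $\floor{\log_2(n)}$ and $\omega(n)$, and confirming that Lemma~\ref{dnodes_diff_even} does not in turn invoke the odd case, so the combined induction is well-founded.
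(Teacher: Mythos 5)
Your proof is correct, but it takes a genuinely different route from the paper's. The paper proves this lemma by reflection: writing $n=2^k+r$, it passes to the "mirror image" $m=2^{k+1}-(r+1)$, which is even, applies Lemma~\ref{dnodes_diff_even} there, and then translates back using the symmetry $\delta(2^{k+1}-r)=\delta(2^k+r)$ (Lemma~\ref{symmetricity}) together with the weight-complement identity $\omega(2^{k+1}-r)=2+k-\omega(r)$ (Lemma~\ref{weightlemma}) and Lemma~\ref{weightlemma_addone}. You instead run a strong induction on odd $n$ directly through the halving recursion of Theorem~\ref{exp_D_recursive}: from $\delta(2m+1)=\delta(m)+\delta(m+1)+1$ and $\delta(2m+2)=2\delta(m+1)$ you get $\delta(n+1)-\delta(n)=\bigl(\delta(m+1)-\delta(m)\bigr)-1$, close the loop with Lemma~\ref{dnodes_diff_even} or the inductive hypothesis according to the parity of $m$, and finish with $\omega(m)=\omega(n)-1$ and $\floor{\log_2(m)}=\floor{\log_2(n)}-1$; I checked the bookkeeping and it is right, the base case $n=1$ is handled, and there is no circularity since Lemma~\ref{dnodes_diff_even} is proved from Theorem~\ref{exp_D_closed} without reference to the odd case. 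What each approach buys: yours is more self-contained and elementary, dispensing with the symmetry lemma and the weight-complement computation entirely and using only the defining recursion; the paper's is a single non-inductive reduction (the induction being hidden inside Lemma~\ref{symmetricity}) that also showcases the reflection symmetry of $\delta$, which the paper reuses conceptually in its discussion of the Takagi function.
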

\begin{proof}
    Let $n=2^k+r$, with $0\le r <  2^k$. We proceed by moving from $n$ and $n+1$ to the symmetric $m=2^{k+1}-(r+1)$ (which is even) and $m+1=2^{k+1}-r$ (which is odd), by Lemma \ref{symmetricity}. We then apply Lemma \ref{dnodes_diff_even}, and move back to $n$ and $n+1$, again by Lemma \ref{symmetricity}. 
    
    \begin{flalign*}
        \delta(n+1) 
        &= \delta(m)  &\text{  by Lemma \ref{symmetricity}}&\\
        &= \delta(m+1)-(\floor{\log_2(m)}-2\cdot\omega(m)+2) &\text{ by Lemma \ref{dnodes_diff_even}}\\
        &= \delta(m+1)-(k-2\cdot\omega(m)+2) \\
        &= \delta(m+1)-(k-2\cdot(\omega(m+1)-1)+2)&\text{ by Lemma \ref{weightlemma_even}}\\
        &= \delta(2^k+r)-(k-2\cdot\omega(m+1)+4)&\text{ by Lemma \ref{symmetricity}}\\
        &= \delta(n)-(k-2\cdot\omega(m+1)+4)\\
        &= \delta(n)-(k-2\cdot(2+k-\omega(r))+4)&\text{ by Lemma \ref{weightlemma}}\\
        &= \delta(n)-(k-2\cdot(2+k-(\omega(n)-1))+4)&\text{ by Lemma \ref{weightlemma_addone}}\\
        &= \delta(n)+(k-2\cdot\omega(n)+2)\\
        &= \delta(n)+(\floor{\log_2(n)}-2\cdot\omega(n)+2)
    \end{flalign*}
\end{proof}
We then have a general recurrence relation for $\delta(n)$.
\begin{theorem} \label{dnodes_another_recurrence}
    Let $n>0$. Then
    $$
        \delta(n+1) 
        = \delta(n) + (\floor{\log_2(n)}-2\cdot\omega(n)+2)
    $$
    where $\delta(1)=0$.
\end{theorem}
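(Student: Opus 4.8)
The plan is to observe that the desired recurrence has already been established separately in the two exhaustive cases of the parity of $n$, so the proof amounts to assembling these pieces. First I would note that every integer $n>0$ is either even or odd. If $n$ is even, then Lemma~\ref{dnodes_diff_even} gives precisely
$$
\delta(n+1) = \delta(n) + (\floor{\log_2(n)} - 2\cdot\omega(n) + 2),
$$
which is the claimed identity. If instead $n$ is odd, then Lemma~\ref{dnodes_diff_odd} gives exactly the same identity, and moreover records the base value $\delta(1)=0$ (which is also the base case in Theorem~\ref{exp_D_recursive}). Hence the recurrence holds for all $n>0$, with $\delta(1)=0$.

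There is essentially no obstacle here: the work was front-loaded into Lemmas~\ref{dnodes_diff_even} and~\ref{dnodes_diff_odd}, whose proofs handled, respectively, the direct computation via the closed form of Theorem~\ref{exp_D_closed} together with Lemma~\ref{lambdadiff} and the weight lemma for even arguments, and the reflection trick through the symmetric index $m = 2^{k+1}-(r+1)$ via Lemma~\ref{symmetricity} combined with the weight lemmas~\ref{weightlemma_even}, \ref{weightlemma}, and~\ref{weightlemma_addone}. The only thing to verify is that the two lemmas produce the \emph{same} right-hand expression $\floor{\log_2(n)} - 2\omega(n) + 2$ in both parities, which they do by inspection. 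Thus I would simply write: ``Combine Lemma~\ref{dnodes_diff_even} (the case $n$ even) with Lemma~\ref{dnodes_diff_odd} (the case $n$ odd).'' If a slightly more self-contained write-up were wanted, one could restate the two one-line conclusions and remark that their union over parities covers all $n>0$.
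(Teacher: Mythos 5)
Your proposal is correct and is essentially identical to the paper's own proof, which simply cites Lemmas~\ref{dnodes_diff_even} and~\ref{dnodes_diff_odd} for the even and odd cases respectively. Nothing further is needed.
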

\begin{proof}
    Follows from Lemmas \ref{dnodes_diff_even} and \ref{dnodes_diff_odd}.
\end{proof}
Repeated application of Theorem \ref{exp_D_recursive} (for even sub-sums) and Lemma \ref{dnodes_diff_even} (for odd) gives rise to Theorem \ref{another_explicit_D}, another explicit formula for the number of $D$-nodes in a divide-and-conquer tree.
\begin{theorem} \label{another_explicit_D}
    Let $n>0$ have as its binary expansion $n_k n_{k-1} \dots n_0$, where $n_k=1$ (so $k=\floor{\log_2(n)}$). Then
    $$
        \delta(n) = \sum_{i=0}^{k-1}2^i \cdot n_i \cdot \left[(k-i)-2\cdot \sum_{j=i}^k n_i+4\right]
    $$
    or alternatively
    $$
        \delta(n) = \sum_{i=0}^{k-1}2^i \cdot n_i \cdot \left[(k-i)-2\cdot \omega\left(\floor{\frac{n}{2^i}}\right)+4\right]
    $$
\end{theorem}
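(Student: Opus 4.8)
The plan is to establish the second (more compact) of the two displayed formulas by a single telescoping recursion that strips off the low-order bit of $n$ one step at a time, halving the argument at each stage; the first formula then follows at once from the elementary identity $\omega(\floor{n/2^i}) = \sum_{j=i}^{k} n_j$, which just says that the binary expansion of $\floor{n/2^i}$ is $n_k n_{k-1}\cdots n_i$.

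The heart of the argument is a one-step identity: for every integer $m\ge 2$ with low-order bit $m_0$,
$$\delta(m) = 2\,\delta\!\left(\floor{\tfrac{m}{2}}\right) + m_0\cdot\left(\floor{\log_2 m} - 2\,\omega(m) + 4\right).$$
I would prove this by splitting on the parity of $m$. If $m$ is even, then $m_0=0$ and $\floor{m/2}=m/2$, so the identity reduces to $\delta(m)=2\delta(m/2)$, which is the even case of Theorem~\ref{exp_D_recursive}. If $m$ is odd (and $m\ge 3$), then $m-1$ is a positive even integer, so Lemma~\ref{dnodes_diff_even}, applied with $m-1$ in place of $n$, gives $\delta(m)=\delta(m-1)+\bigl(\floor{\log_2(m-1)}-2\omega(m-1)+2\bigr)$; here $\floor{\log_2(m-1)}=\floor{\log_2 m}$ because $m$ is odd and hence not a power of $2$, and $\omega(m-1)=\omega(m)-1$ by Lemma~\ref{weightlemma_odd}. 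Feeding in $\delta(m-1)=2\delta\bigl((m-1)/2\bigr)=2\delta(\floor{m/2})$ from Theorem~\ref{exp_D_recursive} then yields the identity with $m_0=1$. (One could equally invoke Theorem~\ref{dnodes_another_recurrence} directly in the odd case; it comes to the same computation.)

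With this identity available, I would iterate it down the chain $n,\ \floor{n/2},\ \floor{n/4},\ \dots$. Since $n$ has leading bit $n_k=1$, we have $2^k\le n<2^{k+1}$, so $\floor{n/2^i}\ge 2$ for $0\le i\le k-1$ and $\floor{n/2^k}=1$; thus the one-step identity applies at each of the first $k$ stages, with $\floor{\log_2(\floor{n/2^i})}=k-i$ and with low-order bit $n_i$. Writing $a_i=\delta(\floor{n/2^i})$ and $c_i=n_i\bigl((k-i)-2\omega(\floor{n/2^i})+4\bigr)$, the identity reads $a_i=2a_{i+1}+c_i$, and telescoping gives $a_0=2^k a_k+\sum_{i=0}^{k-1}2^i c_i$. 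Since $a_k=\delta(1)=0$, this is exactly
$$\delta(n)=\sum_{i=0}^{k-1}2^i\,n_i\left((k-i)-2\,\omega\!\left(\floor{\tfrac{n}{2^i}}\right)+4\right),$$
the second formula; replacing $\omega(\floor{n/2^i})$ by $\sum_{j=i}^{k}n_j$ gives the first. The iteration can be dressed up as a finite induction on $k$ if one prefers, but the telescoping presentation is cleanest.

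I do not expect a genuine obstacle here; the only delicate point, and the one I would verify most carefully, is the conversion of the additive constant in the odd case — turning the "$+2$" of Lemma~\ref{dnodes_diff_even} (phrased via $\omega(m-1)$ and $\floor{\log_2(m-1)}$) into the "$+4$" of the target formula (phrased via $\omega(m)$ and $\floor{\log_2 m}$) through $\omega(m-1)=\omega(m)-1$. It is also worth recording the boundary cases as a sanity check: $n=1$ gives the empty sum together with $\delta(1)=0$, and $n=2^k$ gives a sum in which every term vanishes because $n_i=0$ for $i<k$, consistent with Corollary~\ref{cor_dnodes2k}. The remaining work is pure bookkeeping of indices, powers of two, and constants.
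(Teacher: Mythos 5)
Your proposal is correct and follows essentially the same route as the paper: both arguments strip off the low-order bit using $\delta(2m)=2\delta(m)$ from Theorem~\ref{exp_D_recursive} for the even step and the increment $\floor{\log_2 n}-2\omega(n)+2$ from Lemma~\ref{dnodes_diff_even} (equivalently Theorem~\ref{dnodes_another_recurrence}) together with $\omega(m-1)=\omega(m)-1$ for the odd step, your telescoped one-step identity being just the paper's two-case induction unrolled. Your reading of the first displayed sum as $\sum_{j=i}^{k}n_j$ (i.e., $\omega(\floor{n/2^i})$) is the intended one; the $n_i$ in the statement is a typo.
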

\begin{proof}
    Proof by induction, noting that this is true for $n=1$ and %n=2$. 
    \newline
    If $n$ is even, $n/2$ has as its binary expansion $n_k n_{k-1} \dots n_1$, so 
    \begin{flalign*}
        \delta(n/2)
        &=\sum_{i=1}^{k-1}2^{i-1} \cdot n_i \cdot \left[(k-i)-2\cdot \sum_{j=i}^k n_i+4\right]& \text{ by induction}&\\
        &=\frac{1}{2}\sum_{i=1}^{k-1}2^i \cdot n_i \cdot \left[(k-i)-2\cdot \sum_{j=i}^k n_i+4\right]
    \end{flalign*}
    \begin{flalign*}
        \delta(n)
        &=2\delta(n/2) \text{, by Theorem \ref{exp_D_recursive}}&\\
        &=2\cdot \frac{1}{2}\sum_{i=1}^{k-1}2^i \cdot n_i \cdot \left[(k-i)-2\cdot \sum_{j=i}^k n_i+4\right] \\
        &=\sum_{i=0}^{k-1}2^i \cdot n_i \cdot \left[(k-i)-2\cdot \sum_{j=i}^k n_i+4\right]& \text{ since $n_0=0$}&
    \end{flalign*}
    \newline
    If $n$ is odd, $n$ has as its binary expansion $n_k n_{k-1} \dots n_1 1$, and $n-1$ has as its binary expansion $n_k n_{k-1} \dots n_1 0$.
    \begin{flalign*}
        \delta(n)
        &=\delta(n-1) + (\floor{\log_2(n-1)}-2\cdot\omega(n-1)+2)& \text{ by Theorem \ref{dnodes_another_recurrence}}&\\
        &=\delta(n-1) + (k-2\cdot\omega(n-1)+2) \\
        &=\sum_{i=0}^{k-1}2^i \cdot n_i \cdot \left[(k-i)-2\cdot \sum_{j=i}^k n_i+4\right] + (k-2\cdot\omega(n-1)+2)& \text{ by induction}\\
        &=\sum_{i=1}^{k-1}2^i \cdot n_i \cdot \left[(k-i)-2\cdot \sum_{j=i}^k n_i+4\right] + (k-2\cdot\omega(n-1)+2)\\
        &=\sum_{i=1}^{k-1}2^i \cdot n_i \cdot \left[(k-i)-2\cdot \sum_{j=i}^k n_i+4\right] + (k-2\cdot(\omega(n)-1)+2)& \text{ by Lemma \ref{weightlemma_odd}}\\
        &=\sum_{i=1}^{k-1}2^i \cdot n_i \cdot \left[(k-i)-2\cdot \sum_{j=i}^k n_i+4\right] + (k-2\cdot\omega(n)+4)\\
        &=\sum_{i=1}^{k-1}2^i \cdot n_i \cdot \left[(k-i)-2\cdot \sum_{j=i}^k n_i+4\right] + 2^0 \cdot 1  \cdot  ((k-0)-2\cdot\omega(n)+4) \\
        &=\sum_{i=1}^{k-1}2^i \cdot n_i \cdot \left[(k-i)-2\cdot \sum_{j=i}^k n_i+4\right] + 2^0 \cdot n_0  \cdot  ((k-0)-2\cdot\omega(n)+4) \\
        &=\sum_{i=0}^{k-1}2^i \cdot n_i \cdot \left[(k-i)-2\cdot \sum_{j=i}^k n_i+4\right]
    \end{flalign*}
\end{proof}
\subsubsection{Divide-and-conquer D-nodes and the Takagi function} \label{takagi_section} 
The Takagi function is a widely-studied self-similar nowhere-differentiable function on $[0,1]$, identified by Takagi in 1901 \cite{takagi01}, with connections to many areas of mathematics, including number theory, combinatorics, probability theory and analysis. Lagarias published a survey on the Takagi function \cite{lagarias11}, as have Allaart and Kawamura \cite{allaart11}.  

The Takagi function is closely related to the number of $D$-nodes in a divide-and-conquer tree. OEIS \seqnum{A268289}, which is the number of $S$-nodes in a divide-and-conquer parenthetic form, is related to the Takagi function through its relationship to  the number of $D$-nodes in a divide-and-conquer product \cite{ lagarias11, allaart11, baruchel19}.
Fig. \ref{takagicurve} compares a graph of the Takagi function to graphs counting $D$-nodes.

\begin{definition} \label{takagi_newdef} \cite{takagi01, lagarias11}
    Let an integer $r$ have as its binary expansion $r_{k-1} r_{k-2} \dots r_0$, where all bits $r_i=0 \text{, for }  i \notin \{0, \dots k-1\}$. The \emph{Takagi function} on dyadic rationals is defined as
	$$
        \tau\left(\frac{r}{2^k}\right) 
        = \sum_{i=1}^\infty\frac{\ell_i(r)}{2^i} 
        \text{, where } 
        \ell_{i+1}(r)=
        \begin{cases}
		    \sum\limits_{j=1}^i r_{k-j}, & \text{ if } r_{k-(i+1)} = 0,\\
		    i-\sum\limits_{j=1}^i r_{k-j}, & \text{ if } r_{k-(i+1)} = 1.\\
	  \end{cases}
	$$
The original definition of the Takagi function on dyadic rationals from \cite{takagi01} (quoted in \cite{lagarias11}), has been recast here to be consistent with our notation and numbering.
\end{definition}

The following Theorem \ref{takagi2} gives a simple formula for the dilations of the Takagi function on dyadic rationals in [0,1], in terms of the $D$-nodes of a divide-and-conquer tree. This is illustrated in Fig. \ref{takagicurve}.
Theorem \ref{takagi2} is essentially the same as Corollary 4 in \cite{coronado20} and to 3.3 in \cite{baruchel19_2}, but is stated here in terms of $D$-nodes and Takagi's original definition. It is also quickly derivable from Equation 4.7 in \cite{allaart11}, credited to Kr\"uppel \cite{kruppel07}, which seems to be the original source.

\begin{theorem} \label{takagi2} \cite{coronado20, allaart11,  baruchel19, kruppel07}
    Let $\delta(n)$ be the number of $D$-nodes in a divide-and-conquer tree on $n=2^k+r$ leaves, with $0 \le r \le 2^k$ and with the binary expansion of $r$ $r_{k-1} r_{k-2} \dots r_0$. Let $\tau(x)$ be the Takagi function on $x \text{, with }0 \le x \le 1$. Then 
    $$
        \tau \left( \frac{r}{2^k} \right) = \frac{\delta(2^k+r)}{2^k}
    $$
\end{theorem}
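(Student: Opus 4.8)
The plan is to show that both sides, viewed as functions of the scale $k$, satisfy the same self-similar recursion, and then to induct on $k$. Write $g(k,r):=2^{k}\,\tau(r/2^{k})$ for $0\le r\le 2^{k}$; the goal is $g(k,r)=\delta(2^{k}+r)$. The first step is to read off from Definition~\ref{takagi_newdef} the dyadic form of the Takagi functional equation. Peeling the leading bit $r_{k-1}$ off $r$ and tracking how the partial sums $S_{i}=\sum_{j=1}^{i}r_{k-j}$ and the ``next bit'' $r_{k-i-1}$ behave when the scale drops from $k$ to $k-1$, one finds that the digit values relate by a shift: writing $\ell'$ for the digit sequence at scale $k-1$, if $r_{k-1}=0$ then $\ell_{i+1}(r)$ equals $\ell'_{i}(r)$ or $\ell'_{i}(r)+1$ according as $r_{(k-1)-i}$ is $0$ or $1$; and if $r=2^{k-1}+r'$ with $r_{k-1}=1$ then $\ell_{i+1}(r)$ equals $\ell'_{i}(r')$ or $\ell'_{i}(r')+1$ according as $r'_{(k-1)-i}$ is $1$ or $0$. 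Summing the resulting series (using $\sum_{i\ge 1}r_{(k-1)-i}2^{-i}=r/2^{k-1}$ for $r<2^{k-1}$) gives
\[
 g(k,r)=
 \begin{cases}
  r+g(k-1,\,r), & 0\le r<2^{k-1},\\[2pt]
  (2^{k}-r)+g(k-1,\,r-2^{k-1}), & 2^{k-1}\le r<2^{k},\\[2pt]
  0, & r=2^{k}.
 \end{cases}
\]
This is just the restriction to dyadics of $\tau(x)=s(x)+\tfrac12\tau(2x\bmod 1)$, where $s(x)$ is the distance from $x$ to the nearest integer; alternatively one may quote \cite{kruppel07, allaart11} for it.

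The second step is to verify that $\delta(2^{k}+r)$ obeys exactly this recursion. The case $r=2^{k}$ is Corollary~\ref{cor_dnodes2k} ($\delta(2^{k+1})=0$). For $0\le r<2^{k-1}$, set $N=2^{k}+r$ and $M=2^{k-1}+r$ and subtract their closed forms from Theorem~\ref{exp_D_closed}: for every $i\le k-2$ one has $N_{i}=M_{i}=r_{i}$ and $N\bmod 2^{i}=M\bmod 2^{i}=r\bmod 2^{i}$, so those summands $\lambda_{i}$ cancel in pairs, leaving $\delta(N)-\delta(M)=\lambda_{k-1}(N)$; since $N_{k-1}=0$ this is $N\bmod 2^{k-1}=r$, i.e.\ $\delta(2^{k}+r)=r+\delta(2^{k-1}+r)$. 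For $2^{k-1}\le r<2^{k}$, put $r'=r-2^{k-1}$ and reflect twice via Lemma~\ref{symmetricity}: $\delta(2^{k}+r)=\delta(2^{k+1}-r)=\delta\big(2^{k}+(2^{k-1}-r')\big)$, apply the case just proved, then reflect back at scale $k-1$ to identify $\delta\big(2^{k}-(2^{k-1}-r')\big)=\delta(2^{k-1}+r')$, which yields $\delta(2^{k}+r)=(2^{k}-r)+\delta(2^{k-1}+r')$; the endpoint $r=2^{k-1}$ is checked directly from Theorem~\ref{exp_D_recursive}, since $\delta(3\cdot 2^{k-1})=2^{k-1}$.

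An induction on $k$ then closes the argument: the base case $k=0$ is $\tau(0)=0=\delta(1)$ and $\tau(1)=0=\delta(2)$, and the inductive step is immediate from the two matching recursions, giving $g(k,r)=\delta(2^{k}+r)$, i.e.\ $\tau(r/2^{k})=\delta(2^{k}+r)/2^{k}$ for $0\le r\le 2^{k}$.

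I expect the main obstacle to be the first step: extracting the self-similarity relation from the particular, index-shifted digit formula of Definition~\ref{takagi_newdef} (rather than the familiar tent-map representation), where the bookkeeping relating $S_{i}$ at scale $k$ to $S_{i-1}$ at scale $k-1$, and the off-by-one in the additive term, must be handled carefully. As an alternative that trades this obstacle for another, one could induct instead using the even/odd recursion of Theorem~\ref{exp_D_recursive}: there $g(k,2s)=2g(k-1,s)$ drops straight out of the digit formula, but the odd case then needs the midpoint identity $\tau\!\big(\tfrac{2s+1}{2^{k}}\big)=\tfrac12\big(\tau(\tfrac{s}{2^{k-1}})+\tau(\tfrac{s+1}{2^{k-1}})\big)+2^{-k}$, whose proof from Definition~\ref{takagi_newdef} must cope with the carry in $s\mapsto s+1$.
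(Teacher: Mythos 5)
Your proposal is correct, but there is nothing in the paper to compare it against: the paper states Theorem~\ref{takagi2} with citations only and offers no proof, remarking that the identity is ``essentially the same as'' results in the cited references and ``quickly derivable'' from an equation of Kr\"uppel. So your argument is a genuine, self-contained addition rather than a variant of the paper's route. I checked the two halves of your plan and both go through. On the Takagi side, writing $\ell'$ for the digit sequence at scale $k-1$, the index bookkeeping from Definition~\ref{takagi_newdef} gives exactly $\ell_{i+1}(r)=\ell'_i(r)+r_{(k-1)-i}$ when $r_{k-1}=0$ and $\ell_{i+1}(r)=\ell'_i(r')+\bigl(1-r'_{(k-1)-i}\bigr)$ when $r=2^{k-1}+r'$, and summing the correction terms yields $r/2^{k-1}$ and $1-r'/2^{k-1}$ respectively, which is precisely your three-case recursion for $g(k,r)=2^k\tau(r/2^k)$; you were right to flag this as the delicate step, and the off-by-one works out. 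On the $D$-node side, the cancellation of $\lambda_i$ for $i\le k-2$ in Theorem~\ref{exp_D_closed} gives $\delta(2^k+r)=r+\delta(2^{k-1}+r)$ for $r<2^{k-1}$, the double reflection via Lemma~\ref{symmetricity} handles $2^{k-1}<r<2^k$, and the endpoints $r=2^{k-1}$ and $r=2^k$ are correctly disposed of by Theorem~\ref{exp_D_recursive} and Corollary~\ref{cor_dnodes2k}. The induction on $k$ then closes. What your approach buys is a proof internal to the paper's own machinery (the closed form for $\delta$ and the original Takagi digit definition), at the cost of the digit-shift bookkeeping; the paper's choice to cite instead avoids that bookkeeping but leaves the reader to reconcile Definition~\ref{takagi_newdef} with the tent-map formulations used in the references.
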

\begin{figure}[H]
    \centering
    \begin{subfigure}{0.2\textwidth} 
        \includegraphics[scale=0.2]{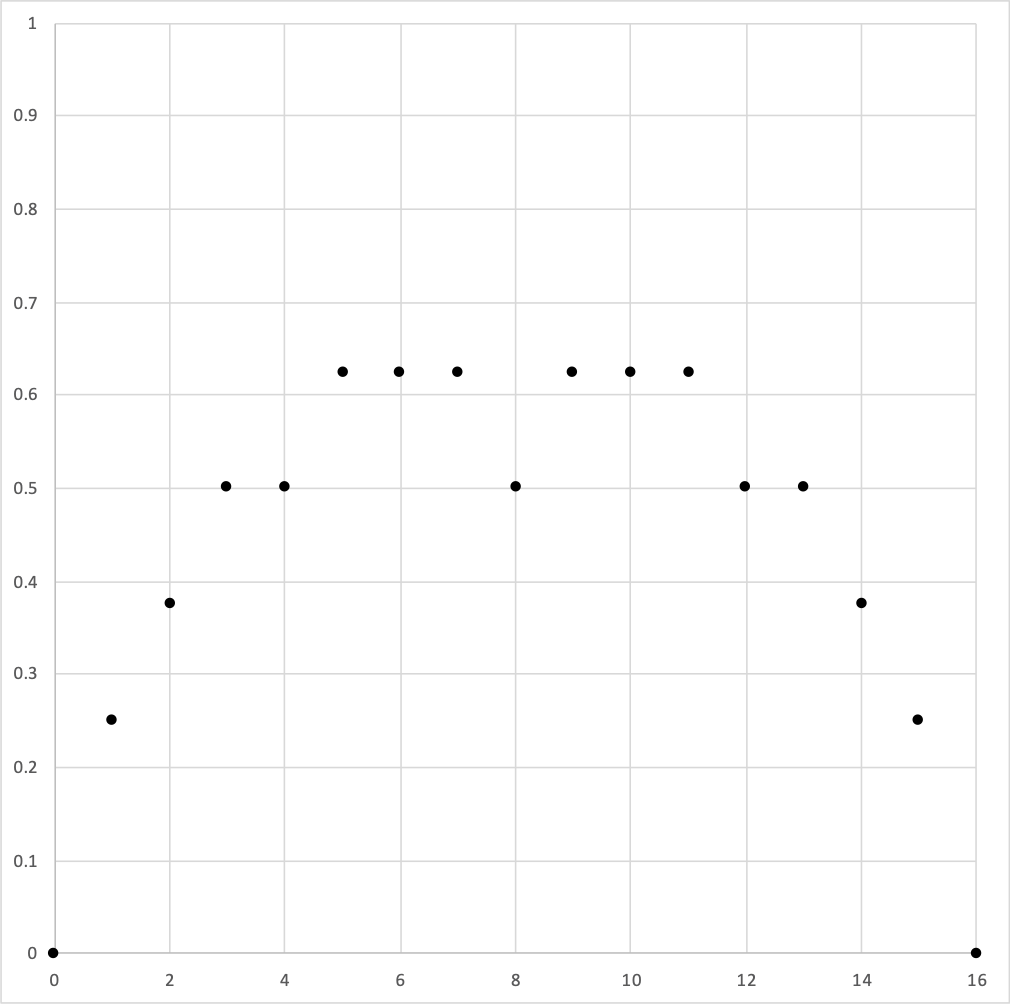}
        \caption{$y=\frac{\delta(16+x)}{16}$}
        \label{delta16}
    \end{subfigure}
    \hspace{1em}
    \begin{subfigure}{0.2\textwidth}
        \includegraphics[scale=0.2]{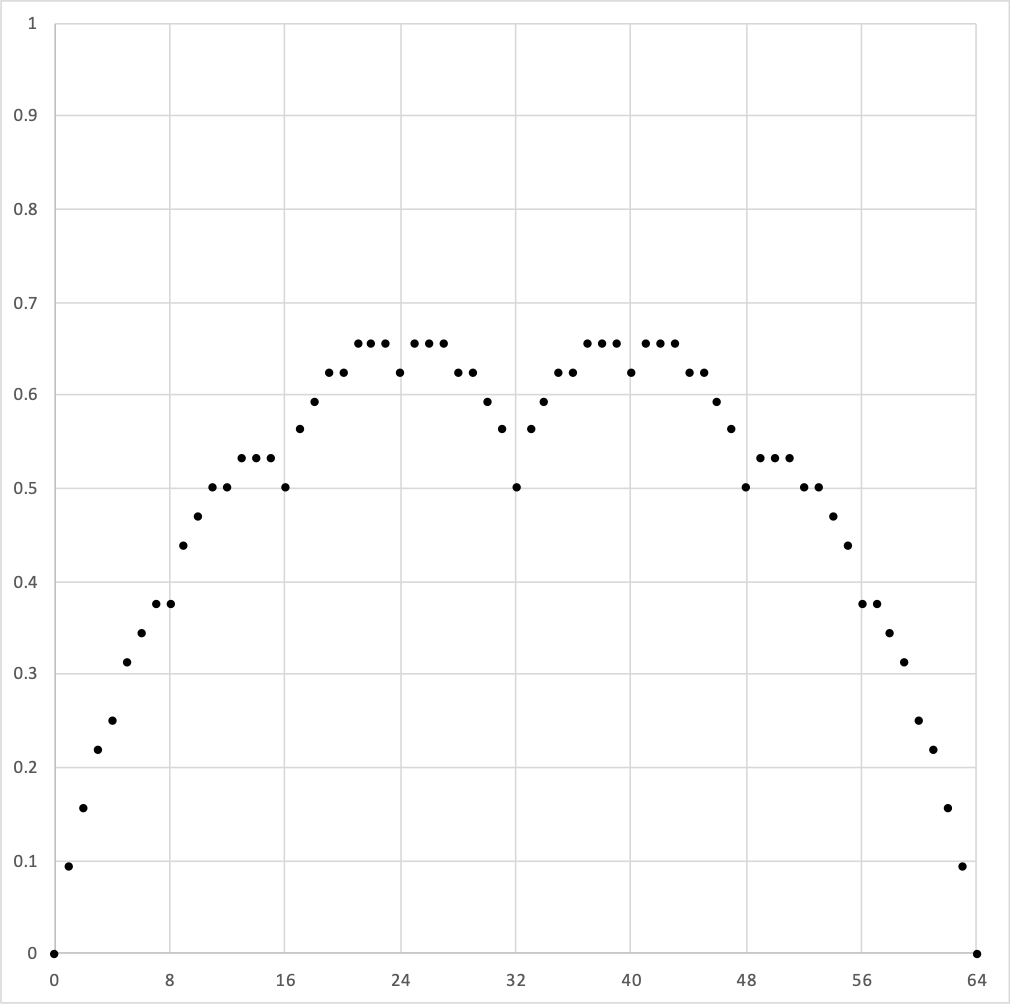}
        \caption{$y=\frac{\delta(64+x)}{64}$}
        \label{delta64}
    \end{subfigure}
    \hspace{1em}
    \begin{subfigure}{0.2\textwidth}
        \includegraphics[scale=0.2]{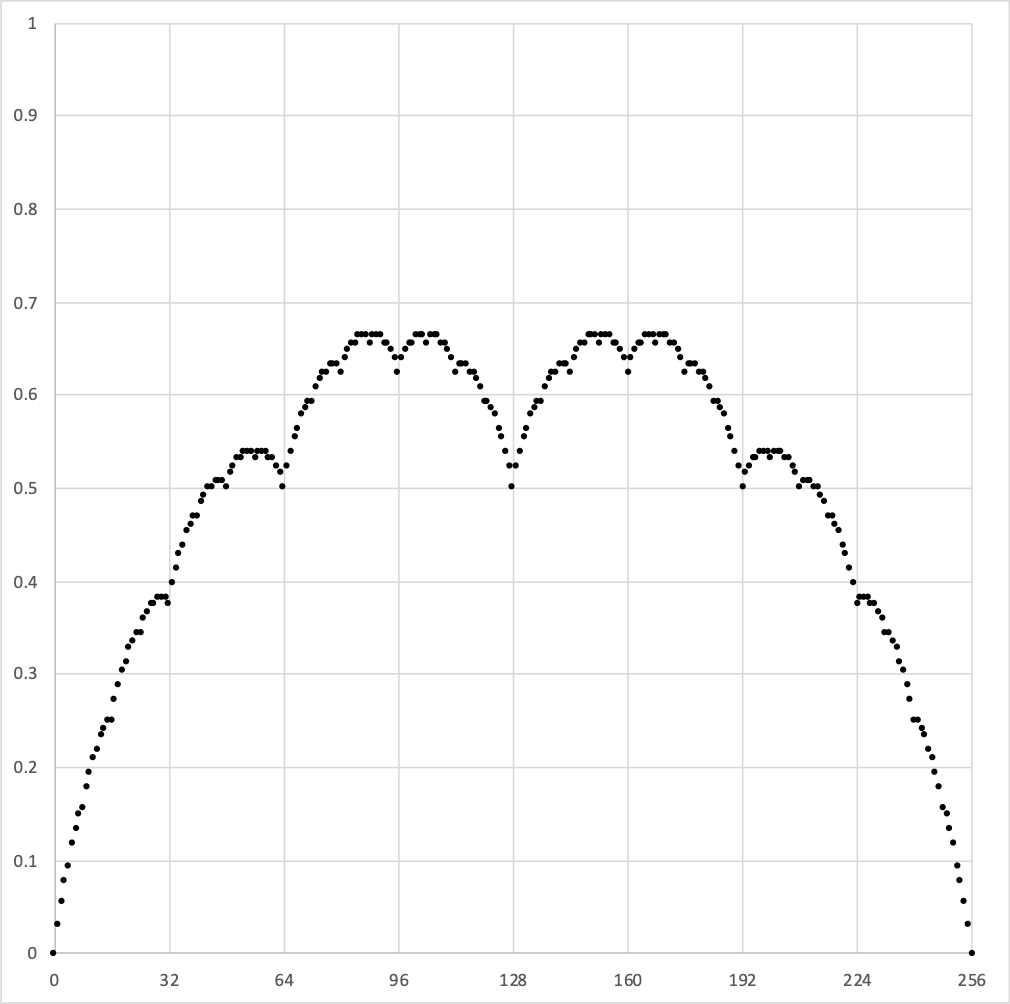}
        \caption{$y=\frac{\delta(256+x)}{256}$}
        \label{delta256}
    \end{subfigure}
    \hspace{1em}
    \begin{subfigure}{0.2\textwidth}
        \includegraphics[scale=0.33]{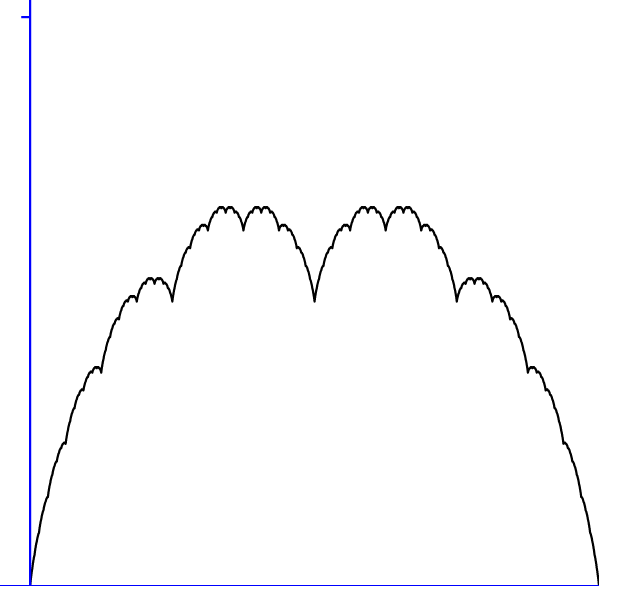}
        \caption{Takagi curve.}
        \label{takagifig}
    \end{subfigure}
    \mycaption{Divide-and-conquer dilations of the Takagi function on the dyadic rationals}
    {Subfigures (\subref{delta16}), (\subref{delta64}) and (\subref{delta256}) show examples of the dilations $y=\frac{\delta(2^k+x)}{2^k}=\tau\left(\frac{x}{2^k}\right)$  from Theorem \ref{takagi2}, where $\delta(n)$ is the number of $D$-nodes on a divide-and-conquer tree on $n$ leaves. Here, $k=4,6,\text{ and }8$, and $x$ is an integer with $0 \le x \le 2^k$. These may be visually compared to subfigure (\subref{takagifig}), showing
    the continuous, self-similar, nowhere-differentiable Takagi (blancmange) curve on $[0,1]$. (The blancmange curve image  in subfigure (\subref{takagifig}) is taken from Wiki Commons.)}
    \label{takagicurve}
\end{figure}

Since Theorem~\ref{takagi2} so closely relates the Takagi function to the number of $D$-nodes in a divide-and-conquer tree, all theorems shown earlier in this paper that count $D$-nodes of such trees give new identities on the Takagi function. These theorems include Corollary \ref{yet_another_dnodes_prop} and Theorems \ref{exp_D_closed}, \ref{dnodes_another_recurrence}, and \ref{another_explicit_D}. We also provide here in Theorem~\ref{another_takagiX} a new formula for the Takagi function. 

\begin{theorem} \label{another_takagiX}
Let $\ell_{i+1}$ be as in Definition \ref{takagi_newdef}, and let $\omega(r)$ be the weight of $r$. Then
    $$
        2^k \cdot \tau\left(\frac{r}{2^k}\right) 
        = \sum_{i=1}^k2^{k-i}\ell_i(r) + \omega(r) 
	$$
\end{theorem}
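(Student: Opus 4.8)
The plan is to work directly from Definition \ref{takagi_newdef}, which already expresses $\tau(r/2^k)$ as the infinite series $\sum_{i=1}^\infty \ell_i(r)/2^i$. The one structural fact I need is that this series has an \emph{eventually constant} numerator: the values $\ell_i(r)$ stabilize at $\omega(r)$ as soon as $i$ exceeds $k$. Granting that, the identity drops out by splitting the series into its first $k$ terms plus a geometric tail, evaluating the tail, and multiplying through by $2^k$.

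In detail, the first step is to prove the stabilization claim: $\ell_i(r) = \omega(r)$ for every $i \ge k+1$. Writing $i = m+1$ with $m \ge k$, the bit $r_{k-(m+1)}$ appearing in Definition \ref{takagi_newdef} has negative index, hence equals $0$ by the convention that all bits of $r$ outside the positions $0,\dots,k-1$ vanish; therefore $\ell_{m+1}(r) = \sum_{j=1}^m r_{k-j}$. By the same convention the summands with $j>k$ are $0$, so $\sum_{j=1}^m r_{k-j} = \sum_{j=1}^k r_{k-j} = \omega(r)$, which is the claim. Note the cutoff is genuinely at $k$: for $i=k$ the relevant bit is $r_0$, a real bit, so $\ell_k(r)$ need not equal $\omega(r)$.

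With the claim in hand, the computation is
$$
\tau\!\left(\frac{r}{2^k}\right) = \sum_{i=1}^{k}\frac{\ell_i(r)}{2^i} + \sum_{i=k+1}^{\infty}\frac{\omega(r)}{2^i} = \sum_{i=1}^{k}\frac{\ell_i(r)}{2^i} + \frac{\omega(r)}{2^k},
$$
using $\sum_{i=k+1}^\infty 2^{-i} = 2^{-k}$, and multiplying both sides by $2^k$ yields exactly $2^k\,\tau(r/2^k) = \sum_{i=1}^k 2^{k-i}\ell_i(r) + \omega(r)$. (An alternative route is to pass through Theorem \ref{takagi2} and instead prove the equivalent statement $\delta(2^k+r) = \sum_{i=1}^k 2^{k-i}\ell_i(r)+\omega(r)$ by an analysis of $D$-nodes, but the direct series computation is shorter and self-contained.)

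The main obstacle — really the only place the argument can slip — is the index bookkeeping in the stabilization claim: one must track carefully the offset between $\ell_{i+1}$ and the bit $r_{k-(i+1)}$, and verify that $i=k$ (not $k-1$ or $k+1$) is the correct point at which to split the series. Everything past that point is a one-line geometric-series evaluation.
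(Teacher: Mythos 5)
Your proof is correct and follows essentially the same route as the paper's: split the series $\sum_{i=1}^\infty \ell_i(r)/2^i$ at $i=k$, use the fact that $\ell_i(r)=\omega(r)$ for $i\ge k+1$, and sum the geometric tail. The only difference is that you explicitly verify the stabilization claim from the bit convention in Definition~\ref{takagi_newdef}, which the paper leaves implicit with a bare citation of the definition; your index bookkeeping there is accurate.
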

\begin{proof}
    \begin{flalign*}
        2^k \cdot \tau(\frac{r}{2^k})
        &= 2^k \cdot\sum_{i=1}^\infty\frac{\ell_i(r)}{2^i}&
        \text{ by Definition \ref{takagi_newdef}}&\\
        &= \sum_{i=1}^k 2^{k-i}\ell_i(r) + \sum_{i=k+1}^\infty\frac{\ell_{i}(r)}{2^{i-k}}\\
        &= \sum_{i=1}^k 2^{k-i}\ell_i(r) + \sum_{i=k+1}^\infty\frac{\omega(r)}{2^{i-k}}&
        \text{ by Definition \ref{takagi_newdef}}\\
        &= \sum_{i=1}^k 2^{k-i}\ell_i(r) + \omega(r)\cdot \sum_{i=1}^\infty\frac{1}{2^i}\\
        &= \sum_{i=1}^k 2^{k-i}\ell_i(r) + \omega(r)\\
    \end{flalign*}
\end{proof}

\subsubsection{Counting divide-and-conquer  products}

We complete this section by counting the commutative non-associative divide-and-conquer products, using the results from Section \ref{dandc_snodes} on the number of $S$-nodes in such products. The application of Corollary \ref{equiv_sum} is simple; however, this result is worth stating explicitly in the context of this paper, since the motivating example of finite-precision floating-point summation is sensitive to ordering of variables \cite{job20}, making the different products relevant. 

\begin{proposition}\label{pairclosednew}
	The number of computationally inequivalent divide-and-conquer  products on $n$ terms is 
	\begin{comment}
	\begin{equation}\label{closed_A096351}
	\rho(n)=\frac{n!}{2^{\sigma(n)}} \text{, where }
	\sigma(n)=\sum_{i=0}^{\floor{\log_2(n)}}  \left(\left(\floor{\frac{n}{2^i}}+1\right)\bmod 2\right)\times 2^i+ (-1)^{\left(\left(\floor{\frac{n}{2^i}}+1\right)\bmod 2\right)}\times (n \bmod  2^i)
	\end{equation}
	\end{comment}
	$$
	\rho(n)=\frac{n!}{2^{\sigma(n)}}
	$$
	where 
	$\sigma(n)$ is one of the forms in Theorems \ref{exp_recursive} or \ref{exp_closed}.
\end{proposition}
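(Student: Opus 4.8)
The plan is to reduce this directly to the general counting machinery set up in Section~\ref{method}, since all the structural facts we need are already in hand. First I would invoke Proposition~\ref{pair_iso}: every divide-and-conquer SD-tree on $n$ elements is isomorphic to every other, so there is a single parenthetic form $P$ representing all divide-and-conquer products on $n$ terms, and its representing SD-tree $T_P$ has a well-defined number of $S$-nodes, namely $\sigma(n)$ as computed in Theorem~\ref{exp_recursive} (recursively) or Theorem~\ref{exp_closed} (in closed form).

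Next I would apply Corollary~\ref{equiv_sum} to this form $P$: the number of computationally inequivalent commutative non-associative products with parenthetic form $P$ is exactly $\frac{n!}{2^s}$, where $s$ is the number of $S$-nodes of $T_P$. Substituting $s=\sigma(n)$ gives $\frac{n!}{2^{\sigma(n)}}$.

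Finally I would close the identification by noting, via the Proposition immediately following Proposition~\ref{pair_iso}, that a commutative non-associative product whose tree is isomorphic to a divide-and-conquer SD-tree is itself a divide-and-conquer product; hence the products being counted in the previous step are precisely the divide-and-conquer products on $n$ terms, and $\rho(n)=\frac{n!}{2^{\sigma(n)}}$ as claimed.

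There is essentially no obstacle here: the proof is a one-line corollary of work already done, and the only thing to be careful about is stating which ``$\sigma(n)$'' is meant (any of the equivalent expressions from Theorems~\ref{exp_recursive} and \ref{exp_closed}, which all compute the number of $S$-nodes of the divide-and-conquer SD-tree on $n$ leaves). The mild subtlety worth a sentence is that Corollary~\ref{equiv_sum} counts products sharing a fixed parenthetic form, so one must first know — from Proposition~\ref{pair_iso} — that ``divide-and-conquer'' pins down a unique such form, and then — from the Proposition after it — that nothing outside this form sneaks in; both are already established, so the argument is complete.
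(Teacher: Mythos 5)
Your proposal is correct and follows essentially the same route as the paper, which proves this proposition by direct appeal to Corollary~\ref{equiv_sum} together with Theorems~\ref{exp_recursive} and~\ref{exp_closed}. Your version is simply more explicit about the supporting roles of Proposition~\ref{pair_iso} and the proposition following it, which the paper leaves implicit.
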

\begin{proof}
	Follows directly from Corollary \ref{equiv_sum} and Theorems \ref{exp_recursive} and \ref{exp_closed}.
\end{proof}
Counting the divide-and-conquer  products on $n$ elements is equivalent to counting the tournaments on $n$ teams. This is the classical formulation of this problem, and gives rise to the sequence OEIS \seqnum{A096351} \cite{OEIS}.
Proposition \ref{pairclosednew} gives new recursive and closed formulas for OEIS \seqnum{A096351}.
For completeness and comparison, we present Proposition \ref{pairrecursiveold}, also giving a formula for this sequence, which was shown by David in \cite{david88}. 

\begin{proposition}\label{pairrecursiveold} \cite{david88}
	The number of computationally inequivalent divide-and-conquer products on $n$ variables is 
	$$
	\rho(n)=
	\begin{cases}
	\frac{1}{2}\binom{2m} {m}\rho(m)^2, & \text{if}\ n=2m; \\
	\binom{2m+1}{m}\rho(m)\rho(m+1), & \text{if}\ n=2m+1.
	\end{cases}
	$$ 
\end{proposition}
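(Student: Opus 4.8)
The plan is to deduce the recurrence from the closed form $\rho(n) = n!/2^{\sigma(n)}$ of Proposition~\ref{pairclosednew}, combined with the recurrence for $\sigma(n)$ in Theorem~\ref{exp_recursive}; this is quicker than re-running a combinatorial argument, since the key overcounting issue has already been packaged into Corollary~\ref{equiv_sum}. For $n=2m$, Proposition~\ref{pairclosednew} gives $\rho(2m) = (2m)!/2^{\sigma(2m)}$, and Theorem~\ref{exp_recursive} gives $\sigma(2m)=2\sigma(m)+1$, so $\rho(2m) = (2m)!/2^{2\sigma(m)+1}$. On the other hand, $\tfrac12\binom{2m}{m}\rho(m)^2 = \tfrac12\cdot\frac{(2m)!}{(m!)^2}\cdot\frac{(m!)^2}{2^{2\sigma(m)}} = \frac{(2m)!}{2^{2\sigma(m)+1}}$, so the two sides agree. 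For $n=2m+1$, similarly $\rho(2m+1) = (2m+1)!/2^{\sigma(m)+\sigma(m+1)}$ using $\sigma(2m+1)=\sigma(m)+\sigma(m+1)$, while $\binom{2m+1}{m}\rho(m)\rho(m+1) = \frac{(2m+1)!}{m!\,(m+1)!}\cdot\frac{m!}{2^{\sigma(m)}}\cdot\frac{(m+1)!}{2^{\sigma(m+1)}} = \frac{(2m+1)!}{2^{\sigma(m)+\sigma(m+1)}}$, again matching. That is the whole argument.

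Alternatively, for a more self-contained and conceptual derivation, I would argue combinatorially: a divide-and-conquer product on a set of $n$ terms is specified by splitting that set into the two (almost) balanced blocks of sizes $\lfloor n/2\rfloor$ and $\lceil n/2\rceil$ and then choosing a divide-and-conquer product on each block, and conversely each such choice determines a product. When $n=2m+1$ the two blocks have distinct sizes $m$ and $m+1$, so they are canonically labeled "smaller" and "larger"; the number of block choices is $\binom{2m+1}{m}$, and each block is filled independently, giving $\binom{2m+1}{m}\rho(m)\rho(m+1)$. When $n=2m$ the two blocks both have size $m$, so the assignment of an unordered pair of blocks to "left" and "right" is two-to-one onto products that agree up to transposing the two root subtrees — which by pairwise commutativity are equivalent — so the count $\binom{2m}{m}\rho(m)^2$ must be halved.

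The one point needing care — and the only place the proof is not purely formal — is the even case: one must know that swapping the two size-$m$ root subtrees always yields an equivalent product and that no further identifications occur among products with the same block partition, so that the overcount factor is exactly $2$. This is precisely the $n!/2^{s}$ phenomenon established in Proposition~\ref{sd_tree_enum} and Corollary~\ref{equiv_sum} (here $s$ includes the root $S$-node of an even-$n$ divide-and-conquer tree). For a clean write-up I would therefore present the short algebraic route first, which inherits this justification from Proposition~\ref{pairclosednew}, and offer the combinatorial recursion as a remark; a base-case check $\rho(1)=1$ completes either version.
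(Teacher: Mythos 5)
Your proof is correct, but note that the paper itself offers no proof of this proposition: it is quoted from David \cite{david88} ``for completeness and comparison'' immediately after Proposition~\ref{pairclosednew}, so there is no argument in the text to compare against. Your first (algebraic) route is exactly the natural way to connect the two displayed formulas: substituting $\rho(n)=n!/2^{\sigma(n)}$ from Proposition~\ref{pairclosednew} and the recurrence $\sigma(2m)=2\sigma(m)+1$, $\sigma(2m+1)=\sigma(m)+\sigma(m+1)$ from Theorem~\ref{exp_recursive} makes both cases reduce to the identities $\tfrac12\binom{2m}{m}\frac{(m!)^2}{2^{2\sigma(m)}}=\frac{(2m)!}{2^{2\sigma(m)+1}}$ and $\binom{2m+1}{m}\frac{m!\,(m+1)!}{2^{\sigma(m)+\sigma(m+1)}}=\frac{(2m+1)!}{2^{\sigma(m)+\sigma(m+1)}}$, which are immediate; this is not circular, since Proposition~\ref{pairclosednew} rests only on Corollary~\ref{equiv_sum} and Theorem~\ref{exp_recursive}. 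Your second (combinatorial) route is essentially David's original style of argument, and your handling of the even case is the one point that genuinely needs care: the factor of $2$ is exactly $2$ because the two size-$m$ blocks are disjoint nonempty label sets, so an ordered pair (block, product-class) can never equal its swap, and swaps at non-root nodes are already absorbed into $\rho(m)$. Either version is a complete proof; the algebraic one doubles as a useful consistency check between the paper's new closed form and the classical recurrence.
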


\section {MinD trees: trees having a minimum number of D-nodes} \label{max_S_section}
Trees having a minimal number of $D$-nodes are of interest when considering computational balance. In this section, we constructively characterize such trees. This construction is really quite simple. 

It is convenient to do this by considering $S$-nodes, using the fact that the trees on $n$ leaves with minimal $D$-nodes are exactly the trees on $n$ leaves with maximal $S$-nodes. 
It follows from Corollary \ref{equiv_sum} that $\sigma$, the number of $S$-nodes in a form on $n$ terms, must be less than or equal to the maximum power of $2$ that divides $n$. We use this fact to deduce a maximum number of $S$-nodes on a parenthetic form, to show that there is a parenthetic form that meets this maximum, and to list all forms that meet the maximum. 
\begin{lemma}\label{pow2_2k}
	The largest power of $2$ that divides $2^k!$ is  $2^{(2^k-1)}$.
\end{lemma}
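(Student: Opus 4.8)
The plan is to reduce the statement to a computation of the $2$-adic valuation of $(2^k)!$: by definition the largest power of $2$ dividing $(2^k)!$ is $2^{v_2((2^k)!)}$, where $v_2$ denotes the exponent of $2$ in the prime factorization, so it suffices to show $v_2\big((2^k)!\big) = 2^k-1$.

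The cleanest route is a short induction on $k$. For the base case $k=1$ we have $v_2(2!) = v_2(2) = 1 = 2^1-1$. For the inductive step, note that among the factors $1,2,\dots,2^{k+1}$ of $(2^{k+1})!$ exactly the $2^k$ even ones each contribute a single factor of $2$, and dividing each of those by $2$ turns the list of even factors into $1,2,\dots,2^k$; hence
$$
v_2\big((2^{k+1})!\big) = 2^k + v_2\big((2^k)!\big).
$$
Applying the induction hypothesis $v_2\big((2^k)!\big)=2^k-1$ gives $2^k + (2^k-1) = 2^{k+1}-1$, which closes the induction.

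An equivalent alternative, if a non-inductive statement is preferred, is to invoke Legendre's formula $v_2(n!) = \sum_{i\ge 1}\floor{n/2^i}$ with $n=2^k$, which gives $\sum_{i=1}^{k}2^{k-i} = 2^k-1$ by the finite geometric series; or the digit-sum form $v_2(n!) = n-\omega(n)$, which yields $2^k-1$ at once since $\omega(2^k)=1$. Any of these is routine, and I expect no real obstacle — the only decision is which computation to present, and the only care needed is the bookkeeping of the base case and the even/odd split. As a consistency check, combining this lemma with Proposition~\ref{upper_bound_s} says that a parenthetic form on $2^k$ leaves has at most $2^k-1$ $S$-nodes, i.e.\ all of its interior nodes; this bound is attained exactly by the perfect tree, in which every interior node is an $S$-node.
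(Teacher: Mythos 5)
Your proof is correct. Your primary route is an induction on $k$ via the recursion $v_2\big((2^{k+1})!\big) = 2^k + v_2\big((2^k)!\big)$, obtained by stripping one factor of $2$ from each of the $2^k$ even numbers among $1,\dots,2^{k+1}$. The paper instead counts directly: for each $1\le i\le k$ there are exactly $2^{k-i}$ multiples of $2^i$ in $\{1,\dots,2^k\}$, each contributing one to the exponent, so the exponent is $\sum_{i=1}^{k}2^{k-i}=2^k-1$ — that is, precisely the Legendre-formula computation you list as your first alternative. The two arguments are essentially the same count organized differently: your induction unfolds the paper's geometric series one level at a time, while the paper sums it in one shot. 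Neither buys anything substantial over the other here; the direct count is marginally shorter, and your inductive form would generalize a bit more naturally if one later wanted the recursion $v_2\big((2m)!\big)=m+v_2(m!)$ for arbitrary $m$. Everything you wrote, including the base case and the even/odd bookkeeping, checks out, and your closing consistency check against Proposition~\ref{upper_bound_s} and Lemma~\ref{pfb} matches how the paper uses this lemma.
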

\begin{proof}
	Let $2^{\beta(2^k)}$ be the largest power of $2$ that divides $2^k!$. For  $1\le i \le k$ there are $\frac{2^k}{2^i}=2^{k-i}$ numbers in $\{1,2,\dots,2^k\}$ divisible by $2^i$. Each of these adds one to  $\beta(2^k)$. So $\beta(2^k)=\sum\limits_{i=1}^{k}2^{k-i}=\sum\limits_{i=0}^{k-1}2^{i}=2^k-1$.
\end{proof}
\begin{lemma}\label{pow2_m}
	Let $n=2^k+r$, where $0  < r <2^k$. 
	The largest power of $2$ that divides $\frac{n!}{(2^k)!}=n(n-1)\cdots (2^k+1)$ is the same as the largest power of $2$ that divides $(n-2^k)!=r!$.
\end{lemma}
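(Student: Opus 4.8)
The plan is to compute directly the exponent of the largest power of $2$ dividing $n(n-1)\cdots(2^k+1)$ by counting, for each $i\ge 1$, how many integers in the range $\{2^k+1,2^k+2,\dots,n\}$ are divisible by $2^i$; this mirrors the counting argument used in the proof of Lemma~\ref{pow2_2k}. Writing $v_2$ for the exponent of the largest power of $2$ dividing a number, we have $v_2\!\left(\tfrac{n!}{(2^k)!}\right)=\sum_{i\ge 1}c_i$, where $c_i$ is that count.

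First I would observe that since $2^k$ is divisible by $2^i$ for every $i\le k$, an integer $2^k+j$ with $1\le j\le r$ is divisible by $2^i$ if and only if $j$ is divisible by $2^i$; hence for $i\le k$ we get $c_i=\floor{r/2^i}$, which is exactly the number of multiples of $2^i$ among $\{1,\dots,r\}$. Next, for $i\ge k+1$ I would note that $2^i\ge 2^{k+1}>2^k+r=n$ because $r<2^k$, so $c_i=0$. Combining, $\sum_{i\ge1}c_i=\sum_{i=1}^{k}\floor{r/2^i}=\sum_{i\ge1}\floor{r/2^i}$, the last equality because $\floor{r/2^i}=0$ once $i\ge k$ (again using $r<2^k$). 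By Legendre's classical formula this last sum is precisely $v_2(r!)$, i.e.\ the exponent of the largest power of $2$ dividing $r!=(n-2^k)!$, which is what the lemma asserts.

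Alternatively — and perhaps closer to the paper's notation — one could invoke $v_2(m!)=m-\omega(m)$ directly: then $v_2(n!)-v_2((2^k)!)=(n-\omega(n))-(2^k-1)$, and Lemma~\ref{weightlemma_addone} gives $\omega(n)=\omega(r)+1$, so this equals $(2^k+r)-(\omega(r)+1)-2^k+1=r-\omega(r)=v_2(r!)$. I would present the counting version as the main proof, since it parallels Lemma~\ref{pow2_2k}, and mention the valuation identity as a remark.

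The only obstacle is the boundary bookkeeping: one must use the strict inequality $r<2^k$ both to rule out multiples of $2^i$ with $i\ge k+1$ in the shifted range and to see that the $i=k$ term contributes $\floor{r/2^k}=0$, so that the sum over the shifted range matches the full Legendre sum for $r!$. There is no substantive difficulty beyond this.
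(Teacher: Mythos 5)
Your proof is correct and rests on the same underlying observation as the paper's one-line argument --- namely that for $1\le j\le r<2^k$, divisibility of $2^k+j$ by $2^i$ with $i\le k$ is equivalent to divisibility of $j$, so the shifted range $\{2^k+1,\dots,n\}$ contributes exactly the same powers of $2$ as $\{1,\dots,r\}$; you organize this as a Legendre-style count over levels $i$, whereas the paper states it factor-by-factor, but these are just the two orders of summing the same double count. Your boundary bookkeeping ($c_i=0$ for $i\ge k+1$ since $2^{k+1}>n$) and the alternative valuation identity $v_2(m!)=m-\omega(m)$ are both sound, so there is no gap.
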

\begin{proof}
	$2^k$ does  not divide any of $\{2^k+1, 2^k+2, \dots ,2^k+r=n\}$, so largest power of of $2$ that divides $\frac{n!}{(2^k)!}=n(n-1)\cdots (2^k+1)$ is the same as the largest power of $2$ that divides $(n-2^k)!$.
\end{proof}	
\begin{lemma}\label{pow2_2km}
	Let $n=2^k+r$, where $0  < r <2^k$. The largest power of $2$ that divides $n!$ is the largest power of $2$ that divides $2^k!$ plus the largest power of $2$ that divides $(n-2^k)!=r!$
\end{lemma}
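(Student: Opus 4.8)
The plan is to reduce the statement to the elementary fact that the $2$-adic valuation is additive on products: if $2^a$ is the largest power of $2$ dividing $x$ and $2^b$ is the largest power of $2$ dividing $y$, then $2^{a+b}$ is the largest power of $2$ dividing $xy$. This follows at once by writing $x = 2^a x'$ and $y = 2^b y'$ with $x', y'$ odd, so $xy = 2^{a+b}(x' y')$ with $x' y'$ odd.

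First I would factor $n! = (2^k)! \cdot \big(n(n-1)\cdots(2^k+1)\big) = (2^k)! \cdot \frac{n!}{(2^k)!}$. Applying the additivity of the $2$-adic valuation to this product, the largest power of $2$ dividing $n!$ is the largest power of $2$ dividing $(2^k)!$ times the largest power of $2$ dividing $\frac{n!}{(2^k)!}$; equivalently, the exponents add.

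Next I would invoke Lemma~\ref{pow2_m}, which says the largest power of $2$ dividing $\frac{n!}{(2^k)!} = n(n-1)\cdots(2^k+1)$ equals the largest power of $2$ dividing $(n-2^k)! = r!$. Substituting this into the previous displayed identity gives exactly the claim: the exponent of $2$ in $n!$ is the exponent of $2$ in $(2^k)!$ plus the exponent of $2$ in $r!$.

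There is essentially no obstacle here; the only thing to be careful about is phrasing, since the lemma is stated in terms of ``the largest power of $2$ that divides $\cdots$'' rather than valuations, so the additivity step should be spelled out once in that language (splitting off the odd parts) to keep the argument self-contained. The hypothesis $0 < r < 2^k$ is used only through Lemma~\ref{pow2_m} (so that $2^k$ divides none of $2^k+1,\dots,n$); no further case analysis is needed.
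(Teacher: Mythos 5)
Your proposal is correct and follows essentially the same route as the paper, which simply cites Lemmas~\ref{pow2_2k} and~\ref{pow2_m}; you have just spelled out the implicit additivity of the $2$-adic valuation on the factorization $n! = (2^k)!\cdot\frac{n!}{(2^k)!}$, which is exactly the step the paper leaves to the reader.
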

\begin{proof}
	Follows from Lemmas \ref{pow2_2k}  and \ref{pow2_m}.
\end{proof}	
\begin{lemma} \label{pfb}
	A full binary tree on $2^k$ terms has the maximal number $2^k-1$ $S$-nodes if and only if it is perfect.
\end{lemma}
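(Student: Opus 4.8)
The plan is to first reduce the statement to a claim about when \emph{every} interior node of the tree is an $S$-node. A full binary tree on $2^k$ leaves has exactly $2^k-1$ interior nodes (by the well-known lemma quoted just above), so it has at most $2^k-1$ $S$-nodes, and it attains this maximum precisely when every interior node is an $S$-node. Hence it suffices to prove that a full binary tree on $2^k$ leaves has all of its interior nodes labeled $S$ if and only if it is perfect.

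For the ``if'' direction, suppose $T$ is the perfect tree on $2^k$ leaves. An interior node at depth $i$ has $2^{k-i}$ descendant leaves, evenly split as $2^{k-i-1}$ descendant leaves to each of its two children, so it is an $S$-node; thus all $2^k-1$ interior nodes are $S$-nodes.

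For the ``only if'' direction I would induct on $k$. The cases $k=0$ and $k=1$ are immediate. For $k\ge 1$, let $T$ be a full binary tree on $2^k$ leaves all of whose interior nodes are $S$-nodes. Its root is an $S$-node, so the two subtrees $T_L$ and $T_R$ have the same number of leaves, namely $2^{k-1}$ each. Every interior node of $T_L$ (resp.\ $T_R$) is also an interior node of $T$, hence an $S$-node, so $T_L$ and $T_R$ are full binary trees on $2^{k-1}$ leaves all of whose interior nodes are $S$-labeled; by the induction hypothesis each of them is perfect, and therefore $T=T_L\odot T_R$ is perfect on $2^k$ leaves.

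The argument is short, and the only point needing a word of care is that the ``all interior nodes are $S$-nodes'' property is inherited by subtrees and that the even split $2^k=2^{k-1}+2^{k-1}$ keeps the subtree sizes powers of two at every step; there is no genuine obstacle. (Alternatively, one could run a single top-down induction on depth: in such a tree a node at depth $d$ has $2^{k-d}$ descendant leaves, so a leaf, having exactly one descendant leaf, must sit at depth $k$, forcing every leaf to lie at depth $k$ and hence the tree to be perfect.)
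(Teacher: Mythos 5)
Your proof is correct and follows essentially the same route as the paper: both reduce the statement to the claim that a full binary tree on $2^k$ leaves is perfect if and only if every one of its $2^k-1$ interior nodes is an $S$-node. The only difference is that the paper cites this equivalence to an external reference, whereas you prove it directly (and correctly) by induction on $k$, which makes your version self-contained.
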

\begin{proof}
	A binary tree has $2^{k}$ leaf nodes and $2^k - 1$  interior nodes. It is perfect if and only if each of the interior nodes is an $S$-node \cite{zou19}. So it is perfect if and only if has the maximum number of $S$-nodes.
\end{proof}

\begin{notation}
    Denote by $T_{max}(i)$ a parenthetic form or tree on $i$ terms having a maximal number of $S$-nodes. 
\end{notation}
\begin{notation}
    Denote by $T_L \odot T_R$ the tree formed by joining to a root node a left subtree $T_L$ and a right subtree $T_R$. 
\end{notation}
\begin{proposition}\label{lowerboundmet}
	Let $n=2^k+r$ , where $0 \le r < 2^k$. 	Then $T_{max}(2^k)\odot T_{max}(r)$
	has the maximal number of $S$-nodes possible on a form on $n$ terms.
\end{proposition}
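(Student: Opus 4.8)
The plan is to prove, by strong induction on $n$, the sharper statement that the maximum number $M(n)$ of $S$-nodes among all parenthetic forms on $n$ leaves equals $s(n)$, where $2^{s(n)}$ is the largest power of $2$ dividing $n!$ (the quantity of Proposition~\ref{upper_bound_s}), and that when $n=2^k+r$ with $0<r<2^k$ this value is attained by $T_{max}(2^k)\odot T_{max}(r)$ (when $r=0$ the claim is read as: the perfect tree on $2^k$ leaves is optimal, which is Lemma~\ref{pfb}). Since Proposition~\ref{upper_bound_s} already supplies the inequality $M(n)\le s(n)$, it suffices to exhibit a form on $n$ leaves with exactly $s(n)$ $S$-nodes and to check that $T_{max}(2^k)\odot T_{max}(r)$ is such a form.

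The first ingredient is the behaviour of the $S$-node count under the composition $\odot$. If $T=T_L\odot T_R$, then the internal nodes of $T$ are the new root together with the internal nodes of $T_L$ and of $T_R$, and passing to $T$ changes the $S$/$D$ status of none of the latter; the root is an $S$-node precisely when $T_L$ and $T_R$ have equally many leaves. Hence the number of $S$-nodes of $T$ equals the sum of those of $T_L$ and $T_R$, plus $1$ if $|T_L|=|T_R|$ and $0$ otherwise. In our situation $0<r<2^k$ forces $2^k\neq r$, so the root of $T_{max}(2^k)\odot T_{max}(r)$ is a $D$-node, and this form therefore has exactly $M(2^k)+M(r)$ $S$-nodes.

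Now the induction. The base case $n=1$ is immediate ($M(1)=0=s(1)$). If $n=2^k$, Lemma~\ref{pfb} gives $M(2^k)=2^k-1$, which equals $s(2^k)$ by Lemma~\ref{pow2_2k}, attained by the perfect tree (each of its $2^k-1$ internal nodes being an $S$-node). If $n=2^k+r$ with $0<r<2^k$, then Lemmas~\ref{pow2_2km} and \ref{pow2_2k} give $s(n)=(2^k-1)+s(r)$; Lemma~\ref{pfb} gives $M(2^k)=2^k-1$, realized by the perfect tree $T_{max}(2^k)$; and the induction hypothesis, applied to $r<n$, gives $M(r)=s(r)$. Combining with the composition formula above, $T_{max}(2^k)\odot T_{max}(r)$ has $(2^k-1)+s(r)=s(n)$ $S$-nodes. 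This meets the upper bound of Proposition~\ref{upper_bound_s}, so $M(n)=s(n)$ and the maximum is attained by $T_{max}(2^k)\odot T_{max}(r)$, which is the assertion of the proposition.

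There is no deep obstacle here; the one thing that must be set up correctly is strengthening the inductive claim from ``$T_{max}(2^k)\odot T_{max}(r)$ is optimal'' to ``$M(n)$ equals the $2$-adic exponent $s(n)$ of $n!$'', since it is this numerical value of $M(r)$ — not merely the existence of $T_{max}(r)$ — that the induction hypothesis must deliver. The splitting $s(n)=(2^k-1)+s(r)$ is exactly Lemmas~\ref{pow2_2k}--\ref{pow2_2km}, and the fact that the newly created root is a $D$-node (hence contributes nothing to the count) is what makes the total come out to $(2^k-1)+s(r)$ and not one larger. One should also flag the degenerate reading of $T_{max}(2^k)\odot T_{max}(r)$ in the boundary case $r=0$, where the form is simply the perfect tree $T_{max}(2^k)$.
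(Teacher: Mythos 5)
Your proof is correct and takes essentially the same route as the paper's: the paper also notes that the root is a $D$-node when $r>0$, reduces the problem to the $2$-adic valuations of $2^k!$ and $r!$ via Lemmas~\ref{pfb}, \ref{pow2_2k} and \ref{pow2_2km}, and matches the upper bound of Proposition~\ref{upper_bound_s}. You have merely made explicit the strengthened inductive claim $M(n)=s(n)$ and the additivity of the $S$-node count under $\odot$, which the paper's terser argument leaves implicit.
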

\begin{proof}
    If $r=0$, $T_{max}(n)$ is the perfect full binary tree on $2^k$ nodes and has the maximal number of $S$-nodes, by Lemma \ref{pfb}.

    If $r>0$, the top node of $T_{max}(2^k)\odot T_{max}(r)$ is not an $S$-node  and the problem reduces to finding the largest powers of 2 that divide $2^k$ and $r$. The conclusion follows by Lemma \ref{pow2_2km}.
\end{proof}
\begin{comment}
\begin{remark}
    One such tree with $n$ leaves having maximal $S$-nodes can represented recursively by a tree having as the left subtree the perfect tree with $2^{\floor{\log_2(n)}}$ leafs, and having as a right subtree another tree with $r$ leaves having the same form.
\end{remark}
\end{comment}
\begin{corollary}\label{max_s_nodes}
    Let $n=2^k+r$ , where $0 \le r < 2^k$. The
	maximal number of $S$-nodes possible on a tree or form on $n$ terms is $n-\omega(n)$, and $T_{max}$ has $n-\omega(n)$ $S$-nodes.
\end{corollary}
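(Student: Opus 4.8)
The plan is to identify the maximal number of $S$-nodes on a form on $n$ terms with $s(n)$, the exponent of the largest power of $2$ dividing $n!$, and then to evaluate $s(n)=n-\omega(n)$. The first half is immediate from results already in hand: Proposition~\ref{upper_bound_s} says no parenthetic form on $n$ leaves has more than $s(n)$ $S$-nodes, and Proposition~\ref{lowerboundmet} says the form $T_{max}(2^k)\odot T_{max}(r)$ attains this bound. Hence the maximum is exactly $s(n)$, and $T_{max}(n)$ may be taken to be that explicit form. So the only remaining task is the arithmetic identity $s(n)=n-\omega(n)$.

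For that I would induct on the weight $\omega(n)$, i.e.\ on the number of terms in the binary decomposition of $n$, using the three lemmas just proved. In the base case $\omega(n)=1$, so $n=2^k$, and Lemma~\ref{pow2_2k} gives $s(2^k)=2^k-1=2^k-\omega(2^k)$. For the inductive step with $\omega(n)\ge 2$, write $n=2^k+r$ with $0<r<2^k$; then $\omega(r)=\omega(n)-1$ by Lemma~\ref{weightlemma_addone}, and Lemma~\ref{pow2_2km} gives $s(n)=s(2^k)+s(r)=(2^k-1)+s(r)$. Applying the induction hypothesis to $r$ (which has strictly smaller weight), $s(r)=r-\omega(r)=r-\omega(n)+1$, so $s(n)=2^k-1+r-\omega(n)+1=(2^k+r)-\omega(n)=n-\omega(n)$, closing the induction. (The degenerate case $n=1$ is covered by the base case, since $s(1)=0=1-\omega(1)$.)

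I do not expect a genuine obstacle here: the statement is exactly Legendre's formula for the $2$-adic valuation of $n!$, and Lemmas~\ref{pow2_2k}, \ref{pow2_m} and \ref{pow2_2km} have already carried out the recursive bookkeeping that proves it in the precise form needed, while Proposition~\ref{lowerboundmet} already handles attainability (including the edge case $r=0$). The only care required is keeping the two roles of $n!$ straight — that $s(n)$ is both the upper bound on $S$-nodes (via Proposition~\ref{sd_tree_enum}) and the quantity computed by the $2$-adic valuation lemmas — so that combining Propositions~\ref{upper_bound_s} and \ref{lowerboundmet} with the identity yields that the maximal number of $S$-nodes on a tree or form on $n$ terms is $n-\omega(n)$ and that $T_{max}$ realizes it.
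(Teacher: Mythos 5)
Your proposal is correct and takes essentially the same route as the paper: both arguments rest on Proposition~\ref{lowerboundmet} for attainability and reduce to the recursion $(2^k-1)+(r-\omega(r))=n-\omega(n)$ via Lemmas~\ref{pow2_2k}, \ref{pow2_2km}, and \ref{weightlemma_addone}. The only cosmetic difference is that you induct on $\omega(n)$ and phrase the computation as Legendre's formula for the $2$-adic valuation of $n!$, whereas the paper inducts on the number of leaves and counts $S$-nodes directly in the tree $T_{max}(2^k)\odot T_{max}(r)$.
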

\begin{proof}
    If $r=0$, $T_{max}(n)$ is the perfect tree with $n$ leaves, by Lemma \ref{pow2_m}, and has $2^k-1$ $S$-nodes, and we are done.
    
    Let $r>0$. By Proposition \ref{lowerboundmet},
    the maximal number of $S$-nodes in $T_{max}(n)$ is the number of $S$-nodes in $(T_{max}(2^k)\odot T_{max}(r))$. 
    
    We proceed via induction on the number of leaves.
    $r \ne 2^k$, so the root is not an $S$-node and the number of $S$-nodes in $(T_{max}(2^k)\odot T_{max}(r))$ is the sum of the number of $S$-nodes in $(T_{max}(2^k)$ and $(T_{max}(r))$.
    $T_{max}(2^k)$ has $2^k-1$ $S$-nodes, by Lemma \ref{pow2_2k}, and $T_{max}(r)$ has $r-\omega(r)$  $S$-nodes, by induction. So the number of $S$-nodes in $(T_{max}(2^k)\odot T_{max}(r))$ is
    \begin{flalign*}
        2^k-1+(r-\omega(r))
        &=(2^k+r)-(\omega(r)+1)\\
        &=n-\omega(n)& \text{ by Lemma \ref{weightlemma_addone}}&
    \end{flalign*} 
  \end{proof}
	The sequence $\sigma(n)$, the maximum number of $S$-nodes in a full binary tree with $n$ leaf nodes, is OEIS \seqnum{A011371}
	\cite{OEIS}, the exponent of the highest power of $2$ dividing $n!$.
	
	The following is a construction theorem on full binary trees with a maximum number of $S$-nodes (i.e., a minimum number of $D$-nodes). All trees with a minimum number of $D$-nodes are of this form. This theorem was independently shown in the recent \cite{kersting21}.
\begin{theorem} \label{max_tree_s}
    Let $n$ have binary expansion $(n_{k-1}\text{ }n_{k-2} \dots n_0)$ with weight $\omega(n)$, and let $R(n)=\{\rho \mid n_\rho = 1\}$. Let $T_{\omega(n)}$ be a tree with $\omega(n)$ leaves, with its leaves replaced by the $|\omega(n)|$ perfect trees on $2^{n_\rho}$ leaves, where $\rho$ ranges over $R(n)$. 
    Then
    a tree $T$ with $n$ leaves has the maximum number of $S$-nodes, and hence the minumal number of $D$-nodes, if and only if $T$ is of the form of $T_{\omega(n)}$.
\end{theorem}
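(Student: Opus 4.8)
The plan is to prove both implications, leaning on Corollary~\ref{max_s_nodes}, which already tells us the maximum number of $S$-nodes on $n$ leaves is exactly $n-\omega(n)$ (equivalently, by Lemma~\ref{dnodes}, the minimum number of $D$-nodes is $\omega(n)-1$). So it suffices to show (a) the constructed tree $T_{\omega(n)}$ attains this count, and (b) nothing else does. The one combinatorial fact driving everything is uniqueness of binary representation: a sum of \emph{distinct} powers of $2$ determines those powers, so two \emph{disjoint nonempty} subsets of $\{2^{\rho}:\rho\in R(n)\}$ can never have equal sums. (I will read the $2^{\rho}$ of the informal description for the ``$2^{n_\rho}$'' in the displayed statement.)

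For the ``if'' direction, suppose $T$ is obtained from a base tree $B$ on $\omega(n)$ leaves by replacing each leaf of $B$ with a perfect tree on $2^{\rho}$ leaves, one for each $\rho\in R(n)$. I would split the internal nodes of $T$ into those inherited from $B$ and those lying strictly inside a grafted perfect tree. A node inherited from $B$ has, in $T$, leaf-count $\sum_{\rho\in S}2^{\rho}$, where $S\subseteq R(n)$ indexes the base-leaves below it; its two children partition $S$ into disjoint nonempty parts with unequal power-sums, so the node is a $D$-node, and there are $\omega(n)-1$ of these. Every node strictly inside a grafted perfect tree is an $S$-node by Lemma~\ref{pfb}. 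Since a full binary tree on $n$ leaves has $n-1$ internal nodes, $T$ has $(n-1)-(\omega(n)-1)=n-\omega(n)$ $S$-nodes, the maximum by Corollary~\ref{max_s_nodes}.

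For the ``only if'' direction I would induct on $\omega(n)$. If $\omega(n)=1$ then $n=2^{k}$, the maximum is $2^{k}-1$, and by Lemma~\ref{pfb} any tree attaining it is perfect, which is exactly $T_{\omega(n)}$ (base tree a single leaf). If $\omega(n)\ge 2$, let an extremal $T$ have root-subtrees on $a\ge b\ge 1$ leaves with $a+b=n$. If $a=b=m$ the root is an $S$-node, and since $\omega(m)=\omega(n)$ (Lemma~\ref{weightlemma_mult2}) the $S$-node count is at most $1+2(m-\omega(m))=n+1-2\omega(n)<n-\omega(n)$, a contradiction; so $a>b$ and the root is a $D$-node. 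Then $s(T)=s(T_L)+s(T_R)\le(a-\omega(a))+(b-\omega(b))=n-(\omega(a)+\omega(b))\le n-\omega(n)$, the last step by subadditivity of the weight (Lemma~\ref{weightlemma_addtwo}). Equality forces $s(T_L)=a-\omega(a)$, $s(T_R)=b-\omega(b)$, and $\omega(a)+\omega(b)=\omega(n)$; the latter means (again Lemma~\ref{weightlemma_addtwo}) that $R(a)$ and $R(b)$ are disjoint, hence $R(n)=R(a)\sqcup R(b)$ with $\omega(a),\omega(b)<\omega(n)$. Applying the inductive hypothesis to $T_L$ and $T_R$ writes each as a base tree with perfect trees on the $2^{\rho}$, $\rho\in R(a)$ resp.\ $\rho\in R(b)$, grafted in; taking the new base tree to be the join of these two base trees under the root of $T$ exhibits $T$ as of the form $T_{\omega(n)}$.

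The main obstacle I expect is this ``only if'' direction — specifically the passage from ``maximal $S$-node count'' to the rigid support identity $\omega(a)+\omega(b)=\omega(n)$, and then verifying that the recursive reassembly is \emph{literally} $T_{\omega(n)}$ (that the grafted subtrees sit on the correct base-leaves, and that the degenerate case $\rho=0$, where the ``perfect tree on $2^{0}$ leaves'' is a single leaf, introduces no bookkeeping gap). The counting in the ``if'' direction and the exclusion of the balanced root split are routine once uniqueness of binary representation and Lemma~\ref{weightlemma_addtwo} are in hand.
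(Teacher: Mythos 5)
Your proposal is correct and follows essentially the same route as the paper: the ``if'' direction by counting $S$-nodes inside the grafted perfect trees against Corollary~\ref{max_s_nodes}, and the ``only if'' direction by splitting at the root, ruling out an $S$-node root when $\omega(n)\ge 2$, and using Lemma~\ref{weightlemma_addtwo} to force $R(n)=R(a)\sqcup R(b)$ before applying induction. Your explicit justification that every base-tree node is a $D$-node (unequal sums of disjoint nonempty sets of distinct powers of $2$) is a small but welcome addition that the paper leaves implicit.
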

\begin{proof}
    Let $T$ be of the form of $T_{\omega(n)}$.
    Each of its $R(n)$ perfect subtrees has $2^{n_\rho}-1$ interior $S$-nodes, where $\rho \in R(n)$, and none of the nodes exterior to the perfect subtrees are $S$-nodes. So the number of $S$-nodes is 
    $$\sum_{\rho \in R(n)} (2^{n_\rho}-1) = \sum_{\rho \in R(n)} 2^{n_\rho}-\omega(n) = n-\omega(n)$$
    which is the maximum possible, by Corollary \ref{max_s_nodes}.
        
    Now, let $T$ on $n$ leaves have the maximum $(n-\omega(n))$ $S$-nodes. Let its left subtree have $\ell$ leaves and its right have $r$ leaves, so $\ell+r=n$, and proceed by induction to show that $T$ has form $T_{\omega(n)}$.
        
    If $\ell=r$, then each  subtree has the same number of leaves,  so  $n$ is even, $\ell=r=\frac{n}{2}$, and    $\omega(\ell)=\omega(r)=\omega(n)$.
    The number of $S$-nodes is $1+(\ell-\omega(\ell))+(r-\omega(r))=n-(2\omega(n)-1)$, since the root is an $S$-node, and by the induction hypothesis, since the ``leaves'' of the subtrees  are perfect trees. This means that $2\omega(n)-1 = \omega(n)$, so $\omega(n)=1$, $n$ is a power of 2, and by Lemma \ref{pfb}, $T$ is perfect and has the form of $T_{\omega(n)}$.
        
    If $\ell \ne r$, then the root is not an $S$-node and the total number of $S$-nodes is $(\ell-\omega(\ell))+(r-\omega(r))=n-(\omega(\ell)+\omega(r))$. So $(\omega(\ell)+\omega(r))=\omega(n)$. By Lemma \ref{weightlemma_addtwo}, there is no index $i$ where $\ell_i$ and $r_i$ are both $1$. So the sets of left and right perfect trees are disjoint, and by induction, $T$ has the form of $T_{\omega(n)}$.
\end{proof}
\begin{definition}
    A \emph{MinD tree} on $n$ leaves is a binary full tree with a minimal number of $D$-nodes for that $n$.
\end{definition}
\begin{definition}
    The \emph{base tree} on a  MinD tree $T$ on $n$ leaves is the leaf-unlabeled tree on $\omega(n)$ leaves whose interior nodes are the $D$-nodes of $T$.
\end{definition}
\begin{corollary} \label{min_dnodes}
    A MinD tree on $n$ leaves has $(\omega(n)-1)$ $D$-nodes.
\end{corollary}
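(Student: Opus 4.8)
The plan is to derive this immediately from the two facts already established: the complementarity between $S$-nodes and $D$-nodes in Lemma~\ref{dnodes}, and the exact value of the maximal number of $S$-nodes in Corollary~\ref{max_s_nodes}. First I would recall that, by Lemma~\ref{dnodes}, a full binary tree on $n$ leaves with $s$ $S$-nodes has exactly $(n-1)-s$ $D$-nodes, since there are $n-1$ interior nodes in all. Consequently, minimizing the number of $D$-nodes over all full binary trees on $n$ leaves is the same as maximizing the number of $S$-nodes; this is precisely the equivalence already invoked in the opening of Section~\ref{max_S_section}, so a MinD tree is exactly a tree attaining the maximal $S$-node count.

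Next I would apply Corollary~\ref{max_s_nodes}, which states that the maximal number of $S$-nodes on a tree on $n$ terms is $n-\omega(n)$. Substituting $s = n-\omega(n)$ into the formula $(n-1)-s$ from Lemma~\ref{dnodes} yields
$$
(n-1)-\bigl(n-\omega(n)\bigr) = \omega(n)-1,
$$
which is the claimed count of $D$-nodes for a MinD tree. This also matches the informal description given after Theorem~\ref{max_tree_s}, where the $D$-nodes of a MinD tree are identified with the interior nodes of a base tree on $\omega(n)$ leaves, a tree that necessarily has $\omega(n)-1$ interior nodes.

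There is essentially no obstacle here: the corollary is a one-line arithmetic consequence of results proved immediately above it. The only thing to be careful about is the trivial edge case $n = 2^k$ (so $\omega(n) = 1$), where the MinD tree is the perfect tree and has $0$ $D$-nodes; this is consistent with $\omega(n)-1 = 0$ and is already covered by Corollary~\ref{max_s_nodes} and Lemma~\ref{pfb}. So the write-up is simply: MinD $\iff$ maximal $S$-nodes (by Lemma~\ref{dnodes}); maximal $S$-nodes $= n-\omega(n)$ (by Corollary~\ref{max_s_nodes}); hence $D$-nodes $= (n-1)-(n-\omega(n)) = \omega(n)-1$.
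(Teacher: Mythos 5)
Your proposal is correct and follows essentially the same route as the paper's own proof: both invoke Corollary~\ref{max_s_nodes} to get $n-\omega(n)$ $S$-nodes and then apply Lemma~\ref{dnodes} to obtain $(n-1)-(n-\omega(n))=\omega(n)-1$ $D$-nodes. The extra remarks about the base-tree interpretation and the $n=2^k$ edge case are consistent with the paper but not needed.
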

\begin{proof}
    Such a tree is of the form described in Theorem \ref{max_tree_s}. It has $(n-\omega(n))$ $S$-nodes, by Corollary \ref{max_s_nodes}, so has $(n-1-(n-\omega(n))=\omega(n)-1$ $D$-nodes, by Lemma \ref{dnodes}.
\end{proof}
\begin{corollary} \label{nummax}
    There are exactly $(2\cdot \omega(n)-3)!!$ MinD trees on $n$ leaves.
\end{corollary}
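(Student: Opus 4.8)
The plan is to exhibit a bijection between isomorphism classes of MinD trees on $n$ leaves and equivalence classes of commutative non-associative products on $\omega(n)$ identified terms, and then apply Proposition~\ref{ineq_sum}. By Theorem~\ref{max_tree_s}, a tree $T$ on $n$ leaves is a MinD tree if and only if it is obtained from some full binary tree $B$ on $\omega(n)$ leaves by substituting, for the leaves of $B$, the $\omega(n)$ perfect trees whose numbers of leaves are the distinct powers $2^{\rho}$, $\rho\in R(n)$, occurring in the binary expansion of $n$. Here $B$ is precisely the base tree of $T$ in the sense of the definition preceding this corollary, so $B$ is recovered from $T$ as the subtree induced by the $D$-nodes; and since the numbers $2^{\rho}$ are pairwise distinct, the assignment of perfect subtrees to the leaves of $B$ is also recovered from $T$. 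Thus the data of a MinD tree is exactly the data of a full binary tree on $\omega(n)$ leaves together with a bijective labeling of those leaves by the $\omega(n)$-element set $R(n)$.

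Next I would verify that this correspondence respects isomorphism. Any isomorphism $T\to T'$ of MinD trees preserves, at every node, the number of descendant leaves; hence it carries each maximal perfect subtree of $T$ to the maximal perfect subtree of $T'$ with the same number of leaves, and therefore descends to an isomorphism $B\to B'$ of base trees that matches the leaf-labels. Conversely, a label-preserving isomorphism $B\to B'$ lifts to an isomorphism $T\to T'$. So isomorphism classes of MinD trees on $n$ leaves correspond bijectively to isomorphism classes of full binary trees on $\omega(n)$ leaves carrying a bijective labeling by a fixed $\omega(n)$-element set, i.e.\ to commutative non-associative products on $\omega(n)$ identified terms. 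By Proposition~\ref{ineq_sum} there are $(2\omega(n)-3)!!$ of these; when $\omega(n)=1$ this reads $(-1)!!=1$, consistent with the unique MinD tree (the perfect tree) in the case that $n$ is a power of $2$.

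The step I would be most careful about is this compatibility with the equivalence relation: one must be sure that two labeled base trees yield isomorphic MinD trees exactly when they are isomorphic as labeled trees. This hinges on the observation that tree isomorphisms preserve leaf-descendant counts, so a MinD-tree isomorphism can do nothing other than permute the perfect subtrees of a fixed size and rearrange the base tree accordingly; in particular no isomorphism can mix a perfect subtree with a non-perfect part of the tree, nor merge two perfect subtrees of different sizes. Once that is in hand, the count follows immediately from Theorem~\ref{max_tree_s} and Proposition~\ref{ineq_sum}, with no further computation.
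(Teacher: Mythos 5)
Your proof is correct and follows essentially the same route as the paper: use Theorem~\ref{max_tree_s} to identify MinD trees on $n$ leaves with full binary trees on $\omega(n)$ leaves whose leaves are labeled by the distinct perfect subtrees $P2^{\rho}$, $\rho\in R(n)$, and then count these via Proposition~\ref{ineq_sum}. The only difference is that you spell out in more detail why the correspondence respects isomorphism (isomorphisms preserve leaf-descendant counts, so they permute the maximal perfect subtrees size-preservingly), which the paper leaves implicit.
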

\begin{proof}
    MinD trees are completely characterized in Theorem \ref{max_tree_s}. Since these trees are formed essentially by labeling the leaves of a leaf-unlabeled tree with different (leaf-unlabeled) perfect trees, counting them is equivalent to counting inequivalent products with $\omega(n)$ terms. There are $(2\cdot \omega(n)-3)!!$ of these, by Proposition \ref{ineq_sum}.
\end{proof}
The parenthetic form characterized in Theorem \ref{max_tree_s} meets the upper bound for $S$-nodes discussed in Section \ref{lu_bounds}. The number of products instantiating this parenthetic form meets the lower bound for the number of inequivalent commutative non-associative products on $n$ variables. 
    
A complete example of the $15$ MinD trees on $27$ leaves is given in Fig. \ref{fig:minD_27}. This figure illustrates the process of constructing trees with a minimum number of $D$-nodes.

\section{D-nodes and measures of tree balance} \label{colless_dnodes}
The Colless index, defined in 1980 \cite{colless80}, is a measure of the balance of a full binary tree. It is a widely used measure in the field of phylogenics \cite{kirkpatrick93, mooers97, coronado20}.

The Colless index is calculated by assigning a value to each interior node expressing the difference in size of its two subtrees and adding these: if $\ell_L$ and $\ell_R$ are the number of leaves in the left and right subtrees of an interior node, the node's value is $\lvert \ell_L - \ell_R\rvert$, and the Colless index is the sum of these values. This gives some sense of the imbalance of a tree in terms of the number of leaves in each branch.

The relationship of $S$- and $D$-nodes to the Colless index is clear. The $S$-nodes have value 0. The $D$-nodes can be used to calculate the Colless index. The smallest value that a $D$-node can take is $1$, and indeed, all the $D$-nodes in divide-and-conquer trees have value $1$. The largest value a $D$-node can take is $(n-2)$.

\begin{definition} \label{def_colless}
    Let $D$ be the set of $D$-nodes on the tree $T$, and let $t(i) \in D$. Let the right subtree  of $t(i)$ have $\ell_t(i)$ leaves and the left subtree of $t(i)$ have $r_t(i)$ leaves. The \emph{Colless index} is defined as 
    $$
        \sum\limits_{t(i)\in  D}\lvert \ell_t(i)-r_t(i)\rvert
    $$
\end{definition}
Since  by definition, the two children of an $S$-node have the  same number of leaves, the $S$-nodes contribute nothing to the Colless index, and we  do not consider them.

\subsection{Trees with maximal Colless index: ladder trees} \label{subsec_colless_ladder}
Each node in a ladder  tree on $n$ leaves has one child a leaf, and the other a smaller ladder tree on $(n-1)$ leaves, with the exception of the lowest interior node, which has as children two leaves. Thus, all the interior nodes are $D$-nodes, except for the lowest one. The following proposition has been observed many times by Heard, Colless and others \cite{heard92, colless95, coronado20}. Ladder trees were shown by Mir et. al \cite{mir18} to have maximal Colless index. 
\begin{proposition} \label{colless_ladder} \cite{colless95}
    The Colless index of a ladder tree $T$ having $n$ leaves is
    $$
        \frac{(n-1)(n-2)}{2}
    $$
\end{proposition}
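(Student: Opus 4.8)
The plan is to induct on $n$, exploiting the defining recursive structure of the ladder tree. For the base case $n=2$, the ladder tree is a single interior node with two leaf children; it is an $S$-node, so by Definition~\ref{def_colless} it contributes nothing, and indeed $\frac{(2-1)(2-2)}{2}=0$.

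For the inductive step, suppose $n>2$ and write the ladder tree $T$ on $n$ leaves in the form $T' \odot \ell$, where $\ell$ is a single leaf and $T'$ is the ladder tree on $n-1$ leaves (this is precisely the recursion in the definition of a ladder tree). The root of $T$ has subtrees with $n-1$ and $1$ leaves, hence is a $D$-node of value $\lvert (n-1)-1\rvert = n-2$. Every other interior node of $T$ is an interior node of the sub-ladder $T'$, with the same two subtrees in $T$ as in $T'$, so these nodes contribute exactly the Colless index of $T'$. Therefore the Colless index of $T$ equals $(n-2)$ plus the Colless index of $T'$. Applying the inductive hypothesis,
$$ (n-2) + \frac{(n-2)(n-3)}{2} = \frac{2(n-2) + (n-2)(n-3)}{2} = \frac{(n-1)(n-2)}{2}, $$
which closes the induction.

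One could equally avoid induction with a direct count: labelling the interior nodes $v_1,\dots,v_{n-1}$ of $T$ from the root downward, node $v_i$ roots the sub-ladder on $n-i+1$ leaves, so for $1\le i\le n-2$ its two subtrees have sizes $n-i$ and $1$ and it contributes $n-i-1$, while $v_{n-1}$ is the lowest interior node — an $S$-node — contributing $0$; summing yields $\sum_{i=1}^{n-2}(n-i-1) = \sum_{j=1}^{n-2} j = \frac{(n-1)(n-2)}{2}$.

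I do not expect any genuine obstacle here; the computation is elementary. The only point that warrants a moment's care is recalling (as in Definition~\ref{def_colless}) that the unique lowest interior node of a ladder tree is an $S$-node and so is legitimately omitted from the sum, which is exactly what makes the base case $n=2$ (and the check $n=3$) come out right.
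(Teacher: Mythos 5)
Your proof is correct and follows essentially the same route as the paper, which likewise offers both the direct count (the $i$th interior node from the root contributes $(n-i)-1$) and the quick induction on the ladder recursion $T = T' \odot \ell$. Your version merely spells out the algebra and the treatment of the lowest ($S$-node) rung more explicitly; there is nothing to correct.
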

\begin{proof}
    This is seen from the fact that there are $(n-1)$ interior nodes and the $i^\text{th}$ interior node contributes $((n-i)-1)$ to the index. This can also be shown with a quick induction, since one of the children of an interior node with $j$ leaves is a ladder tree with $(j-1)$ leaves, and the other is a leaf.
\end{proof}
\begin{proposition}\cite{mir18}
    The Colless index on a ladder tree is maximal.
\end{proposition}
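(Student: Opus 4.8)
The plan is to prove the stronger quantitative statement that $C(T) \le \frac{(n-1)(n-2)}{2}$ for \emph{every} full binary tree $T$ on $n$ leaves, where $C(T)$ denotes the Colless index of Definition~\ref{def_colless}. Since the ladder tree attains exactly this value by Proposition~\ref{colless_ladder}, maximality follows immediately. I would argue by strong induction on $n$. The base cases $n=1$ and $n=2$ are trivial: such trees have no $D$-nodes, so $C(T) = 0 = \frac{(n-1)(n-2)}{2}$.

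For the inductive step, write $T = T_L \odot T_R$, where $T_L$ has $\ell$ leaves and $T_R$ has $n-\ell$ leaves, with $1 \le \ell \le n-1$. Because the value attached to the root is $|\ell - (n-\ell)| = |n - 2\ell|$ and the Colless index is a sum over the $D$-nodes (with $S$-nodes contributing $0$ by Definition~\ref{def_colless}), the index is additive across the root split:
\[
    C(T) = C(T_L) + C(T_R) + |n - 2\ell|.
\]
Applying the induction hypothesis to $T_L$ and $T_R$ — the bound is valid even if a subtree is a single leaf, since $\frac{(1-1)(1-2)}{2} = 0$ — gives $C(T) \le g(\ell)$, where
\[
    g(\ell) = \frac{(\ell-1)(\ell-2)}{2} + \frac{(n-\ell-1)(n-\ell-2)}{2} + |n - 2\ell|.
\]

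It then remains to verify $g(\ell) \le \frac{(n-1)(n-2)}{2}$ for integers $\ell \in \{1, \dots, n-1\}$. Each of the three summands of $g$ is a convex function of the real variable $\ell$ (the first two are quadratics with positive leading coefficient, and $|n-2\ell|$ is an absolute value of a linear function), so $g$ is convex on $[1, n-1]$ and hence attains its maximum over that interval at one of the endpoints $\ell = 1$ or $\ell = n-1$. Both give the same value, and a direct computation yields
$g(1) = g(n-1) = 0 + \frac{(n-2)(n-3)}{2} + (n-2) = \frac{(n-1)(n-2)}{2}$,
which closes the induction. (If one prefers, one may take $\ell \le n/2$ without loss of generality, drop the absolute value, and note that the resulting quadratic has vertex at $\ell = (n+2)/2$ and is therefore decreasing on $[1, n/2]$.)

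There is no serious obstacle here: this is a routine structural induction, and the only points needing a moment's care are the additivity of the Colless index across the root decomposition and the observation that the bounding function $g$ is convex, which collapses the optimization over the split size $\ell$ to evaluating the two extreme splits. I would note in passing that the strict convexity of $g$ away from its kink shows $\ell \in \{1, n-1\}$ is in fact forced at every internal node of a maximizer, so the ladder tree is the unique maximizer; but this refinement is not required for the stated proposition.
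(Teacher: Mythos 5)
Your proof is correct. Note that the paper does not actually prove this proposition at all: it simply cites Mir, Rotger and Rossell\'o \cite{mir18} for the fact that ladder trees maximize the Colless index, so there is no internal argument to compare against. Your structural induction is a clean, self-contained replacement. The two load-bearing steps both check out: the additivity $C(T)=C(T_L)+C(T_R)+\lvert n-2\ell\rvert$ is valid because $S$-nodes contribute $0$ to the sum in Definition~\ref{def_colless} (so one may harmlessly sum over all internal nodes), and the bounding function $g(\ell)$ is indeed convex on $[1,n-1]$ as a sum of two upward quadratics and an absolute value, so its maximum over the integer splits is attained at $\ell\in\{1,n-1\}$, where $g(1)=g(n-1)=\frac{(n-2)(n-3)}{2}+(n-2)=\frac{(n-1)(n-2)}{2}$ matches the ladder value from Proposition~\ref{colless_ladder}. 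What your approach buys beyond the citation is (i) a proof that works entirely from the paper's own definitions, and (ii) the uniqueness remark: since the two quadratic summands make $g$ strictly convex, interior splits are strictly suboptimal, which forces a $\{1,n-1\}$ split at every internal node of a maximizer and hence recovers the stronger result of \cite{mir18} that the ladder tree is the unique maximizer. The only stylistic caveat is that if this were to replace the citation in the text, the uniqueness claim should be promoted from a parenthetical to an explicit induction, but as support for the stated proposition the argument is complete as written.
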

\subsection{Two types of tree with minimal Colless index}
Coronado et. al \cite{coronado20} showed that divide-and-conquer trees have minimal Colless index. They also showed that divide-and-conquer trees are not unique in having minimal index, and characterized all such binary full trees.
\subsubsection{The Colless index on divide-and-conquer trees} \label{colless_dandc}
The number of $D$-nodes in a divide-and-conquer tree, discussed in Section \ref{pairwise_cnaos}, is closely related to the Colless index on such a tree. 

\begin{proposition}\cite{coronado20} \label{minimumcolless}
    The Colless index on a divide-and-conquer tree on $n=2^k+r$ leaves is minimal for trees on  $n$ leaves.
\end{proposition}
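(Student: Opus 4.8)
The plan is to prove the two bounds $C(T)\ge\delta(n)$ for every full binary tree $T$ on $n$ leaves, and $C(T_{DC})=\delta(n)$ for a divide-and-conquer tree $T_{DC}$, where $C$ denotes the Colless index of Definition~\ref{def_colless} and $\delta(n)$ the number of $D$-nodes of a divide-and-conquer tree (Theorem~\ref{exp_D_recursive}); together these say exactly that the divide-and-conquer tree is Colless-minimal. The second bound is immediate: in a divide-and-conquer tree a node with $m$ leaf descendants has children with $\floor{m/2}$ and $\ceil{m/2}$ leaves, contributing $0$ to the Colless index when $m$ is even (an $S$-node) and $1$ when $m$ is odd (a $D$-node), so the Colless index of $T_{DC}$ equals its number of $D$-nodes, namely $\delta(n)$.

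For the lower bound I would use strong induction on $n$. With $T=T_L\odot T_R$, $T_L$ on $a$ leaves and $T_R$ on $b=n-a$ leaves, the root contributes $\lvert a-b\rvert$ to the Colless index and every other internal node lies in $T_L$ or $T_R$, so $C(T)=\lvert a-b\rvert+C(T_L)+C(T_R)\ge\lvert a-b\rvert+\delta(a)+\delta(b)$ by the inductive hypothesis. Hence everything reduces to the key inequality $\delta(a+b)\le\lvert a-b\rvert+\delta(a)+\delta(b)$ for all integers $a,b\ge 1$; one checks separately that the balanced split $a=\floor{n/2}$, $b=\ceil{n/2}$ turns this into an equality, using the recursion $\delta(2m)=2\delta(m)$, $\delta(2m+1)=\delta(m)+\delta(m+1)+1$, which closes the loop (so $\delta(n)$ is actually attained).

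I would prove the key inequality by induction on $a+b$, taking $a\le b$, splitting on the parities of $a$ and $b$ and invoking Theorem~\ref{exp_D_recursive}. If $a=2a_1$ and $b=2b_1$ are both even, the inequality is exactly twice the instance for $a_1,b_1$. If $a+b$ is odd, say $a=2a_1$, $b=2b_1+1$, expanding the three $\delta$-values and applying the inductive hypothesis to the splits $a_1+b_1$ and $a_1+(b_1+1)$ reduces the claim to $\lvert x\rvert+\lvert x-1\rvert\le\lvert 2x-1\rvert$ with $x=a_1-b_1$, which is an equality for every integer $x$; the case $a$ odd, $b$ even is symmetric. If $a=2a_1+1$ and $b=2b_1+1$ are both odd, applying the inductive hypothesis to the two splits $a_1+(b_1+1)$ and $(a_1+1)+b_1$ of $a_1+b_1+1$ reduces the claim to the triangle-inequality bound $\lvert z-1\rvert+\lvert z+1\rvert\le 2\lvert z\rvert+2$. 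The degenerate cases ($a=1$, or $n\le 2$) I would dispatch directly; for $a=1$ the claim becomes $\delta(b+1)-\delta(b)\le b-1$, which follows from Theorem~\ref{dnodes_another_recurrence} since $\floor{\log_2 b}-2\omega(b)+2\le\floor{\log_2 b}\le b-1$.

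The main obstacle is the key inequality, specifically the parity bookkeeping inside its induction: one must check that every appeal to the inductive hypothesis is to a strictly smaller pair, and that subtrees of size $1$ (for which $\delta(1)=0$ rather than the recursion) are handled on their own. An alternative would be to read the key inequality off Theorem~\ref{takagi2} as a subadditivity property of the Takagi function, but reconciling the three distinct dyadic scales of $a$, $b$, and $a+b$ seems more delicate than the direct induction, so I would favor the self-contained argument above.
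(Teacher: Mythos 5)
The paper does not actually prove this proposition: it is stated as a citation to \cite{coronado20}, so there is no in-paper argument to compare against. Your proposal supplies a correct, self-contained proof using only machinery already developed in the paper. The two halves are right: in a divide-and-conquer tree every internal node's children differ by $0$ or $1$ leaves, so the Colless index equals the number of $D$-nodes, i.e., $\delta(n)$; and the lower bound $C(T)\ge\delta(n)$ reduces, by strong induction over the root decomposition $T=T_L\odot T_R$, to the subadditivity inequality $\delta(a+b)\le\lvert a-b\rvert+\delta(a)+\delta(b)$. Your parity analysis of that inequality checks out against Theorem~\ref{exp_D_recursive}: the even/even case is literally twice the smaller instance; the mixed-parity case reduces to $\lvert x\rvert+\lvert x-1\rvert=\lvert 2x-1\rvert$, which does hold for all integers $x$ (though not for real $x\in(0,1)$, so it is worth keeping the word ``integer'' explicit); the odd/odd case reduces to $\lvert z-1\rvert+\lvert z+1\rvert\le 2\lvert z\rvert+2$, a triangle inequality; and the $a=1$ boundary case follows from Theorem~\ref{dnodes_another_recurrence} exactly as you say, since $\omega(b)\ge 1$ gives $\delta(b+1)-\delta(b)\le\floor{\log_2 b}\le b-1$. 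The equality check for the balanced split $\floor{n/2}+\ceil{n/2}$ closes the argument. The only housekeeping worth spelling out in a final write-up is the one you already flag: in the odd/odd and even/odd cases the inductive appeals are to pairs with both entries at least $1$ and strictly smaller sum, which holds once $a=1$ is split off. This is essentially the same route taken in \cite{coronado20}; its value here is that it makes the paper's reliance on that external result optional.
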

\begin{proposition}
    The Colless index on a divide-and-conquer tree is the number of $D$-nodes in that tree.
\end{proposition}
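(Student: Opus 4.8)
The plan is to reduce everything to the defining property of a divide-and-conquer tree, namely that at every interior node the numbers of leaf descendants of the left and right children differ by at most $1$. First I would recall that this property holds not only at the root but at every interior node, since every subtree of a divide-and-conquer tree is again a divide-and-conquer tree (the same observation used in Proposition~\ref{pair_iso} and Theorem~\ref{exp_recursive}). So I would fix an arbitrary interior node $t$ with left and right subtrees having $\ell_t$ and $r_t$ leaves, and record that $\lvert \ell_t - r_t\rvert \le 1$.

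Next I would split on the $S$/$D$ label of $t$ coming from Definition~\ref{def_SDtree}. If $t$ is an $S$-node, its two children have the same number of descendant leaves, so $\lvert \ell_t - r_t\rvert = 0$ and $t$ contributes nothing to the Colless index. If $t$ is a $D$-node, its two children have different numbers of descendant leaves, so $\lvert \ell_t - r_t\rvert \ge 1$; combined with the divide-and-conquer bound $\lvert \ell_t - r_t\rvert \le 1$, this forces $\lvert \ell_t - r_t\rvert = 1$, so $t$ contributes exactly $1$.

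Finally I would sum the per-node contributions over all interior nodes as in Definition~\ref{def_colless}. Every interior node is either an $S$-node (contributing $0$) or a $D$-node (contributing $1$), so the Colless index equals the number of $D$-nodes in the tree, which is what is claimed. There is essentially no obstacle: the only point requiring a word of care is the elementary observation that a nonzero integer difference that is at most $1$ in absolute value must equal $1$, and that the divide-and-conquer balance bound applies at every interior node rather than just at the root. Both are immediate, so the proof is a short, direct argument from the definitions.
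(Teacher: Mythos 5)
Your argument is correct and matches the paper's own proof: both observe that in a divide-and-conquer tree every $D$-node's children differ by exactly one leaf (the balance bound forces the difference to be $\le 1$, and the $D$-label forces it to be $\ge 1$), while $S$-nodes contribute nothing, so summing per Definition~\ref{def_colless} gives the $D$-node count. Your version merely spells out more explicitly that the balance property propagates to every interior node, which the paper leaves implicit.
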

\begin{proof}
    Each $D$-node in a divide-and-conquer tree has right and left subtrees such that the number of leaf nodes of each differ by $1$. Thus, the Colless index of such a tree is the number of $D$-nodes in the tree.
\end{proof}

\begin{theorem} \cite{coronado20}
    The Colless index on a divide-and-conquer tree with $n=2^k+r$ leaves is $2^k \cdot \tau\left(\frac{r}{2^k}\right)$, where $\tau$ is the Takagi function. 
\end{theorem}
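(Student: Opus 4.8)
The plan is to simply chain together the two facts already established in this section. First I would recall the immediately preceding proposition, which states that for a divide-and-conquer tree the Colless index equals the number of $D$-nodes in the tree: since each $D$-node of a divide-and-conquer tree has left and right subtrees whose leaf counts differ by exactly $1$, every $D$-node contributes $1$ to the sum in Definition~\ref{def_colless}, and every $S$-node contributes $0$, so the Colless index of a divide-and-conquer tree on $n$ leaves is exactly $\delta(n)$.

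Next I would invoke Theorem~\ref{takagi2}, which asserts that for $n = 2^k + r$ with $0 \le r \le 2^k$ one has $\tau\!\left(\frac{r}{2^k}\right) = \frac{\delta(2^k+r)}{2^k}$, equivalently $\delta(2^k + r) = 2^k \cdot \tau\!\left(\frac{r}{2^k}\right)$. Combining the two displays gives that the Colless index of the divide-and-conquer tree on $n = 2^k + r$ leaves equals $\delta(n) = \delta(2^k + r) = 2^k \cdot \tau\!\left(\frac{r}{2^k}\right)$, which is the claimed formula.

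There is essentially no obstacle here: the statement is a one-line corollary of the Colless$=\delta$ identity for divide-and-conquer trees together with the Takagi identification of $\delta$ in Theorem~\ref{takagi2}. The only thing worth a sentence of care is the range of $r$: Theorem~\ref{takagi2} allows $r = 2^k$ (where $\tau(1) = 0$ and $\delta(2^{k+1}) = 0$ by Corollary~\ref{cor_dnodes2k}), so the formula is consistent on the full closed interval and the choice of representation $n = 2^k + r$ is immaterial. If one wanted a self-contained argument instead of citing Theorem~\ref{takagi2}, the alternative would be to verify directly that $\delta(n)/2^k$ satisfies the defining recursion/expansion of $\tau$ on dyadic rationals (Definition~\ref{takagi_newdef}) using Theorem~\ref{exp_D_recursive}; but since Theorem~\ref{takagi2} is already available, the short route suffices.
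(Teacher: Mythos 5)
Your proposal is correct and matches the paper's (implicit) derivation: the paper states this theorem without a written proof precisely because it is the immediate combination of the preceding proposition (Colless index of a divide-and-conquer tree equals $\delta(n)$, since every $D$-node contributes exactly $1$) with Theorem~\ref{takagi2}. Your remark about the endpoint $r = 2^k$ being consistent is a sensible, if optional, addition.
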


\subsubsection{The Colless index on complete full binary trees} \label{colless_cfb}
\begin{theorem}
    The Colless index on a complete full binary tree with $n=2^k+r$ leaves is minimal, hence is $\delta(n)$, where $\delta(n)$ is the number of $D$-nodes in a divide-and-conquer tree with $n$ leaves.
\end{theorem}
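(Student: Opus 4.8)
The plan is to show directly that the Colless index $C(n)$ of the complete full binary tree on $n$ leaves coincides with $\delta(n)$, the number of $D$-nodes of the divide-and-conquer tree on $n$ leaves. Since the results of Section~\ref{colless_dandc} show that the divide-and-conquer tree on $n$ leaves has Colless index equal to its number of $D$-nodes, namely $\delta(n)$, and that this value is minimal among all trees on $n$ leaves (Proposition~\ref{minimumcolless}), the identity $C(n)=\delta(n)$ establishes both assertions of the theorem at once: that the complete full binary tree attains the minimal Colless index, and that this minimal value is $\delta(n)$.

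I would prove $C(n)=\delta(n)$ by strong induction on $n$. For $n=2^k$ the complete full binary tree is the perfect tree, whose internal nodes are all $S$-nodes, so $C(2^k)=0=\delta(2^k)$ by Corollary~\ref{cor_dnodes2k}; this handles the base case $n=1$ together with all powers of two. For $n=2^k+r$ with $0<r<2^k$, Lemma~\ref{cft_mod} decomposes the complete full binary tree into a perfect subtree, which contributes nothing to the Colless index, and a second subtree that is again a complete full binary tree, on $2^{k-1}+(r\bmod 2^{k-1})$ leaves; adding the contribution of the root yields the recurrence
\[
C(2^k+r)=
\begin{cases}
 r+C(2^{k-1}+r), & 0\le r<2^{k-1};\\[2pt]
 \bigl(2^{k-1}-(r\bmod 2^{k-1})\bigr)+C\bigl(2^{k-1}+(r\bmod 2^{k-1})\bigr), & 2^{k-1}\le r<2^k.
\end{cases}
\]

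It then remains to check that $\delta$ obeys the very same recurrence, i.e.\ that $\delta(2^k+r)=r+\delta(2^{k-1}+r)$ when $r<2^{k-1}$, and $\delta(2^k+r)=\bigl(2^{k-1}-(r\bmod 2^{k-1})\bigr)+\delta\bigl(2^{k-1}+(r\bmod 2^{k-1})\bigr)$ when $r\ge 2^{k-1}$. Both follow from the closed form $\delta(n)=\sum_{i=0}^{\floor{\log_2(n)}-1}\lambda_i(n)$ of Theorem~\ref{exp_D_closed}: the numbers $2^k+r$ and $2^{k-1}+(r\bmod 2^{k-1})$ have the same bit $i$ and the same residue modulo $2^i$ for every $i\le k-2$, so their $\lambda_i$ terms agree there; since $\floor{\log_2(2^{k-1}+(r\bmod 2^{k-1}))}=k-1$, the two sums match up to the single extra term $\lambda_{k-1}(2^k+r)$, which equals $r$ when bit $k-1$ of $2^k+r$ is $0$ (the case $r<2^{k-1}$) and equals $2^{k-1}-(r\bmod 2^{k-1})$ when it is $1$ (the case $r\ge 2^{k-1}$). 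Feeding the induction hypothesis $C(2^{k-1}+(r\bmod 2^{k-1}))=\delta(2^{k-1}+(r\bmod 2^{k-1}))$ into the $C$-recurrence then gives $C(n)=\delta(n)$.

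The only genuine work is this last piece of bit bookkeeping, together with keeping track of the boundary case $r=2^{k-1}$ (where the second subtree is also perfect, so the recursion degenerates to $C(2^k+2^{k-1})=2^{k-1}$) and of the endpoints where an argument becomes a power of two; all of this is routine. An alternative would be to invoke the characterization of all minimal-Colless binary trees of Coronado et al.~\cite{coronado20} and verify that the complete full binary tree satisfies its hypotheses, but the self-contained induction above relies only on results already proved in this paper.
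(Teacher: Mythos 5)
Your proposal is correct and follows essentially the same route as the paper's proof: it decomposes the complete full binary tree via Lemma~\ref{cft_mod} into a perfect subtree plus a smaller complete full binary tree, computes the root's Colless contribution according to whether $r<2^{k-1}$, $r=2^{k-1}$, or $r>2^{k-1}$, and matches the resulting recursion against the closed form $\delta(n)=\sum_i\lambda_i(n)$ of Theorem~\ref{exp_D_closed}, with minimality then supplied by Proposition~\ref{minimumcolless}. The only difference is organizational — you state the recurrence for $C$ explicitly and verify that $\delta$ satisfies the same one, whereas the paper accumulates the level contributions directly into the $\lambda_i$ sum — but the underlying argument is identical.
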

\begin{proof}
    There are two cases: $r=2^{k-1}$ and $r \ne 2^{k-1}$.
    \newline\newline
    \emph{Case 1: $r=2^{k-1}$}.
    We observe that by Theorem \ref{exp_D_recursive}, the number of $D$-nodes on a tree with $2^k+2^{k-1}=2^{k-1}\cdot 3$ leaves is $2^{k-1}\cdot \delta(3) = 2^{k-1}\cdot (\delta(2)+\delta(1))=2^{k-1}$.
       
    In a complete full binary tree with $2^k+2^{k-1}$ leaves, the left subtree is a perfect tree with $2^k$ leaves, and the right is a perfect tree with $2^{k-1}$ leaves, by Lemma \ref{cft_mod}. This makes a contribution to the index total of $2^{k-1} = 2^{k-1}-(r \bmod 2^{k-1})$. Neither the left nor the right subtrees contribute any further $D$-nodes, so the Colless index is exactly $2^{k-1}$. 
    \newline\newline
    \emph{Case 2: $r \ne 2^{k-1}$}. 
    We proceed by induction, and use the characterization of $D$-nodes in a divide-and-conquer tree given in Theorem \ref{exp_D_closed}.
     
    If $r<2^{k-1}$, the $i^\text{th}$ bit of $n$ is $0$, the left subtree has $2^{k-1}+(r \bmod 2^{k-1})$ leaves, and the right is a perfect tree with $2^{k-1}$ leaves, by Lemma \ref{cft_mod}. This makes a contribution to the index total of $r \bmod 2^{k-1} = n\bmod 2^{k-1}$.  

    If $r>2^{k-1}$, the $i^\text{th}$ bit of $n$ is $1$, the left subtree is a perfect tree with $2^k$ leaves, and the right has $2^{k-1}+(r \bmod 2^{k-1})$ leaves, by Lemma \ref{cft_mod}. This makes a contribution to the index total of $2^{k-1}-(r \bmod 2^{k-1})= 2^{k-i}-(n\bmod 2^{k-1})$. 
    
    In either case, after the highest-level, we need only look for further contribution at a tree having $2^{k-1}+(r \bmod 2^{k-1})$ leaves, since the other subtree is perfect and contributes no further $D$-nodes. Without loss of generality, we may assume that $r<2^{k-1}$, so the tree has $2^{k-1}+r$ leaves. By induction, the number of leaves on the subtree is 	
    $$
			\sum_{i=0}^{k-2} \lambda_i(n) \text{, where } \lambda_i(n)= 
			\begin{cases}
			    (n\bmod 2^i), & \text{ if the}\ i^\text{th} \text{ bit  of $n$ is }  0;\\
			    2^i-(n\bmod 2^i), & \text{ if the}\ i^\text{th} \text{ bit of $n$ is }  1.\\
	        \end{cases}	  
$$
Adding the highest-level contribution to this gives us the formula for $D$-nodes given in Theorem \ref{exp_D_closed}.
\end{proof}

\begin{corollary}
    The Colless index on a complete full binary tree with $n=2^k+r$ leaves is $2^k \cdot \tau\left(\frac{r}{2^k}\right)$, where $\tau$ is the Takagi function.
\end{corollary}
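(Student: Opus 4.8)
\section*{Proof proposal}

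The plan is to obtain this as an immediate consequence of two results already in hand: the theorem proved just above, which states that the Colless index of a complete full binary tree with $n=2^k+r$ leaves is $\delta(n)$, the number of $D$-nodes in a divide-and-conquer tree with $n$ leaves; and Theorem~\ref{takagi2}, which identifies $\delta(2^k+r)$ with $2^k \cdot \tau\!\left(\frac{r}{2^k}\right)$. So the argument is purely a substitution, with no combinatorial work left to do.

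Concretely, I would first invoke the preceding theorem to write the Colless index of the complete full binary tree on $n=2^k+r$ leaves as $\delta(n)$. Then I would apply Theorem~\ref{takagi2} with the same $n=2^k+r$ (noting the hypothesis $0 \le r \le 2^k$ is exactly the range under consideration here), which gives $\delta(2^k+r) = 2^k \cdot \tau\!\left(\frac{r}{2^k}\right)$. Chaining the two equalities yields the claimed formula. The only bookkeeping point worth stating explicitly is that the $r$ used in the complete-full-binary-tree decomposition $n=2^k+r$ and the $r$ used in Theorem~\ref{takagi2} are the same quantity ($r=n \bmod 2^k$, with $k=\floor{\log_2 n}$ when $r\neq 0$), so the substitution is legitimate without any reindexing.

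There is essentially no obstacle here; the corollary is a two-line chaining of prior results. If anything, the only thing to watch is the edge case $r=0$ (when $n=2^k$ is a power of $2$): the complete full binary tree is then perfect, its Colless index is $0$, and indeed $\delta(2^k)=0$ by Corollary~\ref{cor_dnodes2k} while $\tau(0)=0$, so the formula holds trivially. I would mention this case only in passing, or not at all, since it is covered by the boundary behavior already built into Theorem~\ref{takagi2}.

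\begin{proof}
By the preceding theorem, the Colless index of a complete full binary tree with $n=2^k+r$ leaves equals $\delta(n)$, the number of $D$-nodes in a divide-and-conquer tree with $n$ leaves. By Theorem~\ref{takagi2}, $\delta(2^k+r) = 2^k \cdot \tau\!\left(\frac{r}{2^k}\right)$. Combining these, the Colless index is $2^k \cdot \tau\!\left(\frac{r}{2^k}\right)$.
\end{proof}
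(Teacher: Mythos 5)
Your proof is correct and matches the paper's intent exactly: the paper leaves this corollary without an explicit proof precisely because it is the two-line chaining of the preceding theorem (Colless index equals $\delta(n)$) with Theorem~\ref{takagi2} ($\delta(2^k+r)=2^k\cdot\tau(r/2^k)$) that you wrote out. Nothing is missing.
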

\subsection{The Colless index on MinD trees} \label{colless_max}
We consider $n$ in its binary composition $n_{k-1}n_{k-2} \dots n_1n_0$, with $\omega(n)$ $1$s, and define $R(n)=\{\rho \mid n_\rho = 1\}$. A MinD tree on $n$ nodes is then constructed by forming a tree with $\omega(n)$ leaves, and attaching perfect trees with $2^{n_\rho}$ leaves in place of the leaves of $T$, as discussed in Section \ref{max_S_section} and shown in Fig. \ref{fig:minD_27}. 
The $D$-nodes are then all the nodes of the base tree that has $\omega(n)$ leaves. All MinD trees on $n$ leaves are of this form. 

In the propositions throughout Section \ref{colless_max}, dealing with MinD trees, we assume that
\begin{itemize}
    \item $n=\sum\limits_{1}^{\omega(n)}2^{\rho_i}$, where $\rho_{\omega(n)} > \dots > \rho_2 > \rho_1$.
    \item The base tree $T$ has $\omega(n)$ leaves.
    \item $\pi$ is a permutation on $\{\rho_1, \dots, \rho_{\omega(n)}\}$.
    \item The perfect tree ``leaves'' are arranged left-to-right depth-first, from the perfect tree having $2^{\pi(\rho_{\omega(n)})}$ leaves as the leftmost ``leaf'', to the perfect tree having $2^{\pi(\rho_1)}$ as  the rightmost ``leaf''.
    \item $T$ has children $T_L$ and $T_R$, where $T_L$ has $\ell_L$ leaves and $T_R$ has $\ell_R=\omega(n)-\ell_L$ leaves, numbered as above. 
\end{itemize} 

Because the base trees can be any of the $(2 \cdot \omega(n)-3)!!$ binary full trees on $\omega(n)$ leaves, it is difficult to do a complete characterization of the Colless index for trees with the minimal number of $D$-nodes. 
To do this, we need to be able to define not only the base tree itself, but also the placement of the perfect tree "leaves" at the leaf levels of the base tree, since the Colless index will be highly dependent upon the number of leaves in each perfect subtree. 

However, we are able to show the maximum and minimum Colless indices that MinD trees can take. Both the maximum and minimum have ladder trees as their base tree. The tree having maximum Colless index arranges the perfect tree "leaves" in an ascending order, and the tree having minimum Colless index arranges the perfect tree "leaves" in an descending order. 
All MinD trees have quite small normalized Colless index, as might be expected, so may be considered to be reasonably well balanced.

\subsubsection{Basic result on MinD trees}
\begin{theorem} \label{colless_general_base}
    Let $c_{(T,\pi)}$ be the Colless index of a MinD tree. Then 
    $$
        c_{(T,\pi)}(n) = \lvert \sum_{i=1}^{\ell_L}2^{\pi(\rho_i)}-\sum_{j=\ell_L+1}^{\omega(n)}2^{\pi(\rho_j)} \rvert +  c_{(T_L,\pi)}(\ell_L)+c_{(T_R,\pi)}(\ell_R)
   $$
\end{theorem}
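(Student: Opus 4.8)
The plan is to prove this by induction on the number of leaves $\omega(n)$ of the base tree $T$, using the recursive structure of the Colless index together with the construction of MinD trees from Theorem~\ref{max_tree_s}. First I would set up the base case: when $\omega(n)=1$, the base tree is a single leaf, the MinD tree is a single perfect tree on $n=2^{\rho_1}$ leaves, its Colless index is $0$, and the formula reduces correctly since the sum over an empty right branch and the two recursive terms all vanish (with the convention $c$ of a one-leaf tree being $0$).

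For the inductive step, I would use Definition~\ref{def_colless}: the Colless index of any tree is the contribution of the root plus the Colless indices of its two subtrees. In a MinD tree built on base tree $T = T_L \odot T_R$ with perfect-tree ``leaves'' assigned via $\pi$, the root is the $D$-node corresponding to the root of $T$ (it is a $D$-node unless the two sides happen to have equal leaf count, in which case its contribution is $0$ and the absolute-value term is also $0$, so the formula still holds). The key observation is that the left subtree of this root is itself a MinD tree: it is the base tree $T_L$ with the first $\ell_L$ perfect ``leaves'' (those indexed $\rho_1,\dots,\rho_{\ell_L}$ in the left-to-right depth-first ordering, carrying sizes $2^{\pi(\rho_1)},\dots,2^{\pi(\rho_{\ell_L})}$) attached, and similarly the right subtree is the MinD tree on base $T_R$ with the remaining perfect ``leaves.'' Hence the number of leaves under the left subtree of the root is exactly $\sum_{i=1}^{\ell_L}2^{\pi(\rho_i)}$ and under the right subtree is $\sum_{j=\ell_L+1}^{\omega(n)}2^{\pi(\rho_j)}$, so the root contributes precisely $\lvert \sum_{i=1}^{\ell_L}2^{\pi(\rho_i)}-\sum_{j=\ell_L+1}^{\omega(n)}2^{\pi(\rho_j)} \rvert$. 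The inductive hypothesis applied to these two strictly smaller MinD trees (with the induced permutations and sub-base-trees $T_L$, $T_R$) gives their Colless indices as $c_{(T_L,\pi)}(\ell_L)$ and $c_{(T_R,\pi)}(\ell_R)$, and summing the three pieces yields the stated identity.

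The main obstacle — really a bookkeeping subtlety rather than a deep difficulty — is making the indexing fully precise: one must verify that the ``left-to-right depth-first'' ordering of the perfect ``leaves'' restricted to the left subtree of the base tree's root is exactly the contiguous block $\rho_1,\dots,\rho_{\ell_L}$ (with the convention that the base tree is drawn so the right subtree sits to the right), and that the notation $c_{(T_L,\pi)}(\ell_L)$ is interpreted as the Colless index of the MinD tree whose base is $T_L$ and whose ``leaf'' sizes are $2^{\pi(\rho_1)},\dots,2^{\pi(\rho_{\ell_L})}$. Once the ordering convention is pinned down, the argument is a routine application of the recursive definition of the Colless index, and no computation beyond adding three terms is needed.
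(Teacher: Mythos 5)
Your proof is correct and takes essentially the same route as the paper, which disposes of this theorem in one line ("follows directly from Definition~\ref{def_colless} of the Colless index and the definition of $\pi$"); your root-decomposition plus induction is exactly the argument that line gestures at. The indexing subtlety you flag (which contiguous block of $\rho_i$'s lands under $T_L$) is real but is a notational wrinkle in the paper's own setup, not a gap in your reasoning.
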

\begin{proof}
    Follows directly from Definition \ref{def_colless} of the Colless index and the definition of $\pi$.
\end{proof}

Theorem \ref{colless_general_base} becomes a recursive formula when the base trees of the left and right subtrees $T_L$ and $T_R$ are of the same form as the base tree $T$. This is true of all the special cases we have discussed above: the ladder trees, the divide-and-conquer trees, and the complete full binary trees. We discuss these special cases below.

\subsubsection{Base tree is a ladder tree} \label{colless_ladder_section}
In the propositions throughout Section \ref{colless_ladder_section}, dealing specifically with ladder base trees, we assume that
\begin{itemize}
    \item $n=\sum\limits_{1}^{\omega(n)}2^{\rho_i}$, where $\rho_{\omega(n)} > \dots > \rho_2 > \rho_1$.
    \item The base tree is a ladder tree with $\omega(n)$ leaves.
    \item The perfect tree ``leaves'' are arranged so that the perfect tree having $2^{\pi(\rho_{\omega(n)})}$ leaves is on the top rung, down to the perfect trees having $2^{\pi(\rho_2)}$ and $2^{\pi(\rho_1)}$ leaves are on the two bottom rungs. (This follows from the previous assumptions.)
\end{itemize} 
Two examples of this type of tree are illustrated in Fig. \ref{fig:colless_ascdesc}.
\begin{proposition}\label{general_ladder_formula}
    Let $c_{\pi}$ be the Colless index of a MinD tree on $n$ leaves, with a ladder base tree. Then 
    $$
        c_{\pi}(n) = \lvert n-2^{\pi(\rho_{\omega(n)})+1}\rvert +c_\pi(n-2^{\pi(\rho_{\omega(n)})})
    $$
\end{proposition}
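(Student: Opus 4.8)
The plan is to read Proposition~\ref{general_ladder_formula} as the specialization of Theorem~\ref{colless_general_base} to a ladder base tree, so that the proof is essentially a matter of identifying the split at the root and simplifying. First I would unwind the construction of a MinD tree on $n$ leaves with a ladder base, following Theorem~\ref{max_tree_s} and the standing assumptions of Section~\ref{colless_ladder_section}: the root of the ladder base tree has one child that is a single leaf (the ``top rung'') and one child that is a ladder on $\omega(n)-1$ leaves. After the leaves of the base tree are replaced by perfect trees, the root of the MinD tree has one subtree equal to the perfect tree on $2^{\pi(\rho_{\omega(n)})}$ leaves --- the perfect tree placed on the top rung --- and its other subtree is a MinD tree whose base is the ladder on $\omega(n)-1$ leaves carrying, in the same left-to-right order, the perfect trees on $2^{\pi(\rho_1)},\dots,2^{\pi(\rho_{\omega(n)-1})}$ leaves. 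That second subtree has $n-2^{\pi(\rho_{\omega(n)})}$ leaves, and since $2^{\pi(\rho_{\omega(n)})}$ is one of the distinct powers of two whose sum is $n$, removing it produces an integer of weight $\omega(n)-1$; hence the second subtree is again a MinD tree with a ladder base, and the notation $c_{\pi}(n-2^{\pi(\rho_{\omega(n)})})$ (understood with the induced ordering on the remaining exponents) is meaningful.

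Next I would evaluate the three contributions of Theorem~\ref{colless_general_base} for this particular split. The perfect tree on the top rung has no $D$-nodes (Lemma~\ref{pfb} together with Lemma~\ref{dnodes}), so it contributes $0$ to the Colless index; the recursive Colless term from the other subtree is exactly $c_{\pi}(n-2^{\pi(\rho_{\omega(n)})})$ by the identification just made. The root itself is a $D$-node --- its two subtrees have $2^{\pi(\rho_{\omega(n)})}$ and $n-2^{\pi(\rho_{\omega(n)})}$ leaves, which are unequal whenever $\omega(n)\ge 2$ --- and by Definition~\ref{def_colless} it contributes
$$
\left\lvert 2^{\pi(\rho_{\omega(n)})}-\bigl(n-2^{\pi(\rho_{\omega(n)})}\bigr)\right\rvert=\left\lvert n-2\cdot 2^{\pi(\rho_{\omega(n)})}\right\rvert=\left\lvert n-2^{\pi(\rho_{\omega(n)})+1}\right\rvert .
$$
Summing the root contribution and the subtree contributions gives the stated identity
$$
c_{\pi}(n)=\left\lvert n-2^{\pi(\rho_{\omega(n)})+1}\right\rvert+c_{\pi}\!\left(n-2^{\pi(\rho_{\omega(n)})}\right),
$$
the recursion being understood to terminate at weight-one arguments, where a MinD tree is a single perfect tree of Colless index $0$.

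The step I expect to need the most care is the index bookkeeping between Theorem~\ref{colless_general_base} and the ladder case: one must check that under the left-to-right depth-first labeling of the perfect ``leaves'' fixed in Sections~\ref{colless_max} and~\ref{colless_ladder_section}, the single-leaf side of the root split of a ladder base tree is precisely the position carrying the perfect tree on $2^{\pi(\rho_{\omega(n)})}$ leaves, so that the difference $\sum_{i=1}^{\ell_L}2^{\pi(\rho_i)}-\sum_{j=\ell_L+1}^{\omega(n)}2^{\pi(\rho_j)}$ appearing in Theorem~\ref{colless_general_base} collapses to $n-2^{\pi(\rho_{\omega(n)})+1}$. Once that correspondence is pinned down the computation above is immediate, and the formula can be cross-checked directly on small cases such as the $27$-leaf ladder-based MinD trees of Fig.~\ref{fig:minD_27}, where it unwinds to $5+5+1=11$.
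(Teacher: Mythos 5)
Your proof is correct and follows essentially the same route as the paper: both specialize Theorem~\ref{colless_general_base} to the root split of the ladder base tree, observe that the perfect tree on the top rung contributes nothing while the other subtree is again a ladder-based MinD tree on $n-2^{\pi(\rho_{\omega(n)})}$ leaves, and compute the root's contribution as $\lvert n-2\cdot 2^{\pi(\rho_{\omega(n)})}\rvert$. Your write-up is in fact more careful than the paper's two-line proof about why the smaller subtree is itself a ladder-based MinD tree and about the base case of the recursion.
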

\begin{proof}
    Follows from Theorem \ref{colless_general_base}, since one of the top-level subtrees is a perfect tree with Colless index $0$, and the other is also a ladder tree. The left top-level subtree has Colless index $\lvert n-2^{\pi(\rho_{\omega(n)})} - 2^{\pi(\rho_{\omega(n)})} \rvert$. The  right has Colless index  $c_\pi(n-2^{\pi(\rho_{\omega(n)})})$.
\end{proof}
\begin{figure}[h!tb]
	\tikzstyle{mynode}=[circle, draw]
	\tikzstyle{leaf}=[fill=gray!45]
	\tikzstyle{perfectnode}=[rectangle, draw]
	\tikzstyle{dnode}=[regular polygon sides=4, minimum size=0.6cm, draw]
	\tikzstyle{perfect}=[fill=gray!20]
		\tikzstyle{leaf}=[fill=white]
	\tikzstyle{myarrow}=[]
	\centering
	\begin{subfigure}[b]{.47\textwidth}
		\centering
	    \resizebox{.75\textwidth}{!}{%
		\begin{tikzpicture}	
			\centering
			% ladder tree :6 elements
			\node[dnode, label={right:$D_{\omega(n)-1}$}] (L1) at (0, 3) {\footnotesize\texttt{}};
			\node[dnode, label={right:$D_j$}] (L2a) at (1.5, 1.5) {\footnotesize\texttt{}};
			\node[perfectnode,perfect] (L2b) at (-1, 2) {\footnotesize\texttt{P$2^{\rho_{\omega(n)}}$}};
			\node[dnode, label={right:$D_1$}] (L3a) at (3, 0) {\footnotesize\texttt{}};
			\node[perfectnode,perfect] (L3b) at (0.5, 0.5) 		{\footnotesize\texttt{P$2^{\rho_j}$}};
			\node[perfectnode,perfect] (L4a) at (4, -1) {\footnotesize\texttt{P$2^{\rho_1}$}};
			\node[perfectnode,perfect] (L4b) at (2, -1) {\footnotesize\texttt{P$2^{\rho_2}$}};
			\draw[myarrow, dashed] (L1) to node[midway,above=0pt, left=7pt]{} (L2a);
			\draw[myarrow] (L1) to node[midway,above=0pt, left=-2.5pt]{} (L2b);
			\draw[myarrow, dashed] (L2a) to node[midway,above=0pt, left=4pt]{} (L3a);
			\draw[myarrow] (L2a) to node[midway,above=0pt, right=-1pt]{} (L3b);
			\draw[myarrow] (L3a) to node[midway,above=0pt, left=4pt]{} (L4a);
			\draw[myarrow] (L3a) to node[midway,above=0pt, right=-1pt]{} (L4b);
		\end{tikzpicture}
		} 
		\caption{Descending ladder base tree. The perfect tree ``leaves'' are arranged from largest to smallest, from the top down. The $D$-node $D_j$ contributes $2^{\rho_j}-(\sum_{i< j}2^{\rho_i})$ to the Colless index $c_{desc}(n)$.}
		\label{fig:colless_descending}
	\end{subfigure}\hfill%
	\begin{subfigure}[b]{.47\textwidth}
		\centering
	    \resizebox{.75\textwidth}{!}{%
		\begin{tikzpicture}	
			\centering
			% ladder tree :6 elements
			\node[dnode, label={right:$D_1$}] (L1) at (0, 3) {\footnotesize\texttt{}};
			\node[dnode, label={right:$D_j$}] (L2a) at (1.5, 1.5) {\footnotesize\texttt{}};
			\node[perfectnode,perfect] (L2b) at (-1, 2) {\footnotesize\texttt{P$2^{\rho_1}$}};
			\node[dnode, label={right:$D_{\omega(n)-1}$}] (L3a) at (3, 0) {\footnotesize\texttt{}};
			\node[perfectnode,perfect] (L3b) at (0.5, 0.5) 	{\footnotesize\texttt{P$2^{\rho_j}$}};
			\node[perfectnode,perfect] (L4a) at (4, -1) {\footnotesize\texttt{P$2^{\rho_{\omega(n)}}$}};
			\node[perfectnode,perfect] (L4b) at (2, -1)             {\footnotesize\texttt{P$2^{\rho_{(\omega(n)-1)}}$}};
			\draw[myarrow, dashed] (L1) to node[midway,above=0pt, left=7pt]{} (L2a);
			\draw[myarrow] (L1) to node[midway,above=0pt, left=-2.5pt]{} (L2b);
			\draw[myarrow, dashed] (L2a) to node[midway,above=0pt, left=4pt]{} (L3a);
			\draw[myarrow] (L2a) to node[midway,above=0pt, right=-1pt]{} (L3b);
			\draw[myarrow] (L3a) to node[midway,above=0pt, left=4pt]{} (L4a);
			\draw[myarrow] (L3a) to node[midway,above=0pt, right=-1pt]{} (L4b);
		\end{tikzpicture}
		} 
		\caption{Ascending ladder base tree. The perfect tree ``leaves'' are arranged from smallest to largest, from the top down. The $D$-node $D_j$ contributes $(\sum_{i>j}2^{\rho_i})-2^{\rho_j}$ to the Colless index $c_{asc}(n)$.}
		\label{fig:colless_ascending}
	\end{subfigure}%
	\mycaption{MinD trees on $n$ leaves with descending and ascending ladder base trees} {$n=\sum_{1}^{\omega(n)}2^{\rho_i}$, where $\rho_{\omega(n)} > \dots > \rho_2 > \rho_1$, and there are $(\omega(n)-1)$ $D$-nodes. Each rectangular node represents a perfect tree on $2^j$ leaves, labeled as $P2^j$. Since these subtrees are perfect, they contribute no $D$-nodes.}
	\label{fig:colless_ascdesc}		
\end{figure}
\paragraph{Exponents descending.}
In this subsection, we arrange the powers of two in descending order; in other words, the permutation $\pi$ is the identity. We thus express the exponents leaving off the identity permutation. So
\begin{itemize}
    \item $n=\sum\limits_{1}^{\omega(n)}2^{\rho_i}$, where $\rho_{\omega(n)} > \dots > \rho_2 > \rho_1$.
    \item The base tree is a ladder tree with $\omega(n)$ leaves.
    \item The perfect tree ``leaves'' are arranged so that the perfect tree having $2^{\rho_{\omega(n)}}$ leaves is on the top rung, down to the perfect trees having $2^{\rho_2}$ and $2^{\rho_1}$ leaves are on the two bottom rungs. 
\end{itemize} 
This type of tree is illustrated in Fig. \ref{fig:colless_descending}.
\begin{lemma} \label{desc_build_lemma}
    Let $T$ on $n=2^k+r$ leaves  with $0\le r < 2^k$ be described as above. Then
    $$
        c_{desc}(n) = 2^k-r+c_{desc}(r)
    $$
\end{lemma}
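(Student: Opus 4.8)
The plan is to derive this as an immediate specialization of Proposition~\ref{general_ladder_formula}, which already supplies the one-step recursion for the Colless index of a MinD tree with a ladder base tree under an arbitrary permutation $\pi$ of the exponents. Here I take $\pi$ to be the identity, which is exactly the ``exponents descending'' arrangement described just above the lemma.

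First I would unpack the hypothesis $n = 2^k + r$ with $0 \le r < 2^k$. This forces $k = \rho_{\omega(n)} = \floor{\log_2(n)}$, the largest exponent in the binary expansion of $n$, and $r = \sum_{i=1}^{\omega(n)-1} 2^{\rho_i}$. By Lemma~\ref{weightlemma_addone}, $\omega(r) = \omega(n)-1$, and the exponents occurring in $r$ are precisely $\rho_{\omega(n)-1} > \dots > \rho_1$, still listed in descending order. Hence the MinD tree obtained from $T$ by deleting the top perfect ``leaf'' (the one with $2^k$ leaves, sitting on the top rung of the ladder) is exactly the descending-ladder MinD tree on $r$ leaves, so its Colless index is $c_{desc}(r)$ by the very definition of the notation.

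Next I would invoke Proposition~\ref{general_ladder_formula} with $\pi = \mathrm{id}$, so that $\pi(\rho_{\omega(n)}) = k$:
$$
c_{desc}(n) = \bigl| n - 2^{k+1} \bigr| + c_{desc}\bigl(n - 2^{k}\bigr) = \bigl| 2^k + r - 2^{k+1} \bigr| + c_{desc}(r).
$$
Since $0 \le r < 2^k$ we have $r - 2^k < 0$, so $\bigl| 2^k + r - 2^{k+1} \bigr| = 2^k - r$, and therefore $c_{desc}(n) = 2^k - r + c_{desc}(r)$, as claimed. When $r = 0$ the tree is the perfect tree on $2^k$ leaves, whose Colless index is $0$, and I would note that the recursion is understood to bottom out in that case (or, equivalently, restrict attention to $r > 0$, i.e. $\omega(n) \ge 2$).

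There is essentially no obstacle here: the substantive content already lives in Proposition~\ref{general_ladder_formula}, and the only point requiring a moment's care is the identification of the residual subtree as the descending-ladder MinD tree on $r$ leaves, which follows at once from the structure of binary expansions and Lemma~\ref{weightlemma_addone}.
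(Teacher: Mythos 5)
Your proposal is correct and follows exactly the route the paper takes: its proof of this lemma is simply ``Follows from Proposition~\ref{general_ladder_formula},'' and your specialization to $\pi=\mathrm{id}$, the evaluation $\lvert 2^k+r-2^{k+1}\rvert = 2^k-r$, and the identification of the residual subtree as the descending-ladder MinD tree on $r$ leaves are precisely the details being elided. No gaps.
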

\begin{proof}
    Follows from Proposition \ref{general_ladder_formula}.
\end{proof}
\begin{theorem} \label{laddercolless_desc}
    Let $T$ be a MinD tree on $n$ leaves with a ladder  base tree, where the perfect tree ``leaves'' are arranged in descending order, i.e., $\pi$ is the identity. Then the Colless index $c_{desc}(n)$ on this tree is 
    $$
        c_{desc}(n)=
        \begin{cases}
            2c_{desc}(m), &\text{if } n=2m;\\            2c_{desc}(m)-\omega(m)+2^{\rho_2}, &\text{if } n=2m+1.
        \end{cases}
    $$
\end{theorem}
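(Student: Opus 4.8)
The plan is to prove both recursions simultaneously by strong induction on $n$, using Lemma~\ref{desc_build_lemma} (which says $c_{desc}(n) = 2^k - r + c_{desc}(r)$ whenever $n = 2^k + r$ with $0 < r < 2^k$) as the engine, together with the trivial observation $c_{desc}(2^j) = 0$ since a perfect tree has no $D$-nodes. The base cases are the powers of $2$ and $n=1$: for these the even recursion reads $0 = 2\cdot 0$, and the odd recursion is vacuous (its $2^{\rho_2}$ is undefined, so it asserts something only when $\omega(n) \ge 2$, i.e. $n \ge 3$).

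For the even case, write $n = 2m = 2^k + r$ with $k \ge 1$; since $n$ is even and, in the non-trivial subcase, not a power of $2$, we have $r = 2r'$ even with $1 \le r' < 2^{k-1}$, and $m = 2^{k-1} + r'$. Applying Lemma~\ref{desc_build_lemma} to $n$ and to $m$ gives $c_{desc}(n) = 2^k - 2r' + c_{desc}(2r')$ and $c_{desc}(m) = 2^{k-1} - r' + c_{desc}(r')$. Since $2r' = r < n$, the induction hypothesis (even case at $2r'$) yields $c_{desc}(2r') = 2c_{desc}(r')$, and substituting collapses $c_{desc}(n)$ to $2c_{desc}(m)$.

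For the odd case, assume $\omega(n) \ge 2$ and write $n = 2m+1 = 2^k + r$ with $k \ge 1$ and $r$ odd, $1 \le r < 2^k$; then $m = 2^{k-1} + r'$ with $r' = (r-1)/2$. When $r = 1$ everything is explicit: $c_{desc}(n) = 2^k - 1 + c_{desc}(1) = 2^k - 1$, while $\omega(m) = 1$, $\rho_2 = k$, and the right side is $2\cdot 0 - 1 + 2^k = 2^k - 1$. When $r \ge 3$ (so $\omega(r) \ge 2$, $r' \ge 1$, and $m$ is not a power of $2$), apply Lemma~\ref{desc_build_lemma} to $n$ and to $m$, apply the induction hypothesis (odd case) to $r = 2r'+1 < n$ to expand $c_{desc}(r) = 2c_{desc}(r') - \omega(r') + 2^{\rho_2(r)}$, use Lemma~\ref{weightlemma_addone} for $\omega(m) = \omega(r') + 1$, and use the structural fact that since $0 \le r < 2^k$ the bit $2^k$ lies above every bit of $r$, so the two lowest set bits of $n$ coincide with those of $r$, i.e. $\rho_2 = \rho_2(r)$. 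The identity $2r' = r - 1$ then reconciles $c_{desc}(n) = 2^k - r + c_{desc}(r)$ with $2c_{desc}(m) - \omega(m) + 2^{\rho_2} = 2^k - 2r' - 1 + 2c_{desc}(r') - \omega(r') + 2^{\rho_2(r)}$.

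The only real obstacle is the bookkeeping around $\rho_2$: one must note that the odd recursion is only meaningful when $\omega(n) \ge 2$, check the degenerate small cases ($n = 1, 3$, powers of $2$, and $r = 1$) by hand, and carefully track how set-bit positions shift under $n \mapsto 2^k + r$ and $n \mapsto (n-1)/2$. An alternative that sidesteps the recursion is to first unroll Lemma~\ref{desc_build_lemma} into the closed form $c_{desc}(n) = \sum_{j=2}^{\omega(n)}\bigl(2^{\rho_j} - \sum_{i<j}2^{\rho_i}\bigr)$ and read both recursions off directly from the transformations $\rho_i(2m) = \rho_i(m)+1$ and $\rho_{i+1}(2m{+}1) = \rho_i(m)+1$, $\rho_1(2m{+}1)=0$; this trades the induction for a little more index-juggling but makes the two cases symmetric.
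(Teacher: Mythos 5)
Your proposal is correct, and your even case is essentially identical to the paper's: both write $n=2m=2^k+2t$, expand via Lemma~\ref{desc_build_lemma}, invoke the inductive hypothesis at $2t$, and refactor as $2\cdot(2^{k-1}-t+c_{desc}(t))=2c_{desc}(m)$. Where you diverge is the odd case. The paper argues structurally: it builds the descending tree on $n=2m+1$ leaves from the one on $2m$ leaves by splitting the bottom perfect ``leaf,'' observes that each existing $D$-node's value drops by $1$ while a new $D$-node contributes $2^{\rho_2}-1$, and then applies Case~1 to convert $c_{desc}(2m)$ into $2c_{desc}(m)$. You instead stay entirely inside the recursion: peel off the top perfect subtree via Lemma~\ref{desc_build_lemma}, apply the odd-case inductive hypothesis to $r=2r'+1$, expand $c_{desc}(m)$ by the same lemma, and reconcile using $\omega(m)=\omega(r')+1$ and $\rho_2(n)=\rho_2(r)$. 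Both routes are valid; yours trades the tree surgery for index bookkeeping, and it has the advantage that every step is a mechanical substitution (I checked the algebra; it closes). It is also worth noting that your route sidesteps a bookkeeping hazard that the paper's surgery argument runs into: the number of $D$-nodes being decremented in the $2m$-leaf tree is $\omega(2m)-1=\omega(n)-2$, not $\omega(n)-1$ as the paper's intermediate line states, and the paper's displayed chain silently absorbs the resulting stray $-1$ before arriving at the (correct) final formula. Your handling of the degenerate cases ($n$ a power of $2$, $r=1$, and the vacuity of the odd recursion when $\omega(n)=1$) is appropriate, and your closed-form remark $c_{desc}(n)=\sum_{j=2}^{\omega(n)}\bigl(2^{\rho_j}-\sum_{i<j}2^{\rho_i}\bigr)$ is the correct unrolling of the lemma.
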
 
\begin{proof}
    Proceed by induction, and note that the statement is true for $n=2$ and $n=3$. 
    
    \emph{Case 1: $n=2m.$} In this case, $2m=n=2^k+r = 2^k+2t$, for some $t$.
    \begin{flalign*}
        c_{desc}(n) = c_{desc}(2m)
        &=2^k-2t + c_{desc}(2t)\\
        &= 2^k-2t + 2c_{desc}(t)& \text{ by induction}&\\
        &= 2 \cdot (2^{k-1}-t + c_{desc}(t))\\
        &= 2 \cdot c_{desc}(2^{k-1}+t)&  \text{ by Proposition \ref{desc_build_lemma}}\\
        &= 2 \cdot c_{desc}(m)
    \end{flalign*}

    \emph{Case 2: $n=2m+1.$} In this case, $m=\sum\limits_{i=2}^{\omega(n)}2^{\rho_i-1}$. $\rho_1=0$ so $2^{\rho_1}=1$. The tree on $n$ leaves may be constructed from the tree on $2m$ leaves by taking the rightmost of the twin leaves at the bottom of the base ladder tree, and splitting it into two children leaves with the left one the perfect tree on $2^{\rho_2}$ leaves and the right the perfect tree with $1$ leaf. Each of the $\omega(m)=\omega(2m)=\omega(n)-1$ subsums of the tree on $2m$  leaves is decremented by $1$ in calculating the Colless index for $n$. Also, the contribution of these children leaves to the Colless index is $2^{\rho_2}-1$. So 
    \begin{flalign*}
        c_{desc}(n) 
        &=c_{desc}(2m)-(\omega(n)-1)+(2^{\rho_2}-1) \\
        &=c_{desc}(2m)-\omega(2m)+(2^{\rho_2}-1)& \text{ by Lemma \ref{weightlemma_odd}}&\\
        &= 2c_{desc}(m)-\omega(m)+2^{\rho_2}& \text{ by Case 1 and Lemma \ref{weightlemma_mult2}}
    \end{flalign*}
\end{proof}
\begin{corollary} \label{desc_list}
The following are true of $c_{desc}$:
    \begin{enumerate}[label=(\alph*)]
        \item $c_{desc}(2^k)=0$
        \item $c_{desc}(2^k+1)=2^k-1$
        \item $c_{desc}(2^k+2^j)=2^k-2^j$
        \item $c_{desc}(2^k+2^j+1)=2^k-2$
        \item $c_{desc}(2^k-1)=k-1$, when $k>0$
        \item $c_{desc}(2^k-2^j)=2^j \cdot (k-j-1)$
    \end{enumerate}
\end{corollary}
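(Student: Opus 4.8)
The plan is to read off all six identities directly from the recursion for $c_{desc}$ in Theorem~\ref{laddercolless_desc} and the build-up relation in Lemma~\ref{desc_build_lemma}, using the base values $c_{desc}(1)=0$ and $c_{desc}(2)=0$ (a two-leaf MinD tree is just a perfect tree, hence has no $D$-node and Colless index $0$). Once the correct specialization of Lemma~\ref{desc_build_lemma} is substituted, each part is a one- or two-line computation or a short induction, so the real content is bookkeeping about which ranges of the exponents $j,k$ are legitimate.

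I would handle (a) first: since $\omega(2^k)=1$, a MinD tree on $2^k$ leaves has $\omega(2^k)-1=0$ $D$-nodes by Corollary~\ref{min_dnodes}, so its Colless index is $0$; equivalently, iterate $c_{desc}(2^k)=2c_{desc}(2^{k-1})$ down to $c_{desc}(1)=0$. For (b), apply Lemma~\ref{desc_build_lemma} with $n=2^k+1$, $r=1$, giving $c_{desc}(2^k+1)=2^k-1+c_{desc}(1)=2^k-1$. For (c) with $j<k$, apply the lemma with $r=2^j$ and invoke (a): $c_{desc}(2^k+2^j)=2^k-2^j+c_{desc}(2^j)=2^k-2^j$ (the case $j=k$ is (a), and is consistent with the stated formula). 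For (d) with $0<j<k$, apply the lemma with $r=2^j+1$ and invoke (b): $c_{desc}(2^k+2^j+1)=2^k-2^j-1+c_{desc}(2^j+1)=2^k-2^j-1+(2^j-1)=2^k-2$.

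For (e) I would iterate Lemma~\ref{desc_build_lemma} along the decomposition $2^k-1=2^{k-1}+(2^{k-1}-1)$: because $2^{k-1}-(2^{k-1}-1)=1$, this yields $c_{desc}(2^k-1)=1+c_{desc}(2^{k-1}-1)$, and unrolling down to $c_{desc}(2^1-1)=c_{desc}(1)=0$ gives $c_{desc}(2^k-1)=k-1$. Finally, for (f) with $j<k$ I would repeatedly use the even case of Theorem~\ref{laddercolless_desc}, writing $2^k-2^j=2(2^{k-1}-2^{j-1})$ to peel off $j$ factors of $2$ and obtain $c_{desc}(2^k-2^j)=2^j\,c_{desc}(2^{k-j}-1)$, which equals $2^j(k-j-1)$ by part (e).

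The computations are routine, so the only thing that requires genuine care — the ``main obstacle'', such as it is — is tracking the hypotheses under which each specialization is valid: Lemma~\ref{desc_build_lemma} presupposes the ladder base-tree structure with $\omega(n)\ge 2$ (that is, $r>0$), so (a) must be argued separately; parts (c) and (f) need $j<k$ (with $j=k$ in (c) reducing to (a)); and part (d) needs $0<j<k$ so that $2^k+2^j+1$ genuinely has three set bits. After fixing these ranges, I would finish with a consistency sanity check among the identities (for instance (c) at $j=k$, (f) at $j=0$ versus (e), and the behavior of (d) as $j\to 0$).
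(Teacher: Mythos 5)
Your proposal is correct and matches the paper's (one-line) proof in spirit: the paper simply asserts that all six identities follow directly from Theorem~\ref{laddercolless_desc}, and you carry out exactly that kind of direct unwinding, substituting the adjacent build-up relation of Lemma~\ref{desc_build_lemma} (the single-step form of the same recursion) for parts (b)--(e) and the even case of the theorem for (a) and (f). Your extra care about the admissible ranges of $j,k$ and the separate treatment of $r=0$ in (a) is a useful tightening of what the paper leaves implicit, not a deviation from its method.
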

\begin{proof}
    Follows directly from Theorem \ref{laddercolless_desc}.  
\end{proof}
\begin{comment}
\begin{corollary}
\label{col_des_2kplus1}
    Let $n=2^k+1$ be as in Proposition \ref{laddercolless_desc}, and let the base tree be a ladder tree arranged as in Proposition \ref{laddercolless_desc}. Then the Colless index of this tree is $2^k-1$.
\end{corollary} 
\begin{proof}
    The proposition is true for $n=3$. Let $n>3$, so $m=\frac{n-1}{2}$ is a power of $2$, $c_{desc}(m)=0$ and $\omega(m)=1$. $n=2 \cdot 2^{k-1}+1$, $\rho_2=k$, so by Proposition \ref{laddercolless_desc}, $$c_{desc}(2^k+1)=2c_{desc}(m)-\omega(m)+2^k=2^k-1.$$
\end{proof}
\end{comment}
\begin{proposition} \label{new_colless_desc_prop}
    Let $$n=2^k+n_{k-1}\cdot 2^{k-1} + d$$
    where $n_{k-1}=0$ or $1$, and where $0<d<2^{k-1}$. Let the perfect tree ``leaves'' be arranged in descending order. Then the Colless index of this tree is 
    $$
        c_{desc}(n) = c_{desc}(2^{k-1}+d) + (2^{k-1} - n_{k-1}\cdot d)
    $$
\end{proposition}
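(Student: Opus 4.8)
The plan is to obtain this proposition as an immediate consequence of Lemma~\ref{desc_build_lemma}, which already records the effect of peeling off the root $D$-node of a descending ladder MinD tree: for $0\le r<2^k$ one has $c_{desc}(2^k+r)=2^k-r+c_{desc}(r)$. The only thing to organize is \emph{which} residue $r$ to strip off at the root in the two cases $n_{k-1}=0$ and $n_{k-1}=1$, and then to recognize the leftover term as $c_{desc}(2^{k-1}+d)$. In both cases the residue stripped at the root of the tree on $n$ leaves is $n-2^k=n_{k-1}\cdot 2^{k-1}+d$, so Lemma~\ref{desc_build_lemma} gives at once
$$
c_{desc}(n)=2^k-(n_{k-1}\cdot 2^{k-1}+d)+c_{desc}(n_{k-1}\cdot 2^{k-1}+d).
$$

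First I would treat $n_{k-1}=1$, where $n=2^k+2^{k-1}+d$ and $0<d<2^{k-1}$ forces $0\le 2^{k-1}+d<2^k$, so the displayed identity is legitimate and reads $c_{desc}(n)=2^{k-1}-d+c_{desc}(2^{k-1}+d)$, which is exactly $c_{desc}(2^{k-1}+d)+(2^{k-1}-n_{k-1}d)$ for $n_{k-1}=1$. For $n_{k-1}=0$, the displayed identity becomes $c_{desc}(n)=2^k-d+c_{desc}(d)$; a second application of Lemma~\ref{desc_build_lemma} to $2^{k-1}+d$ (valid since $0\le d<2^{k-1}$) gives $c_{desc}(2^{k-1}+d)=2^{k-1}-d+c_{desc}(d)$. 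Subtracting, the $c_{desc}(d)$ terms cancel and $c_{desc}(n)-c_{desc}(2^{k-1}+d)=2^k-2^{k-1}=2^{k-1}$, which is $2^{k-1}-n_{k-1}d$ for $n_{k-1}=0$. Induction is not even needed, since Lemma~\ref{desc_build_lemma} is already established.

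There is no genuine obstacle here: the proposition is a bookkeeping repackaging of Lemma~\ref{desc_build_lemma} that separates the contribution of the top one-or-two $D$-nodes from the contribution of the subtree on $2^{k-1}+d$ leaves. The only point requiring a moment's care is checking the range hypothesis $0\le r<2^k$ of Lemma~\ref{desc_build_lemma} at each invocation, which follows directly from $0<d<2^{k-1}$.
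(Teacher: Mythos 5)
Your proof is correct, and it takes a genuinely different and noticeably shorter route than the paper's. The paper proves Proposition~\ref{new_colless_desc_prop} by induction, splitting on the parity of $n$ and invoking the doubling recursion of Theorem~\ref{laddercolless_desc} (i.e., $c_{desc}(2m)=2c_{desc}(m)$ and $c_{desc}(2m+1)=2c_{desc}(m)-\omega(m)+2^{\rho_2}$) together with several weight lemmas; the odd case in particular is a long chain of manipulations. You instead bypass induction entirely by using Lemma~\ref{desc_build_lemma} to peel the top perfect tree off the descending ladder: one application gives $c_{desc}(n)=2^k-(n_{k-1}2^{k-1}+d)+c_{desc}(n_{k-1}2^{k-1}+d)$, which in the case $n_{k-1}=1$ is already the claim, and in the case $n_{k-1}=0$ a second application of the same lemma to $2^{k-1}+d$ lets the $c_{desc}(d)$ terms cancel. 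The range checks you flag ($0<d<2^{k-1}$ forces the stripped residue to lie strictly between $0$ and the relevant power of $2$, so the degenerate $r=0$ reading of Lemma~\ref{desc_build_lemma} never arises) are exactly the right things to verify, and Lemma~\ref{desc_build_lemma} is established earlier and independently, so there is no circularity. What your approach buys is brevity and transparency — the identity is exposed as pure bookkeeping at the top of the ladder; what the paper's approach buys is consistency with the recursive machinery it reuses in the subsequent Corollary and in Corollary~\ref{mincolcdesc}, but nothing in those later arguments depends on the particular proof given here, so your version could be substituted without harm.
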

\begin{proof}
If $n$ is even, let $d=2e$.
\begin{flalign*}
    c_{desc}(n) 
    &= c_{desc}(2^k+n_{k-1}\cdot 2^{k-1} + 2e)& \\
    &= 2\cdot c_{desc}(2^{k-1}+n_{k-1}\cdot 2^{k-2} + e)& \text{ by Theorem \ref{laddercolless_desc}}&\\
    &= 2(c_{desc}(2^{k-2}+e) + (2^{k-2} - n_{k-1}\cdot e))& \text{ by induction hypothesis}&\\
    &= 2\cdot c_{desc}(2^{k-2}+e) + (2^{k-1} - n_{k-1}\cdot 2e)\\
    &= c_{desc}(2^{k-1}+2e) + (2^{k-1} - n_{k-1}\cdot 2e)& \text{ by Theorem \ref{laddercolless_desc}}&\\
    &= c_{desc}(2^{k-1}+d) + (2^{k-1} - n_{k-1}\cdot d))
    \end{flalign*}
If $n$ is odd, let $d=2e+1$.\newline
By Theorem \ref{laddercolless_desc},
\begin{flalign*}
    c_{desc}(n) 
    &= c_{desc}(2^k+n_{k-1}\cdot 2^{k-1}+2e+1) \\
    &= c_{desc}(2\cdot (2^{k-1}+n_{k-1}\cdot 2^{k-2}+e)+1) \\
    &= 2\cdot c_{desc}(2^{k-1}+n_{k-1}\cdot 2^{k-2}+e)- \omega(2^{k-1}+n_{k-1}\cdot 2^{k-2}+e) +2^{\rho_2}
\end{flalign*}
Induction gives
\begin{flalign*}
    c_{desc}(n) 
    &= 2\cdot [c_{desc}(2^{k-2}+ e)+(2^{k-2} - n_{k-1}\cdot e)] - \omega(2^{k-1}+n_{k-1}\cdot 2^{k-2}+e) +2^{\rho_2}\\
    &= 2\cdot c_{desc}(2^{k-2}+ e)+2\cdot (2^{k-2} - n_{k-1}\cdot e) - \omega(2^{k-1}+n_{k-1}\cdot 2^{k-2}+e) +2^{\rho_2}\\
    &= 2\cdot c_{desc}(2^{k-2}+ e)+(2^{k-1} - n_{k-1}\cdot 2e) - \omega(2^{k-1}+n_{k-1}\cdot 2^{k-2}+e) +2^{\rho_2}\\
    &= 2\cdot c_{desc}(2^{k-2}+ e) - \omega(2^{k-1}+n_{k-1}\cdot 2^{k-2}+e) +2^{\rho_2}+(2^{k-1} - n_{k-1}\cdot 2e)
\end{flalign*}
Again, application of Theorem \ref{laddercolless_desc} gives
\begin{flalign*}
    c_{desc}(n) 
    &= c_{desc}(2^{k-1}+ 2e) - \omega(2^{k-1}+n_{k-1}\cdot 2^{k-2}+e) +2^{\rho_2}+(2^{k-1} - n_{k-1}\cdot 2e)
\end{flalign*}
$e<2^{k-2}$, so $2e<2^{k-1}$, and $n_{k-1}=0$ or $1$, so
\begin{flalign*}
    c_{desc}(n) 
    &= c_{desc}(2^{k-1}+ 2e)- (\omega(2^{k-1}+ e) + n_{k-1}) +2^{\rho_2}+(2^{k-1} - n_{k-1}\cdot 2e)\\
    &= c_{desc}(2^{k-1}+ 2e)- (\omega(2^{k-1}+ 2e) + n_{k-1}) +2^{\rho_2}+(2^{k-1} - n_{k-1}\cdot 2e)  
\end{flalign*}
A sequence of arithmetic operations gives
\begin{flalign*}
    c_{desc}(n) 
    &= c_{desc}(2^{k-1}+ 2e)- \omega(2^{k-1}+ 2e) + 2^{\rho_2}+(2^{k-1} - n_{k-1}\cdot 2e -n_{k-1})  \\
    &= c_{desc}(2^{k-1}+ 2e)- \omega(2^{k-1}+ 2e) + 2^{\rho_2}+(2^{k-1} - n_{k-1}\cdot (2e +1))  \\
    &= c_{desc}(2^{k-1}+ 2e)- \omega(2^{k-1}+ 2e) + 2^{\rho_2}+(2^{k-1} - n_{k-1}\cdot d)  
\end{flalign*}
By application of the weight lemma Lemma \ref{weightlemma_mult2}:
\begin{flalign*}
    c_{desc}(n) 
    &= c_{desc}(2^{k-1}+ 2e)- \omega(2^{k-2}+ e) + 2^{\rho_2}+(2^{k-1} - n_{k-1}\cdot d) 
\end{flalign*}
and the proof is completed by two applications of Theorem \ref{laddercolless_desc}.
\begin{flalign*}
    c_{desc}(n) 
    &= 2\cdot c_{desc}(2^{k-2}+ e)- \omega(2^{k-2}+ e) + 2^{\rho_2}+(2^{k-1} - n_{k-1}\cdot d) \\
    &= c_{desc}(2^{k-1}+ 2e+1) + (2^{k-1} - n_{k-1}\cdot d)  \\
    &= c_{desc}(2^{k-1}+ d) + (2^{k-1} - n_{k-1}\cdot d)
\end{flalign*}
\end{proof}
\begin{corollary} 
        Let $T$ be a MinD tree on $n$ leaves with a ladder  base tree, where the perfect tree ``leaves'' are arranged in descending order. Let $n=2^k+r$, where $r=n_{k-1}\cdot 2^{k-1} + d$, $n_{k-1}=0$ or $1$, and $0<d<2^{k-1}$. Then  
        $$
        c_{desc}(2^k+r) \le c_{desc}(2^k+1)
        $$
\end{corollary}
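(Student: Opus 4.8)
The plan is to prove the slightly more uniform bound $c_{desc}(2^k+r)\le 2^k-1$ for \emph{every} $r$ with $0\le r<2^k$, by induction on $k$, and then read off the corollary from Corollary~\ref{desc_list}(b), which states $c_{desc}(2^k+1)=2^k-1$. The base cases $k=0,1$ are immediate: $c_{desc}(1)=0$, $c_{desc}(2)=0$, and $c_{desc}(3)=1$ (from Theorem~\ref{laddercolless_desc} or Corollary~\ref{desc_list}), each at most $2^k-1$.

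For the inductive step with $k\ge 2$, I would split on the two ``boundary'' values of $r$ plus the generic case. If $r=0$ then $c_{desc}(2^k)=0\le 2^k-1$ by Corollary~\ref{desc_list}(a). If $r=2^{k-1}$ then $c_{desc}(2^k+2^{k-1})=2^k-2^{k-1}=2^{k-1}\le 2^k-1$ by Corollary~\ref{desc_list}(c) (the $j=k-1$ case). Otherwise write $r=n_{k-1}\cdot 2^{k-1}+d$ with $n_{k-1}\in\{0,1\}$ and $0<d<2^{k-1}$, so that Proposition~\ref{new_colless_desc_prop} applies and gives
$$c_{desc}(2^k+r)=c_{desc}(2^{k-1}+d)+(2^{k-1}-n_{k-1}d).$$
The induction hypothesis at level $k-1$ (valid since $0\le d<2^{k-1}$) yields $c_{desc}(2^{k-1}+d)\le 2^{k-1}-1$, while $2^{k-1}-n_{k-1}d\le 2^{k-1}$ because $n_{k-1}d\ge 0$. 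Adding, $c_{desc}(2^k+r)\le(2^{k-1}-1)+2^{k-1}=2^k-1$, which closes the induction. For the $n=2^k+r$ in the corollary's hypothesis we have $0<d<2^{k-1}$, hence $0<r<2^k$, so the bound applies and, combined with Corollary~\ref{desc_list}(b), gives $c_{desc}(2^k+r)\le 2^k-1=c_{desc}(2^k+1)$.

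There is essentially no hard step here: Proposition~\ref{new_colless_desc_prop} already does the work by expressing $c_{desc}(2^k+r)$ as the value at the smaller argument $2^{k-1}+d$ plus a term that is visibly at most $2^{k-1}$. The only point needing care is arranging the induction so that Proposition~\ref{new_colless_desc_prop} is invoked only when its hypothesis $0<d<2^{k-1}$ holds, which is precisely why I would peel off $r=0$ and $r=2^{k-1}$ as separate cases and dispatch them through Corollary~\ref{desc_list}(a),(c). An alternative would be to induct directly from the recursion in Theorem~\ref{laddercolless_desc}, handling $n=2m$ and $n=2m+1$ separately, but that route requires tracking the $2^{\rho_2}-\omega(m)$ correction terms, whereas the Proposition~\ref{new_colless_desc_prop} route avoids that bookkeeping entirely.
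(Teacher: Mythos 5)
Your proof is correct and follows essentially the same route as the paper's: induction on $k$ driven by Proposition~\ref{new_colless_desc_prop}, with Corollary~\ref{desc_list} supplying the endpoint values $c_{desc}(2^k)=0$ and $c_{desc}(2^k+1)=2^k-1$. If anything, your version is more careful than the paper's, since you strengthen the inductive claim to $c_{desc}(2^k+r)\le 2^k-1$ for \emph{all} $0\le r<2^k$, apply the proposition with the correct residue $d$ rather than $r$, and peel off the $r=0$ and $r=2^{k-1}$ cases explicitly --- precisely what is needed to legitimately invoke the induction hypothesis at level $k-1$ when $d$ happens to equal $0$ or $2^{k-2}$.
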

\begin{proof}
    Proof by induction on $k$. Note that this is true for $k=0,1$. 
    
    \begin{flalign*}
        c_{desc}(2^k+r)
        &=  c_{desc}(2^{k-1}+r) + (2^{k-1} - n_{k-1}\cdot r)& \text{ by Proposition \ref{new_colless_desc_prop}}&\\
        &\le  c_{desc}(2^{k-1}+1) + (2^{k-1} - n_{k-1}\cdot r)& \text{ by induction hypothesis}\\
        &=  (2^{k-1}-1) + (2^{k-1} - n_{k-1}\cdot r)& \text{ by Corollary \ref{desc_list}}\\
        &=  2^k-1 - n_{k-1}\cdot r  \\
        &\le  2^k-1   \\
       &= c_{desc}(2^k+1)& \text{ by Corollary \ref{desc_list}}
    \end{flalign*}
\end{proof}
\begin{corollary} \label{mincolcdesc}
        The Colless index $c_{desc}(n)$ is the minimum possible Colless index $\delta(n)$ if and only if $n=2^k$ or $n=2^{k+1}-2^j$. The values for these are:
    $$c_{desc}(2^k)=0=\delta(2^k)$$
    $$c_{desc}(2^{k+1}-2^j) = 2^j \cdot (k-j) = \delta(2^{k+1}-2^j)$$.
\end{corollary}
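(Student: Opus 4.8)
The plan is to reformulate the claim in binary terms. The integers of the form $2^k$ or $2^{k+1}-2^j$ (with $0\le j\le k$) are exactly those $n$ whose set of one-bit positions $R(n)=\{\rho_1<\rho_2<\dots<\rho_{\omega(n)}\}$ is a block of consecutive integers; write $S$ for this set (it contains the powers of $2$, where $\omega(n)=1$). So I must prove $c_{desc}(n)=\delta(n)$ iff $n\in S$, together with the two displayed values. Note that $c_{desc}(n)\ge\delta(n)$ always, since $\delta(n)$ is the least Colless index among all $n$-leaf trees (Proposition~\ref{minimumcolless}) while $c_{desc}(n)$ is the index of one particular such tree.

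For the ``if'' direction and the values: if $n=2^k$ then $c_{desc}(2^k)=0$ by Corollary~\ref{desc_list}(a) and $\delta(2^k)=0$ by Corollary~\ref{cor_dnodes2k}. If $n=2^{k+1}-2^j$ then Corollary~\ref{desc_list}(f), applied with $k+1$ in place of $k$, gives $c_{desc}(2^{k+1}-2^j)=2^j(k-j)$; and since $2^{k+1}-2^j=2^j(2^{k+1-j}-1)$, repeated use of $\delta(2m)=2\delta(m)$ (Theorem~\ref{exp_D_recursive}) reduces $\delta$ to $2^j\,\delta(2^{k+1-j}-1)$, where the one-line induction $\delta(2^t-1)=\delta(2^{t-1}-1)+\delta(2^{t-1})+1=\delta(2^{t-1}-1)+1$ (Theorem~\ref{exp_D_recursive}, $\delta(1)=0$) gives $\delta(2^t-1)=t-1$; hence $\delta(2^{k+1}-2^j)=2^j(k-j)=c_{desc}(2^{k+1}-2^j)$. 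The case $j=k$ is consistent with $n=2^k$.

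The ``only if'' direction is the heart of the argument; I would prove it by producing an explicit formula for $f(n):=c_{desc}(n)-\delta(n)$. Write $\omega=\omega(n)$ and $R(n)=\{\rho_1<\dots<\rho_\omega\}$. Iterating Proposition~\ref{general_ladder_formula} (which peels off the topmost perfect subtree of a ladder-based MinD tree) down to the one-bit base $c_{desc}(2^{\rho_1})=0$ gives the closed form $c_{desc}(n)=\sum_{j=2}^{\omega}\bigl(2^{\rho_j}-\sum_{i=1}^{j-1}2^{\rho_i}\bigr)=n-2^{\rho_1}-\sum_{i=1}^{\omega-1}(\omega-i)2^{\rho_i}$, while evaluating the explicit $\delta$-formula of Theorem~\ref{another_explicit_D} on the bit set $R(n)$ gives $\delta(n)=\sum_{i=1}^{\omega-1}\bigl(\rho_\omega-\rho_i-2\omega+2i+2\bigr)2^{\rho_i}$. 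Subtracting and collecting terms, everything collapses to
\[
  f(n)=2^{\rho_\omega}-2^{\rho_1}+\sum_{i=1}^{\omega-1}\bigl((\omega-1-i)-(\rho_\omega-\rho_i)\bigr)2^{\rho_i}.
\]
Because $\rho_i<\rho_{i+1}<\dots<\rho_\omega$ are $\omega-i+1$ distinct integers, $\rho_\omega-\rho_i\ge\omega-i$, so every coefficient equals $-1-c_i$ with $c_i:=(\rho_\omega-\rho_i)-(\omega-i)\ge0$; moreover $c_i$ counts the ``gap'' positions $l$ with $\rho_i<l<\rho_\omega$ and $n_l=0$. Substituting, using $\sum_{i<\omega}2^{\rho_i}=n-2^{\rho_\omega}$ and the identity $2^{\rho_\omega+1}-n=1+\sum_{l\in\{0,\dots,\rho_\omega\}\setminus R(n)}2^l=2^{\rho_1}+\sum_{l\in G}2^l$ (where $G$ is the set of all gap positions strictly between $\rho_1$ and $\rho_\omega$), and then swapping the order of summation, everything telescopes to
\[
  f(n)=\sum_{l\in G}\Bigl(2^{l}-\sum_{\rho\in R(n),\ \rho<l}2^{\rho}\Bigr).
\]
Each inner sum is a sum of distinct powers of two all smaller than $2^l$, hence at most $2^l-1$, so every summand is at least $1$ and $f(n)\ge|G|\ge0$, with equality iff $G=\emptyset$. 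Since $G=\emptyset$ is precisely the statement that the one-bits of $n$ are consecutive (automatic when $\omega=1$), this gives $f(n)=0$ iff $n\in S$, which together with the previous paragraph proves the corollary.

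I expect the algebraic bookkeeping of the third paragraph to be the main obstacle: getting the two clean closed forms and carrying the cancellation through to the transparent expression for $f(n)$; once that is in hand the bound $f(n)\ge|G|$ is immediate. An alternative route that avoids some of this is strong induction on $n$: from $c_{desc}(2m)=2c_{desc}(m)$ (Theorem~\ref{laddercolless_desc}) and $\delta(2m)=2\delta(m)$ one gets $f(2m)=2f(m)$, reducing to odd $n=2^k+r$, and then one peels off the leading bit using Proposition~\ref{new_colless_desc_prop} together with the companion identity $\delta\bigl(2^k+n_{k-1}2^{k-1}+d\bigr)=\delta(2^{k-1}+d)+\bigl(2^{k-1}n_{k-1}+(-1)^{n_{k-1}}d\bigr)$ (read off from Theorem~\ref{exp_D_closed}), finding $f(n)=f(2^{k-1}+d)+(2^{k-1}-r)>0$ when the next bit $n_{k-1}$ is $0$ and $f(n)=f(2^{k-1}+d)$ when $n_{k-1}=1$; the ``one-bits consecutive'' property then propagates down the recursion.
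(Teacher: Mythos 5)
Your proof is correct, and it takes a genuinely different route from the paper's. The paper handles the ``if'' direction and the displayed values the same way you do (Corollary~\ref{desc_list}, Corollary~\ref{cor_dnodes2k}, repeated use of Theorem~\ref{exp_D_recursive}), but proves the converse by induction on $k$: writing $n=2^k+n_{k-1}2^{k-1}+d$, it compares the increment $2^{k-1}-n_{k-1}d$ from Proposition~\ref{new_colless_desc_prop} with the increment $n_{k-1}2^{k-1}+(-1)^{n_{k-1}}d$ from Theorem~\ref{exp_D_closed}, derives a contradiction when $n_{k-1}=0$, and when $n_{k-1}=1$ passes the equality $c_{desc}=\delta$ down to a smaller instance before pinning down the admissible forms of $n$ --- essentially the ``alternative route'' you sketch in your last paragraph. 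Your main argument instead computes the deficit $f(n)=c_{desc}(n)-\delta(n)$ in closed form by unrolling Proposition~\ref{general_ladder_formula} and evaluating Theorem~\ref{another_explicit_D} on the bit set $R(n)$; I verified the algebra leading to $f(n)=\sum_{l\in G}\bigl(2^{l}-\sum_{\rho\in R(n),\ \rho<l}2^{\rho}\bigr)$ (e.g., it gives $f(5)=1$, $f(9)=4$, $f(11)=1$, all matching direct computation), and since each summand lies in $[1,2^l-1]$ you get $f(n)\ge\lvert G\rvert$ with $f(n)=0$ exactly when the one-bits of $n$ are consecutive, i.e., $n=2^k$ or $n=2^{k+1}-2^j$. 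Both directions of the equivalence thus drop out of a single identity, and you obtain the sharper quantitative fact that the deficit is bounded below by the number of interior zero-bits --- information the paper's contradiction-based induction does not produce. The price is the heavier bookkeeping in the cancellation and the reliance on the explicit formula of Theorem~\ref{another_explicit_D} rather than on the recursions alone.
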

\begin{proof}
    The minimum possible Colless index is $\delta(n)$ by Proposition \ref{minimumcolless}.  The equalities for $c_{desc}$  follow from Corollary \ref{desc_list}. The equalities for $\delta$ follow from repeated application of Theorem \ref{exp_D_recursive}. So if $n=2^k$ or $n=2^{k+1}-2^j$, the Colless index $c_{desc}(n)$ is minimal.
    
    For the converse, we assume that the Colless index $c_{desc}(n)$ is minimal; i.e., $c_{desc}(n)=\delta(n)$, and proceed by induction on $k$ to show that $n=2^k$ or $n=2^{k+1}-2^j$, for some $j<k$. We assume that $n=2^k+r$,  where $r=n_{k-1}\cdot 2^{k-1} + d$, $n_{k-1}=0$ or $1$, and $0<d<2^{k-1}$, as in Proposition \ref{new_colless_desc_prop}. 
    
    We start by observing that the proposition is true for $k=1$, i.e. $n=2^1$ or $n=2^1-2^0$. Now let $k>1$, and assume $n \ne 2^k$. 
    \begin{flalign*}
        c_{desc}(n) &= c_{desc}(n-2^k) + (2^{k-1} - n_{k-1}\cdot d)& \text{ by Proposition \ref{new_colless_desc_prop}}&\\
        \delta(n) &= \delta(n-2^k) + (n_{k-1}2^{k-1} + (-1)^{n_{k-1}}d)& \text{ by Theorem \ref{exp_D_closed}}
    \end{flalign*}
    $c_{desc}(n)=\delta(n)$, so
    \begin{equation}\label{odd_cdesc}
    c_{desc}(n-2^k) + (2^{k-1} - n_{k-1}\cdot d)=\delta(n-2^k) + (n_{k-1}2^{k-1} + (-1)^{n_{k-1}}d)
    \end{equation}
    If $n_{k-1}=0$, this equation becomes
    $$c_{desc}(n-2^k) + 2^{k-1} =\delta(n-2^k) + d$$
    $c_{desc}(n-2^k) \ge \delta(n-2^k)$ and $2^{k-1} > d$, so 
    $$c_{desc}(n-2^k) + 2^{k-1} >\delta(n-2^k) + d$$
    which is a contradiction. Thus, $n_{k-1}=1$, and Equation (\ref{odd_cdesc}) becomes
    $$
    c_{desc}(n-2^k) + (2^{k-1} - d)=\delta(n-2^k) + (2^{k-1} - d)
    $$
    so $c_{desc}(n-2^k)=\delta(n-2^k)$, and by the induction hypothesis, $n-2^k=2^i-2^j$ or $n-2^k=2^i$. 
    
    So $n=2^k+2^i-2^j$ or $n=2^k+2^i$. 
    
    If $n=2^k+2^i-2^j$, $2^{k-1} \le  2^i-2^j< 2^k$, so $i=k$ and $n=2^k+2^k-2^j=2^{k+1}-2^j$.

    If $n=2^k+2^i$, $i=k-1$ since $n_{k-1}=1$, so  $n=2^k+2^{k-1}=2^{k+1}-2^{k-1}$.
\end{proof}

The MinD trees built on descending ladder trees where $n=2^k-2^j$ are complete full binary trees, so are not a new type of tree having minimal Colless index.

\paragraph{Exponents ascending.} In this subsection, we arrange the powers of two in descending order; i.e., $\pi({\omega(n)}) < \dots < \pi(\rho_2) < \pi(\rho_1)$. In other words, the permutation $\pi$ reverses the order of the exponents. So
\begin{itemize}
    \item $n=\sum\limits_{1}^{\omega(n)}2^{\rho_i}$, where $\rho_{\omega(n)} > \dots > \rho_2 > \rho_1$.
    \item The base tree is a ladder tree with $\omega(n)$ leaves.
    \item The perfect tree ``leaves'' are arranged so that the perfect tree having $2^{\rho_1}$ leaves is on the top rung, down to the perfect trees having $2^{\rho_{\omega(n)-1}}$ and $2^{\rho_{\omega(n)}}$ leaves are on the two bottom rungs. 
\end{itemize}
This type of tree is illustrated in Fig. \ref{fig:colless_ascdesc}(\subref{fig:colless_ascending}).
\begin{lemma} \label{asc_build_lemma}
    Let $T$ be a MinD tree on $n$ leaves with a ladder  base tree, where the perfect tree ``leaves'' are arranged in ascending order. Let $n=2^k+r$ leaves  with $0\le r < 2^k$. Then the Colless index $c_{asc}(n)$ on this tree is
    $$
        c_{asc}(n) = n-2^{(\rho_1)+1}+c_{asc}(n-2^{\rho_1})
    $$
\end{lemma}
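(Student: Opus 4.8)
The plan is to derive this directly from Proposition~\ref{general_ladder_formula}, the general recursion for the Colless index of a MinD tree with a ladder base tree, once we pin down which perfect subtree occupies the top rung in the ascending arrangement. Recall that in Proposition~\ref{general_ladder_formula} the top-level split of the ladder base tree peels off a single perfect ``leaf'', namely the one having $2^{\pi(\rho_{\omega(n)})}$ leaves, leaving a strictly smaller MinD tree with a ladder base tree on the remaining $\omega(n)-1$ ``leaves''. In the ascending case the permutation $\pi$ reverses the order of the exponents, so $\pi(\rho_{\omega(n)}) = \rho_1$; that is, the perfect tree on the top rung is the smallest one, with $2^{\rho_1}$ leaves. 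Substituting $\pi(\rho_{\omega(n)})=\rho_1$ into Proposition~\ref{general_ladder_formula} gives
$$
c_{asc}(n) = \bigl\lvert n - 2^{\rho_1+1} \bigr\rvert + c_{asc}\bigl(n - 2^{\rho_1}\bigr).
$$

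It then remains only to remove the absolute value, i.e.\ to check that $n - 2^{\rho_1+1} \ge 0$. Since a ladder base tree has $\omega(n)\ge 2$ leaves (otherwise the tree is perfect, there are no $D$-nodes, and the statement does not apply), we may write $n - 2^{\rho_1} = \sum_{i=2}^{\omega(n)} 2^{\rho_i}$, and this sum is at least $2^{\rho_2}$, which in turn is at least $2^{\rho_1+1}$ because $\rho_2 > \rho_1$ forces $\rho_2 \ge \rho_1+1$. Hence $n \ge 2^{\rho_1} + 2^{\rho_1+1} > 2^{\rho_1+1}$, so $\lvert n - 2^{\rho_1+1}\rvert = n - 2^{\rho_1+1}$ and the claimed identity $c_{asc}(n) = n - 2^{\rho_1+1} + c_{asc}(n-2^{\rho_1})$ follows. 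Alternatively one can bypass Proposition~\ref{general_ladder_formula} and apply Theorem~\ref{colless_general_base} directly with $\ell_L = 1$ and $T_L$ the perfect tree on $2^{\rho_1}$ leaves, whose Colless index is $0$; the bookkeeping is identical.

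The argument is short, and there is no serious obstacle: the only point requiring care is the index bookkeeping in the permutation, namely confirming that the ``ascending'' convention of Section~\ref{colless_ladder_section} really does place the $2^{\rho_1}$-leaf perfect tree at the root of the ladder, matching the slot $\pi(\rho_{\omega(n)})$ appearing in Proposition~\ref{general_ladder_formula}. Once that identification is made, the remainder is the elementary inequality $2^{\rho_2}\ge 2^{\rho_1+1}$ used to drop the absolute value.
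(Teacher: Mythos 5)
Your proof is correct and takes essentially the same route as the paper, which simply states that the lemma follows from Proposition~\ref{general_ladder_formula}; you add the worthwhile (and correct) details of identifying $\pi(\rho_{\omega(n)})=\rho_1$ in the ascending convention and justifying the removal of the absolute value via $n \ge 2^{\rho_1}+2^{\rho_2} > 2^{\rho_1+1}$ when $\omega(n)\ge 2$.
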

\begin{proof}
    Follows from Proposition \ref{general_ladder_formula}.
\end{proof}
\begin{theorem} \label{laddercolless_asc}
    Let $T$ be a MinD tree on $n$ leaves with a ladder  base tree, where the perfect tree ``leaves'' are arranged in ascending order, with the perfect tree having $2^{\rho_1}$ leaves on the top rung, continuing down the rungs to the perfect trees having $2^{\rho_{\omega(n)-1}}$ and $2^{\rho_{\omega(n)}}$ leaves on the two bottom rungs. Let $n=\sum\limits_{1}^{\omega(n)}2^{\rho_i}$, where $\rho_{\omega(n)} > \dots > \rho_2 > \rho_1$. Then the Colless index $c_{asc}$ of this tree is 
    $$
        c_{asc}(n)=
        \begin{cases}
            2 \cdot c_{asc}(m), &\text{if } n=2m;\\ 2 \cdot c_{asc}(m)+2m-1, &\text{if } n=2m+1.
        \end{cases}
    $$
\end{theorem}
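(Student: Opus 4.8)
The plan is to prove the recurrence by strong induction on $n$, splitting into the even and odd cases exactly as was done for the descending arrangement (Theorem~\ref{laddercolless_desc}). The odd case will collapse immediately onto the even case via Lemma~\ref{asc_build_lemma} (the specialization of Proposition~\ref{general_ladder_formula}), so the real content is the ``doubling'' identity $c_{asc}(2m)=2c_{asc}(m)$.

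First I would dispose of the degenerate base case. When $\omega(n)=1$, i.e. $n=2^k$, the MinD tree is simply the perfect tree on $2^k$ leaves, which has Colless index $0$; since also $c_{asc}(2^{k-1})=0$, this gives $c_{asc}(2^k)=0=2c_{asc}(2^{k-1})$ for $k\ge 1$, with $c_{asc}(1)=0$ by convention. Handling powers of $2$ at the outset is necessary because the ladder base tree degenerates there and Lemma~\ref{asc_build_lemma} is only meaningful once $\omega(n)\ge 2$. For the inductive step I assume $\omega(n)\ge 2$.

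For the even case $n=2m$: here $\omega(m)=\omega(n)\ge 2$ and $\rho_i(m)=\rho_i(n)-1$, so writing $\ell=\rho_1(n)\ge 1$ we have $\rho_1(m)=\ell-1$. Lemma~\ref{asc_build_lemma} applied to $n$ gives $c_{asc}(2m)=2m-2^{\ell+1}+c_{asc}(2m-2^{\ell})$, and since $2m-2^{\ell}=2(m-2^{\ell-1})$ with $1\le m-2^{\ell-1}<m$ (the integer $m-2^{\ell-1}$ has $\omega(m)-1\ge 1$ one-bits, and when $\omega(m)=2$ it is a power of $2$, covered by the base case), the inductive hypothesis yields $c_{asc}\!\bigl(2(m-2^{\ell-1})\bigr)=2c_{asc}(m-2^{\ell-1})$. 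Hence $c_{asc}(2m)=2\bigl(m-2^{\ell}+c_{asc}(m-2^{\ell-1})\bigr)$. Applying Lemma~\ref{asc_build_lemma} to $m$ (valid since $\omega(m)\ge 2$) gives $c_{asc}(m)=m-2^{\ell}+c_{asc}(m-2^{\ell-1})$, and combining the two identities yields $c_{asc}(2m)=2c_{asc}(m)$. For the odd case $n=2m+1$ we have $\rho_1(n)=0$, so Lemma~\ref{asc_build_lemma} reads $c_{asc}(2m+1)=(2m+1)-2+c_{asc}(2m)=2m-1+c_{asc}(2m)$, and the even case just established gives $c_{asc}(2m)=2c_{asc}(m)$, so $c_{asc}(2m+1)=2c_{asc}(m)+2m-1$, as claimed.

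The only obstacle is the minor bookkeeping around $\omega(n)=1$ noted above; the rest is a routine two-case induction mirroring the descending case. An alternative, more structural, route for the even case avoids Lemma~\ref{asc_build_lemma} altogether: because $n=2m$ forces $\rho_i(n)=\rho_i(m)+1$ for all $i$, the ascending-ladder MinD tree on $2m$ is obtained from the ascending-ladder MinD tree on $m$ by replacing each perfect leaf-subtree by one of twice the size, leaving the base ladder unchanged; at each $D$-node the two subtree leaf-counts are sums of perfect-subtree sizes, each of which doubles, so each value $\lvert\ell_L-\ell_R\rvert$ in Definition~\ref{def_colless} doubles, and summing over the $D$-nodes gives $c_{asc}(2m)=2c_{asc}(m)$ directly. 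Either way, the odd case then follows from Lemma~\ref{asc_build_lemma} as above.
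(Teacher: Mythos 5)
Your proof is correct and follows essentially the same route as the paper's: both cases reduce to Lemma~\ref{asc_build_lemma}, with the even case closed by applying that lemma a second time to $m$ and the odd case collapsing onto the even one. Your explicit treatment of the $n=2^k$ base case and your careful exponent bookkeeping (writing $c_{asc}(m-2^{\ell-1})$ where the paper's displayed computation slips into $c_{asc}(m-2^{\rho_1})$) tidy up details the paper's own write-up glosses over.
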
 
\begin{proof}
    Proof by induction.
    \newline
If $n=2m$, then $\rho_1>0$ and $\rho_1-1$ is the position of the smallest non-$0$ bit of $m$.
    \begin{flalign*}
    c_{asc}(n) = c_{asc}(2m) &= 2m-2^{(\rho_1)+1}+c_{asc}(2m-2^{\rho_1})&\text{ by Lemma \ref{asc_build_lemma}}&\\
    &= 2m-2^{(\rho_1)+1}+c_{asc}(2(m-2^{(\rho_1)-1}))&\text{ since $\rho_1>0$}\\
    &= 2m-2^{(\rho_1)+1}+2\cdot c_{asc}(m-2^{\rho_1})&\text{ by induction}\\
    &= 2 \cdot(m-2^{(\rho_1)+1}+ c_{asc}(m-2^{\rho_1})\\
    &= 2 \cdot c_{asc}(m)&\text{ by Lemma \ref{asc_build_lemma}}
    \end{flalign*}
    If $n=2m+1$, 
    \begin{flalign*}
        c_{asc}(n) 
        &= n-2+c_{asc}(n-1)&\text{ by Lemma \ref{asc_build_lemma}}&\\
        &= 2m-1+c_{asc}(2m)\\
        &= c_{asc}(2m)+2m-1\\
        &= 2 \cdot c_{asc}(m)+2m-1 &\text{ by the proof for $n$ even}
    \end{flalign*}
\end{proof}
\begin{corollary}\label{c_asc_closed}
    Let the binary decomposition of $n$ be $n_k \dots n_1 n_0$, where $n_k=1$. Then 
    $$
        c_{asc}(n) = \sum_{i=0}^{k-1} 2^i n_i \left(\frac{n-(n \mod 2^i)}{2^i}-2\right)
    $$
\end{corollary}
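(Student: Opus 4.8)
The plan is to prove the identity by strong induction on $n$, combining the recursion for $c_{asc}$ from Theorem~\ref{laddercolless_asc} with elementary facts about binary expansions. First I would observe that $\frac{n-(n\bmod 2^i)}{2^i}=\floor{n/2^i}$, so the right-hand side is
\begin{equation*}
F(n):=\sum_{i=0}^{k-1}2^i n_i\left(\floor{n/2^i}-2\right),
\end{equation*}
where $k=\floor{\log_2 n}$ and $n_i$ denotes the $i$-th binary digit of $n$. The base case is $n=1$: then $k=0$, the sum is empty, and $c_{asc}(1)=0=F(1)$, so the claim holds.

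For the inductive step, write $n=2m$ or $n=2m+1$ with $m\ge 1$. The binary digits obey $n_i=m_{i-1}$ for $i\ge 1$, and since $n\ge 2$ one has $\floor{\log_2 m}=k-1$; moreover for every $i\ge 1$ it holds that $\floor{n/2^i}=\floor{m/2^{i-1}}$, because adding a single unit in the odd case does not change the floor once $i\ge 1$. Splitting off the $i=0$ term of $F(n)$ and re-indexing the remaining sum by $j=i-1$ then gives, in both parities,
\begin{equation*}
F(n)=n_0\,(n-2)+2\sum_{j=0}^{k-2}2^{j}m_j\left(\floor{m/2^j}-2\right)=n_0\,(n-2)+2F(m),
\end{equation*}
where the last equality is the induction hypothesis applied to $m$, noting that the truncated sum over $j\in\{0,\dots,k-2\}=\{0,\dots,\floor{\log_2 m}-1\}$ is exactly $F(m)$.

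It remains to match this against Theorem~\ref{laddercolless_asc}. If $n=2m$ then $n_0=0$, so $F(n)=2F(m)=2c_{asc}(m)=c_{asc}(n)$. If $n=2m+1$ then $n_0=1$ and $n-2=2m-1$, so $F(n)=2F(m)+2m-1=2c_{asc}(m)+2m-1=c_{asc}(n)$. This closes the induction. The only genuinely delicate point is the bookkeeping: checking the floor identity $\floor{(2m+1)/2^i}=\floor{m/2^{i-1}}$ for $i\ge 1$ and keeping the summation bounds consistent so that the re-indexed tail of $F(n)$ lines up exactly with $F(m)$; everything else is routine arithmetic.
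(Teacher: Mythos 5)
Your proof is correct and follows essentially the same route as the paper: induction on $n$ via the parity recursion of Theorem~\ref{laddercolless_asc}, splitting off the $i=0$ term and re-indexing the tail of the sum to recover the formula for $m=\floor{n/2}$. The only cosmetic difference is that you package the re-indexing as the single identity $F(n)=n_0(n-2)+2F(m)$ before invoking the induction hypothesis, whereas the paper carries out the same algebra separately in each parity case.
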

\begin{proof}
    Proof by induction. This is true for $n=1,2,3$ by inspection. Let $n=2m$ or $2m+1$, depending on whether $n$ is even or odd, so the binary decomposition of $m$ is $n_{k-1} \dots n_2 n_1$.
    
    If $n=2m$, then 
    \begin{align*}
        c_{asc}(n) 
        &= 2\cdot c_{asc}(m)& \text{ by Theorem \ref{laddercolless_asc}}\\
        &= 2\cdot \sum_{i=0}^{k-2} 2^i n_{i+1} \left(\frac{m-(m \mod 2^i)}{2^i}-2\right)&\text{ by induction hypothesis}\\
        &= \sum_{i=0}^{k-2} 2^{i+1} n_{i+1} \left(\frac{m-(m \mod 2^i)}{2^i}-2\right)\\
        &= \sum_{i=0}^{k-2} 2^{i+1} n_{i+1} \left(\frac{n-(n \mod 2^{i+1})}{2^{i+1}}-2\right)\\
        &= \sum_{i=1}^{k-1} 2^i n_i \left(\frac{n-(n \mod 2^i)}{2^i}-2\right)\\
        &= \sum_{i=0}^{k-1} 2^i n_i \left(\frac{n-(n \mod 2^i)}{2^i}-2\right)&\text{ since $n_0=0$}
    \end{align*}
    If $n=2m+1$, then
    \begin{align*}
        c_{asc}(n) 
        &= 2\cdot c_{asc}(m) +2m -1& \text{ by Theorem \ref{laddercolless_asc}}\\
        &= 2\cdot c_{asc}(m) +n-2\\
        &= \sum_{i=1}^{k-1} 2^i n_i \left(\frac{n-(n \mod 2^i)}{2^i}-2\right) +n-2& \text{ by the above proof}\\
        &= \sum_{i=1}^{k-1} 2^i n_i \left(\frac{n-(n \mod 2^i)}{2^i}-2\right) + \left(1\cdot 1\left(\frac{n-0}{1}-2\right)\right) \\
        &= \sum_{i=0}^{k-1} 2^i n_i \left(\frac{n-(n \mod 2^i)}{2^i}-2\right)& \text{ since $n_0=1$}
    \end{align*}
\end{proof}
\begin{corollary} \label{asc_list}
The following are true of $c_{asc}$:
    \begin{enumerate}[label=(\alph*)]
        \item $c_{asc}(2^k)=0$
        \item $c_{asc}(2^k+1)=2^k-1$
        \item $c_{asc}(2^k+2^j)=2^k-2^j$
        \item $c_{asc}(2^k+2^{k-1})=2^{k-1}$
        \item $c_{asc}(2^k-1)=2^k(k-1)-3\cdot (2^{k-1}-1)$, when $k>1$
        \item $c_{asc}(2^k-2^j)=2^j \cdot(2^{k-j}(k-j-1)-3\cdot (2^{k-j-1}-1))$
        \item $c_{asc}(2^k-2^{k-2})=2^{k-2}$, when $k>1$
        \item $c_{asc}(2^k-2^{k-2}+1)=2^k-1$, when $k>2$
    \end{enumerate}
\end{corollary}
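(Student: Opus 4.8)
The plan is to read every item straight off the closed form of Corollary~\ref{c_asc_closed}, rewritten with the identity $\frac{n-(n\bmod 2^i)}{2^i}=\floor{n/2^i}$ as
$$c_{asc}(n)=\sum_{i=0}^{k-1}2^{i}n_i\left(\floor{\tfrac{n}{2^{i}}}-2\right),\qquad k=\floor{\log_2 n},$$
where $n_k\dots n_0$ is the binary expansion of $n$. For each $n$ in the list I would write down its binary expansion, identify the indices $i<k$ with $n_i=1$, evaluate $\floor{n/2^{i}}$ on that set, and sum the resulting short (geometric) series. Items (a)--(d) are the ``few-bits'' cases: for $n=2^{k}$ the sum is empty, giving $0$; for $n=2^{k}+1$ only $i=0$ survives, contributing $n-2=2^{k}-1$; for $n=2^{k}+2^{j}$ with $j<k$ only $i=j$ survives, and since $\floor{n/2^{j}}=2^{k-j}+1$ the value is $2^{j}(2^{k-j}-1)=2^{k}-2^{j}$; and (d) is just the case $j=k-1$ of (c), as is (b) the case $j=0$.

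The only place that requires care is the ``many-bits'' cases (e)--(g): the leading bit of $2^{k}-1$ and of $2^{k}-2^{j}$ lies in position $k-1$, not $k$, so the summation range is $0,\dots,k-2$ (resp.\ $j,\dots,k-2$). For $n=2^{k}-1$ we have $\floor{n/2^{i}}=2^{k-i}-1$, hence each term equals $2^{i}(2^{k-i}-3)=2^{k}-3\cdot 2^{i}$; summing over $i=0,\dots,k-2$ gives $(k-1)2^{k}-3(2^{k-1}-1)$, which is (e). For $n=2^{k}-2^{j}=2^{j}(2^{k-j}-1)$ the bits set are exactly positions $j,\dots,k-1$, one checks $\floor{n/2^{i}}=2^{k-i}-1$ for $j\le i\le k-2$, the terms are again $2^{k}-3\cdot 2^{i}$, and summing the $k-1-j$ of them yields $2^{k}(k-j-1)-3(2^{k-1}-2^{j})=2^{j}\bigl(2^{k-j}(k-j-1)-3(2^{k-j-1}-1)\bigr)$, which is (f); then (g) is the case $j=k-2$, where the parenthesised factor collapses to $1$, leaving $2^{k-2}$.

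Finally, for (h) write $n=2^{k}-2^{k-2}+1=3\cdot 2^{k-2}+1$, whose binary expansion has ones exactly in positions $0$, $k-2$, $k-1$ (distinct since $k>2$); so below the leading bit the contributing indices are $i=0$ and $i=k-2$. The $i=0$ term is $n-2=2^{k}-2^{k-2}-1$, and since $\floor{n/2^{k-2}}=3$ the $i=k-2$ term is $2^{k-2}(3-2)=2^{k-2}$, summing to $2^{k}-1$. An equally short alternative for the whole corollary is induction on $k$ directly from the recurrence of Theorem~\ref{laddercolless_asc}, reducing (b)--(d) and (f)--(h) to smaller instances of the same family (with (h) invoking (g) at $k-1$) and (e) unwinding via $c_{asc}(2^{k}-1)=2c_{asc}(2^{k-1}-1)+2^{k}-3$. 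I expect no genuine obstacle: the proof is bookkeeping on binary expansions, and the single pitfall to watch is the position of the leading bit in the subtractive cases $2^{k}-1$, $2^{k}-2^{j}$, $2^{k}-2^{k-2}+1$, together with checking the small base cases ($n=3$ for (e),(g) and $n=7$ for (h)) that license the stated ranges of $k$.
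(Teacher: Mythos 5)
Your proposal is correct; every evaluation checks out, including the one genuine subtlety you flag, namely that for the subtractive cases $2^k-1$, $2^k-2^j$, and $2^k-2^{k-2}+1$ the leading bit sits in position $k-1$, so the summation index in Corollary~\ref{c_asc_closed} only runs up to $k-2$. Your route differs from the paper's in which prior result it leans on: the paper proves the corollary directly from the recurrence of Theorem~\ref{laddercolless_asc}, which is immediate for items (a)--(d), (g), (h) but requires actually solving the recurrence $c_{asc}(2^k-1)=2c_{asc}(2^{k-1}-1)+2^k-3$ to obtain (e) and (f) --- the paper explicitly singles out (f) as ``obtained by solving the recurrence.'' You instead evaluate the closed form of Corollary~\ref{c_asc_closed} term by term, which turns every item, including (e) and (f), into a finite geometric sum with no induction or recurrence-solving needed; the cost is the bookkeeping on binary expansions and the care about the leading-bit position. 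The alternative you sketch at the end (induction on $k$ straight from Theorem~\ref{laddercolless_asc}) is precisely the paper's route. Either way the result stands; your closed-form computation is arguably the cleaner writeup for (e) and (f), since it makes the ``solved recurrence'' explicit rather than asserted.
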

\begin{proof}
    Follows directly from Theorem \ref{laddercolless_asc}.  Statement (f) is obtained by solving the recurrence.
\end{proof}
\begin{lemma} \label{casc_lemma}
        Let $T$ be a MinD tree on $n$ leaves with a ladder  base tree, where the perfect tree ``leaves'' are arranged in ascending order, with the perfect tree having $2^{\rho_1}$ leaves on the top rung, continuing down the rungs to the perfect trees having $2^{\rho_{\omega(n)-1}}$ and $2^{\rho_{\omega(n)}}$ leaves on the two bottom rungs. Let $n=\sum\limits_{1}^{\omega(n)}2^{\rho_i}$, where $\omega(n)>1$ and $\rho_{\omega(n)} > \dots > \rho_2 > \rho_1$. Then
        $$
        c_{asc}(n)=c_{asc}(n-2^{\rho_{\omega(n)}})+(\omega(n)-1)\cdot 2^{\rho_{\omega(n)}} - 2^{\rho_{\omega(n)-1}}
        $$
\end{lemma}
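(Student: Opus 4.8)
The plan is to obtain an explicit closed form for $c_{asc}(n)$ by reading off the contribution of each $D$-node of the ascending ladder base tree, and then to compare that closed form at $n$ and at $n':=n-2^{\rho_{\omega(n)}}$.

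First I would record the structure. The base ladder tree on $\omega(n)$ leaves has interior ($D$-) nodes $D_1,D_2,\dots,D_{\omega(n)-1}$, with $D_1$ the root and $D_{\omega(n)-1}$ the lowest interior node; by the ascending arrangement, one child of $D_j$ is the perfect tree on $2^{\rho_j}$ leaves and its other child is the sub-ladder carrying the perfect trees on $2^{\rho_{j+1}},\dots,2^{\rho_{\omega(n)}}$ leaves, which together have $\sum_{i=j+1}^{\omega(n)}2^{\rho_i}$ leaves. Since the perfect subtrees contribute no $D$-nodes, Definition~\ref{def_colless} says $D_j$ contributes $\bigl|\sum_{i=j+1}^{\omega(n)}2^{\rho_i}-2^{\rho_j}\bigr|$ to the Colless index; because $\rho_{j+1}>\rho_j$ forces $\sum_{i=j+1}^{\omega(n)}2^{\rho_i}\ge 2^{\rho_{j+1}}\ge 2\cdot 2^{\rho_j}>2^{\rho_j}$, this is $\sum_{i=j+1}^{\omega(n)}2^{\rho_i}-2^{\rho_j}$. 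Summing over $j$ (equivalently, iterating Lemma~\ref{asc_build_lemma}, or invoking Theorem~\ref{colless_general_base} on the ladder structure) gives
$$ c_{asc}(n)=\sum_{j=1}^{\omega(n)-1}\Bigl(\sum_{i=j+1}^{\omega(n)}2^{\rho_i}-2^{\rho_j}\Bigr). $$

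Next I would apply the same formula to $n'=n-2^{\rho_{\omega(n)}}=\sum_{i=1}^{\omega(n)-1}2^{\rho_i}$, which has weight $\omega(n)-1$ and the same lower exponents $\rho_1<\dots<\rho_{\omega(n)-1}$, so that $c_{asc}(n')=\sum_{j=1}^{\omega(n)-2}\bigl(\sum_{i=j+1}^{\omega(n)-1}2^{\rho_i}-2^{\rho_j}\bigr)$. Subtracting term by term: for each $j$ with $1\le j\le\omega(n)-2$ the $j$-th summand of $c_{asc}(n)$ exceeds that of $c_{asc}(n')$ by exactly $2^{\rho_{\omega(n)}}$, and $c_{asc}(n)$ carries one further summand at $j=\omega(n)-1$, namely $2^{\rho_{\omega(n)}}-2^{\rho_{\omega(n)-1}}$. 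Hence
$$ c_{asc}(n)-c_{asc}(n')=(\omega(n)-2)\,2^{\rho_{\omega(n)}}+\bigl(2^{\rho_{\omega(n)}}-2^{\rho_{\omega(n)-1}}\bigr)=(\omega(n)-1)\,2^{\rho_{\omega(n)}}-2^{\rho_{\omega(n)-1}}, $$
which is the claim. Reading empty sums as $0$ handles $\omega(n)=2$ (then $c_{asc}(n')=0$ and the identity reduces to $c_{asc}(2^{\rho_1}+2^{\rho_2})=2^{\rho_2}-2^{\rho_1}$, which is immediate from the single root $D$-node).

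There is no real obstacle here; the only care needed is the re-indexing when passing from $n$ to $n'$ and making sure the closed form for $c_{asc}$ is genuinely licensed — either from Theorem~\ref{colless_general_base} specialized to the ladder, or, most self-containedly, by iterating Lemma~\ref{asc_build_lemma} down from $n$ to $2^{\rho_{\omega(n)}}$. As an alternative that avoids the closed form, one can argue by induction on $\omega(n)$: apply Lemma~\ref{asc_build_lemma} at both $n$ and $n'$ (both have smallest exponent $\rho_1$, giving $n-2^{\rho_1}$ and $n'-2^{\rho_1}=(n-2^{\rho_1})-2^{\rho_{\omega(n)}}$ respectively), then feed the inductive hypothesis applied to $n-2^{\rho_1}$ (weight $\omega(n)-1$, top exponent $\rho_{\omega(n)}$, second exponent $\rho_{\omega(n)-1}$) into the difference; the resulting cancellation yields $c_{asc}(n)-c_{asc}(n')=2^{\rho_{\omega(n)}}+\bigl[(\omega(n)-2)2^{\rho_{\omega(n)}}-2^{\rho_{\omega(n)-1}}\bigr]$, the same quantity, with the base case $\omega(n)=2$ checked directly.
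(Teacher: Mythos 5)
Your proof is correct and is essentially the paper's argument: both decompose $c_{asc}(n)$ into the per-$D$-node contributions of the ascending ladder (each $D_j$ contributing $\sum_{i>j}2^{\rho_i}-2^{\rho_j}$) and then compare the trees on $n$ and on $n-2^{\rho_{\omega(n)}}$ leaves contribution by contribution, finding that each of the $\omega(n)-2$ upper $D$-nodes gains $2^{\rho_{\omega(n)}}$ while the new bottom $D$-node adds $2^{\rho_{\omega(n)}}-2^{\rho_{\omega(n)-1}}$. You merely make the intermediate closed form explicit where the paper folds it into a single algebraic chain.
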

\begin{proof}
    The base tree on $n-2^{\rho_{\omega(n)}}$ leaves having $\omega(n)-1$ perfect tree ``leaves'' in ascending order has Colless index $c_{asc}(n-2^{\rho_{\omega(n)}})$. The base tree having $\omega(n)$ perfect tree ``leaves'' in ascending order is constructed by replacing the bottom perfect tree ``leaf'' on $2^{\rho_{\omega(n)-1}}$ leaves with an interior node having children the perfect trees with $2^{\rho_{\omega(n)-1}}$ and $2^{\rho_{\omega(n)}}$ leaves.  Since the tree  is ascending, the Colless index of the $i^{th}$ $D$-node of this tree, counting from the bottom, where $i>1$ is then $2^{\rho_{\omega(n)}}+C(2^{\rho_{\omega(n)-i}})$, so
    \begin{align*}
        c_{asc}(n) &=2^{\rho_{\omega(n)}}-2^{\rho_{\omega(n)-1}} + \sum\limits_{i=2}^{\omega(n)-1}(2^{\rho_{\omega(n)}}+C(2^{\rho_{\omega(n)-i}}))\\
        &=2^{\rho_{\omega(n)}}-2^{\rho_{\omega(n)-1}} + \sum\limits_{i=2}^{\omega(n)-1}2^{\rho_{\omega(n)}}+\sum\limits_{i=2}^{\omega(n)-1}C(2^{\rho_{\omega(n)-i}})\\
        &=\sum\limits_{i=2}^{\omega(n)-1}C(2^{\rho_{\omega(n)-i}})+(\omega(n)-1)\cdot2^{\rho_{\omega(n)}}-2^{\rho_{\omega(n)-1}}\\
        &= c_{asc}(n-2^{\rho_{\omega(n)}})+(\omega(n)-1)\cdot 2^{\rho_{\omega(n)}}-2^{\rho_{\omega(n)-1}}  
    \end{align*}
\end{proof}
\begin{corollary} \label{max_casc}
        Let $n=2^k+r$, where $0<r<2^k$. Let the base tree be a ladder tree where the perfect tree``leaves'' are arranged in ascending order. Then  
        $$
        c_{asc}(2^k+r) \le c_{asc}(2^{k+1}-1)
        $$
\end{corollary}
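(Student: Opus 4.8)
The plan is to prove this by a ``bit-flipping'' monotonicity argument rather than by another induction on the recurrences. The key tool is the closed form from Corollary~\ref{c_asc_closed}: writing the binary expansion of $n$ as $n_k \dots n_0$ with $n_k = 1$ and $k = \floor{\log_2(n)}$, and noting that $\frac{n-(n\bmod 2^i)}{2^i} = \floor{n/2^i}$, that corollary reads
$$
c_{asc}(n) = \sum_{i=0}^{k-1} 2^i\, n_i\left(\floor{n/2^i} - 2\right).
$$
The idea is that $2^{k+1}-1$ is reached from any $n = 2^k + r$ with $0 < r < 2^k$ by successively setting to $1$ the zero bits of $n$ in positions $0, \dots, k-1$; each such step keeps the number in $[2^k, 2^{k+1})$, since setting a zero bit in a position $i < k$ produces no carry and leaves the leading bit in position $k$ untouched. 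So it suffices to show each step does not decrease $c_{asc}$.

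The heart of the argument is to compute the effect of one such step. Fix $i$ with $0 \le i \le k-1$ and $n_i = 0$, and compare $c_{asc}(n)$ with $c_{asc}(n + 2^i)$. Because bit $i$ of $n$ is $0$, adding $2^i$ produces no carry: the bits in positions $> i$ — hence also $n_k = 1$ (so the index range of the sum stays $0 \le i \le k-1$) and $\floor{\cdot/2^m}$ for all $m > i$ — are unchanged, so the summands with $m > i$ are identical; the summand at $m = i$ jumps from $0$ to $2^i(\floor{n/2^i} + 1 - 2)$; and for $m < i$ one has $\floor{(n+2^i)/2^m} = \floor{n/2^m} + 2^{i-m}$ while $n_m$ is unchanged. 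Collecting terms,
$$
c_{asc}(n + 2^i) - c_{asc}(n) = 2^i\left(\floor{n/2^i} - 1 + \sum_{m=0}^{i-1} n_m\right).
$$
Since $n \ge 2^k$ and $i \le k-1$ we have $\floor{n/2^i} \ge 2$, so the bracket is at least $1$ and the difference is at least $2^i > 0$ — strictly positive.

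Finally I would chain these steps: starting from $n = 2^k + r$ and flipping its zero bits one at a time, $c_{asc}$ strictly increases at each step and the process terminates at $2^{k+1}-1$; hence $c_{asc}(2^k + r) \le c_{asc}(2^{k+1}-1)$, with equality exactly when $r = 2^k - 1$ (no bits to flip).

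The only place that needs care — the main obstacle — is the bookkeeping in the middle step: one must verify that adding $2^i$ to $n$ when $n_i = 0$ causes no carry, so that the floor functions $\floor{n/2^m}$ for $m \ge i$ behave exactly as claimed and the summation range is unaffected; once this is pinned down the computation is routine. An alternative route, if the closed form proves inconvenient, is to induct on $k$ via Lemma~\ref{casc_lemma}, using $\omega(r) \le k-1$ for $r < 2^k - 1$ together with the monotonicity of $c_{asc}(2^m - 1)$ readable off Corollary~\ref{asc_list}(e); but the bit-flip argument is cleaner and also yields the equality characterization for free.
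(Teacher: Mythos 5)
Your proof is correct, but it takes a genuinely different route from the paper's. The paper proves this corollary by applying Lemma~\ref{casc_lemma} to peel the leading power of two off both $2^k+r$ and $2^{k+1}-1$, then invoking the induction hypothesis $c_{asc}(r)\le c_{asc}(2^k-1)$ together with the estimates $-2^{\rho_{\omega(r)}}\le -1\le 2(k-\omega(r))-1$ to compare the leftover quantities. You instead work from the closed form of Corollary~\ref{c_asc_closed} (rewritten with $\floor{n/2^i}$) and show that setting any zero bit of $n\in[2^k,2^{k+1})$ in a position $i<k$ changes $c_{asc}$ by exactly $2^i\bigl(\floor{n/2^i}-1+\sum_{m=0}^{i-1}n_m\bigr)\ge 2^i>0$, then chain these flips from $2^k+r$ up to $2^{k+1}-1$. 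I verified the increment formula (e.g.\ $c_{asc}(11)-c_{asc}(9)=15-7=8=2(\floor{9/2}-1+1)$) and the bookkeeping you flag as the delicate point: since bit $i$ is zero there is no carry, the bits and hence the floors $\floor{n/2^m}$ for $m>i$ are untouched, the summation range is preserved because the leading bit stays at position $k$, and for $m<i$ the floor shifts by exactly the integer $2^{i-m}$, contributing $2^i n_m$ per term. Your argument is shorter, avoids the induction entirely, and yields for free both strict monotonicity at each flip and the equality characterization ($r=2^k-1$), neither of which the paper's proof provides; the paper's approach has the mild advantage of working directly from the structural recursion of Lemma~\ref{casc_lemma} without needing the closed form.
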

\begin{proof}
\begin{flalign*}
    c_{asc}(2^k+r) 
    &= (\omega(2^k+r)-1)\cdot 2^k - 2^{\rho_{\omega(n)-1}} + c_{asc}(r)& \text{ by Lemma \ref{casc_lemma}}&\\
    &= ((\omega(r)+1)-1)\cdot 2^k - 2^{\rho_{\omega(n)-1}} + c_{asc}(r)& \text{ by Lemma \ref{weightlemma_addone}}\\
    &= \omega(r)\cdot 2^k - 2^{\rho_{\omega(n)-1}} + c_{asc}(r)\\
    &= \omega(r)\cdot 2^k - 2^{\rho_{\omega(r)}} + c_{asc}(r)& \text{ by Lemma \ref{weightlemma_addone}}\\
\end{flalign*}
\begin{flalign*}
    c_{asc}(2^{k+1}-1) 
    &= c_{asc}(\sum\limits_{i=0}^k 2^i)\\
    &= c_{asc}((\sum\limits_{i=0}^{k-1} 2^i))+(\omega(\sum\limits_{i=0}^k 2^i)-1)\cdot 2^k - 2^{k-1}& \text{ by Lemma \ref{casc_lemma}}&\\
    &= c_{asc}(2^k-1)+k\cdot 2^k - 2^{k-1}\\
    &= c_{asc}(2^k-1)+(\omega(r)\cdot 2^k +(k-\omega(r))\cdot 2^k) - 2^{k-1}\\
    &= c_{asc}(2^k-1)+\omega(r)\cdot 2^k +(2\cdot (k-\omega(r))-1)\cdot 2^{k-1} \\
\end{flalign*}
$c_{asc}(r) \le c_{asc}(2^k-1)$, by the induction hypothesis.\newline
$-2^{\rho_{\omega(r)}} \le -1  \le (2\cdot (k-\omega(r))-1)$. \newline
So 
\begin{flalign*} 
c_{asc}(2^k+r)
&=c_{asc}((2^k+r)-2^k)+\omega((2^k+r)-1)\cdot 2^k - 2^{\rho_{(\omega(2^k+r))-1}}&\\
&=c_{asc}(r)+\omega(r)\cdot 2^k - 2^{\rho_{\omega(r)}}& \text{ by Lemma \ref{weightlemma_addone}}\\
&\le c_{asc}(2^k-1)+\omega(r)\cdot 2^k +(2\cdot (k-\omega(r))-1)\cdot 2^{k-1} \\
&= c_{asc}(2^{k+1}-1) & \text{ by Lemma \ref{casc_lemma}}
\end{flalign*}
\end{proof}
\begin{corollary} \label{mincolcasc}
    The Colless index $c_{asc}(n)$ is the minimum possible Colless index $\delta(n)$ on trees with $n$ leaves if and only if $n=2^k$ or $n=2^k+2^{k-1}$. The values for these are:
    $$c_{asc}(2^k)=0=\delta(2^k)$$
    $$c_{asc}(2^k+2^{k-1}) = 2^{k-1} = \delta(2^k+2^{k-1})$$.
\end{corollary}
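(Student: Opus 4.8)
The plan is to compare the recursion for $c_{asc}$ in Theorem~\ref{laddercolless_asc} with the recursion for $\delta$ in Theorem~\ref{exp_D_recursive}, using throughout that $\delta(n)$ is the minimum Colless index over all trees on $n$ leaves (Proposition~\ref{minimumcolless}), so that $c_{asc}(n)\ge \delta(n)$ always and equality is exactly what must be characterized.

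For the ``if'' direction I would evaluate both sides on the two families directly. When $n=2^k$, Corollary~\ref{asc_list}(a) gives $c_{asc}(2^k)=0$, while repeated application of the even case of Theorem~\ref{exp_D_recursive} gives $\delta(2^k)=0$ (this is Corollary~\ref{cor_dnodes2k}). When $n=2^k+2^{k-1}=3\cdot 2^{k-1}$, Corollary~\ref{asc_list}(d) gives $c_{asc}(2^k+2^{k-1})=2^{k-1}$, while $k-1$ applications of the even case of Theorem~\ref{exp_D_recursive} give $\delta(3\cdot 2^{k-1})=2^{k-1}\delta(3)=2^{k-1}\bigl(\delta(1)+\delta(2)+1\bigr)=2^{k-1}$. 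Hence equality holds on both families, and the displayed values are as stated.

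For the converse I would induct on $n$. The base cases $n=1,2,3$ are checked directly: all three satisfy $c_{asc}=\delta$, and $1=2^0$, $2=2^1$, $3=2^1+2^0$ are of the claimed forms. Now suppose $n\ge 4$ and $c_{asc}(n)=\delta(n)$. If $n=2m$, then Theorems~\ref{laddercolless_asc} and~\ref{exp_D_recursive} give $c_{asc}(2m)=2c_{asc}(m)$ and $\delta(2m)=2\delta(m)$, so $c_{asc}(m)=\delta(m)$ with $2\le m<n$; the induction hypothesis then yields $m=2^j$ or $m=2^j+2^{j-1}$, hence $n=2^{j+1}$ or $n=2^{j+1}+2^j$, both of the required shape. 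If instead $n=2m+1$ with $m\ge 2$, then
\begin{align*}
    c_{asc}(n)-\delta(n)
    &=\bigl(2c_{asc}(m)+2m-1\bigr)-\bigl(\delta(m)+\delta(m+1)+1\bigr)\\
    &=\bigl(2c_{asc}(m)-\delta(m)\bigr)+\bigl(2m-2-\delta(m+1)\bigr).
\end{align*}
Here $c_{asc}(m)\ge\delta(m)\ge 0$ by Proposition~\ref{minimumcolless}, so the first bracket is at least $\delta(m)\ge 0$; and $\delta(m+1)\le\floor{(m+1)/2}\le (m+1)/2$ by Corollary~\ref{cor_dnodes_half_n}, so the second bracket is at least $2m-2-(m+1)/2=(3m-5)/2$, which is positive for $m\ge 2$. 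Thus $c_{asc}(n)>\delta(n)$, so no odd $n\ge 5$ satisfies the equality, and the induction closes.

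The main obstacle is this odd case: one must establish the strict gap $c_{asc}(n)>\delta(n)$ for all odd $n\ge 5$, and the estimate above --- combining the minimality of $\delta$ with the $\floor{n/2}$ bound on $\delta$ --- is what delivers it. Two small points need care: $n=3$ is a genuine base case rather than a counterexample, because it is the $k=1$ instance of $2^k+2^{k-1}$; and the even reduction must pass the sub-case $m=3$ (i.e.\ $j=1$) through correctly, so that $n=6=2^2+2^1$ is recognized as being of the form $2^k+2^{k-1}$.
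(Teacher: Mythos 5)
Your proposal is correct, and the interesting part is that your treatment of the odd case of the converse is genuinely different from (and simpler than) the paper's. Both proofs handle the ``if'' direction by direct evaluation via Corollary~\ref{asc_list} and Theorem~\ref{exp_D_recursive}, and both reduce the even case of the converse identically, via $c_{asc}(2m)=2c_{asc}(m)$ and $\delta(2m)=2\delta(m)$. For odd $n$, the paper writes $n=2^k+2d+1$, invokes Theorem~\ref{dnodes_another_recurrence} to express $\delta(n)-\delta(n-1)$ in terms of $\floor{\log_2(n-1)}$ and $\omega(n-1)$, and then argues in two stages: first that $d=0$ is forced (else the bracket $(2^k+2d)+2\omega(2^k+2d)-k-3$ is positive, contradicting the minimality of $\delta$), and then that $n=2^k+1$ forces $2^k-1=k$, hence $k=1$. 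You instead subtract the two one-step recursions directly, split the difference as $\bigl(2c_{asc}(m)-\delta(m)\bigr)+\bigl(2m-2-\delta(m+1)\bigr)$, and kill all odd $n\ge 5$ in a single stroke using only $c_{asc}(m)\ge\delta(m)$ and the crude bound $\delta(m+1)\le\floor{(m+1)/2}$; your estimate $(3m-5)/2>0$ for $m\ge 2$ checks out, and I verified it is tight at $m=2$ (where the gap is exactly $1$, matching $c_{asc}(5)-\delta(5)=3-2$). What your route buys is economy: it avoids Theorem~\ref{dnodes_another_recurrence} and all bookkeeping with $\omega$, and it does not need the special sub-analysis of $n=2^k+1$. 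What the paper's route buys is an explicit identification of \emph{where} the equality fails (the weight term), which is consistent with the style of its other corollaries. One cosmetic caution: Corollary~\ref{cor_dnodes_half_n} as stated gives a strict inequality that is not quite right at $n=1,3$, but the weak form $\delta(j)\le\floor{j/2}$ that you actually use is safe for the values $j=m+1\ge 3$ that occur in your argument.
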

\begin{proof}
    The minimum possible Colless index is $\delta(n)$ by Proposition \ref{minimumcolless}.  
    The equalities for $c_{asc}$  follow from Corollary \ref{asc_list}. The equalities for $\delta$ follow from repeated application of Theorem \ref{exp_D_recursive}. So if $n=2^k$ or $n=2^k+2^{k-1}$, the Colless index $c_{asc}(n)$ is minimal.
    
    For the converse, we assume that the Colless index $c_{asc}(n)$ is minimal; i.e., $c_{asc}(n)=\delta(n)$, and proceed by induction on $k$ to show that $n=2^k$ or $n=2^k+2^{k-1}$. We assume that $n=2^k+r$, with $0\le r<2^k$, and show that $r=0$ or $r=2^{k-1}$. We start by observing that this is true for $k=1$, i.e. $n=2^1+2^0=3$. Now let $k>1$.
    
        If $n=2m$  is even, then 
    \begin{flalign*}
        2 \cdot c_{asc}(m) 
        &= c_{asc}(2m)& \text{ by Theorem \ref{laddercolless_asc}}&\\
        &= \delta(2m)& \text{ by assumption}\\
        &= 2 \cdot \delta(m)& \text{ by Theorem \ref{exp_D_recursive}}
    \end{flalign*}
    $c_{asc}(m)=\delta(m)$, so $m=2^i \text{ or } (2^i+2^{i+1})$, for some $i$, by induction, so $n=2^{i+1} \text{ or } (2^{i+1}+2^{i})$.
    
    If $n$ is odd, then $n=2^k+2d+1$ for some $k,d \ge 0$. Then
    \begin{flalign*}
        c_{asc}(n) 
        &= c_{asc}(2^k+2d+1) \\
        &= 2c_{asc}(2^{k-1}+d) + 2\cdot(2^{k-1}+d)-1& \text{ by Theorem \ref{laddercolless_asc}}&\\
        &= c_{asc}(2^k+2d) + (2^k+2d)-1& \text{ by Theorem \ref{laddercolless_asc}}
    \end{flalign*}
and
    \begin{flalign*}
        \delta(n) 
        &= \delta(2^k+2d+1) \\
        &= \delta(2^k+2d)+\floor{\log_2(2^k+2d)}-2\omega(2^k+2d)+2& \text{ by Theorem \ref{dnodes_another_recurrence}}&\\
        &= \delta(2^k+2d)+k-2\omega(2^k+2d)+2 \\
    \end{flalign*}
By assumption, $c_{asc}(n)=\delta(n)$, so
$$ 
c_{asc}(2^k+2d) + (2^k+2d)-1 = \delta(2^k+2d)+k-2\omega(2^k+2d)+2
$$
so
$$ 
c_{asc}(2^k+2d) + [(2^k+2d) +2\omega(2^k+2d) -k -3] = \delta(2^k+2d)
$$
We show first that $d=0$. Assume not, and $0<d<2^{k-1}$. Then 
    \begin{flalign*}
        2^k+2d +2\omega(2^k+2d) -k -3 
        &= 2^k+2d +2\omega(2^{k-1}+d) -k -3&\text{ by Lemma \ref{weightlemma_mult2}}&\\
        &\ge 2^k+2d +2\cdot 2 -k -3&\text{ since $\omega(d) \ge 1$}\\
        &= 2^k+2d -k +1\\
        &> 2^k -k &\text{ since $d>0$}\\
        &> 0& \text{ for all }k \ge 0.
    \end{flalign*}
So $c_{asc}(2^k+2d) < \delta(2^k+2d)$, which cannot be, since $\delta(2^k+2d)$ is the minimum possible Colless index on a tree with $2^k+2d$ leaves. So $d=0$, and $n=2^k+1$.
    \begin{flalign*}
        \delta(2^k+1)
        &=\delta(2^k)+\floor{\log_2(2^k)}-2\cdot \omega(2^k)+2& \text{ by Theorem \ref{dnodes_another_recurrence}}&\\
        &=0+\floor{\log_2(2^k)}-2\cdot \omega(2^k)+2& \text{ by $k$ applications of Theorem \ref{exp_D_recursive}}\\
        &=0+k-2\cdot 1+2\\
        &= k\\
  c_{asc}(2^k+1)&=2^k-1& \text{ by Corollary \ref{asc_list}}
\end{flalign*}
By assumption, $c_{asc}(2^k+1)=\delta(2^k+1)$, so $2^k-1=k$, so $k=1$, and $n=3=2^1+2^0$.
\end{proof}
\begin{corollary}
    $c_{asc}$ and $c_{desc}$ coincide if and  only if $n=2^k+2^j$ or $n=2^k$, and they both coincide with the minimal Colless index if and only if $n=2^k+2^{k-1}$ or $n=2^k$.
\end{corollary}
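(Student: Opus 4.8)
The plan is to handle the two biconditionals in the corollary separately, and to observe that the second one is essentially a set intersection while the first needs a short induction.

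\medskip

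\noindent\textbf{The ``minimal'' claim.} The minimal Colless index on trees with $n$ leaves is $\delta(n)$ by Proposition~\ref{minimumcolless}. Corollary~\ref{mincolcdesc} gives $c_{desc}(n)=\delta(n)$ exactly when $n=2^k$ or $n=2^{k+1}-2^j$, and Corollary~\ref{mincolcasc} gives $c_{asc}(n)=\delta(n)$ exactly when $n=2^k$ or $n=2^k+2^{k-1}$. Since ``both coincide with the minimal index'' means $c_{asc}(n)=\delta(n)$ and $c_{desc}(n)=\delta(n)$, I just intersect these two sets. The powers of two $n=2^k$ lie in both. For the other families, note $2^k+2^{k-1}=2^{k+1}-2^{k-1}$, so it lies in the $c_{desc}$-set with $j=k-1$; conversely $2^{k+1}-2^j$ has binary weight $k+1-j$, which equals the weight $2$ of $2^k+2^{k-1}$ only when $j=k-1$. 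Hence the intersection is exactly $\{2^k\}\cup\{2^k+2^{k-1}\}$, as claimed. (This part does not even use the first biconditional, which drops out automatically since both indices then equal $\delta(n)$.)

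\medskip

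\noindent\textbf{The ``coincide'' claim.} Here I work with $\Delta(m):=c_{asc}(m)-c_{desc}(m)$ and aim to show $\Delta(n)=0\iff\omega(n)\le 2$, which is the same as $n=2^k$ or $n=2^k+2^j$. The engine is a ``peel off the leading bit'' recursion. Writing $n=2^k+r$ with $0\le r<2^k$: Lemma~\ref{desc_build_lemma} gives $c_{desc}(n)=2^k-r+c_{desc}(r)$, and Lemma~\ref{casc_lemma} together with Lemma~\ref{weightlemma_addone} (so that $\omega(n)=\omega(r)+1$, and the second-largest bit of $n$ is the largest bit of $r$) gives $c_{asc}(n)=\omega(r)\cdot 2^k-2^{\floor{\log_2 r}}+c_{asc}(r)$ for $r>0$. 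Subtracting,
$$\Delta(2^k+r)=\Delta(r)+(\omega(r)-1)\,2^k-2^{\floor{\log_2 r}}+r\qquad(0<r<2^k),$$
with base value $\Delta(2^k)=0$ from Corollaries~\ref{asc_list}(a) and \ref{desc_list}(a). If $\omega(n)=2$, then $r=2^j$, so $\omega(r)=1$ and $\floor{\log_2 r}=j$, the correction collapses, and $\Delta(2^k+2^j)=\Delta(2^j)=0$; with the base case this gives $\Delta(n)=0$ whenever $\omega(n)\le 2$. If $\omega(n)\ge 3$, then $\omega(r)=\omega(n)-1\ge 2$, hence $(\omega(r)-1)2^k\ge 2^k>2^{k-1}\ge 2^{\floor{\log_2 r}}$ and $r>0$, so the correction is strictly positive; since $\Delta(r)\ge 0$ by induction on $\omega(n)$ (it is $0$ when $\omega(r)=2$ and positive when $\omega(r)\ge 3$), we get $\Delta(n)>0$, so $c_{asc}(n)\ne c_{desc}(n)$. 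Combining with the first paragraph finishes the corollary.

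\medskip

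\noindent The only load-bearing step is assembling the clean $\Delta$-recursion from Lemmas~\ref{desc_build_lemma} and \ref{casc_lemma}; after that the argument is a one-line sign check plus an induction whose inductive step is the same inequality every time. The point requiring care is the bit bookkeeping: that the leading bit of $n=2^k+r$ is $k$, that $\floor{\log_2 r}\le k-1$ because $r<2^k$, and that peeling $2^k$ drops the weight by exactly one — these are precisely what force the correction term to be positive when $\omega(n)\ge 3$ and to vanish when $\omega(n)=2$.
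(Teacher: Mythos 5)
Your proof is correct. The second half (both indices minimal iff $n=2^k$ or $n=2^k+2^{k-1}$) is exactly the paper's route: intersect the characterizations in Corollaries~\ref{mincolcdesc} and~\ref{mincolcasc}, observing that $2^k+2^{k-1}=2^{k+1}-2^{k-1}$ is the only member of the family $2^{k+1}-2^j$ with binary weight $2$. For the first half you genuinely diverge. The paper's proof is a one-line citation of Corollaries~\ref{desc_list} and~\ref{asc_list}; items (a) and (c) of those lists give $c_{desc}(2^k+2^j)=2^k-2^j=c_{asc}(2^k+2^j)$ and $c_{desc}(2^k)=c_{asc}(2^k)=0$, which is the ``if'' direction, but they do not by themselves show that the two indices \emph{differ} whenever $\omega(n)\ge 3$. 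Your recursion $\Delta(2^k+r)=\Delta(r)+(\omega(r)-1)2^k-2^{\floor{\log_2 r}}+r$, assembled from Lemmas~\ref{desc_build_lemma} and~\ref{casc_lemma} via Lemma~\ref{weightlemma_addone}, supplies exactly that missing ``only if'' step: the correction term vanishes when $\omega(n)=2$ and is strictly positive when $\omega(n)\ge 3$, and $\Delta(r)\ge 0$ either by your induction or directly from Theorem~\ref{th_bounds_MinD}. So your argument is slightly longer but self-contained and actually closes a gap the paper's citation-style proof leaves implicit; the paper's version buys brevity at the cost of leaving the reader to reconstruct precisely this computation.
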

\begin{proof}
    Follows from Corollaries \ref{desc_list}, \ref{mincolcdesc}, \ref{asc_list}, and \ref{mincolcasc}.
\end{proof}

The MinD trees built on ascending ladder trees where $n=2^k+2^{k-1}$ are complete full binary trees, so are not a new type of tree having minimal Colless index.

\subsubsection{Comparing Colless indices on MinD trees}
In this section, we show that a MinD tree that has a base ladder tree and is descending meets the lower bound on Colless indices for MinD trees, and a MinD tree that has a base ladder tree and is ascending meets the upper bound on Colless indices for MinD trees. Fig. \ref{fig:colless_minD_27} gives an example of the Colless indices of all MinD trees on $27$ leaves, to give an idea of what these lower and upper bound trees look like, in relation to all MinD trees on $27$ leaves. The next series of lemmas shows how to construct MinD trees from smaller MinD trees. 
\input{figs/fig_colless27_MinD}
\begin{lemma} \label{bijection_2m}
    Let $m=\sum_{i=1}^{\omega(m)}2^{\rho_i}$. Let $M_1$ be the set of MinD trees on $m$ leaves, and let $M_2$ be the set of MinD trees on $2m$ leaves. There is then a bijection $\beta:T_m \rightarrow T_{2m}$ between $M_1$ and $M_2$, where $\beta$ leaves the base tree of $T_m$ fixed and replaces the perfect tree ``leaves'' $P2^i$ in $T_m$ with perfect tree ``leaves'' $P2^{i+1}$.
\end{lemma}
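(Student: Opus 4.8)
The plan is to reduce everything to the structural characterization of MinD trees in Theorem~\ref{max_tree_s} together with the observation that multiplication by $2$ is just a left shift of the binary expansion. First I would record the arithmetic input: if $m=\sum_{i=1}^{\omega(m)}2^{\rho_i}$, then $2m=\sum_{i=1}^{\omega(m)}2^{\rho_i+1}$, so by Lemma~\ref{weightlemma_mult2} we have $\omega(2m)=\omega(m)$, and $R(2m)=\{\rho+1:\rho\in R(m)\}$. In particular every index in $R(2m)$ is at least $1$, because $2m$ is even and so its $0^{\text{th}}$ bit vanishes. By Theorem~\ref{max_tree_s}, a MinD tree on $m$ leaves is precisely the data of a base tree on $\omega(m)$ leaves together with an assignment of the perfect trees $P2^{\rho}$, $\rho\in R(m)$, to its leaves; and a MinD tree on $2m$ leaves is a base tree on $\omega(2m)=\omega(m)$ leaves with the perfect trees $P2^{\rho}$, $\rho\in R(2m)$, assigned to its leaves.

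Next I would check that $\beta$ is well defined, i.e.\ that $\beta(T_m)\in M_2$. The map $\beta$ keeps the base tree $T$ (which has $\omega(m)=\omega(2m)$ leaves) and in each leaf position replaces the perfect ``leaf'' $P2^{\rho_i}$ by $P2^{\rho_i+1}$. The resulting tree therefore has base tree $T$, its perfect subtrees have leaf counts $\{2^{\rho_i+1}\}=\{2^{\rho}:\rho\in R(2m)\}$, and its total number of leaves is $\sum_i 2^{\rho_i+1}=2\sum_i 2^{\rho_i}=2m$; by Theorem~\ref{max_tree_s} this is a MinD tree on $2m$ leaves.

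Then I would exhibit the inverse directly. Let $\gamma\colon M_2\to M_1$ keep the base tree fixed and replace each perfect ``leaf'' $P2^{\rho}$ by $P2^{\rho-1}$. This is well defined because every $\rho\in R(2m)$ satisfies $\rho\ge 1$, and the same computation shows $\gamma(S)$ has a base tree on $\omega(m)$ leaves, perfect subtree sizes $\{2^{\rho-1}:\rho\in R(2m)\}=\{2^{\rho}:\rho\in R(m)\}$, and $\tfrac{1}{2}\cdot 2m=m$ leaves, hence $\gamma(S)\in M_1$ by Theorem~\ref{max_tree_s}. Since $\rho\mapsto\rho+1$ and $\rho\mapsto\rho-1$ are mutually inverse on the relevant index sets, and since neither $\beta$ nor $\gamma$ alters the base tree or the positions at which perfect ``leaves'' are attached, we get $\gamma\circ\beta=\mathrm{id}_{M_1}$ and $\beta\circ\gamma=\mathrm{id}_{M_2}$, so $\beta$ is a bijection.

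There is no real obstacle here: the entire content is bookkeeping against Theorem~\ref{max_tree_s}. The only point that needs a moment's care is matching $R(2m)$ with $R(m)$ under the shift, and noting that the evenness of $2m$ guarantees $0\notin R(2m)$, so that the inverse shift $\rho\mapsto\rho-1$ never produces a negative exponent; everything else is immediate.
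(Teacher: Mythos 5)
Your proposal is correct and follows essentially the same route as the paper: both rest on Lemma~\ref{weightlemma_mult2} to identify the base trees of $M_1$ and $M_2$ and on the characterization in Theorem~\ref{max_tree_s} to see that shifting each perfect ``leaf'' $P2^{\rho}$ to $P2^{\rho+1}$ (and back) carries MinD trees to MinD trees. Your version merely spells out the well-definedness of the inverse shift (since $0\notin R(2m)$), which the paper leaves implicit.
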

\begin{proof}
    The weights of $m$ and $2m$ are the same, by Lemma \ref{weightlemma_mult2}, so the set of base trees of $M_1$ and $M_2$ are the same, being the set of all trees having $\omega(m)$ leaves. Replacing the perfect tree ``leaves'' $P2^i$ in $T_m$ with perfect tree leaves $P2^{i+1}$ gives a tree in $M_2$, and conversely, replacing the perfect tree ``leaves'' $P2^i$ in $T_{2m}$ with perfect tree leaves $P2^{i-1}$ gives a tree in $M_1$.
\end{proof}
\begin{proposition} \label{twice_arb_colless}
    Let $T_m$ be a MinD tree on $m$ leaves, and let $\beta(T_m)$ be its image under the bijection in Lemma \ref{bijection_2m}. Then $c(\beta(T_m))= 2\cdot c(T_m)$. 
\end{proposition}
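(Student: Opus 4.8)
The plan is to exploit the fact that $\beta$ does not touch the base tree, only the perfect ``leaves.'' Recall from Theorem~\ref{max_tree_s} that a MinD tree on $m$ leaves is obtained from a base tree $T$ on $\omega(m)$ leaves by replacing each base leaf $\ell$ by a perfect tree $P2^{j(\ell)}$, where the multiset $\{2^{j(\ell)}\}$ is exactly $\{2^{\rho_i}\}$, and that the $D$-nodes of the MinD tree are precisely the interior nodes of $T$ (Corollary~\ref{min_dnodes} and the definition of base tree). Since $\beta$ keeps $T$ fixed and only replaces each $P2^{j(\ell)}$ by $P2^{j(\ell)+1}$, the $D$-nodes of $T_m$ and of $\beta(T_m)$ are in the obvious bijection, namely both are the interior nodes of $T$, and every internal node of a perfect subtree is an $S$-node and hence contributes $0$ to the Colless index in either tree.

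First I would fix an interior node $v$ of $T$ and let $S_L(v)$, $S_R(v)$ be the sets of base-tree leaves in its left and right subtrees. In $T_m$ the number of actual leaves below $v$ on the left is $\ell_v^{T_m}=\sum_{\ell\in S_L(v)}2^{j(\ell)}$, and on the right $r_v^{T_m}=\sum_{\ell\in S_R(v)}2^{j(\ell)}$. In $\beta(T_m)$ the very same node $v$ has left and right leaf counts $\ell_v^{\beta(T_m)}=\sum_{\ell\in S_L(v)}2^{j(\ell)+1}=2\,\ell_v^{T_m}$ and $r_v^{\beta(T_m)}=2\,r_v^{T_m}$, simply by pulling the factor of $2$ out of the sum. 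Hence the contribution of $v$ to the Colless index doubles: $\lvert\ell_v^{\beta(T_m)}-r_v^{\beta(T_m)}\rvert = 2\lvert\ell_v^{T_m}-r_v^{T_m}\rvert$.

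Then I would sum this identity over all $D$-nodes $v$ using Definition~\ref{def_colless}, obtaining $c(\beta(T_m))=\sum_{v}2\lvert\ell_v^{T_m}-r_v^{T_m}\rvert = 2\,c(T_m)$. (Alternatively one could argue by structural induction on $T$ via the recursion of Theorem~\ref{colless_general_base}, noting that $2m=\sum_i 2^{\rho_i+1}$ so the leading absolute-value term and both recursive terms each pick up a factor of $2$; but the direct summation above is cleaner and avoids any case analysis.)

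There is essentially no obstacle here: the only thing to be careful about is the bookkeeping that ``doubling every perfect leaf doubles every subtree leaf-count,'' which is immediate from distributivity, together with the observation that the bijection of Lemma~\ref{bijection_2m} is precisely what identifies the $D$-node sets of $T_m$ and $\beta(T_m)$. The statement is really just the homogeneity of the Colless index under this particular uniform scaling of the perfect building blocks.
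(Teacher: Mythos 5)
Your argument is correct and is exactly the route the paper takes: its proof of Proposition~\ref{twice_arb_colless} simply says "Follows from the definition of the Colless index and Lemma~\ref{bijection_2m}," and your write-up is just that one-line proof with the bookkeeping made explicit (the $D$-nodes are the base-tree interior nodes in both trees, every subtree leaf-count doubles, so every $\lvert\ell_v-r_v\rvert$ doubles). No gap; you have merely expanded what the paper leaves implicit.
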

\begin{proof}
    Follows from the definition of the Colless index and Lemma \ref{bijection_2m}.
\end{proof}
There is no such easy bijection for $n=2m+1$. This can be seen by considering that for odd $n$, we are not only taking the original base tree and multiplying leaves by $2$, leaving $\omega(n)=\omega(2m)=\omega(m)$, as in Proposition \ref{twice_arb_colless},  but we are also adding a leaf to an even $2m$, thus we have $\omega(n)=\omega(2m+1)=\omega (2m)+1=\omega(m)+1$. So there are many $T_{2m+1}$ built from each $T_m$. 

However, we can say something about how the Colless index grows as we add new perfect trees. Proposition \ref{lemma_newleaftree} discusses how a new MinD tree may be constructed by adding a leaf, and Proposition \ref{prop_cn_from_cm} calculates the difference in the Colless indices of the two trees. Fig. \ref{fig:colless_arb} illustrates the process of adding perfect trees to a MinD tree, and will be referred to throughout this section.
\begin{proposition} \label{lemma_newleaftree}
    Let $T$ be a leaf-labelled full binary tree on $m$ leaves, and let $x$ be a new leaf to be added. The new full binary tree may be built either 1) by replacing a child $y$ of an interior node with a new interior node, whose children are the new leaf $x$ and the replaced child $y$, or 2) by joining the tree $T$ and the leaf $x$ as children of a new root node (so the entire tree $T$ is $y$). 
\end{proposition}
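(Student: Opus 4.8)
The plan is to prove the asserted dichotomy by examining the parent of the new leaf $x$ in the enlarged tree. Concretely, let $T'$ be any full binary tree on $m+1$ leaves that reduces to $T$ when the leaf $x$ is removed (so that $T'$ is one of the ``new full binary trees'' referred to in the statement), and let $p$ denote the parent of $x$ in $T'$. Since every internal node of a full binary tree has exactly two children, $p$ has exactly two children: the leaf $x$ and the root, call it $y$, of some subtree $S$ of $T'$.

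First I would dispose of the root case. If $p$ is the root of $T'$, then $T'$ has exactly the two subtrees $\{x\}$ and $S$ hanging below its root, so $S$ is a full binary tree; and since $T'$ has $m+1$ leaves, one of which is $x$, $S$ has $m$ leaves, whence $S = T$ and $T' = T \odot \{x\}$. This is precisely construction~(2), with $y$ being the entire tree $T$.

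Next, suppose $p$ is not the root, so $p$ has a parent $q$, and in $T'$ the node $q$ has $p$ as one of its two children. Form a tree $T''$ from $T'$ by deleting the leaf $x$ and the node $p$ and making $y$ a child of $q$ in place of $p$. I would then check that $T''$ is a full binary tree: the only nodes whose child-sets changed are $p$ (removed) and $q$ (one child swapped for another), so every remaining internal node still has exactly two children and every leaf still none; moreover $T''$ has $m$ leaves, since only $x$ was removed. Because removing $x$ and suppressing its degree-two parent is exactly the operation yielding $T$ from $T'$, we get $T'' = T$. Reading this backwards: $T'$ is obtained from $T$ by replacing the child $y$ of the interior node $q$ with a new interior node $p$ whose children are $x$ and $y$ --- which is construction~(1).

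Finally I would record the (immediate) converse: both (1) and (2) visibly produce trees in which every node has zero or two children and which have exactly $m+1$ leaves, so these two constructions are precisely the legal ways to enlarge $T$ by the single leaf $x$. The only point requiring any care --- and the main, rather mild, obstacle --- is the verification that the contraction $T' \mapsto T''$ preserves the full-binary property, together with the observation that the case $p = \mathrm{root}(T')$ behaves genuinely differently and must be separated out (since then there is no parent $q$ to re-attach $y$ to); everything else is bookkeeping.
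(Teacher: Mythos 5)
Your proof is correct, and it approaches the statement from the opposite direction to the paper. The paper's proof enumerates the conceivable ways of \emph{attaching} the new leaf $x$ to $T$ and rules out the illegal ones (attaching $x$ as a child of a leaf would create an interior node with one child; appending $x$ to an existing interior node would create a node with three children), leaving the two stated constructions as the only legal operations. You instead start from an arbitrary full binary tree $T'$ on $m+1$ leaves that contracts to $T$ upon deleting $x$, examine the parent $p$ of $x$, and split on whether $p$ is the root; you then verify that the contraction recovers $T$ and that both constructions conversely yield valid trees. Your version is the more rigorous of the two: it actually proves completeness (every admissible $T'$ arises from construction (1) or (2)), whereas the paper's argument implicitly assumes its list of candidate operations is exhaustive. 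The paper's version is shorter and suffices for how the proposition is used downstream (in Proposition \ref{prop_cn_from_cm}, where one only needs that these constructions are available and well-defined), but your parent-of-$x$ case split is the cleaner way to justify the word ``either'' in the statement.
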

\begin{proof}
    The nodes of full binary trees have either $0$ or $2$ children. 
    Thus, the new tree cannot be formed by adding the leaf as a child to a leaf node, since that would produce an interior node with only $1$ child. Likewise, the new leaf cannot simply be appended to an interior node, since that would produce an interior node with $3$ children. 
    The other two alternatives are those stated in the lemma. Both satisfy the condition on the number of children of a node of a full binary tree, so both are possibilities in adding $x$ to $T$.
\end{proof}

Without loss of generality, this tree may be formed so that the leaf $x$ is the rightmost element in the new tree (so the last node counted) and $y$ is its left sibling. The arrangement of the new tree so that $x$ is rightmost follows from commutativity of children of nodes. We do this purely for ease of explication, since the two orderings of sibling children of a node are equivalent. 

The next series of propositions lead to an induction in Theorem \ref{th_bounds_MinD}, proving bounds on the Colless indices of MinD trees. This induction goes from a MinD tree on $n-2^{\rho_1}$ leaves to a MinD tree on $n$ leaves. Thus, $2^{\rho_1} \le 2^{\rho_i}$ for all perfect trees in the tree with $n$ leaves. It may seem more natural in an induction to add a perfect tree that is larger than the already-existing perfect trees, rather than smaller, as we do here. However, proceeding in this manner allows one to easily identify the Colless index from the tree with $n-2^{\rho_1}$ leaves in the calculation on the tree with $n$ leaves.
\begin{figure}[h!tb]
	\tikzstyle{mynode}=[circle, draw]
	\tikzstyle{dnode}=[regular polygon sides=4, minimum size=0.6cm, draw]
	\tikzstyle{perfectnode}=[rectangle, draw, fill=gray!20]
	\tikzstyle{blanknode}=[rectangle]
	\tikzstyle{trinode}=[draw=black, regular polygon, regular polygon sides=3,inner sep=.5pt]
	\tikzstyle{myarrow}=[]
	\centering
	\begin{subfigure}[b]{.47\textwidth}
		\centering
	    \resizebox{.5\textwidth}{!}{%
		\begin{tikzpicture}	
			\centering
			% ladder tree :6 elements
			\node[dnode, label={right:$D_j$}] (L1) at (0, 3) {\footnotesize\texttt{}};
			\node[trinode] (L2a) at (-1, 2) {\footnotesize\texttt{$T_j$}};
			\node[dnode, label={right:$D_2$}] (L2b) at (2, 1) {\footnotesize\texttt{}};
			\node[trinode] (L3a) at (1, 0) 		{\footnotesize\texttt{$T_2$}};
			\node[trinode] (L3b) at (3,0) {\footnotesize\texttt{$T_1$}};
			\node[blanknode] (L4a) at (2, -1) {\footnotesize\texttt{}};
			\draw[myarrow] (L1) to node[midway,above=0pt, left=7pt]{} (L2a);
			\draw[myarrow, dashed] (L1) to node[midway,above=0pt, left=-2.5pt]{} (L2b);
			\draw[myarrow] (L2b) to node[midway,above=0pt, left=4pt]{} (L3a);
			\draw[myarrow] (L2b) to node[midway,above=0pt, right=-1pt]{} (L3b);
		\end{tikzpicture}
		} 
		\caption{The original tree $T'$ on $n-2^{\rho_1}=\sum_{2}^{\omega(n)}2^{\rho_i}$ leaves.There are $\omega(n)-2$ $D$-nodes, but some may be in the trees $T_i$, so $j-1 \le \omega(n)-2$.
		}
		\label{fig:colless_arb_1}
	\end{subfigure}\hfill%
	\begin{subfigure}[b]{.47\textwidth}
		\centering
	    \resizebox{.6\textwidth}{!}{%
		\begin{tikzpicture}	
			\centering
			% ladder tree :6 elements
			\node[dnode, label={right:$D_j$}] (L1) at (0, 3) {\footnotesize\texttt{}};
			\node[trinode] (L2a) at (-1, 2) {\footnotesize\texttt{$T_j$}};
			\node[dnode, label={right:$D_2$}] (L2b) at (2, 1) {\footnotesize\texttt{}};
			\node[trinode] (L3a) at (1, 0) 		{\footnotesize\texttt{$T_2$}};
			\node[dnode, label={right:$D_1$}] (L3b) at (3, 0) {\footnotesize\texttt{}};
			\node[trinode] (L4a) at (2, -1) {\footnotesize\texttt{$T_1$}};
			\node[perfectnode] (L4b) at (4, -1) {\footnotesize\texttt{P$2^{\rho_1}$}};
			\draw[myarrow] (L1) to node[midway,above=0pt, left=7pt]{} (L2a);
			\draw[myarrow, dashed] (L1) to node[midway,above=0pt, left=-2.5pt]{} (L2b);
			\draw[myarrow] (L2b) to node[midway,above=0pt, left=4pt]{} (L3a);
			\draw[myarrow] (L2b) to node[midway,above=0pt, right=-1pt]{} (L3b);
			\draw[myarrow] (L3b) to node[midway,above=0pt, left=4pt]{} (L4a);
			\draw[myarrow] (L3b) to node[midway,above=0pt, right=-1pt]{} (L4b);
		\end{tikzpicture}
		} 
		\caption{The new tree $T$ on $n=\sum_{1}^{\omega(n)}2^{\rho_i}$ leaves. There are $\omega(n)-1$ $D$-nodes, but some may be in the trees $T_i$, so $j \le \omega(n)-1$.
		}
		\label{fig:colless_arb_2}
	\end{subfigure}\hfill%
	\mycaption{Construction of a MinD tree on $n$ leaves from a MinD tree on $n-2^{\rho_1}$ leaves} {The tree in (\subref{fig:colless_arb_2})  is obtained from the tree in (\subref{fig:colless_arb_1}) by inserting a new $D$-node where tree $T_1$ had been, making $T_1$ its left child, and adding the perfect tree on $2^{\rho_1}$ leaves as its right child. The $T_i$ are the left subtrees of the visible $D$-nodes between P$2^{\rho_1}$ and the root, and the $T_i$ may or may not themselves contain $D$-nodes. The shaded rectangular node represents a perfect tree on $2^{\rho_1}$ leaves, which contributes no $D$-nodes. }
	\label{fig:colless_arb}		
\end{figure}
\begin{proposition} \label{prop_cn_from_cm}
    Let $n=\sum_{1}^{\omega(n)}2^{\rho_i}$, where $\rho_{\omega(n)} > \dots > \rho_2 > \rho_1$.  Let $T'$ be a MinD tree on $(n-2^{\rho_1})$ leaves. Let $T$ be a MinD tree on $n$ leaves formed by affixing the perfect tree on $2^{\rho_1}$ leaves to $T'$ as a rightmost child, as discussed in Proposition \ref{lemma_newleaftree} and as shown in Fig. \ref{fig:colless_arb}. Let the $\{T_j\}$ be the left children of the $j$ internal nodes that are ancestors of the newly added $2^{\rho_1}$, and let $\lvert T_i \rvert$ be the number of descendant leaves in $T_i$. Then $T$ has Colless index
    \begin{equation}
        c_{T}(n)=c_{T'}(n-2^{\rho_1})+(\lvert T_1 \rvert  + 2^{\rho_1}\sum_{i =2}^j f(i)) - 2^{\rho_1}
    \end{equation}
    where 
    $$
        f(i) =
        \begin{cases}
		    -1, & \text{ if } T_i-\sum\limits_{0<h<i}T_h>0,\\
		    1, & \text{ if } T_i-\sum\limits_{0<h<i}T_h<0.
	  \end{cases}
    $$
\end{proposition}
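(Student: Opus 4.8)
The plan is to compute the Colless indices of $T$ and $T'$ as sums of balance values $|\ell-r|$ over their $D$-nodes, and compare term by term. By construction $T$ is obtained from $T'$ by the surgery of Proposition~\ref{lemma_newleaftree}: the rightmost ``leaf'' subtree $T_1$ of $T'$ is replaced by a new $D$-node $D_1$ whose children are $T_1$ on the left and the perfect tree $P2^{\rho_1}$ on the right (using the commutativity remark following Proposition~\ref{lemma_newleaftree} to put $P2^{\rho_1}$ rightmost). Hence the $D$-nodes of $T$ fall into three groups: (i) those lying strictly inside one of $T_1,T_2,\dots,T_j$, whose two child-subtrees are untouched by the surgery, so their balance values are the same in $T$ and $T'$ and cancel in the difference; (ii) the ancestors $D_2,\dots,D_j$ of $P2^{\rho_1}$, already present in $T'$, whose balance values shift; and (iii) the single new node $D_1$, present only in $T$.

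First I would handle group (iii): since $T_1$ is a sub-MinD-tree all of whose perfect ``leaves'' have exponent $\ge\rho_2>\rho_1$, we get $|T_1|\ge 2^{\rho_2}\ge 2^{\rho_1+1}>2^{\rho_1}$, so the balance value of $D_1$ is exactly $|T_1|-2^{\rho_1}$. Next, for group (ii): for each $i$ with $2\le i\le j$, the right subtree of $D_i$ consists in $T'$ of $T_1,\dots,T_{i-1}$ and in $T$ of those same trees together with the extra $P2^{\rho_1}$, so its leaf count grows by exactly $2^{\rho_1}$. Writing $x_i=|T_i|-\sum_{h=1}^{i-1}|T_h|$ (the quantity whose sign defines $f(i)$), the balance value of $D_i$ therefore changes from $|x_i|$ in $T'$ to $|x_i-2^{\rho_1}|$ in $T$.

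The crux is a short arithmetic observation: every $|T_h|$ ($1\le h\le j$) is a sum of powers $2^\rho$ with $\rho\ge\rho_2\ge\rho_1+1$, so each $|T_h|$, and hence $x_i$, is a multiple of $2^{\rho_1+1}$; and $x_i\ne 0$, since $|T_i|=\sum_{h<i}|T_h|$ would make $D_i$ an $S$-node, contrary to hypothesis. Thus $|x_i|\ge 2^{\rho_1+1}>2^{\rho_1}$, so $x_i$ and $x_i-2^{\rho_1}$ have the same sign, and a one-line case check gives $|x_i-2^{\rho_1}|-|x_i|=-2^{\rho_1}$ when $x_i>0$ and $+2^{\rho_1}$ when $x_i<0$; that is, the change equals $2^{\rho_1}f(i)$. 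Summing the contributions of groups (ii) and (iii) gives $c_T(n)-c_{T'}(n-2^{\rho_1})=(|T_1|-2^{\rho_1})+2^{\rho_1}\sum_{i=2}^j f(i)$, which is the claimed identity; the degenerate case $\omega(n)=2$ (so $j=1$, the sum is empty, $T'=P2^{\rho_2}$, $c_{T'}=0$) is immediate. The main obstacle is purely bookkeeping: correctly indexing the ancestor chain $D_1,\dots,D_j$ and the subtrees $T_i$ and matching it to the definition of $f$, together with the divisibility-and-sign observation that makes the absolute values collapse. Nothing deeper is involved once the picture of Fig.~\ref{fig:colless_arb} is set up carefully.
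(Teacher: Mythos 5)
Your proof is correct and follows essentially the same route as the paper's: both decompose the Colless index into the unchanged contributions inside the $T_i$, the ancestor nodes on the path to the insertion point (whose balance values shift by $\pm 2^{\rho_1}$ because divisibility of the $\lvert T_h\rvert$ by $2^{\rho_2}>2^{\rho_1}$ prevents a sign flip), and the new node contributing $\lvert T_1\rvert-2^{\rho_1}$. The only difference is presentational (the paper writes out both Colless sums and manipulates them algebraically, while you track the per-node changes), and your explicit remark that $T_i-\sum_{h<i}T_h\neq 0$ is a small point the paper leaves implicit.
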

\begin{proof}
    Let $T_{(n-2^{\rho_1})}$ be as illustrated in subfigure \ref{fig:colless_arb_1} and let $T_n$ be as illustrated in subfigure \ref{fig:colless_arb_2}. Let $t_i=\lvert T_i \rvert$, the number of leaves in $T_i$. 
    The Colless index of $T'$ is 
    \begin{equation} \label{eq_colless_t_prime}
        c_{T'}(n-2^{\rho_1})=\sum_{i=2}^j\lvert T_i-\sum_{0<h<i}T_h\rvert + \sum_{i=1}^j c_{T_i}(t_i)
    \end{equation}
    The Colless index of $T$ is 
    \begin{equation}  \label{eq_colless_t}
        c_{T}(n)=\sum_{i=1}^j\lvert T_i-(2^{\rho_1}+\sum_{0<h<i}T_h\rvert) + \sum_{i=1}^j c_{T_i}(t_i)
    \end{equation}.
    All $t_i$ are multiples of $2^{\rho_2}$, since they are sums of powers of $2$ greater than or equal to $2^{\rho_2}$. $(T_i-\sum_{0<h<i}T_h)$ is positive if and only if $(T_i-(2^{\rho_1}+\sum_{0<h<i}T_h))$ is positive, when $i>1$, since $2^{\rho_1} < 2^{\rho_2}$. So 
    $$
    \lvert T_i-(2^{\rho_1}+\sum_{0<h<i}T_h)\rvert = \lvert T_i-\sum_{0<h<i}T_h\rvert+2^{\rho_1}f(i)
    $$
    and 
    \begin{align*}
        c_{T}(n)
        &=\sum_{i=1}^j\left\lvert T_i-(2^{\rho_1}+\sum_{0<h<i}T_h)\right\rvert + \sum_{i=1}^j c_{T_i}(t_i)&\text{ by Equation \ref{eq_colless_t}}\\
        &=\sum_{i=2}^j\left\lvert T_i-(2^{\rho_1}+\sum_{0<h<i}T_h)\right\rvert +\lvert T_1-2^{\rho_1} \rvert + \sum_{i=1}^j c_{T_i}(t_i)\\
        &=\sum_{i=2}^j\left(\left\lvert T_i-\sum_{0<h<i}T_h\right\rvert+2^{\rho_1}f(i)\right) +\lvert T_1-2^{\rho_1} \rvert+ \sum_{i=1}^j c_{T_i}(t_i)&\text{ by definition of $f(i)$}\\
        &=\sum_{i=2}^j\left\lvert T_i-\sum_{0<h<i}T_h\right\rvert +\lvert T_1-2^{\rho_1} \rvert+ \sum_{i=1}^j c_{T_i}(t_i)+\sum_{i=2}^j 2^{\rho_1}f(i)\\
        &=\sum_{i=2}^j\left\lvert T_i-\sum_{0<h<i}T_h\right\rvert + \lvert T_1 \rvert - 2^{\rho_1}+\sum_{i=1}^j c_{T_i}(t_i) + \sum_{i=2}^j 2^{\rho_1}f(i)& \text{ since $\lvert T_1 \rvert>2^{\rho_1}$}\\
        &=\left(\sum_{i=2}^j\left\lvert T_i-\sum_{0<h<i}T_h\right\rvert+ \sum_{i=1}^j c_{T_i}(t_i)\right)  + \lvert T_1 \rvert+ 2^{\rho_1}\sum_{i=2}^j f(i) - 2^{\rho_1}\\
        &= c_{T'}(n-2^{\rho_1}) + \lvert T_1 \rvert + 2^{\rho_1}\sum_{i=2}^j f(i) - 2^{\rho_1}&\text{ by Equation \ref{eq_colless_t_prime}}
    \end{align*}
\end{proof}
\begin{corollary} \label{prop_cn_from_cm_desc}
    Let $n=\sum_{1}^{\omega(n)}2^{\rho_i}$, where $\rho_{\omega(n)} > \dots > \rho_2 > \rho_1$. Then
    $$
        c_{desc}(n) = c_{desc}(n-2^{\rho_1})+(2^{\rho_2}-2^{\rho_1}(\omega(n)-2))-2^{\rho_1}
    $$
\end{corollary}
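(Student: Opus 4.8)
The plan is to deduce this corollary directly from Proposition~\ref{prop_cn_from_cm} by specializing the general construction there to the descending-ladder case. First I would fix $T'$ to be the descending-ladder MinD tree on $(n-2^{\rho_1})$ leaves, i.e.\ the tree of Section~\ref{colless_ladder_section} whose perfect-tree ``leaves'' $P2^{\rho_{\omega(n)}}, P2^{\rho_{\omega(n)-1}}, \dots, P2^{\rho_2}$ appear in decreasing order from the top rung down. I would then observe that affixing $P2^{\rho_1}$ --- the smallest of all the perfect trees, since $\rho_1$ is the minimal exponent --- as a rightmost child (option~1 of Proposition~\ref{lemma_newleaftree}, with $y$ the current rightmost leaf $P2^{\rho_2}$) produces exactly the descending-ladder MinD tree $T$ on $n$ leaves, so that $c_T(n) = c_{desc}(n)$ and $c_{T'}(n-2^{\rho_1}) = c_{desc}(n-2^{\rho_1})$.

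Next I would read off the data that Proposition~\ref{prop_cn_from_cm} requires. The internal nodes that are ancestors of the newly added $P2^{\rho_1}$ are the $\omega(n)-1$ $D$-nodes of $T$, so $j=\omega(n)-1$; their left children are $T_1=P2^{\rho_2}, T_2=P2^{\rho_3}, \dots, T_{\omega(n)-1}=P2^{\rho_{\omega(n)}}$, each a perfect tree, so $\lvert T_1\rvert = 2^{\rho_2}$, $\lvert T_i\rvert = 2^{\rho_{i+1}}$, and every $c_{T_i}(t_i)=0$. In particular $\lvert T_1\rvert = 2^{\rho_2} > 2^{\rho_1}$, as Proposition~\ref{prop_cn_from_cm} assumes. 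It then remains to evaluate $f(i)$ for $2\le i\le \omega(n)-1$: here $T_i - \sum_{0<h<i}T_h = 2^{\rho_{i+1}} - \sum_{h=2}^{i}2^{\rho_h}$, and since the exponents are distinct and strictly increasing one has $\sum_{h=2}^{i}2^{\rho_h} \le 2^{\rho_i+1}-1 < 2^{\rho_{i+1}}$, so this difference is positive and $f(i)=-1$; hence $\sum_{i=2}^{\omega(n)-1} f(i) = -(\omega(n)-2)$.

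Substituting these values into the formula of Proposition~\ref{prop_cn_from_cm} gives
$$c_{desc}(n) = c_{desc}(n-2^{\rho_1}) + \bigl(2^{\rho_2} - 2^{\rho_1}(\omega(n)-2)\bigr) - 2^{\rho_1},$$
which is the claim. When $\omega(n)=2$ the index set $\{2,\dots,\omega(n)-1\}$ is empty, the sum of the $f(i)$ is $0$, and the identity still holds, consistent with $\omega(n)-2=0$.

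I do not expect a genuine obstacle here; the only point needing a moment's care is the verification that $f(i)=-1$ for every relevant $i$, i.e.\ that each left subtree $P2^{\rho_{i+1}}$ outweighs the combined mass of all the perfect subtrees already hanging below it. This is just the ``super-increasing'' property of the sequence $(2^{\rho_h})$ --- the same binary-weight fact already used in the proof of Theorem~\ref{laddercolless_desc} --- so no new machinery is required. For cleanliness one should also confirm that the rightmost-child insertion really yields the descending-ladder tree rather than some other MinD tree; this is immediate, since inserting the globally smallest perfect tree at the very bottom right preserves the decreasing top-to-bottom order of the perfect ``leaves.''
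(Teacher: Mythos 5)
Your proposal is correct and follows essentially the same route as the paper: both specialize Proposition~\ref{prop_cn_from_cm} to the descending ladder tree by identifying $\lvert T_1\rvert = 2^{\rho_2}$, $j=\omega(n)-1$, and $f(i)=-1$ for all $i$. The only difference is that you spell out the verification that each $T_i-\sum_{0<h<i}T_h>0$ via the super-increasing property of distinct powers of two, a detail the paper simply asserts.
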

\begin{proof}
    In a descending tree, $\lvert T_1\rvert = 2^{\rho_2}$, and $j=\omega(n)-1$. Also, in a descending MinD tree, $T_i-\sum_{0<h<i}T_h>0$ for all $i$, so $f(i)=-1$ for all $i$. The corollary then follows directly from Proposition \ref{prop_cn_from_cm}.
\end{proof}
\begin{corollary} \label{prop_cn_from_cm_asc}
    Let $n=\sum_{1}^{\omega(n)}2^{\rho_i}$, where $\rho_{\omega(n)} > \dots > \rho_2 > \rho_1$. Then
    $$
        c_{asc}(n) = c_{asc}(n-2^{\rho_1})+(n-2^{\rho_1}) -2^{\rho_1}
    $$
\end{corollary}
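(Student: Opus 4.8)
The plan is to obtain this as the ascending-ladder specialization of Proposition~\ref{prop_cn_from_cm}, exactly parallel to Corollary~\ref{prop_cn_from_cm_desc} in the descending case, but using the \emph{other} way of adjoining the new perfect tree that is permitted by Proposition~\ref{lemma_newleaftree}. Concretely, I would take $T'$ to be the MinD tree on $n-2^{\rho_1}=\sum_{i=2}^{\omega(n)}2^{\rho_i}$ leaves with ladder base tree and perfect-tree ``leaves'' in ascending order, and $T$ to be the tree obtained from $T'$ by introducing a new root whose two children are $T'$ and the perfect tree $P2^{\rho_1}$, with $P2^{\rho_1}$ taken as the rightmost leaf (legitimate by commutativity). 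This is case~2 of Proposition~\ref{lemma_newleaftree}, so Proposition~\ref{prop_cn_from_cm} applies to this pair $T,T'$.

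The first step is to check that this $T$ is indeed the ascending-ladder MinD tree on $n$ leaves whose Colless index the left-hand side denotes. Since $\rho_1$ is the position of the lowest set bit of $n$, we have $\omega(n-2^{\rho_1})=\omega(n)-1$ (by Lemma~\ref{weightlemma_odd} when $\rho_1=0$, and directly otherwise, cf.\ Lemma~\ref{weightlemma_addtwo}), so $T'$ is a ladder on $\omega(n)-1$ leaves whose rungs carry $P2^{\rho_2},\dots,P2^{\rho_{\omega(n)}}$ from the top down. Adjoining $P2^{\rho_1}$ as the sibling of all of $T'$ under a new root produces a ladder on $\omega(n)$ leaves whose rungs, from the top down, carry $P2^{\rho_1},P2^{\rho_2},\dots,P2^{\rho_{\omega(n)-1}},P2^{\rho_{\omega(n)}}$; since $\rho_1<\rho_2<\dots<\rho_{\omega(n)}$ this is precisely the ascending arrangement of Section~\ref{colless_ladder_section}. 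Hence $c_T(n)=c_{asc}(n)$ and $c_{T'}(n-2^{\rho_1})=c_{asc}(n-2^{\rho_1})$.

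The second step is to read off the quantities in Proposition~\ref{prop_cn_from_cm} for this configuration. The only internal node that is an ancestor of the newly added $P2^{\rho_1}$ is the new root, so $j=1$, the single ``left child'' is $T_1=T'$, and $\lvert T_1\rvert=n-2^{\rho_1}$ (which exceeds $2^{\rho_1}$ whenever $\omega(n)\ge 2$, as required in the proof of Proposition~\ref{prop_cn_from_cm}). The correction sum $\sum_{i=2}^{j}f(i)$ is empty, hence $0$. Substituting,
$$c_{asc}(n)=c_{asc}(n-2^{\rho_1})+\bigl(\lvert T_1\rvert+2^{\rho_1}\cdot 0\bigr)-2^{\rho_1}=c_{asc}(n-2^{\rho_1})+(n-2^{\rho_1})-2^{\rho_1},$$
which is the claim; the case $\omega(n)=1$ reduces to the trivial identity $c_{asc}(2^{\rho_1})=0$.

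I expect no real obstacle: the only point requiring care is recognizing that the ascending tree on $n$ leaves is built from the ascending tree on $n-2^{\rho_1}$ leaves via the \emph{new-root} option of Proposition~\ref{lemma_newleaftree} (so the list $\{T_i\}$ has the single element $T'$), rather than via the deep insertion used for $c_{desc}$, together with the routine bookkeeping that the perfect ``leaves'' of $T'$ occupy the top $\omega(n)-1$ rungs in ascending order so that gluing $P2^{\rho_1}$ on top yields exactly the ascending MinD tree.
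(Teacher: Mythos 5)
Your proposal is correct and follows essentially the same route as the paper, which likewise obtains the corollary by specializing Proposition~\ref{prop_cn_from_cm} to the ascending case with $j=1$, $T_1=T'$, $\lvert T_1\rvert=n-2^{\rho_1}$, and an empty correction sum $\sum_{i=2}^{1}f(i)=0$ (the paper also notes it coincides with Lemma~\ref{asc_build_lemma}). Your additional verification that gluing $P2^{\rho_1}$ under a new root reproduces the ascending arrangement, and your handling of $\omega(n)=1$, are sound bookkeeping the paper leaves implicit.
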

\begin{proof}
    This is Lemma \ref{asc_build_lemma}, but it also follows directly from Proposition \ref{prop_cn_from_cm}: in an ascending MinD tree, $\lvert T_1\rvert = n-2^{\rho_1}$, and $j=1$, so $\sum_{i=2}^1 f(i) = 0$. 
\end{proof}
\begin{theorem} \label{th_bounds_MinD}
    Let $n=\sum_{1}^{\omega(n)}2^{\rho_i}$, where $\rho_{\omega(n)} > \dots > \rho_2 > \rho_1$. Let $T$ be a MinD tree on $n$ leaves, and let $c_T(n)$ be its Colless index. Then 
    $$
        c_{desc}(n) \le c_T(n) \le c_{asc}(n)
    $$
\end{theorem}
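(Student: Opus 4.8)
The plan is to prove both inequalities simultaneously by induction on $\omega(n)$, the weight of $n$ (equivalently, the number of perfect subtrees of a MinD tree on $n$ leaves). In the base case $\omega(n)=1$ we have $n=2^{\rho_1}$, the only MinD tree is the perfect tree, whose Colless index is $0$, and by Corollaries~\ref{desc_list}(a) and~\ref{asc_list}(a) we also have $c_{desc}(n)=c_{asc}(n)=0$, so the chain of inequalities is the trivial $0\le 0\le 0$. (The case $\omega(n)=2$ collapses in one step to this, forcing all three quantities to equal $2^{\rho_2}-2^{\rho_1}$.)

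For the inductive step I would fix a MinD tree $T$ on $n$ leaves with $\omega(n)\ge 2$, written $n=\sum_{i=1}^{\omega(n)}2^{\rho_i}$ with $\rho_{\omega(n)}>\dots>\rho_1$. Deleting the smallest perfect subtree $P2^{\rho_1}$ and contracting its now-unary parent yields a full binary tree $T'$ on $n-2^{\rho_1}$ leaves; the weight lemmas (\ref{weightlemma_odd}, \ref{weightlemma_even}, \ref{weightlemma_mult2}, as dictated by the parity of the removed bit) give $\omega(n-2^{\rho_1})=\omega(n)-1$, so $T'$ is a MinD tree with perfect subtrees $P2^{\rho_2},\dots,P2^{\rho_{\omega(n)}}$, and the induction hypothesis applies: $c_{desc}(n-2^{\rho_1})\le c_{T'}(n-2^{\rho_1})\le c_{asc}(n-2^{\rho_1})$. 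Since $T$ is recovered from $T'$ by reattaching $P2^{\rho_1}$ as a rightmost child (WLOG by commutativity and the sibling-swap invariance of the Colless index), Proposition~\ref{prop_cn_from_cm} gives
\[
  c_T(n)=c_{T'}(n-2^{\rho_1})+\Bigl(|T_1|+2^{\rho_1}\sum_{i=2}^{j}f(i)\Bigr)-2^{\rho_1},
\]
where $T_1,\dots,T_j$ are the left subtrees of the $j$ ancestors of $P2^{\rho_1}$ and $f(i)\in\{-1,1\}$. It then suffices to sandwich the increment $\Delta:=|T_1|+2^{\rho_1}\sum_{i=2}^{j}f(i)$ between the increments prescribed by Corollaries~\ref{prop_cn_from_cm_desc} and~\ref{prop_cn_from_cm_asc}.

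Three elementary facts about $T$ close the argument. First, $|T_1|\ge 2^{\rho_2}$, since $T_1$ contains at least one of $P2^{\rho_2},\dots,P2^{\rho_{\omega(n)}}$, the smallest of which has $2^{\rho_2}$ leaves. Second, $j\le\omega(n)-1$: every ancestor of $P2^{\rho_1}$ is a $D$-node — along the root-to-$P2^{\rho_1}$ path, one child-subtree of each such node has leaf-count a sum of powers of $2$ that are all at least $2^{\rho_2}$ (hence $\equiv 0\bmod 2^{\rho_2}$), while the other contains $P2^{\rho_1}$ and so has leaf-count $\equiv 2^{\rho_1}\bmod 2^{\rho_2}$, and $0<2^{\rho_1}<2^{\rho_2}$ forces these to differ — and a MinD tree has only $\omega(n)-1$ $D$-nodes by Corollary~\ref{min_dnodes}. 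Third, $\sum_{i=1}^{j}|T_i|=n-2^{\rho_1}$ (the $T_i$ together with $P2^{\rho_1}$ partition the leaves of $T$) and each $|T_i|\ge 2^{\rho_2}>2^{\rho_1}\ge 2^{\rho_1}f(i)$. From the first two facts, $\Delta\ge 2^{\rho_2}-2^{\rho_1}(j-1)\ge 2^{\rho_2}-2^{\rho_1}(\omega(n)-2)$, which combined with the induction lower bound and Corollary~\ref{prop_cn_from_cm_desc} gives $c_T(n)\ge c_{desc}(n)$; from the third fact, $\Delta=(n-2^{\rho_1})-\sum_{i=2}^{j}(|T_i|-2^{\rho_1}f(i))\le n-2^{\rho_1}$, which combined with the induction upper bound and Corollary~\ref{prop_cn_from_cm_asc} gives $c_T(n)\le c_{asc}(n)$. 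The one step requiring care — indeed the crux — is the second fact: establishing that the entire root-to-$P2^{\rho_1}$ path consists of $D$-nodes, so that $j$ is controlled by the count in Corollary~\ref{min_dnodes}; this rests on the routine binary-arithmetic comparison in the spirit of Lemma~\ref{weightlemma_addtwo}.
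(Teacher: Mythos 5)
Your proof is correct and follows essentially the same route as the paper's: induction that removes the smallest perfect subtree $P2^{\rho_1}$, applies Proposition~\ref{prop_cn_from_cm} together with Corollaries~\ref{prop_cn_from_cm_desc} and~\ref{prop_cn_from_cm_asc}, and sandwiches the increment $\lvert T_1\rvert + 2^{\rho_1}\sum_{i=2}^{j}f(i)$ between $2^{\rho_2}-2^{\rho_1}(\omega(n)-2)$ and $n-2^{\rho_1}$. The only (harmless) divergence is cosmetic: your upper bound uses the leaf partition $\sum_{i=1}^{j}\lvert T_i\rvert = n-2^{\rho_1}$ with $\lvert T_i\rvert > 2^{\rho_1}f(i)$, which reaches the same conclusion slightly more directly than the paper's computation over the index set $H$, and your mod-$2^{\rho_2}$ justification that all $j$ ancestors of $P2^{\rho_1}$ are $D$-nodes supplies the bound $j\le\omega(n)-1$ that the paper asserts via its inequality $j-1\le(\omega(n)-h)-1$.
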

\begin{proof}
The theorem is trivially true of $n=2^k$, since in that case $c_{desc}(n) = c_T(n) = c_{asc}(n)=0$. So we may assume that $n \ne 2^k$, and the perfect tree P$2^{\rho_1}$ has a left sibling subtree, as shown in Fig. \ref{fig:colless_arb}(\subref{fig:colless_arb_2}).
Let $T_1$ be this left sibling subtree of P$2^{\rho_1}$ (so $\lvert T_1 \rvert \ge 2^{\rho_2}$), and let $t_1$ be the number of leaves in $T_1$. Let $f(i)$ be as in Proposition \ref{prop_cn_from_cm}.

Let $H$ be such that $t_1 =\sum_{\rho_i \in H} 2^{\rho_i}$, and let $\lvert H \rvert=h$. Then 
\begin{equation}\label{j_lemma}
    j-1 \le (\omega(n)-h)-1
\end{equation}
\begin{align} \label{ineq_1}
    2^{\rho_2}  -2^{\rho_1}(\omega(n)-2)
    &\le \lvert T_1 \rvert  -2^{\rho_1}(\omega(n)-2)& \text{ since $\lvert T_1 \rvert \ge 2^{\rho_2}$} \nonumber\\
    &\le \lvert T_1 \rvert  - 2^{\rho_1}(j-1)&\text{ by Inequality \ref{j_lemma}, since $h \ge 1$}\nonumber\\
    &= \lvert T_1 \rvert  + 2^{\rho_1}\sum_{i =2}^j (-1)\nonumber\\
    &\le \lvert T_1 \rvert  + 2^{\rho_1}\sum_{i =2}^j f(i)& \text{ since $f(i) \ge -1$}
\end{align}
\begin{align}\label{ineq_2}
    \lvert T_1 \rvert  + 2^{\rho_1}\sum_{i =2}^j f(i))
    &= t_1 + 2^{\rho_1}\sum_{i =2}^j f(i))\nonumber\\
    &= \sum_{\rho_i \in H} 2^{\rho_i} + 2^{\rho_1}\sum_{i =2}^j f(i))\nonumber\\
    &\le \sum_{\rho_i \in H} 2^{\rho_i} + 2^{\rho_1}(j-1)& \text{ since $f(i) \le 1$}\nonumber\\
    &\le \sum_{\rho_i \in H} 2^{\rho_i} + 2^{\rho_1}((\omega(n)-h)-1)&\text{ by Inequality \ref{j_lemma}}\nonumber\\
    &= \sum_{\rho_i \in H} 2^{\rho_i} + 2^{\rho_1}(\omega(n)-h)- 2^{\rho_1}\nonumber\\
    &= \sum_{\rho_i \in H} 2^{\rho_i} + (2^{\rho_1}\sum_{\rho_i \notin H}1)- 2^{\rho_1}\nonumber\\
    &= \sum_{\rho_i \in H} 2^{\rho_i} + \sum_{\rho_i \notin H} 2^{\rho_1} - 2^{\rho_1}\nonumber\\
    &\le \sum_{\rho_i \in H} 2^{\rho_i} + \sum_{\rho_i \notin H} 2^{\rho_i} - 2^{\rho_1}\nonumber\\
    &= (\sum_{\rho_i \in H} 2^{\rho_i} + \sum_{\rho_i \notin H} 2^{\rho_i}) - 2^{\rho_1}\nonumber\\
    &= n - 2^{\rho_1}
\end{align}
Inequalities \ref{ineq_1} and \ref{ineq_2} give
\begin{equation} \label{cT_ineq}
    (2^{\rho_2}-2^{\rho_1}(\omega(n)-2)) 
    \le (\lvert T_1 \rvert  + 2^{\rho_1}\sum_{i =2}^j f(i))
    \le (n-2^{\rho_1})
\end{equation}
We now proceed by induction on $n$. The statement is true for $n=1,2$, since the Colless index of these perfect trees is $0$. It is also true for any $n=2^k$, since these MinD trees are also perfect, and have Colless index $0$. So by induction,
\begin{equation} \label{eq_ind_c}
    c_{desc}(n-2^{\rho_1}) \le c_T(n-2^{\rho_1}) \le c_{asc}(n-2^{\rho_1})
\end{equation}
By putting Equations \ref{cT_ineq} and \ref{eq_ind_c} together and subtracting $2^{\rho_1}$ from each part of the inequality, we get
\begin{align*}
    c_{desc}(n-2^{\rho_1})+(2^{\rho_2}-2^{\rho_1}(\omega(n)-2))-2^{\rho_1} 
    &\le c_T(n-2^{\rho_1}) + (\lvert T_1 \rvert  + 2^{\rho_1}\sum_{i =2}^j f(i)) -2^{\rho_1}\\
    &\le c_{asc}(n-2^{\rho_1}) +(n-2^{\rho_1}) -2^{\rho_1}
\end{align*}
which by Proposition \ref{prop_cn_from_cm} and Corollaries \ref{prop_cn_from_cm_desc} and \ref{prop_cn_from_cm_asc} is the same as 
$$
    c_{desc}(n) \le c_T(n) \le c_{asc}(n)
$$
\end{proof}
\paragraph{Normalized Colless index.}
We have shown that all MinD trees on $n$ leaves have Colless index between $c_{desc}(n)$ (that of a MinD tree having a descending ladder base tree) and $c_{asc}(n)$ (that of a MinD tree having an ascending ladder base tree). We now compare the Colless index of  MinD trees with general trees. 

In \cite{coronado20}, the divide-and-conquer tree on $n$ leaves was shown to have the least possible Colless index on trees with $n$ leaves. In \cite{mir18}, Mir, Rotger and Rosselló established that the ladder tree has the greatest possible Colless index. These two facts enable us to normalize the Colless index of any tree or type of tree. 

We show in this section that MinD trees have very low normalized Colless index, so are close to the minimal and are efficient.
\begin{lemma}\label{lemma_norm_colless}
    Let $c_{max}(n)$ be the maximal Colless index on a tree on $n$ leaves. Then the normalized Colless index $N(c_T(n))$ for a tree $T$ on $n$ leaves is 
    $$
        \frac{c_T(n)-\delta(n)}{c_{max}(n)-\delta(n)}
    $$
\end{lemma}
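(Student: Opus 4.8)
The plan is to recognize that this statement is essentially the \emph{definition} of the normalized Colless index, so the only real content is to verify that the indicated affine rescaling sends the range of possible Colless values onto the unit interval $[0,1]$, with the most balanced trees mapped to $0$ and the least balanced to $1$.

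First I would recall the two extremal facts already established. By Proposition~\ref{minimumcolless} together with the proposition stating that the Colless index of a divide-and-conquer tree equals its number of $D$-nodes, the \emph{minimum} Colless index over all full binary trees on $n$ leaves is $\delta(n)$. By Proposition~\ref{colless_ladder} and the result of Mir, Rotger and Rosselló \cite{mir18} quoted immediately after it, the \emph{maximum} is attained by the ladder tree, so $c_{max}(n) = \tfrac{(n-1)(n-2)}{2}$. Hence every tree $T$ on $n$ leaves satisfies $\delta(n) \le c_T(n) \le c_{max}(n)$.

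Given these endpoints, the canonical normalization carrying the attainable interval $[\delta(n),\, c_{max}(n)]$ onto $[0,1]$ is the affine map $c \mapsto \dfrac{c - \delta(n)}{c_{max}(n) - \delta(n)}$; evaluating it at $c = c_T(n)$ produces exactly the displayed expression, which is therefore taken as the definition of $N(c_T(n))$. The only point needing a word of care is that the denominator is nonzero, which holds for all $n \ge 3$ because a ladder tree on $n$ leaves is strictly less balanced than a divide-and-conquer tree; for $n \in \{1,2\}$ there is a unique full binary tree and the normalization is vacuous (one sets $N = 0$). There is no substantive obstacle here: the argument simply records that the stated formula is the standard $[0,1]$-normalization relative to the known extremal (divide-and-conquer and ladder) trees.
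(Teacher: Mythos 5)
Your proposal takes essentially the same route as the paper: the lemma is really just the definition of the normalization, justified by citing the extremal results that $\delta(n)$ is the minimum Colless index (\cite{coronado20}) and the ladder value $\frac{(n-1)(n-2)}{2}$ is the maximum (\cite{mir18}), so nothing more is needed. One small correction to your side remark on non-degeneracy: the denominator also vanishes at $n=3$, since the unique full binary tree on $3$ leaves is simultaneously the ladder tree and the divide-and-conquer tree, giving $c_{max}(3)=\delta(3)=1$; the paper records this in Lemma~\ref{lemma_normc_nbound}, where $N(c_T(n))$ is undefined for $n=1,2,3$ and defined only for $n\ge 4$.
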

\begin{proof}
    The greatest possible Colless index on a tree on $n$ leaves is $c_L(n)$, which is shown in \cite{mir18}, and the least is $\delta(n)$, shown in \cite{coronado20}. Normalizing $c_T(n)$ against the greatest and least possible Colless indices gives the lemma.
\end{proof}
\begin{lemma}\label{lemma_normc_nbound}
    The normalized Colless index $N(c_T(n))$ is  not defined for  $n=1,2,3$ and is defined for all integers $n\ge 4$.
\end{lemma}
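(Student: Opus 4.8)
By Lemma~\ref{lemma_norm_colless} the quantity in question is
$N(c_T(n)) = \frac{c_T(n)-\delta(n)}{c_{max}(n)-\delta(n)}$, which is defined precisely when the denominator $c_{max}(n)-\delta(n)$ is nonzero. Since $\delta(n)$ is the minimum Colless index over trees with $n$ leaves (Proposition~\ref{minimumcolless}) and $c_{max}(n)$ is the maximum (attained by the ladder tree, with value $\tfrac{(n-1)(n-2)}{2}$ from Proposition~\ref{colless_ladder} together with the maximality statement of Mir, Rotger and Rossell\'o~\cite{mir18}), we always have $c_{max}(n)\ge \delta(n)$. So the lemma reduces to deciding for which $n$ one has equality $c_{max}(n)=\delta(n)$ and for which one has strict inequality $c_{max}(n)>\delta(n)$.

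For the three small cases I would simply compute both quantities. One has $c_{max}(1)=\tfrac{0\cdot(-1)}{2}=0$, $c_{max}(2)=\tfrac{1\cdot 0}{2}=0$, and $c_{max}(3)=\tfrac{2\cdot 1}{2}=1$; meanwhile $\delta(1)=0$, $\delta(2)=0$ (a perfect tree, cf.\ Corollary~\ref{cor_dnodes2k}), and $\delta(3)=\delta(1)+\delta(2)+1=1$ by Theorem~\ref{exp_D_recursive}. In each of $n=1,2,3$ the denominator $c_{max}(n)-\delta(n)$ vanishes, so $N(c_T(n))$ is undefined. (The coincidence at $n=3$ just reflects that there is a unique tree shape on three leaves, so max and min must agree.)

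For $n\ge 4$ I would bound $\delta(n)$ above using Corollary~\ref{cor_dnodes_half_n}, which gives $\delta(n)<\floor{n/2}\le n/2$. It then suffices to check that $\tfrac{(n-1)(n-2)}{2}\ge \tfrac n2$, i.e.\ $(n-1)(n-2)\ge n$, i.e.\ $n^2-4n+2\ge 0$; the left side is nonnegative for every integer $n\ge 4$ (it equals $2$ at $n=4$ and is increasing thereafter). Hence $c_{max}(n)\ge \tfrac n2>\delta(n)$, so the denominator is strictly positive and $N(c_T(n))$ is defined.

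\textbf{Main obstacle.} There is essentially no obstacle here beyond bookkeeping: the whole content is the observation that $N$ is undefined exactly when the extreme Colless indices coincide, plus elementary arithmetic. The one point requiring care is correctly invoking the cited maximality result of \cite{mir18} and the minimality result of \cite{coronado20} (via Proposition~\ref{minimumcolless}) to justify $c_{max}(n)\ge\delta(n)$, and double-checking the boundary value $n=3$, which is the least obvious of the three degenerate cases.
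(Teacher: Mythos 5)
Your proposal is correct and follows essentially the same route as the paper: both arguments verify by direct computation that $c_{max}(n)=\delta(n)$ for $n=1,2,3$ (so the denominator of $N$ vanishes), and both use the ladder-tree formula $c_{max}(n)=\frac{(n-1)(n-2)}{2}$ together with Corollary~\ref{cor_dnodes_half_n} ($\delta(n)<\floor{n/2}$) to get $c_{max}(n)-\delta(n)>0$ for $n\ge 4$. The only cosmetic difference is that you derive the small values of $\delta$ from the recurrence rather than stating them outright.
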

\begin{proof}
    $c_{max}(n)=\delta(n)=0$ for $n=1,2$, and $c_{max}(n)=\delta(n)=1$  for $n=3$. So $c_{max}(n)-\delta(n)=0$  for $n=1,2,3$, and $N(c_T(n))$ is  not defined.
    
    $c_{max}(n)=\frac{(n-1)(n-2)}{2}>\frac{n}{2}$, for $n\ge 4$, and $\delta(n)<\frac{n}{2}$, by Corollary \ref{cor_dnodes_half_n}, so $c_{max}(n)-\delta(n)>0$ for $n\ge 4$, and $N(c_T(n))$ is defined.
\end{proof}
\begin{example}\label{example1_colless_norm}
    The normalized $c_{asc}(2^k+1)$ is
    $$
    \frac{2^{k-1}\cdot 2\; -(k+1)}{2^{k-1}\cdot(2^k-1)-k}
    $$
    The maximum Colless index on a tree with $2^k+1$ leaves is $\frac{2^k   \cdot (2^k-1)}{2} = 2^{k-1}(2^k-1) $, by Proposition \ref{colless_ladder}  \cite{mir18}. The minimum Colless index is the number of $D$-nodes on a divide-and-conquer tree with $2^k+1$ leaves, which is $k$, by Lemma \ref{dnodes_diff_even} and Theorem \ref{exp_D_recursive}. The value of $c_{asc}(2^k+1)$ is $2^k-1$, by Corollary \ref{asc_list}. So the denominator of the normalized Colless index is ${2^{k-1}(2^k-1)-k}$, and the numerator is $2^k-1-k$. The result follows by algebraic manipulation.
\end{example}
\begin{example}\label{example2_colless_norm}
The normalized $c_{asc}(2^k-1)$ is
    $$
    \frac
    {2^{k-1}\cdot (2k-5)+(4-k)}
    {2^{k-1}\cdot(2^k-5)+(4-k)}
    $$
    The value of $c_{asc}(2^k-1)$ is
    $2^k(k-1)-3\cdot (2^{k-1}-1)$, by Corollary \ref{asc_list}. The minimum Colless index is the number of $D$-nodes on a divide-and-conquer tree with $2^k-1$ leaves, which is $k-1$, by Lemma \ref{dnodes_diff_even} and Theorem \ref{exp_D_recursive}. The maximum Colless index on a tree with $2^k-1$ leaves is $\frac{(2^k-2)   \cdot (2^k-3)}{2} = (2^{k-1}-1)(2^k-3) $, by Proposition \ref{colless_ladder}  \cite{mir18}. So the numerator of the normalized Colless index is $2^k(k-1)-3\cdot (2^{k-1}-1)-(k-1)$, and the denominator is ${(2^{k-1}-1)(2^k-3)-(k-1)}$. The result follows by algebraic manipulation.
\end{example}
Examples \ref{example1_colless_norm} and \ref{example2_colless_norm} suggest possible upper bounds for the normalized Colless function on a MinD tree in terms of $k=\floor{\log_2(n)}$. We show one such bound in Theorem \ref{theorem_upperbound_Colless_minD}.
\begin{lemma} \label{lemma_cmax4}
    Let $c_{max}(n)$ be the maximal Colless index of  a  tree on $n>1$ leaves, and let  $m=\floor{\frac{n}{2}}$. Then $$c_{max}(n)\ge4\cdot c_{max}(m)$$ 
    with equality  if and only if $n=2$.
\end{lemma}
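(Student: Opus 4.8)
The plan is to reduce everything to the closed form for the maximal Colless index. By Proposition~\ref{colless_ladder} the ladder tree on $n$ leaves has Colless index $\frac{(n-1)(n-2)}{2}$, and by the result of Mir, Rotger and Rosselló (cited immediately after Proposition~\ref{colless_ladder}) the ladder tree attains the maximum, so $c_{max}(n)=\frac{(n-1)(n-2)}{2}$ for every $n\ge 1$; in particular this gives $c_{max}(1)=c_{max}(2)=0$, as it should. Writing $m=\floor{n/2}$ and $c_{max}(m)=\frac{(m-1)(m-2)}{2}$, the claimed inequality $c_{max}(n)\ge 4c_{max}(m)$ is equivalent to
$$(n-1)(n-2)\ge 4(m-1)(m-2),$$
so the whole lemma becomes a statement about this quadratic inequality in $n$.

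Next I would split on the parity of $n$. If $n=2m$, then $(n-1)(n-2)=(2m-1)(2m-2)=2(2m-1)(m-1)$, so the inequality reads $2(2m-1)(m-1)\ge 4(m-1)(m-2)$; for $m=1$ (i.e. $n=2$) both sides vanish, while for $m\ge 2$ one divides by $2(m-1)>0$ and is left with $2m-1\ge 2(m-2)$, i.e. $3\ge 0$, which holds strictly. If $n=2m+1$, then $(n-1)(n-2)=(2m)(2m-1)=2m(2m-1)$, and the inequality becomes $m(2m-1)\ge 2(m-1)(m-2)$; expanding, the difference of the two sides equals $5m-4$, which is strictly positive for every $m\ge 1$, so the inequality is strict in this case as well.

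Combining the two cases yields $c_{max}(n)\ge 4c_{max}(m)$ for all $n>1$, with equality occurring precisely in the even case with $m=1$, that is, for $n=2$. The only points requiring a moment's care are the small values: for $n=2$ the identity $0=4\cdot0$ confirms the equality case, and for $n=3$ (so $m=1$) one gets $c_{max}(3)=1>0=4\,c_{max}(1)$, consistent with strictness for $n\ne 2$. I expect no genuine obstacle here; the mild bookkeeping is simply to ensure the division by $m-1$ is performed only when $m\ge 2$ and to treat the odd case separately so that strictness is visible throughout.
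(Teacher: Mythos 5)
Your proof is correct and follows essentially the same route as the paper's: both substitute the closed form $c_{max}(n)=\frac{(n-1)(n-2)}{2}$ from Proposition~\ref{colless_ladder} (with maximality from Mir et al.), split on the parity of $n$, and verify the resulting quadratic inequality by elementary algebra, identifying $n=2$ as the unique equality case. The only difference is cosmetic bookkeeping (you divide by $2(m-1)$ and compute the difference $5m-4$, whereas the paper chains the inequalities $m(2m-1)>(m-1)(2m-1)\ge(m-1)(2m-4)$).
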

\begin{proof}
If  $n=2m$ is even,
    \begin{align*}
        c_{max}(n) 
        &= \frac{(n-1)(n-2)}{2}& \text{ by Proposition \ref{colless_ladder}} \\
        &= \frac{(2m-1)(2m-2)}{2} \\
        &= (m-1)(2m-1) \\
        &\ge (m-1)(2m-4)& \text{since $m>0$}\\
        &= \frac{4\cdot (m-1)(m-2)}{2} \\
        &= 4\cdot c_{max}(m)
    \end{align*}
If  $n=2m+1$ is odd,
    \begin{align*}
        c_{max}(n) 
        &= \frac{(n-1)(n-2)}{2}& \text{ by Proposition \ref{colless_ladder}} \\
        &= \frac{((2m+1)-1)((2m+1)-2)}{2} \\
        &= (m)(2m-1) \\
        &> (m-1)(2m-1)  & \text{since $m>\frac{1}{2}$}\\
        &\ge (m-1)(2m-4) & \text{since $m>0$}\\
        &= \frac{4\cdot (m-1)(m-2)}{2} \\
        &= 4\cdot c_{max}(m)
    \end{align*}
\end{proof}
\begin{theorem}\label{theorem_upperbound_Colless_minD}
    Let $T$  be a MinD tree with $n \ge 4$ leaves, and let $c_T(n)$ be its Colless index. Then
    $$
    N(c_T(n))<\frac{2\cdot \floor{\log_2(n)}}{n}
    $$
\end{theorem}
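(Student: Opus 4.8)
The plan is to reduce the statement to a single elementary inequality between integers. By Theorem~\ref{th_bounds_MinD}, every MinD tree $T$ on $n$ leaves satisfies $c_T(n)\le c_{asc}(n)$; and since $N(c)=\frac{c-\delta(n)}{c_{max}(n)-\delta(n)}$ is an increasing affine function of $c$ (its denominator being positive for $n\ge 4$ by Lemma~\ref{lemma_normc_nbound}), it is enough to bound $N(c_{asc}(n))$. Next, from $0\le\delta(n)\le c_{asc}(n)\le c_{max}(n)$ --- the last inequality because the ladder tree has the maximal Colless index \cite{mir18} --- the map $t\mapsto\frac{c_{asc}(n)-t}{c_{max}(n)-t}$ is non-increasing on $[0,c_{asc}(n)]$, so $N(c_{asc}(n))\le \frac{c_{asc}(n)}{c_{max}(n)}=\frac{2\,c_{asc}(n)}{(n-1)(n-2)}$. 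Writing $k=\floor{\log_2 n}$, the theorem thus follows once we prove
$$
    n\cdot c_{asc}(n)<k\,(n-1)(n-2).
$$

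To control $c_{asc}(n)$ I would use the closed form of Corollary~\ref{c_asc_closed}: summing over the set bits of $n$ and discarding the nonnegative $(n\bmod 2^i)$ subtractions yields the clean bound $c_{asc}(n)\le 2^k+(\omega(n)-2)\,n$. With this, the displayed inequality reduces to the positivity of the quadratic $q(n)=(k-\omega(n)+2)n^2-(3k+2^k)n+2k$ (viewing $n$ as the variable). Since $n$ has all its bits among positions $0,\dots,k$ we have $\omega(n)\le k+1$, so for $k\ge 4$ I split into two cases. If $\omega(n)\le k$, the leading coefficient of $q$ is at least $2$, hence its larger root is at most $\frac{3k+2^k}{k-\omega(n)+2}\le\frac{3k+2^k}{2}$, which is strictly below $2^k$ precisely when $3k<2^k$, i.e.\ for every $k\ge 4$; as $n\ge 2^k$ this gives $q(n)>0$. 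If instead $\omega(n)=k+1$, then $n=2^{k+1}-1$ and $c_{asc}(n)=(2k-3)2^k+3$ by Corollary~\ref{asc_list}(e), so the inequality reduces by direct algebra to one of the form $2^k>(\text{a linear function of }k)$, valid for all $k\ge 2$. The finitely many cases $k\le 3$, i.e.\ $n\le 15$, are checked by hand, using $c_{asc}(n)\le 2^k+(\omega(n)-2)n$ together with the exact values in Corollaries~\ref{asc_list} and~\ref{mincolcasc}.

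The main obstacle is precisely the extremal regime $n=2^{k+1}-1$ (equivalently, $\omega(n)$ as large as possible): Examples~\ref{example1_colless_norm} and~\ref{example2_colless_norm} already show that there the bound $\frac{2k}{n}$ is essentially attained, so the crude estimate $c_{asc}(n)\le 2^k+(\omega(n)-2)n$ is too lossy and one must fall back on the exact value of $c_{asc}(2^{k+1}-1)$; everywhere else the slack is ample and the single dichotomy on $\omega(n)$ suffices. As a consistency check on the even values of $n$, the recursion $c_{asc}(2m)=2c_{asc}(m)$ of Theorem~\ref{laddercolless_asc} together with $c_{max}(2m)\ge 4c_{max}(m)$ from Lemma~\ref{lemma_cmax4} yields $N(c_{asc}(2m))\le\frac12 N(c_{asc}(m))$, an independent inductive route agreeing with the bound $\frac{2k}{n}$.
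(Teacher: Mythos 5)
Your proposal is correct, and it takes a genuinely different route from the paper. The paper proves the bound by induction on $n$, splitting on parity: for $n=2m$ it combines $c_{asc}(2m)=2c_{asc}(m)$, $\delta(2m)=2\delta(m)$ and $c_{max}(2m)\ge 4c_{max}(m)$ to get $N(c_{asc}(2m))\le \tfrac12 N(c_{asc}(m))$ (exactly your closing ``consistency check''), while the odd case requires separate lower bounds on $c_{max}(n)-\delta(n)$ and upper bounds on $c_{asc}(n)-\delta(n)$, a sign analysis of a cubic in $m$, and a special treatment of $n=5$. You instead make the non-inductive observation that, since $0\le\delta(n)\le c_{asc}(n)\le c_{max}(n)$, the map $t\mapsto\frac{c_{asc}(n)-t}{c_{max}(n)-t}$ is non-increasing, so the $\delta(n)$ correction can simply be discarded and the whole theorem reduces to the integer inequality $n\,c_{asc}(n)<k(n-1)(n-2)$; you then control $c_{asc}(n)$ via the closed form of Corollary~\ref{c_asc_closed} (your bound $c_{asc}(n)\le 2^k+(\omega(n)-2)n$ does follow from it, since dropping the $(n\bmod 2^i)$ terms gives $(\omega(n)-3)n+2^{k+1}\le(\omega(n)-2)n+2^k$ using $n\ge 2^k$), and finish with a quadratic-positivity argument split on whether $\omega(n)\le k$ or $n=2^{k+1}-1$, plus the finite check $n\le 15$. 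I verified the key computations: the larger root of $q$ is at most $\frac{3k+2^k}{2}<2^k\le n$ for $k\ge4$ and $\omega(n)\le k$, and in the all-ones case the difference $k(n-1)(n-2)-n\,c_{asc}(n)$ equals $6\cdot 2^{2k}-(8k+9)2^k+6k+3>0$ for $k\ge2$. What your approach buys is transparency (no induction in the main argument, and it makes clear that the $\delta(n)$ shift in the normalization is irrelevant to this particular bound, consistent with the bound never being attained); what it costs is reliance on the closed form for $c_{asc}$ and an exact treatment of the extremal case $n=2^{k+1}-1$, where, as you correctly note, the crude bound on $c_{asc}$ is too lossy.
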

\begin{proof}
    It suffices to prove the theorem for $T_{asc}$, the ascending ladder MinD tree on $n$ leaves, since by Theorem \ref{th_bounds_MinD}, all MinD trees on $n$ leaves have Colless index less than or equal to the Colless index of $T_{asc}$. We  proceed by induction.
    
    By Lemma \ref{lemma_norm_colless}, the Colless index of $T_{asc}$ is 
    $$
        \frac{c_{asc}(n)-\delta(n)}{c_{max}(n)-\delta(n)}
    $$
    If $n=2m$ is even, then the normalized Colless index of  $T$ is 
    \begin{align*}
        N(c_{asc}(n))  
        &= \frac{c_{asc}(n)-2\delta(m)}{c_{max}(n)-2\delta(m)}& \text{ by Theorem \ref{exp_D_recursive}}\\
        &= \frac{2c_{asc}(m)-2\delta(m)}{c_{max}(n)-2\delta(m)}&  \text{ by Theorem \ref{laddercolless_asc}} \\
        &\le \frac{2c_{asc}(m)-2\delta(m)}{4\cdot c_{max}(m)-2\delta(m)}&  \text{ by Lemma \ref{lemma_cmax4}} \\
        &\le \frac{2c_{asc}(m)-2\delta(m)}{4\cdot c_{max}(m)-4\delta(m)} \\
        &= \frac{N(c_{asc}(m))}{2} & \text{ by Theorem \ref{exp_D_recursive}}\\
        &\le \frac{2\cdot \floor{\log_2(m)}}{2m}\cdot  & \text{ by induction hypothesis}\\
        &= \frac{2\cdot (\floor{\log_2(n)}-1)}{2m}\\
        &< \frac{2\cdot \floor{\log_2(n)}}{n}
    \end{align*}
    If {n=2m+1} is odd, 
    \begin{align*}
        c_{asc}(n)-\delta(n)
        &= 2\cdot c_{asc}(m)+2m- 1-\delta(n)& \text{ by Theorem \ref{laddercolless_asc}}\\
        &\le 2\cdot c_{asc}(m)+2m- 1-1& \text{ by Corollary \ref{cor_dnodes_ineq_odd_n}}\\
        &= 2\cdot (c_{asc}(m)+m- 1)
    \end{align*}
and
\begin{align*}
    c_{max}(n)-\delta(n)  
    &= \frac{(n-1)(n-2)}{2}-\delta(n) \\
    &= \frac{(2m+1-1)(2m+1-2)}{2}-\delta(n) \\
    &= m(2m-1)-\delta(n) \\
    &\ge m(2m-1)-\floor{\frac{n}{2}}& \text{ by Corollary \ref{cor_dnodes_ineq_odd_n}}\\
    &= m(2m-1)-m \\
    &= 2m^2-2m 
\end{align*}
One or the other of the inequalities above must be a strict inequality for $n>3$, and by Lemma \ref{lemma_normc_nbound}, $n\ge 4$. So
\begin{align*}
    \frac{c_{asc}(n)-\delta(n)}{c_{max}(n)-\delta(n)} 
    &< \frac{2\cdot (c_{asc}(m)+m- 1)}{2m^2-2m}\\
    &= \frac{(c_{asc}(m)+m- 1)}{m^2-m}\\
    &\le \frac{2\cdot\frac{\floor{log_2(m)}}{m}+m- 1}{m^2-m}&\text{ by induction}\\
    &= \frac{2\cdot\floor{log_2(m)}+m^2- m}{m^3-m^2}
\end{align*}
Let  $k=\floor{log_2(n)}$. So $\floor{log_2(m)}=k-1$, and
\begin{align*}
    \frac{c_{asc}(n)-\delta(n)}{c_{max}(n)-\delta(n)} -\frac{2k}{n}
    &< \frac{2\cdot\floor{log_2(m)}+m^2-m}{m^3-m^2}-\frac{2k}{(2m+1)}\\
    &= \frac{2(k-1)+m^2-m}{m^3-m^2}-\frac{2k}{n}\\
    &= \frac{n\cdot (2(k-1)+m^2- m)}{(2m+1)(m^3-m^2)}-\frac{2k(m^3-m^2)}{(2m+1)(m^3-m^2)}\\
    &= \frac{n\cdot (2(k-1)+m^2-m)-2k(m^3-m^2)}
    {(2m+1)(m^3-m^2)}\\
    &= \frac{(2m+1) (2k-2+m^2-m)-2k(m^3-m^2)}
    {(2m+1)(m^3-m^2)}\\
    &= \frac{(-2k+2)m^3+(2k-1)m^2+(4k-5)m+(2k-2)}
    {(2m+1)m^2(m-1)}\\
    &\le 0& \text{for all $m \neq 2$}
\end{align*}
So  
\begin{align*}
     \frac{c_{asc}(n)-\delta(n)}{c_{max}(n)-\delta(n)} 
     &< \frac{2k}{n} =\frac{\floor{log_2(n)}}{n} & \text{for all $m \neq 2$}
\end{align*}
We have proved the theorem for all even $n\ge 4$, and for all odd $n=2m+1 \ge 4$ except when $m=2$, i.e. when $n=5$. By Corollary \ref{cor_dnodes2k} and Theorem \ref{dnodes_another_recurrence}, $\delta(5)=2$. By Corollary \ref{asc_list}, $c_{asc}(5)=3$. $c_{max}(5)=\frac{4\cdot 3}{2}=6$. So 
$$
N(c_{asc}(5))=\frac{3-2}{6-2}=\frac{1}{4}<\frac{2}{5}=\frac{\floor{log_2(5)}}{5}
$$
\end{proof}

The upper bound in Theorem \ref{theorem_upperbound_Colless_minD} is never met, since the  inequality is strict.
The global max for normalized MinD Colless indices occurs at $n=7$, and is approximately $0.38461538$. This is seen in Fig. \ref{fig:colless_minD}(\subref{fig:colless_minmax_values-c}), and is calculated from Example \ref{example2_colless_norm}.

In Corollary \ref{theorem_threebounds}, we establish three different bounds on the normalized Colless indices of the MinD trees on $n=2^k+r\ge 4$ leaves, where $r<2^k$. 
 These are clearly related, but may apply to different situations, so we state all three. 
\begin{itemize}
    \item The smallest bound, from Theorem  \ref{theorem_upperbound_Colless_minD},  is the tightest (but still is never met).
    \item The middle bound is slightly less tight, but has the advantage of continuity on $n>0$. 
    \item The largest is a constant-valued step function expressed solely in terms of $k$. 
\end{itemize}
\begin{theorem}\label{theorem_threebounds}
    Let $T$  be a MinD tree with $n=2^k+r$ leaves, where  $r<2^k$, and let $c_T(n)$ be its Colless index. Then
    $$
    N(c_T(n))<\frac{2k}{n}\le\frac{2\cdot \log_2(n)}{n}\le\frac{k}{2^{k-1}}
    $$
\end{theorem}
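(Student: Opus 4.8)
The plan is to treat the three inequalities in turn; only the last requires a genuine (though elementary) estimate, and there is no substantive obstacle.

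For the leftmost inequality, first note that since $n=2^k+r$ with $0\le r<2^k$ we have $k=\floor{\log_2(n)}$. Theorem~\ref{theorem_upperbound_Colless_minD} is exactly the assertion $N(c_T(n))<\frac{2\floor{\log_2(n)}}{n}$ for any MinD tree on $n\ge 4$ leaves, so the first inequality follows simply by rewriting $\floor{\log_2(n)}$ as $k$. (By Lemma~\ref{lemma_normc_nbound}, $N(c_T(n))$ is only defined for $n\ge 4$, so $k\ge 2$ throughout; and if $r=0$ the MinD tree is perfect and $N(c_T(n))=0$, so the whole chain is trivial in that case. It is worth recording $k\ge 2$ explicitly, as it is what makes the third step work.)

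For the middle inequality, $\frac{2k}{n}\le\frac{2\log_2(n)}{n}$ is immediate from $k=\floor{\log_2(n)}\le\log_2(n)$.

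For the rightmost inequality I must show $\frac{2\log_2(n)}{n}\le\frac{k}{2^{k-1}}=\frac{2k}{2^k}$. Writing $n=2^k t$ with $1\le t<2$, clearing denominators, and dividing by $2^k$, this reduces to $k+\log_2(t)\le kt$, i.e. $\log_2(t)\le k(t-1)$. Using the standard bound $\ln t\le t-1$ (valid for $t>0$) I get $\log_2(t)=\frac{\ln t}{\ln 2}\le\frac{t-1}{\ln 2}$, and since $\ln 2>\frac{1}{2}$ and $t-1\ge 0$ and $k\ge 2$, I have $\frac{t-1}{\ln 2}\le 2(t-1)\le k(t-1)$, which closes the chain. (Alternatively one could argue that $x\mapsto\frac{2\log_2(x)}{x}$ is decreasing on $[e,\infty)$, so its value at $n\ge 2^k\ge 4>e$ is at most its value $\frac{2k}{2^k}$ at $x=2^k$; the reduction above avoids calculus.) The only thing needing attention is the bookkeeping that $n\ge 4$ forces $k\ge 2$, which is precisely what lets the factor $k$ absorb $\frac{1}{\ln 2}$.
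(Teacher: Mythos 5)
Your proposal is correct, and for the first two inequalities it coincides exactly with the paper's proof: the leftmost is just Theorem~\ref{theorem_upperbound_Colless_minD} with $k=\floor{\log_2(n)}$, and the middle one is $\floor{\log_2(n)}\le\log_2(n)$. Where you diverge is the third inequality. The paper argues via calculus: $x\mapsto\frac{2\log_2(x)}{x}$ is decreasing for $x>e$, and the right-hand side $\frac{k}{2^{k-1}}=\frac{2k}{2^k}$ is precisely the value of that function at $x=2^k$, hence an upper bound for it on all of $[2^k,2^{k+1})$ — this is the "alternative" you mention parenthetically and then set aside. Your main route instead substitutes $n=2^kt$ with $1\le t<2$ and reduces the claim to $\log_2(t)\le k(t-1)$, which you settle with $\ln t\le t-1$, $\ln 2>\tfrac12$, and $k\ge 2$; the algebra checks out. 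What your version buys is the avoidance of a monotonicity/derivative argument in favor of a one-line elementary estimate, at the cost of having to track explicitly that $n\ge 4$ forces $k\ge 2$ (which the paper also needs, to ensure $2^k>e$, but states somewhat loosely as "$k>2$" in one place and "$k\ge 2$" in another). Both arguments are sound and of comparable length; yours is marginally more self-contained.
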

\begin{proof}
    The first inequality is Theorem \ref{theorem_upperbound_Colless_minD}. The second is true since $k=\floor{\log_2(n)}\le \log_2(n)$. 
    
    To see the third inequality, we consider the interval $[2^k, 2^{k+1})$, where $k>2$. We note that for all integers $k>2$, equality holds on that interval.
    The left side of the inequality is decreasing for $x>e$ (seen by differentiating, identifying the single critical point occurring at $e$, and observing that the function decreases monotonically for $x>e$), but the right side is constant on $[2^k, 2^{k+1})$. Thus, on any interval $[2^k, 2^{k+1})$ where $k\ge2$, the  inequality holds. 
\end{proof}
\begin{figure}[H]
    \centering
    \begin{subfigure}{0.48\textwidth} 
        \includegraphics[scale=0.6]{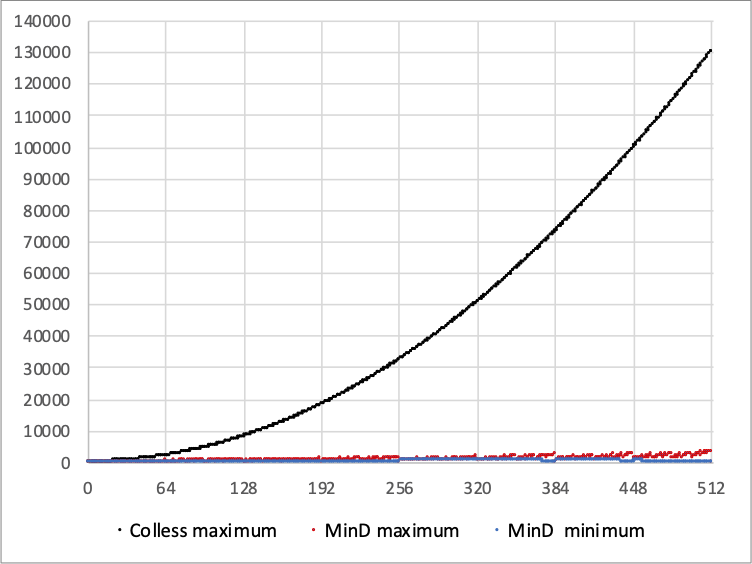}
        \caption{This non-logarithmic scale shows the small size of the Colless indices of the MinD trees, relative to the maximum Colless index, which grows quadratically.The minimum Colless index is not shown here, since at this scale, it is indistinguishable from the $x$-axis.}
        \label{fig:colless_minD-a}
    \end{subfigure}
    \hspace{1em}
    \begin{subfigure}{0.48\textwidth}
        \includegraphics[scale=0.6]{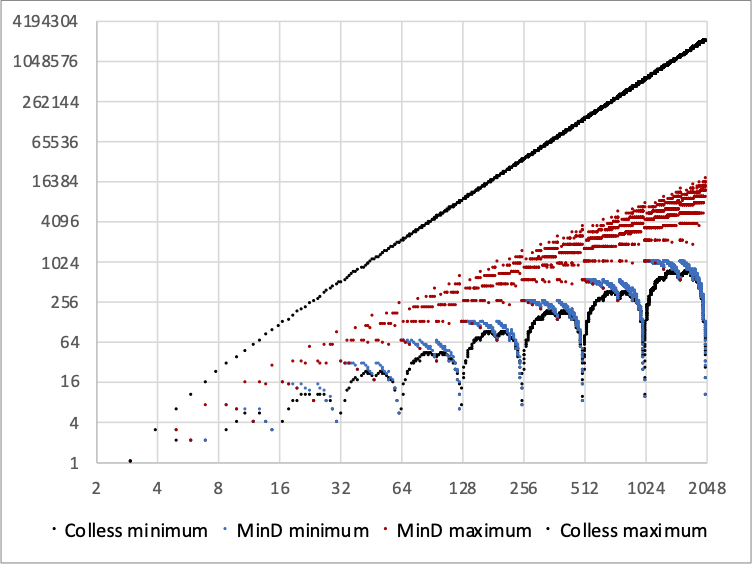}
        \caption{This logarithmic scale shows the Takagi structure of the Colless indices of the minimal divide-and-conquer trees and and also shows the Colless indices of the maximal ladder trees, as well as of the ascending and descending MinD trees.}
        \label{fig:colless_minD-b}
    \end{subfigure}
    
    \begin{subfigure}{0.48\textwidth} 
        \includegraphics[scale=0.6]{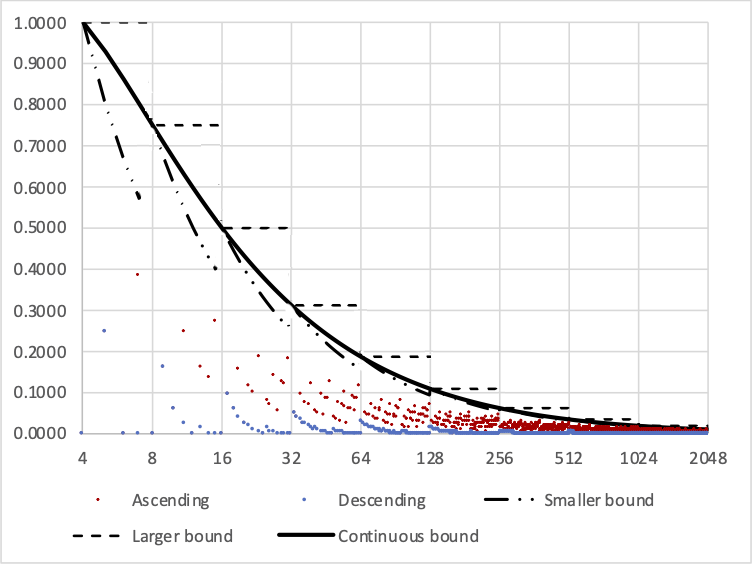}
        \caption{Colless indices on ascending and  descending MinD trees normalized against the minimum (0.0) and maximum (1.0) possible Colless indices. The logarithmic scale shows the decrease in terms of $\log_2(n)$. All except a few small $n$  are well below 10\% of the normalized max.}
        \label{fig:colless_minmax_values-c}
    \end{subfigure}
    \hspace{1em}
    \begin{subfigure}{0.48\textwidth}
        \includegraphics[scale=0.6]{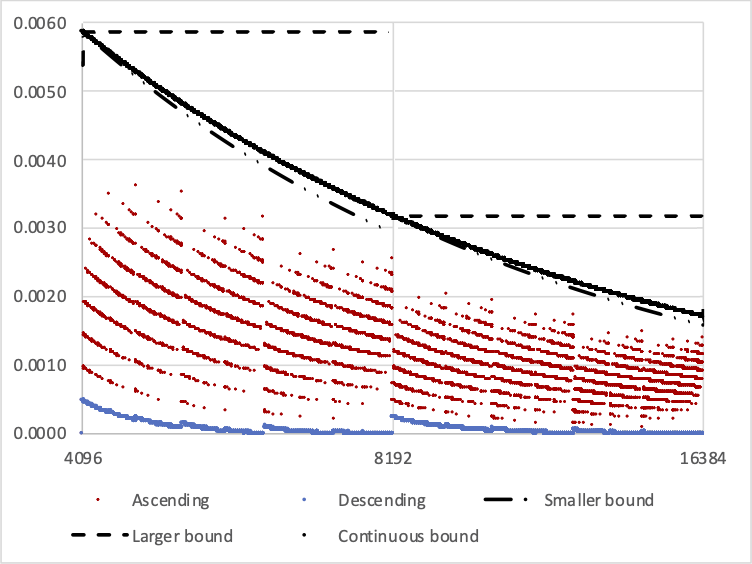}
        \caption{Here, a closer look at the graph at $\floor{\log_2(n)}=$ $12$ and $13$. For these values of $n$, all of the minD trees have normalized Colless index at less than 0.4\% of the normalized max. The bound is decreasing, so the MinD trees continue to improve in balance for $n>2^{13}$. \newline}
        \label{fig:colless_minmax_values-d}
    \end{subfigure}
    
    \mycaption{MinD trees have very good balance in terms of the Colless index} {Here  we compare the upper bounds (in red) and the lower bounds (in blue) on Colless indices of MinD trees with the maximum and minimum possible Colless indices on general trees. In subfigures (\subref{fig:colless_minmax_values-c}) and (\subref{fig:colless_minmax_values-d}), the indices are normalized by the minimum and maximum possible Colless indices. The MinD trees are very close to the best balanced divide-and-conquer minimum Colless index, so the class of MinD trees has near-best balance. The three bounds on the MinD Colless indices are shown in black. }
    \label{fig:colless_minD}
\end{figure}

\section{Conclusion}
Using the SD-tree structure, we have classified a number of commutative non-associative products. In particular, we have completely characterized trees having a minimal number of $D$-nodes, and shown that they are reasonably well-balanced. Table \ref{summary_table}  in the Appendix presents a summary of the main combinatorial results derived or referenced in this paper, with other interpretations and OEIS numbers where applicable.

% comment the next section and paragraph for double-blind review
\section{Acknowledgments}

We thank Vanessa Job for much discussion on this topic, and for her insight into the problem. We also thank Terry Grov\'e for asking the questions that inspired this work, also Andy DuBois, Shane Fogerty, Brett Neuman, Chris Mauney, and Bob Robey for many conversations on applications of the subject. We also thank the editors of the OEIS for their helpful remarks.

\newpage

\section{Appendix A: A family of sequences}\label{appendixA}

\begin{table}[h!tb]
	\small 
	%\footnotesize
	\centering
	\mycaption{A family of sequences related to commutative non-associative products} {References, alternative interpretations, and OEIS numbers are provided where applicable.}
	 \label{summary_table}

	 \newcolumntype{C}{c}
	 \newcolumntype{T}{>{\raggedright\hspace{0pt}}p{0.45\linewidth}|>{\raggedright\hspace{0pt}}p{0.13\linewidth}|>{\raggedright\hspace{0pt}}p{0.22\linewidth}|p{0.07\linewidth}}
	 
    \begin{tabular} {T} 
	\textbf{Commutative non-associative product (CNAP) interpretation} & \textbf{References} & \textbf{Other interpretations} & \textbf{OEIS} \\
	\hline
	\end{tabular}

    \begin{tabular} {C}
    	\\[-.75em]
    	\textit{\textbf{Inequivalent parenthetic forms  }} \\
    	\\[-.75em]
	\end{tabular}

    \begin{tabular} {T} 
		\hline
    	Number of non-isomorphic parenthetic forms with $n$ leaf nodes & Prop. \ref{halfcat} & Half-Catalan numbers & \seqnum{A000992}\\
    	\hline
    	Number of non-isomorphic parenthetic forms with $n$ leaf nodes and 1 $S$-node (ladder product) & Cor. \ref{laddercor} & The constant 1 &  \seqnum{A000012}\\
		\hline
    	Number of non-isomorphic parenthetic forms with $n$ leaf nodes and 2 $S$-nodes& Prop. \ref{prop2Snodes} & Quarter-squares & \seqnum{A002620} \\
		\hline
    	Number of non-isomorphic parenthetic forms with $n$ leaf nodes and s $S$-nodes & Th. \ref{propkSnodes} &  &\seqnum{A335833}   \\
		\hline
    \end{tabular}
	
    \begin{tabular} {C}
    	\\[-.75em]
    	\textit{\textbf{Inequivalent commutative non-associative products}} \\
    	\\[-.75em]
	\end{tabular}

	\begin{tabular} {T} 
		\hline
		Number of commutative non-associative products on $n$ variables & Prop. \ref{ineq_sum} & Double factorial of odd numbers & \seqnum{A001147} \\
		\hline
		Number of commutative non-associative ladder products on $n$ variables& Prop. \ref{numladder} & No. of even perms. on $n$ elements & \seqnum{A001710} \\
		\hline
		Number of commutative non-associative pairwise operations on $n$ variables & Props. \ref{pairclosednew},  \ref{pairrecursiveold}  & No. of tournaments on $n$ teams &  \seqnum{A096351} \\
		\hline
	\end{tabular}

    \begin{tabular} {C}
	    \\[-.75em]
    	\textit{\textbf{$S$-nodes and $D$-nodes on trees of special form}} \\
	    \\[-.75em]
\end{tabular}

\begin{tabular} {T} 
	\hline
    Number of $S$-nodes in a divide-and-conquer tree with $n$ leaf nodes & Ths. \ref{exp_recursive}, \ref{exp_closed} & Cumulative deficient binary digit sum & \seqnum{A268289} \\
	\hline
    Number of $D$-nodes in a divide-and-conquer tree with $n$ leaf nodes & Th. \ref{exp_D_recursive} & Log of number of shapes of $n$-node divide and conquer trees & \seqnum{A296062} \\
	\hline
    Number of $D$-nodes in a complete full binary tree with $n$ leaf nodes & Prop. \ref{cfb_oeis} & Number of bits the same in $n$ and $n+1$ & \seqnum{A119387} \\
	\hline
\end{tabular}
\begin{tabular} {C}
	    \\[-.75em]
    	\textit{\textbf{Bounds}} \\
	    \\[-.75em]
\end{tabular}
\begin{tabular} {T} 
	\hline
	Minimum number of $S$-nodes in a full binary tree with $n$ leaf nodes & Prop. \ref{upper_bound} &The constant 1 &  \seqnum{A000012}\\
	\hline
	Maximum number of $S$-nodes in a full binary tree with $n$ leaf nodes & Cor. \ref{max_s_nodes} & Largest number $k$ where $2^k$ divides $n!$ & \seqnum{A011371} \\
	\hline
    Minimum number of $D$-nodes in a full binary tree with $n$ leaf nodes & Prop. \ref{min_dnodes} & $\omega(n)-1$, where $\omega(n)$ is the weight of $n$ & \seqnum{A048881} \\
	\hline
	Maximum order of the automorphism group of a CNAP tree with $n$ leaf nodes & Cor. \ref{max_s_nodes} & Largest $2^k$ that divides $n!$ & \seqnum{A060818} \\
	\hline	Lower bound for the number of CNAPs on $n$ variables all having the same SD-tree & Prop. \ref{lower_bound} & Largest odd divisor of $n!$ & \seqnum{A049606} \\
	\hline
	Upper bound for the number of CNAPs on $n$ variables all having the same SD-tree & Prop. \ref{upper_bound} & Number of even perms. on $n$ elements &  \seqnum{A001710} \\
	\hline
\end{tabular}
\end{table}
\section{Appendix B: Trees counted in terms of S- and D-nodes}\label{appendixB}

\subsection{Table counting trees in terms of S-nodes}
Table \ref{paren_nk} counts trees for small $n$, in terms of the number of $S$-nodes they have. This is calculated using Theorem \ref{propkSnodes}.
 There are several formulas for the number of parenthetic forms that are visible in this table. 
 	\begin{table}[h!tb]
	\footnotesize
	\centering
	\mycaption{The number of leaf-unlabeled SD-trees with $n$ leaf nodes and $s$ $S$-nodes, calculated up to $n=16$} {The  rows  represent $n$, the number of leaf nodes, and the columns represent $s$, the number of $S$-nodes.  $\alpha(n)$ is the sum of all the $\theta(n,s)$ in the $n^\text{th}$ row. The bold-faced number in each row indicates the number of $S$-nodes in the divide-and-conquer tree for $n$, and the italicized  indicates the number of $S$-nodes in the complete full binary tree for $n$.}
		\begin{tabular}{r|rrrrrrrrrrrrrrr|r}
			\backslashbox{\textbf{$n$}}{\textbf{$s$}} & \textbf{1} & \textbf{2}  & \textbf{3}   & \textbf{4}    & \textbf{5}    & \textbf{6}    & \textbf{7}    & \textbf{8}    & \textbf{9}   & \textbf{10}  &\textbf{11} &\textbf{12} &\textbf{13} &\textbf{14}   &\textbf{15}    & $\boldsymbol{\alpha(n)}$\\
			\hline
			%\textbf{1}   &    &    &     &      &      &      &      &      &     &     &   &   &   &   &  &  0  \\
			\textbf{2}   & \textbf{\textit{1}} &   &     &      &      &      &      &      &     &     &   &   &   &   &  &  1   \\
			\textbf{3}   & \textbf{\textit{1}} & 0  &    &      &      &      &      &      &     &     &   &   &   &   &  &  1  \\
			\textbf{4}   & 1 & 0  & \textbf{\textit{1}}   &     &      &      &      &      &     &     &   &   &   &   &  &  2   \\
			\textbf{5}   & 1 & \textbf{\textit{1}}  & 1   & 0    &      &      &      &      &     &     &   &   &   &   &  &  3  \\
			\textbf{6}   & 1 & 2  & \textbf{2}   & \textit{1}    & 0    &     &      &      &     &     &   &   &   &   &  &   6  \\
			\textbf{7}   & 1 & 4  & 3   & \textbf{\textit{3}}    & 0    & 0    &       &      &     &     &   &   &   &   & & 11    \\
			\textbf{8}   & 1 & 6  & 7   & 6    & 3    & 0    & \textbf{\textit{1}}    &      &     &     &   &   &   &   &  &   24  \\
			\textbf{9}   & 1 & 9  & 14  & 13   & \textbf{\textit{8}}    & 1    & 1    & 0    &     &     &   &   &   &   &  &   47  \\
			\textbf{10}  & 1 & 12 & 27  & 28   & \textbf{23}   & 8    & \textit{3}    & 1    & 0   &      &   &   &   &   & &   103  \\
			\textbf{11}  & 1 & 16 & 49  & 58   & \textbf{54}   & 25   & \textit{8}    & 3    & 0   & 0   &    &   &   &   &  & 214    \\
			\textbf{12}  & 1 & 20 & 82  & 119  & 125  & 82   & \textbf{34}   & 15   & 2   & \textit{1}   & 0 &   &   &   & &  481   \\
			\textbf{13}  & 1 & 25 & 132 & 237  & 270  & 213  & \textbf{99}   & 42   & \textit{8}   & 3   & 0 & 0 &   &   & & 1030  \\
			\textbf{14}  & 1 & 30 & 199 & 449  & 578  & 542  & 322  & 151  & \textbf{51}  & 11  & \textit{3} & 0 & 0 &   & &   2337   \\
			\textbf{15}  & 1 & 36 & 294 & 821  & 1190 & 1255 & 867  & 440  & 173 & 39  & \textbf{\textit{15}} & 0 & 0 & 0 &  & 5131  \\ 
			\textbf{16}  & 1 & 42 & 414 & 1419  & 2394 & 2841 & 2338  & 1388  & 656 & 215  & 79 & 18 & 7 & 0 &  \textbf{\textit{1}} & 11813  \\ 
		\end{tabular}
		\label{paren_nk}
	\end{table}

\begin{itemize}
    \item The first column of Table \ref{paren_nk} counts the (unique) ladder parenthetic forms. These are discussed in Section \ref{ladder_pf}.
    \item The second column counts parenthetic forms with exactly 2 $S$-nodes. This column is the sequence of the quarter-squares, OEIS \seqnum{A002620} (with an offset) \cite{OEIS}. %\cite{OEIS_A002620}
    This is discussed in Section \ref{2_snodes}.
    \item The rightmost non-zero entry in each row $n$ is the number of parenthetic forms on $n$ leaves having a maximal number of $S$-nodes for that $n$. This maximal number of $S$-nodes is ($n-\omega(n)$), where $\omega(n)$ is the weight of $n$, the number of bits that are $1$ in the binary representation of $n$. There are $(2\cdot \omega(n)-3)!!$ parenthetic forms having the maximal number of $S$-nodes. $\{1,1,1,1,1,3,1,1,1,3,1,3,3,15,1, \dots\}$ are the leading terms of this sequence. This is discussed in Section \ref{max_S_section}.
    \item When $n=2^k$, the maximal number of $S$-nodes is $s=2^k-1=n-1$. The rightmost $(n-1)^{\text{st}}$ column entry is $1$ in these rows (the rightmost entry is $0$ for every other row). The tree having this maximal number of $S$-nodes is the unique perfect tree on $2^k$ leaves. This is discussed in Proposition \ref{pfb} in Section \ref{max_S_section}.
    \item The number of $S$-nodes in the divide-and-conquer tree with $n$ leaves is the sequence OEIS \seqnum{A268289} 
    \cite{OEIS},
    %\cite{OEIS_A268289}
    the cumulative deficient binary digit sum. This does not form an easily distinguished set of entries in the table; to help distinguish these, we have put in bold-face the entry in each row whose column is the number of $S$-nodes of the divide-and-conquer tree. $\{1, 1, 3, 2, 3, 4, 7, 5, 5, 5, 7, 7, 9, 11, 15,  \dots\}$ are the leading terms of this sequence. There is an explicit formula for the number of $S$-nodes in a divide-and-conquer tree, and this number always falls into the range $[\floor{\frac{n}{2}},  n-1]$, and may be either of the extrema of this range. This is discussed in Section \ref{pairwise_cnaos}.
    \item The number of $S$-nodes in the complete full binary tree with $n$ leaves likewise does not form an easily distinguished set of entries in the table; to help distinguish these, we have italicized the entry in each row whose column is the number of $S$-nodes of the complete full binary tree. $\{1,1,1,1,1,3,1,8,3,8,1,8,3,15,1, \dots\}$ are the leading terms of this sequence. This tree is similar to the divide-and-conquer tree; however, the number of $S$-nodes is not always the same for the two trees. There is an explicit formula for the number of $S$-nodes in a complete full binary tree. This is discussed in Section \ref{cfb}.
    \item If SD-trees are defined such that the left child of a node must have at least as many leaf descendants as its right sibling, then the number of $S$-nodes can be used to count the automorphisms of such a tree. All trees having exactly $s$ $S$-nodes have $2^s$ SD-tree automorphisms, generated by the set of transpositions of children of $S$-nodes.  
    \item The row-sum column $\alpha(n)$ is the half-Catalan sequence OEIS \seqnum{A000992}
    \cite{OEIS}
    %\cite{OEIS_A000992}
    seen in Proposition \ref{halfcat} and Corollary \ref{rowsum}. 
    \item The entire table is sequence OEIS \seqnum{A335833}
    \cite{OEIS}
    %\cite{OEIS_A335833}
    . This is a new sequence in the OEIS.
\end{itemize}
Fig. \ref{fig:sdtrees6} shows a comparative example of some non-isomorphic forms having the same number of leaves.

\subsection{Table counting trees in terms of D-nodes}
Table \ref{table_dnodes} shows parenthetic forms in terms of their $D$-nodes. The data in the two tables are the same, since the number of $D$-nodes is $n-1-s$, where $s$ is the number of $S$-nodes, by Lemma~\ref{dnodes}. However the data are arranged differently in each table and exhibit different patterns. Table \ref{table_dnodes} shows the diagonals of Table \ref{paren_nk} as columns.
	\begin{table}[h!tb]
	\footnotesize
	\centering
	\mycaption{The number of leaf-unlabeled SD-trees with $n$ leaf nodes and $d$ $D$-nodes, calculated up to $n=16$} {The  rows  represent $n$, the number of leaf nodes, and the columns represent $d$, the number of $D$-nodes.  $\alpha(n)$ is the sum of all the forms in the $n^\text{th}$ row. The bold-faced number in each row indicates the number of $D$-nodes in the divide-and-conquer tree for $n$, and the italicized  indicates the number of $D$-nodes in the complete full binary tree for $n$.}
		\begin{tabular}{r|rrrrrrrrrrrrrrr|r}
			\backslashbox{\textbf{$n$}}{\textbf{$d$}} & \textbf{0} & \textbf{1}  & \textbf{2}   & \textbf{3}    & \textbf{4}    & \textbf{5}    & \textbf{6}    & \textbf{7}    & \textbf{8}   & \textbf{9}  &\textbf{10} &\textbf{11} &\textbf{12} &\textbf{13}   &\textbf{14}    & $\boldsymbol{\alpha(n)}$\\
			\hline
			%\textbf{1}   &   1 &    &     &      &      &      &      &      &     &     &   &   &   &   &  &  1  \\
			\textbf{2}   & \textbf{\textit{1}} &   &     &      &      &      &      &      &     &     &   &   &   &   &  &  1   \\
			\textbf{3}   & 0 & \textbf{\textit{1}}  &    &      &      &      &      &      &     &     &   &   &   &   &  &  1  \\
			\textbf{4}   & \textbf{\textit{1}} & 0  & 1   &     &      &      &      &      &     &     &   &   &   &   &  &  2   \\
			\textbf{5}   & 0 & 1  & \textbf{\textit{1}}   & 1    &      &      &      &      &     &     &   &   &   &   &  &  3  \\
			\textbf{6}   & 0 & \textit{1}  & \textbf{2}   & 2    & 1    &     &      &      &     &     &   &   &   &   &  &   6  \\
			\textbf{7}   & 0 & 0  & \textbf{\textit{3}}   & 3    & 4    & 1    &       &      &     &     &   &   &   &   & & 11    \\
			\textbf{8}   & \textbf{\textit{1}} & 0  & 3   & 6    & 7    & 6    & 1    &      &     &     &   &   &   &   &  &   24  \\
			\textbf{9}   & 0 & 1  & 1  & \textbf{\textit{8}}   & 13    & 14    & 9   & 1    &    &     &   &   &   &   &  &   47  \\
			\textbf{10}  & 0 & 1 & \textit{3}  & 8   & \textbf{23}   & 28    & 27    & 12    & 1   &     &   &   &   &   & &   103  \\
			\textbf{11}  & 0 & 0 & 3  & \textit{8}   & 25   & \textbf{54}   & 58    & 49    & 16   & 1   &   &   &   &   &  & 214    \\
			\textbf{12}  & 0 & \textit{1} & 2  & 15  & \textbf{34}  & 82   & 125   & 119   & 82   & 20   & 1 &  &   &   & &  481   \\
			\textbf{13}  & 0 & 0 & 3 & \textit{8}  & 42  & \textbf{99}  & 213   & 270   & 237   & 132   & 25 & 1 &  &   & & 1030  \\
			\textbf{14}  & 0 & 0 & \textit{3} & 11  & \textbf{51}  & 151  & 322  & 542  & 578  & 449  & 199 & 30 & 1 &  & &   2337   \\
			\textbf{15}  & 0 & 0 & 0 & \textbf{\textit{15}}  & 39 & 173 & 440  & 867  & 1255 & 1190  & 821 & 294 & 36 & 1 & & 5131  \\ 
			\textbf{16}  & \textbf{\textit{1}} & 0 & 7 & 18  & 79 & 215 & 656  & 1388  & 2338 & 2841  & 2394 & 1419 & 414 & 42 &  1 & 11813  \\ 
		\end{tabular}
		\label{table_dnodes}
	\end{table}

\begin{itemize}
    \item The first column of Table \ref{table_dnodes} shows the forms having no $D$-nodes. These are the perfect trees, and occur only when $n=2^k$.
    \item The second column shows the forms having exactly one $D$-node. There is at most one such form for any $n$ and these forms only occur when $\omega(n)=2$, where $\omega(n)$ is the number of $1$s in the binary decomposition of $n$. The root of such a form is a $D$-node, and its two children are the perfect trees on $2^i$ and $2^j$ leaves, where $n=2^i+2^j$.
    \item Any tree with $n$ leaves must have at least $\omega(n)$ $D$-nodes, and this is sequence \seqnum{A000120}.  This corresponds to the observation in Table \ref{paren_nk} regarding trees with the maximal number of $S$-nodes, and is discussed in Section \ref{max_S_section}.
    \item The leftmost non-zero entry in each row $n$ is the number of parenthetic forms on $n$ leaves having a minimal number of $D$-nodes for that $n$. This minimal number of $D$-nodes is ($\omega(n)-1$), where $\omega(n)$ is the weight of $n$, the number of bits that are $1$ in the binary representation of $n$. There are $(2\cdot \omega(n)-3)!!$ parenthetic forms having the minimal number of $D$-nodes. $\{1,1,1,1,1,3,1,1,1,3,1,3,3,15,1, \dots\}$ are the leading terms of this sequence. This is discussed in Section \ref{max_S_section}.
    \item When $n=2^k$, the minimal number of $D$-nodes is $0$. The $0^{\text{th}}$ column entry is thus $1$ in these rows and $0$ is every other row. The tree having this minimal number of $D$-nodes is the unique perfect tree on $2^k$ leaves. This is discussed in Proposition \ref{pfb} in Section \ref{max_S_section}.

    \item The number of $D$-nodes in the divide-and-conquer tree with $n$ leaves is the sequence OEIS \seqnum{A296062} 
    \cite{OEIS}.
    %\cite{OEIS_A296062}. 
    This does not form an easily distinguished set of entries in the table; to help distinguish these, we have put in bold-face the entry in each row whose column is the number of $D$-nodes of the divide-and-conquer tree. The leading terms of this sequence are $\{0, 1, 0, 2, 2, 2, 0, 3, 4, 5, 4, 5, 4, 3, 0 \dots\}$. This sequence is  closely related to the Takagi function. There is an explicit formula for the number of $S$-nodes in a divide-and-conquer tree. This is discussed in Section \ref{pairwise_cnaos}.
    \item The number of $D$-nodes in the complete full binary tree with $n$ leaves likewise does not form an easily distinguished set of entries in the table; to help distinguish these, we have put italicized the entry in each row whose column is the number of $D$-nodes of the complete full binary tree. This sequence is OEIS \seqnum{A119387} 
    \cite{OEIS}, and $\{0, 1, 0, 2, 1, 2, 0, 3, 2, 3, 1, 3, 2, 3, 0, \dots\}$ are the leading terms of this sequence. The complete full binary tree is similar to the divide-and-conquer tree; however, the number of $D$-nodes is not always the same for the two trees. There is an explicit formula for the number of $D$-nodes in a complete full binary tree given in Theorem \ref{delta_cft}. These trees are discussed in Section \ref{cfb}.
    \item As in Table \ref{paren_nk}, the row-sum column $\alpha(n)$ is the half-Catalan sequence OEIS \seqnum{A000992}
    \cite{OEIS}.
\end{itemize}

\newpage

\bibliographystyle{alpha}
\bibliography{biblio}

\end{document}